\numberwithin{equation}{section}
\numberwithin{figure}{section}
\theoremstyle{plain}
\newtheorem{thm}{\protect\theoremname}[section]
  \theoremstyle{plain}
  \newtheorem{fact}[thm]{\protect\factname}
  \theoremstyle{plain}
  \newtheorem{cor}[thm]{\protect\corollaryname}
  \theoremstyle{plain}
  \newtheorem{prop}[thm]{\protect\propositionname}
  \theoremstyle{plain}
  \newtheorem{lem}[thm]{\protect\lemmaname}
  \theoremstyle{remark}
  \newtheorem{claim}[thm]{\protect\claimname}
  \theoremstyle{remark}
  \newtheorem{rem}[thm]{\protect\remarkname}
  \theoremstyle{plain}
  \newtheorem*{thm*}{\protect\theoremname}
  \theoremstyle{plain}
  \newtheorem*{fact*}{\protect\factname}
\renewcommand{\moo}{\{-1,+1\}}
\newtheoremstyle{myplain}      {10pt}{10pt}{\itshape}{}{\scshape}{.}{.5em}{}
\newtheoremstyle{mydefinition} {10pt}{10pt}{}{}{\scshape}{.}{.5em}{}
\newtheoremstyle{myremark} {10pt}{10pt}{}{}{\itshape}{.}{.5em}{}
\renewcommand{\mathcal}[1]{\mathscr{#1}}
\def\@seccntformat#1{%
  \protect\textup{%
    \protect\@secnumfont
    \expandafter\protect\csname format#1\endcsname 
    \csname the#1\endcsname
    \protect\@secnumpunct
  }%
}
\newcommand \SparseDotfill {\leavevmode \leaders \hb@xt@ .7em{\hss .\hss }\hfill \kern \z@}
\def\@tocline#1#2#3#4#5#6#7{\relax
  \ifnum #1>\c@tocdepth 
  \else
    \par \addpenalty\@secpenalty\addvspace{\ifnum #1=1 2mm \else #2\fi}%
    \begingroup \hyphenpenalty\@M
    \@ifempty{#4}{%
      \@tempdima\csname r@tocindent\number#1\endcsname\relax
    }{%
      \@tempdima#4\relax
    }%
    \parindent\z@ \leftskip#3\relax \advance\leftskip\@tempdima\relax
    \rightskip\@pnumwidth plus4em \parfillskip-\@pnumwidth
          \ifnum #1=1 \bfseries #5\else #5\fi 
   \leavevmode\hskip-\@tempdima
      \ifcase #1
       \or\or \hskip 1em \or \hskip 2em \else \hskip 3em \fi%
#6     \nobreak\relax
{\ifnum #1=1\hfill \else \SparseDotfill\fi}
 \hbox to\@pnumwidth{\@tocpagenum{
    \ifnum #1=1 \bfseries \fi #7}}\par
    \nobreak
    \endgroup
  \fi}
\DeclareMathOperator{\circulant}{circ}
\DeclareMathOperator{\diag}{diag}
\newcommand{\iu}{\mathbf{i}}
\DeclareMathOperator{\realpart}{Re}
\DeclareMathOperator{\imagpart}{Im}
\newcommand{\false}{\operatorname{\it false}}
\newcommand{\true}{\operatorname{\it true}}
\DeclareMathOperator{\fr}{frac}
\DeclareMathOperator{\upp}{UPP}
\DeclareMathOperator{\pp}{PP}
\DeclareMathOperator{\srank}{\rk_{\pm}}
\DeclareMathOperator{\Sgn}{\widetilde{sgn}}
  \providecommand{\claimname}{Claim}
  \providecommand{\corollaryname}{Corollary}
  \providecommand{\factname}{Fact}
  \providecommand{\lemmaname}{Lemma}
  \providecommand{\propositionname}{Proposition}
  \providecommand{\remarkname}{Remark}
  \providecommand{\theoremname}{Theorem}
\providecommand{\theoremname}{Theorem}
\begin{document}

\title[The Hardest Halfspace]{The Hardest Halfspace}

\author{Alexander A. Sherstov}

\thanks{$^{*}$ Computer Science Department, UCLA, Los Angeles, CA~90095.
{\large{}\Letter ~}\texttt{sherstov@cs.ucla.edu }Supported by NSF
CAREER award CCF-1149018 and an Alfred P. Sloan Foundation Research
Fellowship.}
\begin{abstract}
We study the approximation of halfspaces $h:\zoon\to\zoo$ in the
infinity norm by polynomials and rational functions of any given degree.
Our main result is an explicit construction of the ``hardest'' halfspace,
for which we prove polynomial and rational approximation lower bounds
that match the trivial upper bounds achievable for all halfspaces.
This completes a lengthy line of work started by Myhill and Kautz
(1961).

As an application, we construct a communication problem
that achieves essentially the largest possible separation, of
$O(n)$ versus $2^{-\Omega(n)},$ between the sign-rank
and discrepancy. Equivalently, our problem exhibits
a gap of $\log n$ versus $\Omega(n)$ between the communication complexity
with \emph{unbounded} versus \emph{weakly unbounded} error, improving
quadratically on previous constructions and completing a line of work
started by Babai, Frankl, and Simon~(FOCS~1986). Our results further
generalize to the $k$-party number-on-the-forehead model, where we
obtain an explicit separation of $\log n$ versus $\Omega(n/4^{n})$
for communication with unbounded versus weakly unbounded error. This
gap is a quadratic improvement on previous work and matches the state
of the art for number-on-the-forehead lower bounds.
\end{abstract}

\maketitle
\belowdisplayskip=14pt plus 2pt minus 5pt 
\abovedisplayskip=14pt plus 2pt minus 5pt 
\thispagestyle{empty}

\newpage{}\thispagestyle{empty}
\hypersetup{linkcolor=black} 
~
\vspace{-15mm}\tableofcontents{}\newpage{}

\hypersetup{linkcolor=teal} 
\thispagestyle{empty}

\section{Introduction}

Representations of Boolean functions by real polynomials play a central
role in theoretical computer science. The notion of \emph{approximating}
a Boolean function $f\colon\zoon\to\moo$ pointwise by polynomials
of given degree has been particularly fruitful. Formally, let $E(f,d)$
denote the minimum error in an infinity-norm approximation of $f$
by a real polynomial of degree at most $d$: 
\[
E(f,d)=\min_{p}\{\|f-p\|_{\infty}:\deg p\leq d\}.
\]
This quantity clearly ranges between $0$ and $1$ for any function
$f\colon\zoon\to\moo$. In more detail, we have $0=E(f,n)\leq E(f,n-1)\leq\cdots\leq E(f,0)\leq1$,
where the first equality holds because any such $f$ is representable
exactly by a polynomial of degree at most $n$. The study of the polynomial
approximation of Boolean functions dates back to the pioneering work
in the 1960s by Myhill and Kautz~\cite{myhill-kautz61} and Minsky
and Papert~\cite{minsky88perceptrons}. This line of research has
grown remarkably over the decades, with numerous connections discovered
to other subjects in theoretical computer science. Lower bounds for
polynomial approximation have complexity-theoretic applications, whereas
upper bounds are a tool in algorithm design. In the former category,
polynomial approximation has enabled significant progress in circuit
complexity~\cite{beigel91rational,aspnes91voting,krause94depth2mod,KP98threshold,sherstov07ac-majmaj,beame-huyn-ngoc09multiparty-focs},
quantum query complexity~\cite{beals-et-al01quantum-by-polynomials,aaronson-shi04distinctness,ambainis05collision,BKT17poly-strikes-back},
and communication complexity~\cite{buhrman-dewolf01polynomials,razborov02quantum,buhrman07pp-upp,sherstov07ac-majmaj,sherstov07quantum,RS07dc-dnf,lee-shraibman08disjointness,chatt-ada08disjointness,dual-survey,beame-huyn-ngoc09multiparty-focs,sherstov12mdisj,sherstov13directional}.
On the algorithmic side, polynomial approximation underlies many of
the strongest results obtained to date in computational learning~\cite{tt99DNF-incl-excl,KS01dnf,KOS:02,KKMS,odonnell03degree,ACRSZ07nand},
differentially private data release~\cite{tuv12releasing-marginals,ctuw14release-of-marginals},
and algorithm design in general~\cite{linial-nisan90incl-excl,kahn96incl-excl,sherstov07inclexcl-ccc}.

\subsection{The hardest halfspace}

Myhill and Kautz's work~\cite{myhill-kautz61} six decades ago, and
many of the papers that followed~\cite{myhill-kautz61,muroga71threshold,siu91small-weights,paturi92approx,beigel94perceptrons,hastad94weights,sherstov09hshs,sherstov09opthshs,thaler14omb},
focused on \emph{halfspaces}. Also known as a linear threshold function,
a halfspace is any function $h\colon\zoon\to\moo$ representable as
$h(x)=\sign(\sum_{i=1}^{n}z_{i}x_{i}-\theta)$ for some fixed reals
$z_{1},z_{2},\ldots,z_{n},\theta.$ The fundamental question taken
up in this line of research is: how well can halfspaces be approximated
by polynomials of given degree? An early finding, due to Muroga~\cite{muroga71threshold},
was the upper bound
\begin{equation}
E(h,1)\leq1-\frac{1}{n^{\Theta(n)}}\label{eq:muroga}
\end{equation}
for every halfspace $h$ in $n$ variables. In words, every halfspace
can be approximated pointwise by a linear polynomial to error just
barely smaller than the trivial bound of~$1$. Many authors pursued
matching lower bounds on $E(h,1)$ for specific halfspaces $h$, culminating
in an explicit construction by H\aa stad~\cite{hastad94weights}
that matches Muroga's bound~(\ref{eq:muroga}).

The study of $E(h,d)$ for $d\geq2$ proved to be challenging. For
a long time, essentially the only result was the lower bound $E(h,d)\geq1-2^{-\Theta(n/d^{2})+1}$
due to Beigel~\cite{beigel94perceptrons}, where $h$ is the so-called
\emph{odd-max-bit} halfspace\emph{. }Paturi~\cite{paturi92approx}
proved the incomparable lower bound $E(h,\Theta(n))\ge1/3$, where
$h$ is the majority function on $n$ bits. Much later, the bound
$E(h,\Theta(\sqrt{n}))\geq1-2^{-\Theta(\sqrt{n})}$ was obtained in~\cite{sherstov09hshs}
for an explicit halfspace. This fragmented state of affairs persisted
until the question was resolved completely in~\cite{sherstov09opthshs},
with an \emph{existence proof} of a halfspace $h$ such that $E(h,d)\ge1-2^{-\Theta(n)}$
for $d=1,2,\ldots,\Theta(n).$ This result is clearly as strong as
one could hope for, since it essentially matches Muroga's upper bound
for approximation by \emph{linear} polynomials\emph{.} The work in~\cite{sherstov09opthshs}
further determined the minimum error, denoted $R(h,d)$, to which
this $h$ can be approximated by a degree-$d$ rational function,
showing that this quantity too is as large for $h$ as it can be for
any halfspace. Explicitly constructing a halfspace with these properties
is our main technical contribution:
\begin{thm}
\label{thm:MAIN-approx}There is an algorithm that takes as input
an integer $n\geq1,$ runs in time polynomial in $n,$ and outputs
a halfspace $h_{n}\colon\zoon\to\moo$ with
\begin{align*}
E(h_{n},d) & \geq1-2^{-\Omega(n)}, &  & d=1,2,\ldots,\lfloor cn\rfloor,\\
R(h_{n},d) & \geq1-2^{-\Omega(n/d)}, &  & d=1,2,\ldots,\lfloor cn\rfloor,
\end{align*}
where $c>0$ is an absolute constant.
\end{thm}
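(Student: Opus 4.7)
The plan is to combine linear-programming duality for polynomial approximation with an explicit block construction. By duality, $E(h_n,d)\geq 1-\epsilon$ is equivalent to exhibiting a signed measure $\psi\colon\zoon\to\mathbb{R}$ with $\|\psi\|_1=1$, $\langle\psi,h_n\rangle\geq 1-\epsilon$, and $\hat\psi(S)=0$ whenever $|S|\leq d$; the rational bound $R(h_n,d)\geq 1-\epsilon$ admits an analogous characterization in terms of a pair of nonnegative measures, as developed in the existence proof of~\cite{sherstov09opthshs}. The task therefore reduces to producing, in polynomial time, a weight vector $w\in\mathbb{Z}^n$ and threshold $\theta\in\mathbb{Z}$ together with explicit dual objects that certify these conditions for $h_n=\sign(\langle w,x\rangle-\theta)$.

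First, I would design $h_n$ so that it simulates, on a product-structured subcube, a composed function $g\circ(f_1,\ldots,f_m)$, where each inner $f_i$ is a small hard base function (for instance a Chebyshev-extremal symmetric function on $b\approx\log n$ variables) and $g$ is a selector ensuring that the sign of $\langle w,x\rangle-\theta$ is determined by one ``active'' block. Using geometrically-spaced integer weights within each block and across blocks, one can arrange that $h_n$ equals the composition on this subcube; all the weights have $O(n)$ bits and are produced in polynomial time. Next I would build the dual witness $\psi$ as a tensor product of witnesses for $f$, following the hardness-amplification paradigm familiar from pattern-matrix and block-composition arguments: each block contributes a factor with maximal orthogonality degree and nontrivial correlation with $f$, and the tensor inherits orthogonality to polynomials of degree $\Theta(n)$ while achieving correlation $1-2^{-\Omega(n)}$ with the composition by multiplicativity.

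For the rational bound $R(h_n,d)\geq 1-2^{-\Omega(n/d)}$, I would either lift the polynomial bound via a generic reduction---any degree-$d$ rational approximator of $h_n$ yields a polynomial approximator of a related function at degree $O(d\log(1/\epsilon))$---or construct a matching rational dual pair directly following the template of~\cite{sherstov09opthshs}, where the role of $\psi$ is played by two nonnegative measures that separately ``push up'' and ``push down'' the value of $h_n$ against all low-degree numerators and denominators. The main obstacle will be the explicitness of the block construction: one must commit to specific weights, base functions, and thresholds, and then verify that the resulting $h_n$ faithfully realizes the composition on a large enough subcube for the tensor dual to apply, all while preserving the near-tight parameters $1-2^{-\Omega(n)}$ for every $d$ up to $\Theta(n)$. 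Previous work sidestepped this by averaging over a random distribution on halfspaces; deterministically matching those parameters requires careful joint tuning of block size, weight scale, and orthogonality depth, together with a clean verification that the integer weights do not accidentally reshuffle the intended sign pattern outside the subcube where the simulation is needed.
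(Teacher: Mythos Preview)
Your proposal has a genuine structural gap, and it diverges sharply from the paper's actual route.

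The tensor-product dual witness you describe certifies hardness for the \emph{product} (XOR-type) composition $f_1\oplus\cdots\oplus f_m$, or more generally for a $\pm1$-valued function that factors multiplicatively over the blocks. But such a composition is not a halfspace: the parity of $m$ non-trivial block functions already has threshold degree $\Omega(m)$, so no linear form $\langle w,x\rangle-\theta$ realizes it. Conversely, the selector-type composition you sketch (``the sign is determined by one active block,'' i.e.\ an odd-max-bit-style priority encoder) \emph{is} a halfspace, but it does not admit a tensor-product dual; the known dual analyses for selector halfspaces (Beigel's bound $E(h,d)\geq 1-2^{-\Theta(n/d^2)}$, and the refinements in \cite{sherstov09hshs}) yield at best $\sqrt{n}$-type parameters, not $\Omega(n)$ degree with $1-2^{-\Omega(n)}$ error simultaneously. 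This mismatch between ``halfspace-realizable outer function'' and ``outer function for which tensor duals work'' is precisely the obstacle that limited all prior explicit constructions to the quadratically weaker bound, and your proposal does not explain how to overcome it.

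The rational lower bound is also not handled. There is no generic reduction of the form ``degree-$d$ rational approximator $\Rightarrow$ degree-$O(d\log(1/\epsilon))$ polynomial approximator'': rational approximation can be exponentially more efficient than polynomial approximation (Newman's theorem for $\sign$ on $[-N,-1]\cup[1,N]$ is the canonical example), so a polynomial lower bound does not automatically lift. The ``rational dual pair'' template from \cite{sherstov09opthshs} that you invoke is exactly the non-constructive argument the paper says it could not de-randomize.

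The paper takes a completely different path. It constructs, in time $\poly(n)$, integers $z_1,\ldots,z_n$ whose multiset has small $m$-discrepancy for $m=2^{\Theta(n)}$ (an explicit number-theoretic construction adapted from Ajtai et al.), and uses these as the halfspace weights. The key lemma (``univariatization'') shows that for such weights, any degree-$(d_0,d_1)$ rational approximant for the halfspace yields a degree-$(2d_0,2d_1)$ rational approximant for the \emph{univariate} sign function on $\{\pm1,\ldots,\pm 2^{\Theta(n)}\}$. Both the polynomial and rational lower bounds then follow immediately from the known one-variable lower bounds for $\sign$. No block composition or tensor dual appears; the heart of the argument is a fooling-distribution construction on the level sets of $\sum z_i x_i \bmod m$, enabled by the low-discrepancy property of the $z_i$.
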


\noindent Classic bounds for the approximation of the sign function
imply that for any $d,$ the lower bounds in Theorem~\ref{thm:MAIN-approx}
are essentially the best possible for any halfspace on $n$ variables
(see Sections~\ref{subsec:Polynomial-approximation} and~\ref{subsec:Rational-approximation}
for details). Thus, the construction of Theorem~\ref{thm:MAIN-approx}
is the ``hardest'' halfspace from the point of view of approximation
by polynomials and rational functions.

Theorem~\ref{thm:MAIN-approx} is not a de-randomization of the existence
proof in~\cite{sherstov09opthshs}, which incidentally we are still
unable to de-randomize. Rather, it is based on a new and simpler approach,
presented in detail at the end of this section. Given the role that
halfspaces play in theoretical computer science, we see Theorem~\ref{thm:MAIN-approx}
as answering a basic question of independent interest. In addition,
Theorem~\ref{thm:MAIN-approx} has applications to communication
complexity and computational learning, which we now discuss.

\subsection{\label{sec:disc-vs-signrank}Discrepancy vs. sign-rank}

Consider the standard model of randomized communication~\cite{ccbook},
which features players Alice and Bob and a Boolean function $F\colon X\times Y\to\moo.$
On input $(x,y)\in X\times Y,$ Alice and Bob receive the arguments
$x$ and $y,$ respectively. Their objective is to compute $F$ on
any given input with minimal communication. To this end, each player
privately holds an unlimited supply of uniformly random bits which
he or she can use in deciding what message to send at any given point
in the protocol. The \emph{cost} of a protocol is the total number
of bits exchanged by Alice and Bob in a worst-case execution. The\emph{
$\epsilon$-error randomized communication complexity of $F$}, denoted
\emph{$R_{\epsilon}(F)$,} is the least cost of a protocol that computes
$F$ with probability of error at most $\epsilon$ on every input.

Our interest in this paper is in communication protocols with error
probability close to that of random guessing, $1/2.$ There are two
standard ways to define the complexity of a function $F$ in this
setting, both inspired by probabilistic polynomial time for Turing
machines~\cite{gill77pp}: 
\[
\upp(F)=\inf_{0\leq\epsilon<1/2}R_{\epsilon}(F)
\]
and 
\[
\pp(F)=\inf_{0\leq\epsilon<1/2}\left\{ R_{\epsilon}(F)+\log_{2}\left(\frac{1}{\frac{1}{2}-\epsilon}\right)\right\} .
\]
The former quantity, introduced by Paturi and Simon~\cite{paturi86cc},
is called the communication complexity of $F$ with \emph{unbounded
error}, in reference to the fact that the error probability can be
arbitrarily close to $1/2.$ The latter quantity, proposed by Babai
et al.~\cite{BFS86cc}, includes an additional penalty term that
depends on the error probability. We refer to $\pp(F)$ as the communication
complexity of $F$ with \emph{weakly unbounded error}. For all functions
$F\colon\zoon\times\zoon\to\moo,$ one has the trivial bounds $\upp(F)\leq\pp(F)\leq n+2.$
These two complexity measures give rise to corresponding \emph{complexity
classes} in communication complexity theory, defined in the seminal
paper of Babai et al.~\cite{BFS86cc}. Formally, $\UPP$ is the class
of families $\{F_{n}\}_{n=1}^{\infty}$ of communication problems
$F_{n}\colon\zoon\times\zoon\to\moo$ whose unbounded-error communication
complexity is at most polylogarithmic in $n.$ Its counterpart $\PP$
is defined analogously for the complexity measure $\pp$. 

These two models of large-error communication are synonymous with
two central notions in communication complexity: \emph{sign-rank}
and \emph{discrepancy}, defined formally in Sections~\ref{subsec:Communication-complexity}
and~\ref{subsec:Discrepancy}. In more detail, Paturi and Simon~\cite{paturi86cc}
proved that the communication complexity of any problem with unbounded
error is characterized up to an additive constant by the sign-rank
of its communication matrix, $[F(x,y)]_{x,y}.$ Analogously, Klauck~\cite{klauck01quantum,klauck01quantum-journal}
showed that the communication complexity of any problem $F\colon\zoon\times\zoon\to\moo$
with weakly unbounded error is essentially characterized in terms
of the discrepancy of $F$. Discrepancy and sign-rank enjoy a rich
mathematical life~\cite{linial04sign,sherstov07halfspace-mat,sherstov07cc-prod-nonprod,LS08learning-cc}
outside communication complexity, which further motivates the study
of $\PP$ and $\UPP$ as fundamental complexity classes.

Communication with weakly unbounded error is by definition no more
powerful than unbounded-error communication, and for twenty years
after the paper of Babai et al.~\cite{BFS86cc} it was unknown whether
this containment is proper. Buhrman et al.~\cite{buhrman07pp-upp}
and the author~\cite{sherstov07halfspace-mat} answered this question
in the affirmative, independently and with unrelated techniques. These
papers exhibited functions $F\colon\zoon\times\zoon\to\moo$ with
an exponential gap between communication complexity with unbounded
error versus weakly unbounded error: $\upp(F)=O(\log n)$ in both
works, versus $\pp(F)=\Omega(n^{1/3})$ in \cite{buhrman07pp-upp}
and $\pp(F)=\Omega(\sqrt{n})$ in~\cite{sherstov07halfspace-mat}.
In complexity-theoretic notation, these results show that $\PP\subsetneq\UPP$.
A simpler alternate proof of the result of Buhrman et al.~\cite{buhrman07pp-upp}
was given in~\cite{sherstov07quantum} using the pattern matrix method.
More recently, Thaler~\cite{thaler14omb} exhibited another, remarkably
simple communication problem $F\colon\zoon\times\zoon\to\moo,$ with
communication complexity $\upp(F)=O(\log n)$ and $\pp(F)=\Omega(n/\log n)^{2/5}.$ 

To summarize, the strongest explicit separation of communication complexity
with unbounded versus weakly unbounded error prior to our work was
the separation of $O(\log n)$~versus~$\Omega(\sqrt{n})$ from twelve
years ago~\cite{sherstov07halfspace-mat}. The \emph{existence} of
a communication problem with a quadratically larger gap, of $O(\log n)$~versus~$\Omega(n)$,
follows from the work in~\cite{sherstov09opthshs}. This state of
affairs parallels other instances in communication complexity, such
as the $\mathsf{{P}}$ versus $\BPP$ question in multiparty communication~\cite{BDPW07p-rp},
where the best existential separations are much stronger than the
best explicit ones. There is considerable interest in communication
complexity in explicit separations because they provide a deeper and
more complete understanding of the complexity classes, whereas the
lack of a strong explicit separation indicates a basic gap in our
knowledge. As an application of Theorem~\ref{thm:MAIN-approx}, we
obtain:
\begin{thm}
\label{thm:MAIN-pp-upp}There is a communication problem $F_{n}\colon\zoon\times\zoon\to\moo,$
defined by
\begin{align}
F_{n}(x,y)=\sign\left(w_{0}+\sum_{i=1}^{n}w_{i}x_{i}y_{i}\right)\label{eq:main-pp-upp-form}
\end{align}
for some explicitly given reals $w_{0},w_{1},\dots,w_{n},$ such that
\begin{align*}
\upp(F_{n}) & \leq\log n+O(1),\\
\pp(F_{n}) & =\Omega(n).
\end{align*}
Moreover, 
\begin{align*}
\srank(F_{n}) & \leq n+1,\\
\disc(F_{n}) & =2^{-\Omega(n)}.
\end{align*}
\end{thm}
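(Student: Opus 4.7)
The plan is to build $F_n$ directly from the hardest halfspace of Theorem~\ref{thm:MAIN-approx}. Writing $h_n(z) = \sign(w_0 + \sum_{i=1}^{n} w_i z_i)$ for the weights furnished by that theorem, I set
\[
F_n(x,y) = h_n(x_1 y_1,\dots,x_n y_n),
\]
which has exactly the shape claimed in~\eqref{eq:main-pp-upp-form} since $x_i y_i \in \{-1,+1\}$. All four bounds must then be derived from two features of this construction: (i) the argument of the sign is bilinear in $(x,y)$, and (ii) $h_n$ is ``as hard to approximate as possible'' per Theorem~\ref{thm:MAIN-approx}.

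The sign-rank and unbounded-error bounds follow from (i). The real matrix $M$ with entry $M[x,y] = w_0 + \sum_i w_i x_i y_i$ is a sum of $n+1$ rank-one matrices: $w_0$ times the all-ones matrix, plus $w_i\, e_i e_i^{\top}$ for each $i$, where $e_i$ is the $\{-1,+1\}$-vector given by $e_i(x) = x_i$. Thus $\operatorname{rank} M \leq n+1$, and because $F_n$ is the entrywise sign of $M$, we obtain $\srank(F_n) \leq n+1$. Paturi--Simon's characterization of $\upp$ in terms of sign-rank then gives $\upp(F_n) \leq \log_2 \srank(F_n) + O(1) = \log n + O(1)$.

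The discrepancy bound is the substantive half, and I plan to derive it from Theorem~\ref{thm:MAIN-approx} via the generalized discrepancy method. By LP duality, the bound $E(h_n,d) \geq 1 - \epsilon$ with $d = \Omega(n)$ and $\epsilon = 2^{-\Omega(n)}$ produces a signed measure $\phi_n$ on $\{-1,+1\}^n$ of unit $\ell_1$-norm that (a) correlates with $h_n$ to advantage at least $1-\epsilon$ and (b) has Fourier coefficients supported on subsets of size exceeding $d$. I would lift this to the bilinear dual witness $\Phi_n(x,y) = 2^{-n}\phi_n(x_1 y_1,\dots,x_n y_n)$; the identity $\phi_n(xy) = \sum_C \hat{\phi}_n(C)\,\chi_C(x)\,\chi_C(y)$ shows that $\Phi_n$ is a combination of ``diagonal'' rank-one Fourier modes $\chi_C(x)\chi_C(y)$ with $|C| > d$ only, and it inherits unit $\ell_1$-norm and the advantage against $F_n$ directly from $\phi_n$. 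It then remains to check that no combinatorial rectangle $1_S \otimes 1_T$ correlates with $\Phi_n$ beyond $2^{-\Omega(n)}$, from which the generalized discrepancy method yields $\disc(F_n) \leq 2^{-\Omega(n)}$; Klauck's theorem converts this to $\pp(F_n) = \Omega(n)$.

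The main obstacle is precisely this rectangle-correlation bound. A direct Cauchy--Schwarz on the Fourier expansion of $\Phi_n$ is lossy, and I do not expect the polynomial-approximation hardness of $h_n$ alone to be enough. The rational approximation lower bound $R(h_n,d) \geq 1 - 2^{-\Omega(n/d)}$ of Theorem~\ref{thm:MAIN-approx} should enter at exactly this step, either by supplying a dual witness with controlled $L^\infty$ norm (and hence better $\ell_2$ concentration) or by permitting a direct rational-approximation-to-$\pp$ reduction in the spirit of Krause--Pudlak. Either way, it is this conversion step --- not the sign-rank side --- that carries the technical weight of the theorem.
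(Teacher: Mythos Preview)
Your sign-rank and $\upp$ arguments are fine and match the paper's. The discrepancy argument, however, has a real gap that your own proposal flags but does not close, and the paper resolves it by a different route than the one you sketch.

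The problem with the XOR lift is exactly the spectral step. With $\|\phi_n\|_1=1$ one has $|\hat\phi_n(S)|\le 2^{-n}$ for every $S$, and nothing better: the vanishing of low-degree coefficients says nothing about the size of the surviving ones. The matrix $[\phi_n(x\cdot y)]_{x,y}$ has spectral norm $2^{n}\max_S|\hat\phi_n(S)|\le 1$, so the rectangle bound you get for $\Phi_n=2^{-n}\phi_n(x\cdot y)$ is $\sqrt{|A|\,|B|}\cdot\max_S|\hat\phi_n(S)|\le 1$, which is trivial. Your hope that the rational-approximation bound supplies a dual witness with small $L^\infty$ norm is not what the paper does, and in fact the rational lower bound plays no role in Theorem~\ref{thm:MAIN-pp-upp}; it is used only for the threshold-degree results on intersections of halfspaces.

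What the paper actually does is avoid the bare XOR lift entirely. It composes $h_{n'}$ with a \emph{constant-size} inner gadget, namely unique set disjointness $\UDISJ_{m,2}$ with $m=\Theta(1)$, and invokes the pattern matrix method (Theorem~\ref{thm:pm-large-adeg}). That theorem turns the bound $\deg_{1-2^{-cn'}}(h_{n'})\ge cn'$ from Theorem~\ref{thm:polynomial-approx-hs-LOWER} directly into
\[
\disc(h_{n'}\circ\UDISJ_{m,2})\;\le\;\left(\frac{4\e\, n'}{cn'\sqrt m}\right)^{cn'}+2^{-cn'}\;=\;2^{-\Omega(n')},
\]
once $m$ exceeds an absolute constant. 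Because $\UDISJ_{m,2}^*(u,v)=-1+2\sum_{i=1}^m u_iv_i$ is linear in the products $u_iv_i$, the composed function on $n=n'm$ bits per player still has the claimed form $\sign\bigl(w_0+\sum_i w_i x_iy_i\bigr)$, so the sign-rank side goes through unchanged. In short: the missing idea is not a sharper dual witness but a nontrivial inner gadget; the pattern matrix method supplies the rectangle bound that the XOR lift cannot.
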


\noindent Theorem~\ref{thm:MAIN-pp-upp} gives essentially the strongest
possible separation of the communication classes $\PP$ and $\UPP$,
improving quadratically on previous constructions and matching the
previous nonconstructive separation. Another compelling aspect of
the theorem is the simple form~(\ref{eq:main-pp-upp-form}) of the
communication problem in question. The last two bounds in Theorem~\ref{thm:MAIN-pp-upp}
state that $F_{n}$ has sign-rank at most $n+1$ and discrepancy $2^{-\Omega(n)}$,
which is essentially the strongest possible separation. The best previous
construction~\cite{sherstov07halfspace-mat} achieved sign-rank $O(n)$
and discrepancy $2^{-\Omega(\sqrt{n})}$.

We further generalize Theorem~\ref{thm:MAIN-pp-upp} to the \emph{number-on-the-forehead
$k$-party model}, the standard formalism of multiparty communication.
Analogous to two-party communication, the $k$-party model has its
own classes $\UPP_{k}$ and $\PP_{k}$ of problems solvable efficiently
by protocols with unbounded error and weakly unbounded error, respectively.
Their formal definitions can be found in Section~\ref{subsec:Communication-complexity}.
In this setting, we prove:
\begin{thm}
\label{thm:MAIN-pp-upp-multiparty}There is a $k$-party communication
problem $F_{n}\colon(\zoon)^{k}\to\moo,$ defined by
\[
F_{n}(x_{1},x_{2},\ldots,x_{k})=\sign\left(w_{0}+\sum_{i=1}^{n}w_{i}x_{1,i}x_{2,i}\cdots x_{k,i}\right)
\]
for some explicitly given reals $w_{0},w_{1},\dots,w_{n},$ such that
\begin{align*}
\upp(F_{n}) & \leq\log n+O(1),\\
\pp(F_{n}) & =\Omega\left(\frac{n}{4^{k}}\right),\\
\disc(F_{n}) & =\exp\left(-\Omega\left(\frac{n}{4^{k}}\right)\right).
\end{align*}
\end{thm}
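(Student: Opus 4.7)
The plan is to lift Theorem~\ref{thm:MAIN-approx} to the NOF setting via a multiparty pattern matrix method, generalizing the two-party construction in Theorem~\ref{thm:MAIN-pp-upp}. Using the hardest halfspace $h_n(z) = \sign(w_0 + \sum_{i=1}^n w_i z_i)$ from Theorem~\ref{thm:MAIN-approx}, I set
\[
F_n(x_1,\ldots,x_k) := h_n\!\Bigl(\prod_{j=1}^{k} x_{j,1},\,\ldots,\,\prod_{j=1}^{k} x_{j,n}\Bigr) = \sign\!\Bigl(w_0 + \sum_{i=1}^{n} w_i \prod_{j=1}^{k} x_{j,i}\Bigr),
\]
which matches the required form. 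Equivalently, $F_n = h_n \circ g^{\otimes n}$ for $g(x_1,\ldots,x_k) := x_1 x_2 \cdots x_k$.

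The $\upp$ upper bound uses a Paturi--Simon-style sampling protocol adapted to the NOF model. Let $W := \sum_{i=0}^{n} |w_i|$. Player~$1$ samples $i \in \{0,1,\ldots,n\}$ with probability $|w_i|/W$ using private randomness. If $i = 0$, the output is the public constant $\sign(w_0)$. Otherwise Player~$1$ transmits the pair $(i,\,b_1)$, where $b_1 := \prod_{j \geq 2} x_{j,i}$, which Player~$1$ can compute since they see every input except~$x_1$; Player~$2$, who sees $x_{1,i}$, then outputs $\sign(w_i) \cdot x_{1,i} \cdot b_1$. The cost is $\log(n+1) + O(1)$ bits, independent of $k$, and the expected output equals $W^{-1}(w_0 + \sum_i w_i \prod_j x_{j,i})$, whose sign agrees with $F_n$; hence the bias is strictly positive.

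The discrepancy and $\pp$ bounds come from a multiparty generalized discrepancy method in the spirit of Sherstov, Chattopadhyay--Ada, and Lee--Shraibman. Theorem~\ref{thm:MAIN-approx} together with LP duality supplies a dual witness $\psi\colon\{\pm 1\}^n \to \mathbb{R}$ with $\|\psi\|_1 = 1$, $\hat{\psi}(S) = 0$ for $|S| \leq \lfloor cn \rfloor$, and $\langle \psi, h_n\rangle \geq 1 - 2^{-\Omega(n)}$. Define the lifted witness $\Psi(x_1,\ldots,x_k) := \psi(g^{\otimes n}(x_1,\ldots,x_k))$. Since $g^{\otimes n}$ pushes the uniform distribution on inputs forward to uniform on $\{\pm 1\}^n$, we get $\langle \Psi, F_n\rangle = \langle \psi, h_n\rangle \geq 1 - 2^{-\Omega(n)}$. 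For any NOF cylinder intersection $C$, Fourier expansion of $\psi$ combined with its orthogonality to low-degree characters yields
\[
\bigl|\E[\Psi \cdot C]\bigr| \;\leq\; \sum_{|S| > cn} |\hat{\psi}(S)| \cdot \disc\!\bigl(\chi_S \circ g^{\otimes n}\bigr) \;\leq\; 2^{-\Omega(n/4^k)},
\]
where the second inequality uses the Babai--Nisan--Szegedy bound $2^{-\Omega(|S|/4^k)}$ on the NOF discrepancy of the generalized inner product together with $\sum_S |\hat{\psi}(S)| \leq \|\psi\|_1 = 1$. The pair $(\Psi, F_n)$ is precisely the large-correlation, small-cylinder-intersection witness used by the generalized discrepancy method, and standard conversions (e.g.\ Klauck's relation between $\pp$ and discrepancy) then yield $\pp(F_n) = \Omega(n/4^k)$ and $\disc(F_n) = \exp(-\Omega(n/4^k))$.

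The main obstacle I anticipate is the final conversion: the witness $\Psi$ certifies large $\pp$-complexity for $F_n$, and turning this into a genuine upper bound on $\disc(F_n)$ itself requires the extreme approximation hardness supplied by Theorem~\ref{thm:MAIN-approx} -- namely, the fact that the approximation error is $1 - 2^{-\Omega(n)}$ forces $\psi$ to be sign-aligned with $h_n$ on all but a $2^{-\Omega(n)}$ fraction of its $\ell_1$-mass, so that $\Psi$ is pointwise close to a scalar multiple of $F_n$ times a probability measure. Ensuring the $4^k$ exponent is sharp -- rather than the weaker exponent that comes from a naive Fourier inflation -- hinges on the BNS bound in its precise form for the $k$-wise product composition, and on matching the degree threshold $cn$ from Theorem~\ref{thm:MAIN-approx} with the exponent in BNS.
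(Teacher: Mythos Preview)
Your high-level plan matches the paper's: lift the approximate-degree lower bound for $h_n$ to multiparty discrepancy, and read off the $\upp$ upper bound from the polynomial representation (your protocol is essentially Fact~\ref{fact:upp-upper-bound}).

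The gap is in the gadget. You compose $h_n$ with the single-bit $k$-wise AND $g(x_1,\ldots,x_k)=x_{1}\cdots x_{k}$ and then assert that $g^{\otimes n}$ pushes the uniform input distribution forward to the uniform distribution on the range. This is false: under uniform input, $g=1$ with probability $2^{-k}$, so the pushforward is a product of independent $\mathrm{Bernoulli}(2^{-k})$ variables, heavily biased toward the all-zeros string. Consequently $\langle\Psi,F_n\rangle=\E_{z\sim\mu}[\psi(z)h_n(z)]$ for a highly nonuniform $\mu$, and there is no reason this equals $\langle\psi,h_n\rangle$; your lifted witness need not correlate with $F_n$ at all. (Reading $g$ instead as the $\pm1$ product fixes the pushforward but makes $\chi_S\circ g^{\otimes n}$ a parity, which has a trivial NOF protocol and near-$1$ discrepancy, so BNS no longer applies.) Your suspected obstacle about sign-alignment of $\psi$ with $h_n$ is not where the argument breaks; it breaks earlier, at the gadget.

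The paper avoids this by composing $h_{n'}$ not with a single-bit gadget but with $\UDISJ_{m,k}$ on $m=\lceil 2^{k+1}\e/c\rceil^{2}=\Theta(4^{k})$ bits per coordinate, and invoking the pattern matrix method (Theorem~\ref{thm:pm-large-adeg}) as a black box: with $\deg_{1-\gamma}(h_{n'})\geq cn'$ and $\gamma=2^{-cn'}$ one gets $\disc(h_{n'}\circ\UDISJ_{m,k})\leq 2\cdot 2^{-cn'}$. The $\sqrt{m}$ in the denominator of that bound is precisely what absorbs the $2^{k}$ factor, and it is absent when $m=1$. The resulting function still has the required form $\sign(w_0+\sum_i w_i\, x_{1,i}\cdots x_{k,i})$ because $\UDISJ^{*}_{m,k}$ is linear in the coordinatewise products (see~\eqref{eq:udisj-linear-poly}); after rescaling to $n$ bits per player one has $n'=\Theta(n/4^{k})$, which yields the $\Omega(n/4^{k})$ exponent.
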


\noindent Theorem~\ref{thm:MAIN-pp-upp-multiparty} gives essentially
the strongest possible explicit separation of the $k$-party communication
complexity classes $\UPP_{k}$ and $\PP_{k}$ for up to $k\leq(0.5-\epsilon)\log n$
parties, where $\epsilon>0$ is an arbitrary constant. The previous
best explicit separation~\cite{chattopadhyay-mande16multiparty-pp-upp,sherstov16multiparty-pp-upp}
of these classes was quadratically weaker, with communication complexity
$\Omega(\sqrt{n}/4^{k})$ for unbounded error and $O(\log n)$ for
weakly unbounded error. The communication lower bound in Theorem~\ref{thm:MAIN-pp-upp-multiparty}
reflects the state of the art in the area, in that the strongest lower
bound for any explicit communication problem $F\colon(\zoon)^{k}\to\moo$
to date is $\Omega(n/2^{k})$ due to Babai et al.~\cite{bns92}.

\subsection{Computational learning}

A \emph{sign-representing polynomial} for a given function $f\colon\zoon\to\moo$
is any real polynomial $p$ such that $f(x)=\sign p(x)$ for all $x.$
The minimum degree of a sign-representing polynomial for $f$ is called
the \emph{threshold degree} of $f,$ denoted $\degthr(f).$ Clearly
$0\leq\degthr(f)\leq n$ for every Boolean function $f$ on $n$ variables.
The reader can further verify that sign-representation is equivalent
to pointwise approximation with error strictly less than, but arbitrarily
close to, the trivial error of~$1$. Sign-representing polynomials
are appealing from a learning standpoint because they immediately
lead to efficient learning algorithms. Indeed, any function of threshold
degree $d$ is by definition a linear combination of $N={n \choose 0}+{n \choose 1}+\cdots+{n \choose d}$
monomials and can thus be viewed as a halfspace in $N$ dimensions.
As a result, $f$ can be PAC learned~\cite{valiant84pac} under arbitrary
distributions in time polynomial in $N,$ using a variety of halfspace
learning algorithms.

The study of sign-representing polynomials started fifty years ago
with the seminal monograph of Minsky and Papert~\cite{minsky88perceptrons},
who examined the threshold degree of several common functions. Since
then, the threshold degree approach has yielded the fastest known
PAC learning algorithms for notoriously hard concept classes, including
DNF formulas~\cite{KS01dnf} and AND-OR trees~\cite{ACRSZ07nand}.
Conspicuously absent from this list of success stories is the concept
class of \emph{intersections of halfspaces}. While solutions are known
to several restrictions of this learning problem~\cite{blum-kannan97intersection-of-halfspaces,KwekPitt:98,Vempala:97,arriaga98proj,KOS:02,KlivansServedio:04coltmargin,KLT09intersections-of-halfspaces},
no algorithm has been discovered for PAC learning the intersection
of even two halfspaces in time faster than $2^{\Theta(n)}.$ Known
hardness results, on the other hand, only apply to polynomially many
halfspaces or to proper learning, e.g.,~\cite{blum92trainingNN,ABFKP:04,focs06hardness,khot-saket08hs-and-hs}. 

This state of affairs has motivated a quest to determine the threshold
degree of the intersection of two halfspaces~\cite{minsky88perceptrons,odonnell03degree,klivans-thesis,sherstov09hshs,sherstov09opthshs}.
Prior to our work, the best lower bound was $\Omega(\sqrt{n})$ for
an explicit intersection of two halfspaces~\cite{sherstov09hshs},
complemented by a tight but highly nonconstructive $\Omega(n)$ lower
bound~\cite{sherstov09opthshs}. Using Theorem~\ref{thm:MAIN-approx},
we prove: 
\begin{thm}
\label{thm:MAIN-hshs} There is an $($explicitly given$)$ halfspace
$h_{n}\colon\zoon\to\moo$ such that
\[
\degthr(h_{n}\wedge h_{n})=\Omega(n).
\]
\end{thm}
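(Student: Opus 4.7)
The plan is to derive the threshold degree lower bound for $h_n \wedge h_n$ entirely from the rational approximation lower bound on $h_n$ already given by Theorem~\ref{thm:MAIN-approx}, by plugging $h_n$ into the reduction already developed in~\cite{sherstov09opthshs}. That reduction asserts the existence of an absolute constant $\gamma < 1$ such that, for every Boolean function $f$,
\[
\degthr(f \wedge f) \leq d \;\implies\; R(f, d) \leq \gamma.
\]
Equivalently, a lower bound of the form $R(f,d) > \gamma$ forces $\degthr(f \wedge f) > d$. In the original nonconstructive $\Omega(n)$ theorem of~\cite{sherstov09opthshs}, this reduction was combined with a probabilistically guaranteed halfspace; here I would simply feed it the explicit halfspace $h_n$ from Theorem~\ref{thm:MAIN-approx}, whose rational approximation behavior meets the hypothesis.

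The first step is to invoke the reduction, whose natural proof goes through LP duality. A threshold-degree lower bound is certified by a signed measure $\mu$ on $\zoon \times \zoon$ that is orthogonal to all polynomials of degree $\leq d$ and whose sign agrees with $h_n(x) \wedge h_n(y)$. Such a $\mu$ can be built from a dual witness for the rational approximation lower bound $R(h_n,d) > \gamma$: the rational-approximation dual supplies a signed measure on $\zoon$, and a suitable tensoring of two copies of this measure---together with a small correction that accounts for $\wedge$ not being the ordinary product---yields a valid dual witness for $\degthr(h_n \wedge h_n) > d$. A primal counterpart extracts, from any hypothetical polynomial $P(x,y)$ of degree $\leq d$ sign-representing $h_n(x)\wedge h_n(y)$, a pair of polynomials $p(x),q(x)$ of degree $\leq d$ whose ratio $p/q$ approximates $h_n$ with error at most $\gamma < 1$.

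The second step is immediate substitution. Theorem~\ref{thm:MAIN-approx} gives $R(h_n,d) \geq 1 - 2^{-\Omega(n/d)}$ for all $d \leq \lfloor c n \rfloor$. For any fixed constant $\gamma < 1$, the inequality $1 - 2^{-\Omega(n/d)} > \gamma$ holds whenever $d \leq c' n$ for a suitable constant $c'>0$ (with $c' \leq c$). Applying the reduction of Step~1 to each such $d$ yields $\degthr(h_n \wedge h_n) > c' n$, which is the desired $\Omega(n)$ bound. The explicitness of $h_n$ is inherited directly from Theorem~\ref{thm:MAIN-approx}.

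The main obstacle is Step~1: confirming that the reduction of~\cite{sherstov09opthshs} is available in the quantitative form stated here, with an \emph{absolute} constant $\gamma$ independent of $n$. This amounts to tracking the dependencies in the tensor construction of the dual witness (and in particular verifying that the correction handling the non-multiplicativity of $\wedge$ does not introduce an $n$-dependent loss). This is routine but is the one point at which the earlier proof must be read carefully; once $\gamma$ is pinned down as an absolute constant, the two-step plan above closes out the theorem with no further work.
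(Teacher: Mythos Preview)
Your approach is essentially identical to the paper's: combine the rational approximation lower bound for $h_n$ from Theorem~\ref{thm:MAIN-approx} with the structural reduction that bounds $R(f,O(d))$ in terms of $\degthr(f\wedge f)$. Two small corrections: the reduction you invoke is from~\cite{sherstov09hshs} (stated here as Theorem~\ref{thm:sherstov-degthr-R}), not~\cite{sherstov09opthshs}, and its precise form is $R(f,4d)+R(f,2d)<1$ for $d=\degthr(f\wedge f)$ rather than $R(f,d)\leq\gamma$---a cosmetic difference that does not affect the $\Omega(n)$ conclusion.
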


\noindent The symbol $h_{n}\wedge h_{n}$ above stands for the intersection
of two copies of $h_{n}$ on disjoint sets of variables. In other
words, Theorem~\ref{thm:MAIN-hshs} constructs an explicit intersection
of two halfspaces whose threshold degree is asymptotically maximal,
$\Omega(n).$ While the nonconstructive $\Omega(n)$ lower bound of~\cite{sherstov09opthshs}
already ruled out the threshold degree approach as a way to learn
intersections of halfspaces, we see Theorem~\ref{thm:MAIN-hshs}
as contributing a key qualitative piece of the puzzle. Specifically,
it constructs a small and simple family of intersections of two halfspaces
that are off-limits to all known algorithmic approaches (namely, the
family obtained by applying $h_{n}\wedge h_{n}$ to different subsets
of the variables $x_{1},x_{2},\ldots,x_{4n}$).

\subsection{\label{subsec:Proof-overview}Proof overview}

Our solution has two main components: the construction of a sparse
set of integers that appear random modulo $m,$ and the univariatization
of a multivariate Boolean function. We describe each of these components
in detail.

\subsubsection*{Discrepancy of integer sets.}

Let $m\geq2$ be a given integer. Key to our work is the notion of
\emph{$m$-discrepancy}, which quantifies the pseudorandomness or
aperiodicity modulo $m$ of any given multiset of integers. It is
largely unrelated to the notion of discrepancy in communication complexity
(Section~\ref{sec:disc-vs-signrank}). Formally, the $m$-discrepancy
of a nonempty multiset $Z=\{z_{1},z_{2},\ldots,z_{n}\}$ is defined
as
\[
\disc(Z,m)=\max_{k=1,2,\ldots,m-1}\left|\frac{1}{n}\sum_{j=1}^{n}\omega^{kz_{j}}\right|,
\]
where $\omega$ is a primitive $m$-th root of unity. This fundamental
quantity arises in combinatorics and theoretical computer science,
e.g.,~\cite{gks86k-page-graphs-and-nondeterministic-TMs,ruzsa87essential-components,AIKPS90aperiodic-set,katz89character-sums,rsw93sets-uniform-in-arithmetic-progressions,alon-roichman94rando-cayley-graphs-and-expanders}.
The identity $1+\omega+\omega^{2}+\cdots+\omega^{m-1}=0$ for any
$m$-th root of unity $\omega\ne1$ implies that the set $Z=\{0,1,2,\ldots,m-1\}$
achieves the smallest possible $m$-discrepancy: $\disc(Z,m)=0.$
Much sparser sets with small $m$-discrepancy can be shown to exist
using the probabilistic method (Fact~\ref{fact:small-fourier-set-existence}
and Corollary~\ref{cor:small-fourier-set-existence}). Specifically,
one easily verifies for any constant $\epsilon>0$ the existence of
a set $Z\subseteq\{0,1,2,\ldots,m-1\}$ with $m$-discrepancy at most
$\epsilon$ and cardinality $O(\log m),$ an exponential improvement
in sparsity compared to the trivial set $\{0,1,2,\ldots,m-1\}.$ We
are aware of two efficient constructions of sparse sets with small
$m$-discrepancy, due to Ajtai et al.~\cite{AIKPS90aperiodic-set}
and Katz~\cite{katz89character-sums}. The approach of Ajtai et al.~is
elementary except for an appeal to the prime number theorem, whereas
Katz's construction relies on deep results in number theory. Neither
work appears to directly imply the kind of optimal de-randomization
that we require, namely, an algorithm that runs in time polynomial
in $\log m$ and produces a multiset of cardinality $O(\log m)$ with
$m$-discrepancy bounded away from~1. We obtain such an algorithm
by adapting the approach of Ajtai et al.~\cite{AIKPS90aperiodic-set}.

The centerpiece of the construction of Ajtai et al.~\cite{AIKPS90aperiodic-set}
is what the authors call the \emph{iteration lemma}, stated in this
paper as Theorem~\ref{thm:ajtai-iteration}. Its role is to reduce
the construction of a sparse set with small $m$-discrepancy to the
construction of sparse sets with small $p$-discrepancy, for primes
$p\ll m.$ Ajtai et al.~\cite{AIKPS90aperiodic-set} proved their
iteration lemma for $m$ prime, but we show that their argument readily
generalizes to arbitrary moduli $m$. By applying the iteration lemma
in a recursive manner, one reaches smaller and smaller primes. The
authors of~\cite{AIKPS90aperiodic-set}~continue this recursive
process until they reach primes $p$ so small that the trivial construction
$\{0,1,2,\ldots,p-1\}$ can be considered sparse. We proceed differently
and terminate the recursion after just two stages, at which point
the input size is small enough for brute force search based on the
probabilistic method. The final set that we construct has size logarithmic
in $m$ and $m$-discrepancy a small constant, as opposed to the superlogarithmic
size and $o(1)$ discrepancy in the work of Ajtai et al.~\cite{AIKPS90aperiodic-set}. 

We note that this modified approach additionally gives the first explicit
circulant expander on $n$ vertices of degree $O(\log n),$ which
is optimal and improves on the previous best degree bound of $(\log^{*}n)^{O(\log^{*}n)}\cdot O(\log n)$
due to Ajtai et al.~\cite{AIKPS90aperiodic-set}. Background on circulant
expanders, and the details of our expander construction, can be found
in Section~\ref{subsec:A-circulant-expander}.

\subsubsection*{Univariatization.}

We now describe the second major component of our proof. Consider
a halfspace $h_{n}(x)=\sign(\sum z_{i}x_{i}-\theta)$ in Boolean variables
$x_{1},x_{2},\ldots,x_{n},$ where the coefficients can be assumed
without loss of generality to be integers. Then the linear form $\sum z_{i}x_{i}-\theta$
ranges in the discrete set $\{\pm1,\pm2,\ldots,\pm N\}$, for some
integer $N$ proportionate to the magnitude of the coefficients. As
a result, one can approximate $h_{n}$ to any given error $\epsilon$
by approximating the sign function to $\epsilon$ on $\{\pm1,\pm2,\ldots,\pm N\}.$
This approach works for both rational approximation and polynomial
approximation. We think of it as the \emph{black-box} approach to
the approximation of $h_{n}$ because it uses the linear form $\sum z_{i}x_{i}-\theta$
rather than the individual bits. There is no reason to expect that
the black-box construction is anywhere close to optimal. Indeed, there
are halfspaces~\cite[Section~1.3]{sherstov09hshs} that can be approximated
to arbitrarily small error by a rational function of degree~$1$
but require a black-box approximant of degree $\Omega(n)$. Surprisingly,
we are able to construct a halfspace $h_{n}$ with exponentially large
coefficients for which the black-box approximant is essentially optimal.
As a result, tight lower bounds for the rational and polynomial approximation
of $h_{n}$ follow immediately from the univariate lower bounds for
approximating the sign function on $\{\pm1,\pm2,\pm3,\ldots,\pm2^{\Theta(n)}\}$.
The role of $h_{n}$ is to reduce the multivariate problem taken up
in this work to a well-understood univariate question, hence the term
\emph{univariatization}.

The construction of $h_{n}$ involves several steps. First, we study
the probability distribution of the weighted sum $z_{1}X_{1}+z_{2}X_{2}+\cdots+z_{n}X_{n}$
modulo $m$, where $z_{1},z_{2},\ldots,z_{n}$ are given integers
and the bits $X_{1},X_{2},\ldots,X_{n}\in\zoo$ are chosen uniformly
at random. We show that the distribution is exponentially close to
uniform whenever the multiset $\{z_{1},z_{2},\ldots,z_{n}\}$ has
$m$-discrepancy bounded away from~$1$. For the next step, fix any
multiset $\{z_{1},z_{2},\ldots,z_{n}\}$ with small $m$-discrepancy
and consider the linear map $L\colon\zoon\to\ZZ_{m}$ given by $L(x)=\sum z_{i}x_{i}.$
At this point in the proof, we know that for uniformly random $X\in\zoon$,
the probability distribution of $L(X)$ is exponentially close to
uniform. This implies that the characteristic functions of $L^{-1}(0),L^{-1}(1),\ldots,L^{-1}(m-1)$
have approximately the same Fourier spectrum up to degree $cn$, for
some constant $c>0$. We substantially strengthen this conclusion
by proving that there are probability distributions $\mu_{0},\mu_{1},\ldots,\mu_{m-1}$,
supported on $L^{-1}(0),L^{-1}(1),\ldots,L^{-1}(m-1)$, respectively,
such that the Fourier spectra of $\mu_{0},\mu_{1},\ldots,\mu_{m-1}$
are \emph{exactly} the same up to degree $cn.$ Our proof relies on
a general tool from~\cite[Theorem~4.1]{sherstov09opthshs}, proved
there using the Gershgorin circle theorem.

As our final step, we use $\mu_{0},\mu_{1},\ldots,\mu_{m-1}$ to construct
a halfspace in terms of $z_{1},z_{2},\ldots,z_{n}$ whose approximation
by rational functions and polynomials gives corresponding approximants
for the sign function on the discrete set $\{\pm1,\pm2,\ldots,\pm m\}$.
More generally, for any tuple $z_{1},z_{2},\ldots,z_{n}$, we define
an associated halfspace and prove a lower bound on $m$ in terms of
the discrepancy of the multiset $\{z_{1},z_{2},\ldots,z_{n}\}.$ Combining
this result with the efficient construction of an integer set with
small $m$-discrepancy for $m=2^{\Theta(n)}$, we obtain an explicit
halfspace $h_{n}\colon\zoon\to\moo$ whose approximation by polynomials
and rational functions is equivalent to the univariate approximation
of the sign function on $\{\pm1,\pm2,\pm3,\ldots,\pm2^{\Theta(n)}\}$.
Theorem~\ref{thm:MAIN-approx} now follows by appealing to known
lower bounds for the polynomial and rational approximation of the
sign function. To obtain the exponential separation of communication
complexity with unbounded versus weakly unbounded error (Theorem~\ref{thm:MAIN-pp-upp}),
we use the \emph{pattern matrix method}~\cite{sherstov07ac-majmaj,sherstov07quantum}
to ``lift'' the lower bound of Theorem~\ref{thm:MAIN-approx} to
a discrepancy bound. Finally, our result on the threshold degree of
the intersection of two halfspaces (Theorem~\ref{thm:MAIN-hshs})
works by combining the rational approximation lower bound of Theorem~\ref{thm:MAIN-approx}
with a structural result from~\cite{sherstov09hshs} on the sign-representation
of arbitrary functions of the form $f\wedge f.$

A key technical contribution of this paper is the identification of
$m$-discrepancy as a pseudorandom property that is weak enough to
admit efficient de-randomization and strong enough to allow the univariatization
of the corresponding halfspace. The previous, existential result in~\cite{sherstov09opthshs}
used a completely different and more complicated pseudorandom property
based on affine shifts of the Fourier transform on $\zoon,$ which
we have not been able to de-randomize. Apart from the construction
of a low-discrepancy set, our proof is simpler and more intuitive
than the existential proof in~\cite{sherstov09opthshs}. 

\section{\label{sec:prelim}Preliminaries}

We start with a review of the technical preliminaries. The purpose
of this section is to make the paper as self-contained as possible,
and comfortably readable by a broad audience. The expert reader should
therefore skim this section for notation or skip it altogether.

\subsection{Notation}

There are two common arithmetic encodings for the Boolean values:
the traditional encoding $\false\leftrightarrow0,\;\true\leftrightarrow1,$
and the Fourier-motivated encoding $\false\leftrightarrow1,\;\true\leftrightarrow-1.$
Throughout this manuscript, we use the former encoding for the domain
of a Boolean function and the latter for the range. With this convention,
Boolean functions are mappings $\zoon\to\moo$ for some $n.$ For
Boolean functions $f\colon\zoon\to\moo$ and $g\colon\zoom\to\moo,$
we let $f\circ g$ denote the coordinatewise composition of $f$ with
$g.$ Formally, $f\circ g\colon(\zoom)^{n}\to\moo$ is given by 
\begin{align}
(f\circ g)(x_{1},x_{2},\dots,x_{n})=f\left(\frac{1-g(x_{1})}{2},\frac{1-g(x_{2})}{2},\ldots,\frac{1-g(x_{n})}{2}\right),\label{eq:coordinatewise-composition}
\end{align}
where the linear map on the right-hand side serves the purpose of
switching between the distinct arithmetizations for the domain versus~range.
A \emph{partial function} $f$ on a set $X$ is a function whose domain
of definition, denoted $\dom f,$ is a nonempty proper subset of $X.$
We generalize coordinatewise composition $f\circ g$ to partial Boolean
functions $f$ and $g$ in the natural way. Specifically, $f\circ g$
is the Boolean function given by (\ref{eq:coordinatewise-composition}),
with domain the set of all inputs $(\ldots,x_{i},\dots)\in(\dom g)^{n}$
for which $(\ldots,(1-g(x_{i}))/2,\dots)\in\dom f.$ 

We use the following two versions of the sign function:
\begin{align*}
\sign x=\begin{cases}
-1 & \text{if }x<0,\\
0 & \text{if }x=0,\\
1 & \text{if }x>0,
\end{cases}\qquad\qquad\qquad & \Sgn x=\begin{cases}
-1 & \text{if }x<0,\\
1 & \text{if }x\geq0.
\end{cases}
\end{align*}
For a subset $\Xcal\subseteq\Re,$ we let $\sign|_{\Xcal}$ denote
the restriction of the sign function to $\Xcal.$ A \emph{halfspace
}for us is any Boolean function $h\colon\zoon\to\moo$ given by 
\[
h(x)=\sign\left(\sum_{i=1}^{n}w_{i}x_{i}-\theta\right)
\]
for some reals $w_{1},w_{2},\ldots,w_{n},\theta.$ The \emph{majority
function} $\MAJ_{n}\colon\zoon\to\moo$ is the halfspace defined by
\begin{align*}
\MAJ_{n}(x) & =-\sign\left(\sum_{i=1}^{n}x_{i}-\frac{n}{2}-\frac{1}{4}\right)\\
 & =\begin{cases}
-1 & \text{if }x_{1}+x_{2}+\cdots+x_{n}>n/2,\\
1 & \text{otherwise}.
\end{cases}
\end{align*}
Some authors define $\MAJ_{n}$ only for $n$ odd, in which case the
tiebreaker term $1/4$ can be omitted.

The complement and the power set of a set $S$ are denoted as usual
by $\overline{S}$ and $\Pcal(S)$, respectively. The symmetric difference
of sets $S$ and $T$ is $S\oplus T=(S\cap\overline{T})\cup(\overline{S}\cap T).$
Throughout this manuscript, we use brace notation as in $\{z_{1},z_{2},\ldots,z_{n}\}$
to specify \emph{multisets} rather than sets. The \emph{cardinality}
$|Z|$ of a finite multiset $Z$ is defined as the total number of
element occurrences in $Z$, with each element counted as many times
as it occurs. The equality and subset relations on multisets are defined
analogously, with the number of element occurrences taken into account.
For example, $\{1,1,2\}=\{1,2,1\}$ but $\{1,1,2\}\ne\{1,2\}$. Similarly,
$\{1,2\}\subseteq\{1,1,2\}$ but $\{1,1,2\}\nsubseteq\{1,2\}.$ 

The infinity norm of a function $f\colon\Xcal\to\Re$ is denoted $\|f\|_{\infty}=\sup_{x\in\Xcal}|f(x)|.$
For real-valued functions $f$ and $g$ and a nonempty finite subset
$\Xcal$ of their domain, we write
\[
\langle f,g\rangle_{\Xcal}=\frac{1}{|\Xcal|}\sum_{x\in\Xcal}f(x)g(x).
\]
We will often use this notation with $\Xcal$ a nonempty \emph{proper}
subset of the domain of $f$ and $g.$ We let $\ln x$ and $\log x$
stand for the natural logarithm of $x$ and the logarithm of $x$
to base $2,$ respectively. The binary entropy function $H\colon[0,1]\to[0,1]$
is given by $H(p)=-p\log p-(1-p)\log(1-p)$ and is strictly increasing
on $[0,1/2].$ The following bound is well known~\cite[p.~283]{jukna01extremal}:
\begin{align}
\sum_{i=0}^{k}{n \choose i}\leq2^{H(k/n)n}, &  & k=0,1,2,\dots,\left\lfloor \frac{n}{2}\right\rfloor .\label{eqn:entropy-bound}
\end{align}
For a complex number $x,$ we denote the real part, imaginary part,
and complex conjugate of $x$ as usual by $\realpart(x),$ $\imagpart(x),$
and $\overline{x},$ respectively. We typeset the imaginary unit $\iu$
in boldface to distinguish it from the index variable $i$. 

For an arbitrary integer $a$ and a positive integer $m$, recall
that $a\bmod m$ denotes the unique element of $\{0,1,2,\ldots,m-1\}$
that is congruent to $a$ modulo $m.$ For an integer $m\geq2,$ the
symbols $\mathbb{Z}_{m}$ and $\mathbb{Z}_{m}^{*}$ refer to the ring
of integers modulo $m$ and the multiplicative group of integers modulo
$m,$ respectively. For a multiset $Z=\{z_{1},z_{2},\ldots,z_{n}\}$
of integers, we adopt the standard notation
\begin{align}
-Z & =\{-z_{1},\ldots,-z_{n}\},\label{eq:multi1}\\
aZ & =\{az_{1},\ldots,az_{n}\},\\
Z+b & =\{z_{1}+b,\ldots,z_{n}+b\},\\
Z\bmod m & =\{z_{1}\bmod m,\ldots,z_{n}\bmod m\}.\label{eq:multi4}
\end{align}
Note that the multisets in (\ref{eq:multi1})\textendash (\ref{eq:multi4})
each have cardinality $n,$ the same as the original set $Z$. We
often use these shorthands in combination, as in $\text{\ensuremath{(aZ+b)\bmod m}}=\{(az_{1}+b)\bmod m,\ldots,(az_{n}+b)\bmod m\}.$

For a logical condition $C,$ we use the Iverson bracket
\[
\I[C]=\begin{cases}
1 & \text{if \ensuremath{C} holds,}\\
0 & \text{otherwise.}
\end{cases}
\]
The following concentration inequality, due to Hoeffding~\cite{hoeffding-bound},
is well-known.
\begin{fact}[Hoeffding's Inequality]
\label{fact:hoeffding}Let $X_{1},X_{2},\ldots,X_{n}$ be independent
random variables with $X_{i}\in[a_{i},b_{i}].$ Let
\[
p=\sum_{i=1}^{n}\Exp X_{i}.
\]
Then
\[
\Prob\left[\left|\sum_{i=1}^{n}X_{i}-p\right|\geq\delta\right]\leq2\exp\left(-\frac{2\delta^{2}}{\sum_{i=1}^{n}(b_{i}-a_{i})^{2}}\right).
\]
\end{fact}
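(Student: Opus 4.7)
The plan is to prove Fact~\ref{fact:hoeffding} by the standard exponential-moment (Chernoff) method, establishing the one-sided tail bound first and then invoking symmetry to absorb the absolute value at the cost of a factor of $2$.

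First I would reduce to mean-zero variables by setting $Y_i = X_i - \Exp X_i$, so that $Y_i \in [a_i - \Exp X_i,\, b_i - \Exp X_i]$ — an interval of the same length $b_i - a_i$ — with $\Exp Y_i = 0$. For the upper tail, for any $t>0$ I would apply Markov's inequality to the random variable $\exp(t \sum_i Y_i)$:
\[
\Prob\left[\sum_{i=1}^n Y_i \geq \delta\right] \;\leq\; e^{-t\delta}\, \Exp\exp\!\left(t\sum_{i=1}^n Y_i\right) \;=\; e^{-t\delta}\prod_{i=1}^n \Exp e^{tY_i},
\]
where the last equality uses independence of $Y_1, \ldots, Y_n$.

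The crux is then Hoeffding's Lemma: for any random variable $Y$ with $\Exp Y = 0$ and $Y \in [\alpha,\beta]$, one has $\Exp e^{tY} \leq \exp(t^2(\beta-\alpha)^2/8)$. I would prove this by using convexity of the exponential on $[\alpha,\beta]$ to write
\[
e^{tY} \;\leq\; \frac{\beta - Y}{\beta-\alpha}\, e^{t\alpha} + \frac{Y - \alpha}{\beta-\alpha}\, e^{t\beta},
\]
taking expectations (so the $Y$-terms collapse using $\Exp Y = 0$), and then showing that the resulting function $\varphi(s) := \log\!\bigl(\tfrac{\beta}{\beta-\alpha} e^{s\alpha/(\beta-\alpha)\cdot\text{(appropriate)}}\bigr)$ — more cleanly, the cumulant generating function of the two-point bound — satisfies $\varphi(0)=\varphi'(0)=0$ and $\varphi''(s) \leq (\beta-\alpha)^2/4$ for all $s$, so that by Taylor's theorem $\varphi(t(\beta-\alpha)) \leq t^2(\beta-\alpha)^2/8$. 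This second-derivative bound is the one genuine calculation: writing $\varphi''(s)$ as the variance of a Bernoulli-like distribution on $[\alpha,\beta]$ and bounding it by $(\beta-\alpha)^2/4$ via the standard $\text{Var}(U) \leq \bigl(\tfrac{\beta-\alpha}{2}\bigr)^2$ inequality for $[0,1]$-valued random variables.

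Applying Hoeffding's Lemma to each $Y_i$ and multiplying yields
\[
\Prob\!\left[\sum_{i=1}^n Y_i \geq \delta\right] \;\leq\; \exp\!\left(-t\delta + \frac{t^2}{8}\sum_{i=1}^n (b_i-a_i)^2\right).
\]
Optimizing over $t>0$ by taking $t = 4\delta / \sum_i (b_i-a_i)^2$ gives the bound $\exp\!\bigl(-2\delta^2/\sum_i(b_i-a_i)^2\bigr)$. Repeating the argument with $-Y_i$ in place of $Y_i$ — which leaves the interval lengths unchanged — yields the matching lower-tail bound, and a union bound over the two tails produces the factor $2$ in the statement. The only step requiring real work is Hoeffding's Lemma, specifically the $(\beta-\alpha)^2/4$ bound on the second derivative of the cumulant generating function; the rest of the argument is a mechanical Chernoff-type optimization.
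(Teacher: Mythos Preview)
The paper does not prove Fact~\ref{fact:hoeffding}; it simply states it as a well-known concentration inequality due to Hoeffding and cites the original reference. Your proposal is the standard Chernoff--Hoeffding argument and is correct, so there is nothing to compare against.
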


\noindent In Fact~\ref{fact:hoeffding} and throughout this paper,
we typeset random variables using capital letters. %

\subsection{\label{subsec:Number-theoretic-preliminaries}Number-theoretic preliminaries}

For positive integers $a$ and $b$ that are relatively prime, $(a^{-1})_{b}\in\{1,2,\ldots,b-1\}$
denotes the multiplicative inverse of $a$ modulo $b.$ The following
fact is well-known and straightforward to verify;~cf.~\cite{AIKPS90aperiodic-set}.
\begin{fact}
\label{fact:rel-prime}For any positive integers $a$ and $b$ that
are relatively prime,
\begin{equation}
\frac{(a^{-1})_{b}}{b}+\frac{(b^{-1})_{a}}{a}-\frac{1}{ab}\in\mathbb{Z}.\label{eq:rel-prime}
\end{equation}
\end{fact}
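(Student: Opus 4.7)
The plan is to clear denominators and reduce the problem to a simple congruence check modulo $a$ and modulo $b$ separately, then combine via coprimality. Write $u=(a^{-1})_{b}$ and $v=(b^{-1})_{a}$, so by definition $ua\equiv 1\pmod b$ and $vb\equiv 1\pmod a$. Multiplying the expression on the left-hand side of \eqref{eq:rel-prime} through by $ab$ turns the claim into the assertion that
\[
ua+vb-1 \;\in\; ab\,\mathbb{Z}.
\]

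To establish this, I would verify divisibility by $a$ and by $b$ individually. Modulo $a$, the term $ua$ vanishes and $vb\equiv 1$, so $ua+vb-1\equiv 0\pmod a$; modulo $b$, the term $vb$ vanishes and $ua\equiv 1$, so $ua+vb-1\equiv 0\pmod b$. Since $\gcd(a,b)=1$, divisibility by both $a$ and $b$ upgrades to divisibility by $ab$, which is exactly the required statement. There is no real obstacle here: the entire argument is two one-line congruences plus the coprimality of $a$ and $b$, which is why the paper calls the fact straightforward. The only thing worth emphasizing is that coprimality is used twice, first implicitly to ensure that the multiplicative inverses $(a^{-1})_{b}$ and $(b^{-1})_{a}$ exist at all, and then at the end to combine the two divisibility statements.
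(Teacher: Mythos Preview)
Your proposal is correct and is essentially identical to the paper's own proof: both clear denominators to reduce the claim to showing $a(a^{-1})_b + b(b^{-1})_a - 1$ is divisible by $ab$, verify this modulo $a$ and modulo $b$ separately, and invoke coprimality to combine. The only cosmetic difference is your introduction of the abbreviations $u,v$.
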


\begin{proof}
We have $a(a^{-1})_{b}+b(b^{-1})_{a}\equiv b(b^{-1})_{a}\equiv1\pmod a,$
and analogously $a(a^{-1})_{b}+b(b^{-1})_{a}\equiv a(a^{-1})_{b}\equiv1\pmod b.$
Thus, $a(a^{-1})_{b}+b(b^{-1})_{a}-1$ is divisible by both $a$ and
$b.$ Since $a$ and $b$ are relatively prime, we conclude that $a(a^{-1})_{b}+b(b^{-1})_{a}-1$
is divisible by $ab,$ which is equivalent to~(\ref{eq:rel-prime}).
\end{proof}
Recall that the \emph{prime counting function} $\pi(x)$ for a real
argument $x\geq0$ evaluates to the number of prime numbers less than
or equal to $x.$ In what follows, it will be clear from the context
whether $\pi$ refers to $3.14159\ldots$ or the prime counting function.
The asymptotic growth of the latter is given by the \emph{prime number
theorem}, which states that $\pi(n)\sim n/\ln n.$ Many explicit bounds
on $\pi(n)$ are known, such as the following theorem of Rosser~\cite{rosser41primes}.
\begin{fact}[Rosser]
\label{fact:PNT}For $n\geq55,$
\[
\frac{n}{\ln n+2}<\pi(n)<\frac{n}{\ln n-4}.
\]
\end{fact}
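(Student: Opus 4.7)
The statement is a classical explicit bound on the prime counting function, so my first instinct is simply to cite Rosser's 1941 paper and move on. However, if I had to reconstruct a proof, my plan would be to use the Chebyshev-style framework adapted for explicit constants.

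The plan is to work through Chebyshev's function $\theta(x)=\sum_{p\leq x}\log p$ and the related $\psi(x)=\sum_{p^{k}\leq x}\log p$, for which elementary techniques give linear bounds. First I would establish two-sided bounds of the form $c_{1}x<\theta(x)<c_{2}x$ for suitable constants $c_{1},c_{2}$, using Chebyshev's classical trick of analyzing the prime factorization of the central binomial coefficient $\binom{2n}{n}$ via Legendre's formula. The point is that $\binom{2n}{n}<4^{n}$ forces an upper bound on $\theta$, while the lower bound $\binom{2n}{n}\geq 4^{n}/(2n+1)$ together with the fact that primes in $(n,2n]$ divide $\binom{2n}{n}$ exactly once forces a matching lower bound.

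Next I would convert these bounds on $\theta$ into bounds on $\pi$ via Abel summation. The identity
\[
\pi(x)=\frac{\theta(x)}{\ln x}+\int_{2}^{x}\frac{\theta(t)}{t(\ln t)^{2}}\,dt
\]
lets one transport any linear bound on $\theta$ to an $n/\ln n$-type bound on $\pi$, with the explicit error terms $+2$ and $-4$ in the denominator arising from careful integration-by-parts estimates on the tail integral. This is the step where the specific constants are forced, and it is where I expect to spend the most effort: getting the cleanest possible constants from the Chebyshev bounds on $\theta$ is delicate, and one typically has to tighten the elementary argument (e.g., by considering $\binom{2n}{n}\binom{2n+1}{n}$ or higher-order variants as in Erd\H{o}s's refinement of Chebyshev) to bring the ratio $c_{2}/c_{1}$ close enough to $1$ for the asymptotic constants in the Rosser bound to come out correctly.

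Finally, since the asymptotic argument only gives the bounds for $n$ sufficiently large, I would verify the inequalities directly on the finite range $55\leq n\leq N_{0}$ by brute computation up to some explicit threshold $N_{0}$ beyond which the asymptotic estimates provably take over. The main obstacle in this whole scheme is matching Rosser's actual explicit constants $2$ and $4$ rather than some larger pair: Chebyshev's original approach yields constants like $\ln 2$ and $2\ln 2$, which are too loose, and closing the gap to the constants stated here is precisely the content of Rosser's refinement. For the purposes of this paper, I would therefore be content to invoke Rosser's theorem as a black box and not attempt to reproduce the argument.
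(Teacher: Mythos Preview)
Your first instinct is exactly what the paper does: this fact is stated without proof and attributed to Rosser's 1941 paper, then used as a black box wherever explicit bounds on $\pi(n)$ are needed (e.g., in the construction of low-discrepancy sets). There is no in-paper proof to compare against, so your proposal to ``cite Rosser and move on'' matches the paper's approach precisely; the Chebyshev--Abel-summation sketch you outline afterward is extra and not required here.
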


\noindent 

\noindent The number of distinct prime divisors of a natural number
$n$ is denoted $\nu(n)$. We will need the following first-principles
bound on $\nu(n)$, which is asymptotically tight for infinitely many
$n.$
\begin{fact}
\label{fact:num-prime-factors}The number of distinct prime divisors
of $n$ obeys
\begin{equation}
(\nu(n)+1)!\leq n.\label{eq:nu-n-upper-bound}
\end{equation}
In particular,
\begin{equation}
\nu(n)\leq(1+o(1))\frac{\ln n}{\ln\ln n}.\label{eq:nu-n-asymptotic-bound}
\end{equation}
\end{fact}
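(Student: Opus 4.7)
The plan is to establish the multiplicative bound (\ref{eq:nu-n-upper-bound}) directly from the prime factorization of $n$, and then extract the asymptotic estimate (\ref{eq:nu-n-asymptotic-bound}) via Stirling's formula.

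For (\ref{eq:nu-n-upper-bound}), write $k = \nu(n)$ and let $p_1 < p_2 < \cdots < p_k$ be the distinct prime divisors of $n$, with multiplicities $a_1, a_2, \ldots, a_k \geq 1$. Then
\[
n \;=\; p_1^{a_1} p_2^{a_2} \cdots p_k^{a_k} \;\geq\; p_1 p_2 \cdots p_k.
\]
The key observation is that $p_i \geq i+1$ for each $i$: the $p_j$ form a set of $k$ distinct integers drawn from $\{2, 3, 4, \ldots\}$, so when listed in increasing order, the $i$-th element is at least $i+1$. Multiplying these lower bounds gives
\[
n \;\geq\; 2 \cdot 3 \cdot 4 \cdots (k+1) \;=\; (k+1)!,
\]
which is exactly (\ref{eq:nu-n-upper-bound}).

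For the asymptotic bound (\ref{eq:nu-n-asymptotic-bound}), I would simply take logarithms in (\ref{eq:nu-n-upper-bound}) and invoke Stirling's approximation $\ln(m!) = m \ln m - m + O(\ln m)$. This yields
\[
(k+1)\ln(k+1) \;\leq\; \ln n + (k+1) + O(\ln(k+1)),
\]
so $k \ln k \leq (1+o(1)) \ln n$ as $n \to \infty$. Taking logarithms of this relation gives $\ln k \leq (1+o(1)) \ln \ln n$, and substituting this back into $k \leq (1+o(1))\ln n / \ln k$ produces the claimed bound $\nu(n) \leq (1 + o(1)) \ln n / \ln \ln n$.

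There is no real obstacle here; the statement is a textbook fact, and the only mild subtlety is justifying $p_i \geq i+1$, which follows from the $p_i$ being distinct integers all at least $2$. The tightness claim alluded to in the statement is witnessed by the primorials $n = 2 \cdot 3 \cdot 5 \cdots p_k$, for which the first inequality in the chain above is an equality and $\nu(n) = k$ attains the asymptotic rate $\ln n / \ln \ln n$ by the prime number theorem (Fact~\ref{fact:PNT}).
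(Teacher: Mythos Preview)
Your proof is correct and follows essentially the same route as the paper: both use the elementary bound $p_i \geq i+1$ on the distinct prime divisors to obtain $n \geq (\nu(n)+1)!$, and then extract the asymptotic bound by a log-factorial estimate (you use Stirling, the paper uses the equivalent integral bound $\sum_{k=1}^{\nu(n)} \ln(k+1) \geq \int_1^{\nu(n)} \ln x\,dx$).
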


\begin{proof}
An integer $n\geq1$ has by definition $\nu(n)$ distinct prime divisors.
Letting $p_{k}$ denote the $k$-th prime, we have
\begin{align*}
\ln n & \geq\ln p_{1}p_{2}\ldots p_{\nu(n)}\\
 & \geq\sum_{k=1}^{\nu(n)}\ln(k+1)\\
 & \geq\int_{1}^{\nu(n)}\ln x\;dx\\
 & =\nu(n)\ln\nu(n)-\nu(n)+1,
\end{align*}
where the second step uses the trivial estimate $p_{k}\geq k+1.$
The second step in this derivation settles~(\ref{eq:nu-n-upper-bound}),
whereas the last step settles~(\ref{eq:nu-n-asymptotic-bound}). 
\end{proof}
\noindent 

\subsection{Matrix analysis}

For an arbitrary set $X$ such as $X=\mathbb{C}$ or $X=\{-1,1\},$
the symbol $X^{n\times m}$ denotes the family of $n\times m$ matrices
with entries in $X$. The symbols $I_{n}$ and $J_{n,m}$ stand for
the order-$n$ identity matrix and the $n\times m$ matrix of all
ones, respectively. When the dimensions of the matrix are clear from
the context, we omit the subscripts and write simply $I$ or $J.$
The shorthand $\diag(d_{1},d_{2},\ldots,d_{n})$ refers to the diagonal
matrix with entries $d_{1},d_{2},\ldots,d_{n}$ on the diagonal:
\[
\diag(d_{1},d_{2},\ldots,d_{n})=\begin{bmatrix}d_{1} & 0 & \cdots & 0\\
0 & d_{2} & \cdots & 0\\
\vdots & \vdots & \ddots & \vdots\\
0 & 0 & \cdots & d_{n}
\end{bmatrix}.
\]
For a matrix $M=[M_{i,j}],$ recall that its complex conjugate is
given by $\overline{M}=[\overline{M_{i,j}}]$. The transpose and conjugate
transpose of $M$ are denoted $M^{T}$ and $M^{*}=\overline{M}{}^{T},$
respectively. The conjugation, transpose, and conjugate transpose
operations apply as a special case to vectors, which we view as matrices
with a single column. We use the familiar matrix norms $\|M\|_{\infty}=\max|M_{ij}|$
and $\|M\|_{1}=\sum|M_{ij}|.$ Again, these definitions carry over
to vectors as a special case. A matrix $M\in\mathbb{C}^{n\times n}$
is called \emph{unitary} if $MM^{*}=M^{*}M=I.$

A \emph{circulant matrix} is any matrix $C\in\CC^{m\times m}$ of
the form
\begin{align}
C & =\begin{bmatrix}c_{0} & c_{1} & c_{2} & \cdots & c_{m-2} & c_{m-1}\\
c_{m-1} & c_{0} & c_{1} & \cdots & c_{m-3} & c_{m-2}\\
c_{m-2} & c_{m-1} & c_{0} & \cdots & c_{m-4} & c_{m-3}\\
\vdots & \vdots & \vdots & \ddots & \vdots & \vdots\\
c_{2} & c_{3} & c_{4} & \cdots & c_{0} & c_{1}\\
c_{1} & c_{2} & c_{3} & \cdots & c_{m-1} & c_{0}
\end{bmatrix}\label{eq:circulant}
\end{align}
for some $c_{0},c_{1},\ldots,c_{m-1}\in\CC.$ Thus, every row of $C$
is obtained by a circular shift of the previous row one entry to the
right. We let $\circulant(c_{0},c_{1},\ldots,c_{m-1})$ denote the
right-hand side of~(\ref{eq:circulant}). In this notation, $\circulant(1,0,\ldots,0)=I$
and $\circulant(1,1,\ldots,1)=J.$ The eigenvalues and eigenvectors
of a circulant matrix are well-known and straightforward to determine.
For the reader's convenience, we include the short derivation below
in Fact~\ref{fact:circulant-eigen} and Corollary~\ref{cor:circulant-diagonalization}.
\begin{fact}
\label{fact:circulant-eigen}Let $C=\circulant(c_{0},c_{1},\ldots,c_{m-1})$
be a circulant matrix. Then for every $m$-th root of unity $\omega,$
the vector
\begin{equation}
\begin{bmatrix}1\\
\omega\\
\omega^{2}\\
\vdots\\
\omega^{m-1}
\end{bmatrix}\label{eq:eigenvector}
\end{equation}
is an eigenvector of $C$ with eigenvalue $\sum_{j=0}^{m-1}c_{j}\omega^{j}.$
\end{fact}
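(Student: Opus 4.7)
The plan is to verify the claim by direct matrix-vector multiplication, using the explicit form of the circulant matrix in~(\ref{eq:circulant}) together with the defining identity $\omega^m = 1$ for any $m$-th root of unity. The computation is elementary, but the key observation is that the rows of $C$ are cyclic shifts of one another, which will interact cleanly with the geometric structure of the proposed eigenvector.

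First, I would fix notation by letting $v$ denote the vector in~(\ref{eq:eigenvector}) and indexing the rows and columns of $C$ by $\{0,1,\ldots,m-1\}$. Reading off~(\ref{eq:circulant}), one sees that the entry of $C$ in row $i$ and column $j$ equals $c_{(j-i)\bmod m}$. Therefore the $i$-th component of $Cv$ is
\[
(Cv)_i \;=\; \sum_{j=0}^{m-1} c_{(j-i)\bmod m}\,\omega^{j}.
\]

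The next step is the change of summation variable $k = (j-i)\bmod m$, under which $j$ ranges again over $\{0,1,\ldots,m-1\}$ and $\omega^{j} = \omega^{k+i}$ because $\omega^{m}=1$. Substituting yields
\[
(Cv)_i \;=\; \sum_{k=0}^{m-1} c_{k}\,\omega^{k+i} \;=\; \omega^{i}\sum_{k=0}^{m-1} c_{k}\omega^{k}.
\]
Denoting $\lambda = \sum_{k=0}^{m-1} c_{k}\omega^{k}$, this reads $(Cv)_i = \lambda\,\omega^{i} = \lambda v_i$, which holds for every $i\in\{0,1,\ldots,m-1\}$. Thus $Cv = \lambda v$, establishing that $v$ is an eigenvector of $C$ with eigenvalue $\lambda$, as claimed.

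There is really no obstacle here: the argument is a one-line calculation once the indexing convention for the entries of $C$ is pinned down, and the only nontrivial input is the identity $\omega^{m}=1$, which turns the cyclic shift of rows into a scalar multiple of $v$.
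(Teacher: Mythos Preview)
Your proof is correct and follows essentially the same approach as the paper: a direct computation of $(Cv)_i$ using the circulant structure $C_{i,j}=c_{(j-i)\bmod m}$, followed by the substitution $k=(j-i)\bmod m$ and the identity $\omega^m=1$. The only cosmetic difference is that the paper uses $1$-based row indexing while you use $0$-based indexing.
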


\begin{proof}
Let $v$ denote the vector in~(\ref{eq:eigenvector}). Then for $k=1,2,3,\ldots,m,$
\begin{align*}
(Cv)_{k} & =\sum_{j=0}^{m-1}c_{(j-k+1)\bmod m}\;\omega^{j}\\
 & =\left(\sum_{j=0}^{m-1}c_{(j-k+1)\bmod m}\;\omega^{j-k+1}\right)v_{k}\\
 & =\left(\sum_{j=0}^{m-1}c_{(j-k+1)\bmod m}\;\omega^{(j-k+1)\bmod m}\right)v_{k}\\
 & =\left(\sum_{j=0}^{m-1}c_{j}\omega^{j}\right)v_{k},
\end{align*}
where the third step uses $\omega^{m}=1.$
\end{proof}
\noindent As a corollary to Fact~\ref{fact:circulant-eigen}, one
recovers the full complement of eigenvalues for any circulant matrix
$C$ and furthermore learns that $C$ is unitarily similar to a diagonal
matrix. In the statement below, recall that a \emph{primitive $m$-th
root of unity} is any generator, such as $\exp(2\pi\iu/m),$ for the
multiplicative group of the roots of $x^{m}-1\in\mathbb{Q}[x]$.
\begin{cor}
\label{cor:circulant-diagonalization}Let $C=\circulant(c_{0},c_{1},\ldots,c_{m-1})$
be a circulant matrix. Let $\omega$ be a primitive $m$-th root of
unity. Then the matrix
\[
W=[\omega^{jk}/\sqrt{m}]_{j,k=0,1,\ldots,m-1}
\]
is unitary and satisfies
\begin{equation}
W^{*}CW=\diag\left(\sum_{j=0}^{m-1}c_{j},\sum_{j=0}^{m-1}c_{j}\omega^{j},\sum_{j=0}^{m-1}c_{j}\omega^{2j},\ldots,\sum_{j=0}^{m-1}c_{j}\omega^{(m-1)j}\right).\label{eq:C-unitarily-diagonalizable}
\end{equation}
In particular, the eigenvalues of $C,$ counting multiplicities, are
\[
\sum_{j=0}^{m-1}c_{j}\omega^{kj},\qquad\qquad k=0,1,2,\ldots,m-1.
\]
\end{cor}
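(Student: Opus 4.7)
The plan is to derive the corollary directly from Fact~\ref{fact:circulant-eigen} by assembling the $m$ eigenvectors it produces into the columns of $W$.

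First, I would check that $W$ is unitary. The $(k,k')$ entry of $W^*W$ equals $\frac{1}{m}\sum_{j=0}^{m-1} \overline{\omega^{jk}}\,\omega^{jk'} = \frac{1}{m}\sum_{j=0}^{m-1} \omega^{j(k'-k)}$. Since $\omega$ is a primitive $m$-th root of unity, $\omega^{k'-k}$ is itself an $m$-th root of unity, equal to $1$ precisely when $k=k'$. Using the standard identity $1 + z + z^2 + \cdots + z^{m-1} = 0$ for any $m$-th root of unity $z \ne 1$, this sum equals $m$ if $k=k'$ and $0$ otherwise. Hence $W^*W = I$, so $W$ is unitary.

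Next, I would identify the columns of $W$ as eigenvectors of $C$. The $k$-th column of $W$ is $\frac{1}{\sqrt{m}}(1,\omega^k,\omega^{2k},\ldots,\omega^{(m-1)k})^T$, which is exactly $\frac{1}{\sqrt{m}}$ times the vector~(\ref{eq:eigenvector}) of Fact~\ref{fact:circulant-eigen} evaluated at the $m$-th root of unity $\omega^k$ (here we use that $(\omega^k)^m = 1$). Fact~\ref{fact:circulant-eigen} then tells us that this column is an eigenvector of $C$ with eigenvalue $\sum_{j=0}^{m-1} c_j (\omega^k)^j = \sum_{j=0}^{m-1} c_j \omega^{kj}$. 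Assembling these $m$ identities column by column gives $CW = WD$, where $D$ denotes the diagonal matrix on the right-hand side of~(\ref{eq:C-unitarily-diagonalizable}). Left-multiplying by $W^*$ and using $W^*W = I$ yields the desired identity $W^*CW = D$.

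Finally, for the eigenvalue statement, since $W$ is invertible, $C$ and $D$ are similar, so they share the same characteristic polynomial and hence the same eigenvalues with the same multiplicities. The eigenvalues of $D$ are its diagonal entries $\sum_{j=0}^{m-1} c_j \omega^{kj}$ for $k=0,1,\ldots,m-1$, which completes the proof.

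I do not expect any real obstacle here: the whole argument is a one-step corollary of Fact~\ref{fact:circulant-eigen} once the orthogonality relations for $m$-th roots of unity are invoked to verify unitarity. The only mild subtlety is making explicit that as $k$ ranges over $\{0,1,\ldots,m-1\}$, the powers $\omega^k$ enumerate \emph{all} $m$-th roots of unity (because $\omega$ is primitive), which is what lets Fact~\ref{fact:circulant-eigen} be applied simultaneously to every column of $W$.
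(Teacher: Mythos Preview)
Your proposal is correct and follows essentially the same route as the paper's own proof: verify $W^{*}W=I$ via the orthogonality relation $\frac{1}{m}\sum_{j}\omega^{j(k'-k)}=\I[k=k']$, invoke Fact~\ref{fact:circulant-eigen} on each column of $W$ to obtain $CW=WD$, and then left-multiply by $W^{*}$. The only cosmetic difference is that the paper also records $WW^{*}=I$ explicitly, but this is immediate once $W^{*}W=I$ for a square matrix.
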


\begin{proof}
For $k,k'=0,1,\ldots,m-1$, we have
\begin{align*}
\sum_{j=0}^{m-1}\frac{\omega^{jk}}{\sqrt{m}}\cdot\frac{\overline{\omega^{jk'}}}{\sqrt{m}} & =\frac{1}{m}\sum_{j=0}^{m-1}\omega^{j(k-k')}\\
 & =\begin{cases}
1 & \text{if \ensuremath{k=k',}}\\
0 & \text{otherwise,}
\end{cases}
\end{align*}
where the second step is valid because $\omega$ is primitive and
in particular $\omega^{k}\ne\omega^{k'}$. We conclude that
\begin{equation}
WW^{*}=W^{*}W=I.\label{eq:W-unitary}
\end{equation}
Fact~\ref{fact:circulant-eigen} implies that
\[
CW=W\diag\left(\sum_{j=0}^{m-1}c_{j},\sum_{j=0}^{m-1}c_{j}\omega^{j},\sum_{j=0}^{m-1}c_{j}\omega^{2j},\ldots,\sum_{j=0}^{m-1}c_{j}\omega^{(m-1)j}\right),
\]
which in light of~(\ref{eq:W-unitary}) is equivalent to~(\ref{eq:C-unitarily-diagonalizable}).
\end{proof}

\subsection{Polynomial approximation}

Recall that the \emph{total degree} of a multivariate real polynomial
$p\colon\Re^{n}\to\Re$, denoted $\deg p,$ is the largest degree
of any monomial of $p.$ We use the terms ``degree'' and ``total
degree'' interchangeably in this paper. Let $f\colon\Xcal\to\Re$
be a given function with domain $\Xcal\subseteq\Re^{n}.$ For any
$d\geq0,$ define 
\[
E(f,d)=\inf_{p}\|f-p\|_{\infty},
\]
where the infimum is over real polynomials $p$ of degree at most
$d.$ In words, $E(f,d)$ is the least error in a pointwise approximation
of $f$ by a polynomial of degree no greater than $d.$ The \emph{$\epsilon$-approximate
degree of $f$} is the minimum degree of a polynomial $p$ that approximates
$f$ pointwise within $\epsilon$:
\[
\|f-p\|_{\infty}\leq\epsilon.
\]
 In this overview, we focus on the polynomial approximation of the
sign function. We start with an elementary construction of an approximant
due to Buhrman et al.~\cite{BNRW05robust}.
\begin{fact}[Buhrman et al.]
\label{fact:polynomial-approx-SGN-upper}For any $N>1$ and $0<\epsilon<1,$
the sign function can be approximated on $[-N,-1]\cup[1,N]$ pointwise
to within $\epsilon$ by a polynomial of degree 
\[
O\left(N^{2}\log\frac{2}{\epsilon}\right).
\]
\end{fact}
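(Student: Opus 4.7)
The plan is to approximate $\sign$ on $[-N,-1]\cup[1,N]$ via the cumulative distribution function of a binomial random variable, in the spirit of the original argument of Buhrman, Newman, R\"ohrig, and de Wolf. Given $N > 1$ and $0 < \epsilon < 1$, set $k = \lceil 2N^{2}\ln(4/\epsilon)\rceil = O(N^{2}\log(2/\epsilon))$ and define
\[
p(x) \;=\; 2\sum_{j=\lceil k/2\rceil}^{k}\binom{k}{j}\left(\frac{N+x}{2N}\right)^{\!j}\left(\frac{N-x}{2N}\right)^{\!k-j} - 1.
\]
Since $(N+x)/(2N)$ and $(N-x)/(2N)$ are linear in $x$, expanding the powers shows that $p$ is a polynomial in $x$ of degree at most $k$. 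Conceptually, $p(x) = 2\Pr[Y\geq k/2] - 1$, where $Y$ is a binomial random variable with $k$ trials and success probability $p_{x} := (N+x)/(2N)\in[0,1]$ for $x\in[-N,N]$.

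For the error analysis, fix $x\in[1,N]$, so that $p_{x} - 1/2 = x/(2N) \geq 1/(2N)$. Writing $Y = X_{1}+\cdots+X_{k}$ as a sum of i.i.d.\ Bernoulli$(p_{x})$ variables with values in $[0,1]$, Hoeffding's inequality (Fact~\ref{fact:hoeffding}) yields
\[
\Pr[Y < k/2] \;\leq\; \Pr\bigl[\,|Y - kp_{x}| \geq k(p_{x} - 1/2)\,\bigr] \;\leq\; 2\exp\!\bigl(-2k(p_{x}-1/2)^{2}\bigr) \;\leq\; 2\exp\!\left(-\frac{k}{2N^{2}}\right).
\]
Consequently $|p(x) - 1| = 2\Pr[Y < k/2] \leq 4\exp(-k/(2N^{2})) \leq \epsilon$ for our choice of $k$. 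The symmetric argument on $[-N,-1]$ gives $|p(x)+1|\leq\epsilon$, so $\|p - \sign\|_{\infty}\leq\epsilon$ on $[-N,-1]\cup[1,N]$, which is the desired bound.

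There is essentially no substantive obstacle: the construction is explicit and the error bound is immediate from Hoeffding's inequality as already recorded in Fact~\ref{fact:hoeffding}. The only cosmetic care needed is to handle the boundary term $j = k/2$ consistently when $k$ is even, which affects at most one term in the binomial sum and does not alter the asymptotic degree of $O(N^{2}\log(2/\epsilon))$.
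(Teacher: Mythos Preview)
Your proof is correct and follows essentially the same approach as the paper: both construct the approximant as $2B_{d}\!\left(\tfrac{x}{2N}+\tfrac{1}{2}\right)-1$ with $B_{d}(t)=\sum_{i\geq \lceil d/2\rceil}\binom{d}{i}t^{i}(1-t)^{d-i}$ the binomial tail, and both bound the error via Hoeffding's inequality applied to $d$ i.i.d.\ Bernoulli trials with bias $\tfrac{1}{2}+\tfrac{x}{2N}$. Your version is slightly more explicit about the constants, but the construction and analysis are identical.
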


\noindent The degree upper bound in Fact~\ref{fact:polynomial-approx-SGN-upper}
is not tight. Indeed, a quadratically stronger bound of $O(N\log(2/\epsilon))$
follows in a straightforward manner from Jackson's theorem in approximation
theory~\cite[Theorem~1.4]{rivlin-book}.  Our applications do not
benefit from this improvement, however, and we opt for the construction
of Buhrman et al.~~\cite{BNRW05robust} because of its striking
simplicity. For the reader's convenience, we provide their short proof
below.
\begin{proof}[Proof \emph{(adapted from Buhrman et al.)}]
 For a positive integer $d,$ consider the degree-$d$ univariate
polynomial 
\begin{align*}
B_{d}(t)=\sum_{i=\lceil d/2\rceil}^{d}{d \choose i}t^{i}(1-t)^{d-i}.
\end{align*}
In words, $B_{d}(t)$ is the probability of observing at least as
many heads as tails in a sequence of $d$ independent coin flips,
each coming up heads with probability $t.$ By Hoeffding's inequality
(Fact~\ref{fact:hoeffding}) for sufficiently large $d=O(N^{2}\log(2/\epsilon)),$
the polynomial $B_{d}$ sends $[0,\frac{1}{2}-\frac{1}{2N}]\to[0,\frac{\epsilon}{2}]$
and similarly $[\frac{1}{2}+\frac{1}{2N},1]\to[1-\frac{\epsilon}{2},1].$
As a result, the shifted and scaled polynomial $2B_{d}\left(\frac{1}{2N}\cdot t+\frac{1}{2}\right)-1$
approximates the sign function pointwise on $[-N,-1]\cup[1,N]$ within
$\epsilon.$
\end{proof}
On the lower bounds side, Paturi proved that low-degree polynomials
cannot approximate the majority function well. He in fact obtained
analogous results for all symmetric functions, but the special case
of majority will be sufficient for our purposes.
\begin{thm}[Paturi]
\label{thm:paturi-maj}For some constant $c>0$ and all integers
$n\geq1,$
\[
E(\MAJ_{n},cn)\geq\frac{1}{3}.
\]
\end{thm}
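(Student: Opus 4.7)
The plan is to establish the theorem by reducing it to a univariate approximation question via Minsky--Papert symmetrization and then invoking classical derivative inequalities for polynomials. First, given any polynomial $p$ on $\zoon$ of degree at most $d$ with $\|p-\MAJ_n\|_\infty < 1/3$, averaging $p$ over each Hamming slice $\{x\in\zoon : x_1+\cdots+x_n=k\}$ produces a univariate polynomial $q$ of degree at most $d$ such that $|q(k)-g(k)| < 1/3$ for every integer $k \in \{0,1,\ldots,n\}$, where $g(k)=+1$ for $k\le n/2$ and $g(k)=-1$ otherwise. The theorem then reduces to showing that every such univariate $q$ must have degree $\Omega(n)$.

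Next I would exploit the sharp step of $g$ at the midpoint. Writing $k_0 = \lfloor n/2 \rfloor$, the values $g(k_0)=+1$ and $g(k_0+1)=-1$ together with the approximation hypothesis force
\[
|q(k_0+1)-q(k_0)| \ge 2 - \tfrac{2}{3} = \tfrac{4}{3},
\]
so the mean value theorem provides $\xi \in (k_0,k_0+1)$ with $|q'(\xi)| \ge 4/3$. On the other hand, Bernstein's inequality applied to $q$ on $[0,n]$ yields
\[
|q'(\xi)| \le \frac{d\, \|q\|_{[0,n]}}{\sqrt{\xi(n-\xi)}} = O\!\left(\frac{d\, \|q\|_{[0,n]}}{n}\right),
\]
because $\xi$ sits near the centre of $[0,n]$. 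Matching the two estimates forces $\|q\|_{[0,n]} \ge \Omega(n/d)$, and the proof is complete once I show $\|q\|_{[0,n]} = O(1)$ whenever $d \le cn$ for a sufficiently small absolute constant $c>0$.

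The discrete-to-continuous norm comparison is the main obstacle. A first attempt via the Coppersmith--Rivlin estimate $\|q\|_{[0,n]} \le \exp(O(d^2/n)) \max_k |q(k)|$ is useless when $d = \Omega(n)$. My plan is instead to work on the interior sub-interval $[n/4,3n/4]$, where equispaced sampling at integer points is much better behaved than near the endpoints: by refinements of the Ehlich--Zeller theorem in the spirit of Rakhmanov, one has $\|q\|_{[n/4,3n/4]} = O(1)$ whenever $d \le cn$ for a sufficiently small absolute constant $c$. Since $\xi$ sits at the centre of this sub-interval, repeating the Bernstein step on $[n/4,3n/4]$ still gives $|q'(\xi)| = O(d/n)$, and combining with $|q'(\xi)| \ge 4/3$ yields $d = \Omega(n)$. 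An alternative route that bypasses the norm-transfer step altogether would be to construct an explicit degree-$\Omega(n)$ dual witness $\mu$ on $\{0,1,\ldots,n\}$ from discrete Chebyshev or Krawtchouk orthogonal polynomials and invoke LP duality for polynomial approximation.
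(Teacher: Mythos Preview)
The paper does not prove this theorem; it is quoted from Paturi~\cite{paturi92approx} and used as a black box (for instance, in the derivation of Proposition~\ref{prop:polynomial-approx-SGN-lower}). So there is no in-paper argument to compare against, and I will comment on your sketch on its own terms.

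Your outline is the standard one---symmetrize via Minsky--Papert, then use Bernstein's inequality at the midpoint---and you have correctly isolated the real difficulty, namely controlling $\|q\|$ from its integer values when $d$ can be as large as $\Theta(n)$. The step that needs tightening is the sentence asserting $\|q\|_{[n/4,3n/4]}=O(1)$ ``by refinements of the Ehlich--Zeller theorem in the spirit of Rakhmanov.'' This claim is true, but it is not a light refinement: the classical Ehlich--Zeller/Rivlin--Cheney argument (bound the sup-norm via Markov and the nearest integer sample) only closes when $d=O(\sqrt{n})$, and naive bootstraps via Bernstein on nested sub-intervals all ultimately refer back to the uncontrolled global norm $\|q\|_{[0,n]}$. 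What you actually need is a genuine pointwise \emph{interior} estimate---that a degree-$d$ polynomial bounded at all integers of $[0,n]$ stays $O(1)$ outside a boundary layer of width $\Theta(d^{2}/n)$ at each end---and that is precisely Rakhmanov's 2007 theorem. So your plan does close once you cite that result precisely and verify that the boundary layer misses the midpoint when $d\le cn$ for small $c$; but note that this is an anachronistic route, since Rakhmanov's bound postdates Paturi's paper by fifteen years. If you want a self-contained argument, your suggested alternative via an explicit dual witness built from discrete Chebyshev or Krawtchouk polynomials avoids the norm-transfer issue entirely and is probably the cleaner path.
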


\noindent The constant $1/3$ in Paturi's theorem can be replaced
by any other in $(0,1).$ His result is of interest to us because
along with Fact~\ref{fact:polynomial-approx-SGN-upper}, it implies
a lower bound for the approximation of the sign function on the discrete
set of points $\{\pm1,\pm2,\dots,\pm N\}$ for any $N.$
\begin{prop}
\label{prop:polynomial-approx-SGN-lower}For all positive integers
$N$ and $d,$
\[
E(\sign|_{\{\pm1,\pm2,\ldots,\pm N\}},d)\geq1-O\left(\frac{d}{N}\right)^{1/2}.
\]
\end{prop}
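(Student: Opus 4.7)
My plan is to reduce the problem to the polynomial approximation of majority, for which Paturi's Theorem~\ref{thm:paturi-maj} supplies a tight constant-error lower bound, and then bridge the gap between error $1/3$ and error near $1$ via a standard Chebyshev amplifier.

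First I would perform a pullback. Let $n$ be the largest odd integer with $n\leq N$, so that $n=\Theta(N)$. Given any polynomial $p$ of degree $d$ with $\|p-\sign\|_{\infty}\leq 1-\epsilon$ on $\{\pm1,\pm2,\ldots,\pm N\}$, the linear form $L(x):=n-2\sum_{i=1}^{n}x_{i}$ maps $\zoon$ into the odd integers in $[-n,n]\subseteq\{\pm1,\pm2,\ldots,\pm N\}$, and since $n$ is odd the definition of $\MAJ_n$ gives $\sign L(x)=\MAJ_n(x)$ on $\zoon$. Hence $q(x):=p(L(x))$ is a polynomial of degree $d$ in $x_1,\ldots,x_n$ with $\|q-\MAJ_n\|_{\infty}\leq 1-\epsilon$ on $\zoon$; equivalently, $q(x)\cdot\MAJ_n(x)\in[\epsilon,2-\epsilon]$ for every $x\in\zoon$.

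Second, I would amplify. A classical construction yields an odd univariate polynomial $R$ of degree $O(1/\sqrt{\epsilon})$ with $|R(t)-\sign t|\leq 1/4$ for every $t\in[\epsilon,2-\epsilon]\cup[-(2-\epsilon),-\epsilon]$: one affinely rescales these two intervals to lie just outside $[-1,1]$ and exploits the exponential growth $T_k(1+\delta)\asymp\exp(k\sqrt{2\delta})$ of the Chebyshev polynomial, which makes $k=\Theta(1/\sqrt{\epsilon})$ sufficient for the required separation. The composition $R\circ q$ is then a polynomial of degree $d\cdot O(1/\sqrt{\epsilon})$ approximating $\MAJ_n$ within error $1/4<1/3$ on $\zoon$. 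Paturi's Theorem~\ref{thm:paturi-maj} therefore forces $d\cdot O(1/\sqrt{\epsilon})\geq cn$, yielding $\epsilon\leq O((d/N)^{2})$. Since $(d/N)^{2}\leq(d/N)^{1/2}$ for $d\leq N$, the stated bound $E(\sign|_{\{\pm1,\pm2,\ldots,\pm N\}},d)\geq 1-O(\sqrt{d/N})$ follows; for $d>N$ the inequality is vacuous after absorbing the implicit constant.

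The principal obstacle is establishing the existence of the amplifier $R$ with the claimed degree and error. This is a classical fact in approximation theory --- essentially the statement that $\sign$ has constant-error approximate degree $O(1/\sqrt{\delta})$ on $[-1,-\delta]\cup[\delta,1]$ --- but requires some care in the affine normalization of the Chebyshev polynomial. Everything else, namely the pullback computation, the verification that $L(x)$ lands in $\{\pm1,\pm2,\ldots,\pm N\}$ with the correct sign, and the degree bookkeeping under composition, is routine arithmetic.
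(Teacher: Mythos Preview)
Your overall plan---pull back to $\MAJ_n$ via a linear form, amplify the error from $1-\epsilon$ down to a constant, then invoke Paturi---is exactly the paper's strategy (the paper amplifies first and pulls back second, but that is immaterial).  The gap is in your amplifier claim.  You assert that $\sign$ can be approximated to error $1/4$ on $[\epsilon,2-\epsilon]\cup[-(2-\epsilon),-\epsilon]$ by a polynomial of degree $O(1/\sqrt{\epsilon})$.  This is false: the required degree is $\Theta(1/\epsilon)$.  Indeed, after rescaling, such an amplifier would in particular approximate $\sign$ on the integer points $\{\pm 1,\pm 2,\ldots,\pm m\}$ for $m\approx 1/\epsilon$, and by Minsky--Papert symmetrization that is equivalent to approximating $\MAJ_m$ to error $1/4$; Paturi's theorem then forces degree $\Omega(m)=\Omega(1/\epsilon)$.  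The Chebyshev heuristic you sketch does not yield a construction: the two intervals are symmetric about $0$, and no affine map can place both of them ``just outside $[-1,1]$'' in a way that lets $T_k$ distinguish their signs.

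Fortunately this error is inessential.  If you replace the amplifier by one of degree $O(1/\epsilon^{2})$ (Fact~\ref{fact:polynomial-approx-SGN-upper}, which is what the paper invokes) or $O(1/\epsilon)$ (Jackson's theorem), the rest of your argument is intact: you obtain $d\cdot O(1/\epsilon^{2})\geq cn$, hence $\epsilon\leq O(\sqrt{d/N})$, which is precisely the proposition.  With the sharper $O(1/\epsilon)$ amplifier you would even get $\epsilon\leq O(d/N)$, stronger than stated.
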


\begin{proof}
Abbreviate $\epsilon=E(\sign|_{\{\pm1,\pm2,\ldots,\pm N\}},d)$ and
fix a polynomial $p$ of degree at most $d$ that approximates the
sign function on $\{\pm1,\pm2,\ldots,\pm N\}$ within $\epsilon$.
Fact~\ref{fact:polynomial-approx-SGN-upper} gives a polynomial $s$
of degree $O(1/(1-\epsilon)^{2})$ that sends $[-1-\epsilon,-1+\epsilon]\to[-4/3,-2/3]$
and $[1-\epsilon,1+\epsilon]\to[2/3,4/3].$ Then the composition of
these two approximants obeys
\[
\max_{t=\pm1,\pm2,\ldots,\pm N}|\sign(t)-s(p(t))|\leq\frac{1}{3}.
\]
This in turn gives an approximant for the majority function on $n=\lfloor(N-1)/2\rfloor$
bits:
\begin{align*}
 & \max_{x\in\zoon}\left|\MAJ_{n}(x)-s\left(p\left(2\sum_{j=1}^{n}(-1)^{x_{j}}+1\right)\right)\right|\\
 & \qquad=\max_{x\in\zoon}\left|\sign\left(2\sum_{j=1}^{n}(-1)^{x_{j}}+1\right)-s\left(p\left(2\sum_{j=1}^{n}(-1)^{x_{j}}+1\right)\right)\right|\\
 & \qquad\leq\max_{t=\pm1,\pm2,\ldots,\pm N}|\sign(t)-s(p(t))|\\
 & \qquad\leq\frac{1}{3}.
\end{align*}
In view of Paturi's lower bound for the majority function (Theorem~\ref{thm:paturi-maj}),
the approximant $s(p(2\sum(-1)^{x_{j}}+1))$ must have degree $\Omega(n)=\Omega(N).$
But this composition is a polynomial in $x\in\zoon$ of degree $\deg s\cdot\deg p=O(d/(1-\epsilon)^{2}).$
We conclude that $d/(1-\epsilon)^{2}\geq\Omega(N),$ whence $\epsilon\geq1-O(d/N)^{1/2}.$
\end{proof}

\subsection{Rational approximation}

Consider a rational function $r(x)=p(x)/q(x),$ where $p$ and $q$
are polynomials on $\Re^{n}.$ We refer to the degrees of $p$ and
$q$ as the \emph{numerator degree} and \emph{denominator degree},
respectively, of $r$. The \emph{degree} of $r$ is, then, the maximum
of the numerator and denominator degrees. For a function $f\colon X\to\Re$
with domain $X\subseteq\Re^{n},$ we define 
\begin{align}
R(f,d_{0},d_{1})\,=\,\inf_{p,q}\,\sup_{x\in X}\left\lvert f(x)-\frac{p(x)}{q(x)}\right\rvert ,\label{eq:R-d0-d1-defined}
\end{align}
where the infimum is over multivariate polynomials $p$ and $q$ of
degree at most $d_{0}$ and $d_{1}$, respectively, such that $q$
does not vanish on $X.$ In words, $R(f,d_{0},d_{1})$ is the least
error in an approximation of $f$ by a multivariate rational function
with numerator degree and denominator degree at most $d_{0}$ and
$d_{1},$ respectively. We will be mostly working with $R(f,d_{0},d_{1})$
in the regimes $d_{0}=d_{1}$ and $d_{0}\gg d_{1}$. In the former
regime, we use the shorthand
\[
R(f,d)=R(f,d,d).
\]
As a limiting case of the latter regime, we have 
\[
E(f,d)=R(f,d,0).
\]

The study of the rational approximation of the sign function dates
back to the seminal work by Zolotarev~\cite{zolotarev1877rational}
in the 1870s. The problem was revisited almost a century later by
Newman~\cite{newman64rational}, who proved the following result.
\begin{fact}[Newman]
\label{fact:newman}For any $N>1$ and any integer $d\geq1,$ 
\[
R(\sign|_{[-N,-1]\cup[1,N]},d)\leq1-\frac{1}{N^{1/d}}.
\]
\end{fact}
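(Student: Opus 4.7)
The plan is to exhibit an explicit odd rational function of degree $d$ achieving the stated error, following Newman's classical Möbius-product construction. Set $\alpha = N^{1/d}$. Choose positive nodes $c_1 < c_2 < \cdots < c_d$ as a geometric progression tied to $\alpha$ (the precise placement, matched to the geometry of $[1,N]$, is what makes the bound tight), form the polynomial $P(x) = \prod_{k=1}^{d}(x+c_k)$, and define
\[
r(x) \;=\; \frac{P(x)-P(-x)}{P(x)+P(-x)}.
\]
Since $P(x)-P(-x)$ is odd of degree at most $d$ and $P(x)+P(-x)$ is even of degree at most $d$, the function $r$ is an odd rational function of degree $d$. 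By construction $r(\pm c_k) = \pm 1$, so $r$ interpolates the sign function at the nodes.

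First, I would reduce the claim to a one-sided bound: because both $r$ and $\sign$ are odd, the supremum of $|r(x)-\sign x|$ on $[-N,-1]\cup[1,N]$ equals $\sup_{x\in[1,N]}|1-r(x)|$, so it suffices to control the latter by $1-1/\alpha$. Next, introduce
\[
\lambda(x) \;=\; \frac{P(-x)}{P(x)} \;=\; \prod_{k=1}^{d}\frac{c_k-x}{c_k+x},
\]
and observe that $r=(1-\lambda)/(1+\lambda)$ is a Möbius transform of $\lambda$. A direct computation shows that the required bound $|1-r(x)| \leq 1-1/\alpha$ is equivalent to the two-sided pinning
\[
-\frac{\alpha-1}{3\alpha-1} \;\leq\; \lambda(x) \;\leq\; \frac{\alpha-1}{\alpha+1}
\]
on $[1,N]$.

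Finally, I would bound $|\lambda(x)|$ on $[1,N]$ via the log-substitution $x=e^{u}$, $c_k=e^{v_k}$, under which each factor of $\lambda$ becomes $\tanh\!\bigl((v_k-u)/2\bigr)$. With the $v_k$ placed as an arithmetic progression in log-space spanning $[0,\log N]$, the sorted distances $|v_k-u|$ have a predictable structure: for every $u$, at most two entries are small (comparable to the spacing $\log\alpha$) while the remaining distances exceed the spacing and contribute factors of absolute value at most $1$. The two small factors can be paired via the elementary inequality $\tanh(a)\tanh(b) \leq \tanh\!\bigl(\tfrac{a+b}{2}\bigr)^{2}$ for $a,b\geq 0$, which caps a pair by its midpoint; this yields a uniform bound of the form $|\lambda(x)| \leq \tanh(\log\alpha/2)=(\alpha-1)/(\alpha+1)$, matching exactly the window required by the Möbius transform.

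The principal obstacle is this last step: calibrating the node placement $c_k$ so that the tanh-product estimate fits within the Möbius pinning window for both positive $\lambda$ (at the endpoints $x=1,N$, where only a single factor is small) and negative $\lambda$ (at interior $x$, where the paired-factor bound dominates), and doing so with a constant that matches the stated bound $1-1/\alpha$ exactly rather than a looser value. The bound is tight and corresponds to a boundary case of the Möbius map, so the inequalities must be executed with no slack.
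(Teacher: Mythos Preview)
The paper does not actually prove this fact: it states Newman's result and immediately refers the reader to \cite[Theorem~2.4]{sherstov09hshs} for an exposition, giving no argument of its own. So there is no in-paper proof to compare against. That said, your proposal is exactly the classical Newman construction that the cited exposition carries out: form $r=(P(x)-P(-x))/(P(x)+P(-x))$ with geometrically spaced nodes, reduce by oddness to $x\in[1,N]$, rewrite $r=(1-\lambda)/(1+\lambda)$ with $\lambda(x)=\prod_k(c_k-x)/(c_k+x)$, and bound $|\lambda|$ via the log-substitution that turns each factor into a $\tanh$. Your M\"obius window $-\frac{\alpha-1}{3\alpha-1}\le\lambda\le\frac{\alpha-1}{\alpha+1}$ is computed correctly.

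One point to tighten: your pairing inequality $\tanh(a)\tanh(b)\le\tanh\!\bigl(\tfrac{a+b}{2}\bigr)^2$ (which does hold, by concavity of $\log\tanh$) applied to the two nearest factors gives $\tanh(h/4)^2$, not $\tanh(h/2)$, where $h=\log\alpha$ is the log-spacing. You then need the extra (easy) step $\tanh(h/4)^2\le\tanh(h/2)$, which follows from $\tanh(2s)=2\tanh s/(1+\tanh^2 s)$, to land exactly on $(\alpha-1)/(\alpha+1)$. The endpoint case $x=1$ (or $x=N$), where all factors have the same sign and none is ``small'' in the paired sense, is the place where your sketch is thinnest; the exposition in \cite{sherstov09hshs} handles it by choosing the nodes as $c_k=N^{(2k-1)/(2d)}$ so that the nearest factor alone already sits inside the window, and you correctly flag this calibration as the crux. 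With that node placement fixed, the argument goes through and yields the stated constant $1-N^{-1/d}$ without slack.
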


\noindent For a recent exposition of Newman's construction, we refer
the reader to~\cite[Theorem~2.4]{sherstov09hshs}. As an important
special case, Newman's work gives upper bounds for the rational approximation
of the sign function on the discrete set $\{\pm1,\pm2,\ldots,\pm N\}.$
Newman's upper bounds were sharpened and complemented with matching
lower bounds in~\cite[Eq.~(2.2) and Theorem~5.1]{sherstov09hshs},
to the following effect.
\begin{thm}[Sherstov]
\label{thm:rational-approx-SGN}For any positive integers $N$ and
$d,$ 
\[
R(\sign|_{\{\pm1,\pm2,\ldots,\pm N\}},d)=\begin{cases}
1-N^{-\Theta(1/d)} & \text{if }1\leq d\leq\log N,\\
2^{-\Theta(d/\log(N/d))} & \text{if }\log N<d<N/2.
\end{cases}
\]
\end{thm}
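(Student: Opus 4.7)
The plan is to prove the matching upper and lower bounds separately: for the upper bound I would give explicit constructions based on Newman's theorem, and for the lower bound I would invoke LP duality and construct dual witnesses that match the two regimes of the theorem.

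For the upper bound in the regime $1 \leq d \leq \log N$, Newman's theorem (Fact~\ref{fact:newman}) applied directly at degree $d$ produces a rational approximant with error $1 - N^{-1/d}$. Since $N^{-1/d} \in [1/N, 1/2]$ throughout this regime, this is exactly $1 - N^{-\Theta(1/d)}$. For the upper bound in the regime $\log N < d < N/2$, I would combine Newman's construction with a polynomial booster. Fix $d_{1} = C\log(N/d)$ for a sufficiently large absolute constant $C$ and set $d_{2} = \lfloor d/d_{1} \rfloor = \Theta(d/\log(N/d))$, so that $d_{1} d_{2} \leq d$. Newman at degree $d_{1}$ yields a rational function $r$ which, by the choice of $C$, maps $\{\pm 1, \pm 2, \ldots, \pm N\}$ into $[-1, -c] \cup [c, 1]$ for some absolute constant $c > 0$. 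Composing $r$ with the degree-$d_{2}$ polynomial approximant of $\sign$ on that shrunken set from Fact~\ref{fact:polynomial-approx-SGN-upper} (or its Jackson-type sharpening) yields a rational function of degree at most $d$ and error $2^{-\Theta(d_{2})} = 2^{-\Theta(d/\log(N/d))}$.

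The main obstacle is the matching lower bound. Here I would use the LP-duality characterization of rational approximation: $R(\sign|_{\{\pm 1, \pm 2, \ldots, \pm N\}}, d) \geq 1 - \epsilon$ iff there exists a signed measure $\mu$ on $\{\pm 1, \pm 2, \ldots, \pm N\}$ with $\|\mu\|_{1} = 1$ such that, for every polynomial $q$ of degree at most $d$ that is nonvanishing on the domain, $\sum_{t} \mu(t)\sign(t)q(t) \geq (1-\epsilon) \sum_{t} |\mu(t) q(t)|$. The antisymmetry of the domain lets me restrict to odd $\mu$, which reduces the task to constructing a nonnegative measure on $\{1, 2, \ldots, N\}$ that distributes mass evenly enough to correlate with $\sign$ but cannot be killed by any low-degree denominator. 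The natural candidates are measures supported at Chebyshev-like nodes on $\{1, 2, \ldots, N\}$, whose extremal spacing properties have historically been the key tool in sharp univariate rational approximation lower bounds. Alternatively, I could transport the problem to majority: by the reduction used in the proof of Proposition~\ref{prop:polynomial-approx-SGN-lower}, a degree-$d$ rational approximant of $\sign$ on $\{\pm 1, \pm 2, \ldots, \pm N\}$ pulls back along $t = 2\sum_{j}(-1)^{x_{j}} + 1$ to a degree-$d$ rational approximant of $\MAJ_{n}$ with $n = \Theta(N)$ and the same error, so it suffices to construct a dual witness for rational approximation of $\MAJ_{n}$, which symmetrizes to a univariate LP on $\{0, 1, \ldots, n\}$.

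The hardest part will be engineering a single family of dual witnesses whose performance tracks both cases of the theorem and interpolates smoothly across the transition at $d \asymp \log N$. Concretely, one must exhibit, for each $(N, d)$, a measure $\mu$ whose moments against all polynomials of degree $\leq d$ are controlled in a way that forces the required lower bound $1 - N^{-\Theta(1/d)}$ (respectively $2^{-\Theta(d/\log(N/d))}$), and the tightness of the constants in the $\Theta$-notation will depend on finely balancing the node density against the denominator degree. Once the witness is in place, verifying that it meets the LP constraints reduces to estimates on products of elementary factors, which are routine but must be carried out carefully in each regime.
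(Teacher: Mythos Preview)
First, note that the paper does not prove Theorem~\ref{thm:rational-approx-SGN}: it is quoted in the preliminaries from~\cite[Eq.~(2.2) and Theorem~5.1]{sherstov09hshs} as a known result, so there is no proof here to compare against. I will therefore evaluate your plan on its own merits.

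Your upper bound for $1\le d\le\log N$ via Fact~\ref{fact:newman} is correct, and LP duality is indeed the right framework for the lower bounds (and is what~\cite{sherstov09hshs} uses). However, your upper bound for the regime $\log N<d<N/2$ has a real gap. You set $d_{1}=C\log(N/d)$ and assert that Newman at degree $d_{1}$ already reduces the error to some absolute constant below~$1$. But Fact~\ref{fact:newman} gives error at most $1-N^{-1/d_{1}}$, so a constant error requires $N^{1/d_{1}}=O(1)$, i.e., $d_{1}=\Omega(\log N)$. Your choice $d_{1}=C\log(N/d)$ does \emph{not} satisfy this once $d=N^{1-o(1)}$: for instance, at $d=N/2$ you get $d_{1}=C\log 2$, a constant, and Newman's error is $1-N^{-1/C}\to 1$. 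The subsequent polynomial boosting step then has nothing to work with. Patching this by taking $d_{1}=\Theta(\log N)$ does not help either, since then $d_{2}=\Theta(d/\log N)$ and you only reach $2^{-\Theta(d/\log N)}$, strictly weaker than the claimed $2^{-\Theta(d/\log(N/d))}$ when $d$ is a power of $N$ close to~$1$.

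The point you are missing is that Fact~\ref{fact:newman} is stated (and is essentially tight) for the \emph{continuous} domain $[-N,-1]\cup[1,N]$; the sharper rate in this theorem comes from exploiting the discreteness of $\{\pm1,\ldots,\pm N\}$. The construction in~\cite{sherstov09hshs} first uses a degree-$\Theta(d)$ polynomial tailored to the integer points to compress the range, and only then applies a Newman-type rational step of degree $\Theta(\log(N/d))$. Your two-stage outline has the right shape but the wrong division of labor between the polynomial and rational parts.
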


\noindent Among other things, Theorem~\ref{thm:rational-approx-SGN}
implies the following result on the rational approximation of the
majority function~\cite[Eq.~(2.2) and Theorems~5.1,~5.9]{sherstov09hshs}.
\begin{thm}[Sherstov]
\label{thm:rational-approx-MAJ}For any positive integers $n$ and
$d,$
\[
R(\MAJ_{n},d)=\begin{cases}
1-n^{-\Theta(1/d)} & \text{if }\ensuremath{1\leq d\leq\log n,}\\
2^{-\Theta(d/\log(n/d))} & \text{if }\ensuremath{\log n\leq d<\lfloor n/4\rfloor.}
\end{cases}
\]
\end{thm}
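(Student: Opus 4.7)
The plan is to reduce both directions of the bound to Theorem~\ref{thm:rational-approx-SGN} via the linear form $\ell(x)=4\sum_{i=1}^{n}x_{i}-2n-1$, which for $x\in\zoon$ takes integer values in the arithmetic progression $S=\{-2n-1,-2n+3,\ldots,2n-5,2n-1\}\subseteq\{\pm1,\pm2,\ldots,\pm(2n+1)\}$ and satisfies $\MAJ_{n}(x)=-\sign(\ell(x))$.

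For the upper bound, I would take a degree-$d$ univariate rational approximant $r(t)$ to $\sign$ on $\{\pm1,\pm2,\ldots,\pm(2n+1)\}$ (guaranteed by Theorem~\ref{thm:rational-approx-SGN} with $N=2n+1$) and compose with $\ell$: the function $-r(\ell(x))$ is a rational function of degree $d$ in $x$ approximating $\MAJ_{n}$ to the same error, which matches the right-hand side with $N=\Theta(n)$.

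For the lower bound, let $r=p/q$ be a degree-$d$ rational approximant of $\MAJ_{n}$ with error $\epsilon$ on $\zoon$. Replacing $p/q$ by $pq/q^{2}$ at most doubles the degree and forces the denominator to be positive everywhere, so I may assume $q(x)>0$ on $\zoon$. Rewriting the approximation condition as $|p(x)-\MAJ_{n}(x)\,q(x)|\leq\epsilon\,q(x)$ and averaging both sides over each Hamming slice $\{x\in\zoon:|x|=k\}$ produces, by Minsky--Papert symmetrization, univariate polynomials $\tilde{p}(k),\tilde{q}(k)$ of degrees at most $d$ with $\tilde{q}>0$ satisfying
\[
\left|\frac{\tilde{p}(k)}{\tilde{q}(k)}+\sign(4k-2n-1)\right|\leq\epsilon,\qquad k=0,1,\ldots,n.
\]
The substitution $s=4k-2n-1$ then converts this into a degree-$d$ univariate rational approximant to $-\sign(s)$ on $S$.

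The main obstacle is to lower-bound $R(\sign|_{S},d)$ by the quantity claimed on the right-hand side of the theorem, with $N$ replaced by $n$. The set $S$ is sparser than $\{\pm1,\ldots,\pm(2n+1)\}$ by a factor of $4$, so Theorem~\ref{thm:rational-approx-SGN} does not apply verbatim. However, $S$ is an arithmetic progression of $n+1$ points filling $[-(2n+1),2n+1]$ with $\max|s|/\min|s|=2n+1$, and I would adapt the lower-bound proof of Theorem~\ref{thm:rational-approx-SGN} (known from \cite{sherstov09hshs}) to $S$: the dual-measure construction underlying that proof is controlled by the ratio of extreme absolute values together with a density condition on the sample set, both of which remain $\Theta(n)$ for $S$. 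Feeding this extended bound back through the symmetrization argument closes the gap and yields the matching lower bound for $R(\MAJ_{n},d)$.
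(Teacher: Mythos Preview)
The paper does not prove this theorem; it is quoted from~\cite{sherstov09hshs} as a known consequence of Theorem~\ref{thm:rational-approx-SGN}, with no derivation supplied. So there is no in-paper proof to compare against. Your outline is the natural one and in fact parallels the symmetrization machinery the paper itself deploys later in the proof of Theorem~\ref{thm:master}: the $pq/q^{2}$ trick to force a positive denominator, followed by averaging numerator and denominator separately over each slice (cf.\ Proposition~\ref{prop:averaging-num-denom} and Proposition~\ref{prop:minsky-papert}).

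Your upper bound and your symmetrization step for the lower bound are both correct. The step you flag as the ``main obstacle''---lower-bounding $R(\sign|_{S},2d)$ on the thinned arithmetic progression $S$---does not actually require reopening the dual-measure construction of~\cite{sherstov09hshs}. After symmetrization you hold a degree-$2d$ rational approximant $\tilde r(k)=\tilde p(k)/\tilde q(k)$ of the step function $k\mapsto\MAJ_{n}(\text{slice }k)$ on $\{0,1,\ldots,n\}$. A single integer shift suffices: with $K=\lfloor n/2\rfloor+1$, the function $u\mapsto -\tilde r(u+K-1)$ approximates $\sign(u)$ on $\{\pm1,\pm2,\ldots,\pm\lfloor n/2\rfloor\}$ to the same error. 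Now Theorem~\ref{thm:rational-approx-SGN} applies verbatim with $N=\lfloor n/2\rfloor$ and degree~$2d$, and the constant factors are absorbed by the~$\Theta$. So your plan is sound; the only change is that the last step is a two-line affine reduction rather than an adaptation of the dual proof.
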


\subsection{Sign-representation}

Let $f\colon X\to\moo$ be a given function, where $X\subset\Re^{n}$
is finite. The \emph{threshold degree} of $f,$ denoted $\degthr(f),$
is the least degree of a polynomial $p(x)$ such that $f(x)\equiv\sign p(x).$
 For functions $f\colon X\to\moo$ and $g\colon Y\to\moo,$ we let
the symbol $f\wedge g$ stand for the function $X\times Y\to\moo$
given by $(f\wedge g)(x,y)=f(x)\wedge g(y).$ Note that in this notation,
$f$ and $f\wedge f$ are completely different functions, the former
having domain $X$ and the latter $X\times X.$ The following ingenious
observation, due to Beigel et al.~\cite{beigel91rational}, relates
the notions of sign-representation and rational approximation for
conjunctions of Boolean functions.
\begin{thm}[Beigel et al.]
\label{thm:beigel-degthr-rational} Let $f\colon X\to\moo$ and $g\colon Y\to\moo$
be given functions, where $X,Y\subseteq\Re^{n}.$ Let $d$ be any
integer with 
\[
R(f,d)+R(g,d)<1.
\]
 Then 
\begin{align*}
\degthr(f\wedge g)\leq4d.
\end{align*}
\end{thm}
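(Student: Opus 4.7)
The plan is to use the classical Beigel--Reingold--Spielman idea: convert good rational approximants for $f$ and $g$ whose errors sum to less than $1$ into a single polynomial whose sign matches $f \wedge g$ everywhere on $X \times Y$. The starting point is to pick, for arbitrarily small $\delta > 0$, rational functions $p_f/q_f$ and $p_g/q_g$ with $\deg p_f, \deg q_f, \deg p_g, \deg q_g \leq d$ that achieve
\[
\sup_{x \in X} \left| f(x) - \frac{p_f(x)}{q_f(x)}\right| \leq R(f,d) + \delta, \qquad \sup_{y \in Y} \left| g(y) - \frac{p_g(y)}{q_g(y)}\right| \leq R(g,d) + \delta,
\]
with $\delta$ chosen small enough that these two errors still sum to strictly less than $1$. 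Write $\epsilon = (R(f,d)+\delta) + (R(g,d)+\delta) < 1$, and set $r_f = p_f/q_f$ and $r_g = p_g/q_g$.

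Next I would argue that $f \wedge g$ agrees pointwise with the sign of $r_f(x) + r_g(y) - 1$. Indeed, when $f(x) = g(y) = 1$ we have $r_f(x) + r_g(y) \geq 2 - \epsilon > 1$, so $r_f + r_g - 1 > 0$; and in every other case $f(x) + g(y) \leq 0$, forcing $r_f(x) + r_g(y) \leq \epsilon < 1$, so $r_f + r_g - 1 < 0$. This is exactly the desired sign pattern for $f \wedge g$.

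To extract a polynomial sign-representation, I would clear denominators. On $X \times Y$ the denominators $q_f(x)$ and $q_g(y)$ are nonvanishing (otherwise $r_f, r_g$ would not be defined), so $q_f(x)^2 q_g(y)^2 > 0$ everywhere. Multiplying the expression $r_f(x) + r_g(y) - 1$ by this positive quantity does not change its sign, which gives
\[
\sign\bigl(r_f(x) + r_g(y) - 1\bigr) \;=\; \sign\Bigl( q_f(x)\,q_g(y)\,\bigl[p_f(x)\,q_g(y) + q_f(x)\,p_g(y) - q_f(x)\,q_g(y)\bigr]\Bigr).
\]
The right-hand side is a polynomial in $(x,y)$ of total degree at most $(d + d) + (2d) = 4d$, and by the previous paragraph its sign coincides with $(f \wedge g)(x,y)$ on all of $X \times Y$. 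Hence $\degthr(f \wedge g) \leq 4d$, which is the claim.

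The only subtlety worth stressing is the handling of the infimum in the definition of $R(f,d)$: since the infimum may not be attained, one must absorb an arbitrarily small slack $\delta$ into the rational approximants while still keeping the combined error strictly below~$1$. This is possible precisely because the hypothesis $R(f,d) + R(g,d) < 1$ is strict. No other step poses any real obstacle; once the rational approximants are fixed, the identity above is a direct algebraic manipulation.
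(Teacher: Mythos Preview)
Your argument is the same Beigel--Reingold--Spielman trick the paper uses: approximate $f$ and $g$ by rationals with combined error below~$1$, observe that a suitable affine combination of the two rationals has the correct sign, and clear denominators by multiplying through by $q_f^2 q_g^2$. Your handling of the infimum via the $\delta$-slack is in fact slightly more careful than the paper's.

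There is, however, a sign slip stemming from the paper's $\pm 1$ convention ($\text{true}\leftrightarrow -1$, $\text{false}\leftrightarrow +1$). Under that convention, $(f\wedge g)(x,y)=-1$ precisely when $f(x)=g(y)=-1$, so the correct identity is
\[
f(x)\wedge g(y)=\sign\bigl(1+f(x)+g(y)\bigr),
\]
which is what the paper uses. Your expression $r_f(x)+r_g(y)-1$ is positive exactly when $f(x)=g(y)=+1$, so it sign-represents $f\vee g$ rather than $f\wedge g$. The repair is immediate: replace $r_f+r_g-1$ by $1+r_f+r_g$ (equivalently, negate your polynomial and note $R(-f,d)=R(f,d)$). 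With that change the proof is identical to the paper's.
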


\begin{proof}[Proof \emph{(adapted from Beigel et al.).}]
 Fix arbitrary rational functions $p_{1}(x)/q_{1}(x)$ and $p_{2}(y)/q_{2}(y)$
of degree at most $d$ such that 
\begin{align*}
\sup_{X}\left|f(x)-\frac{p_{1}(x)}{q_{1}(x)}\right|+\sup_{Y}\left|g(y)-\frac{p_{2}(y)}{q_{2}(y)}\right|<1.
\end{align*}
Then 
\begin{align*}
f(x)\wedge g(y) & \equiv\sign(1+f(x)+g(y))\\
 & \equiv\sign\left(1+\frac{p_{1}(x)}{q_{1}(x)}+\frac{p_{2}(y)}{q_{2}(y)}\right).
\end{align*}
Multiplying through by the positive quantity $q_{1}(x)^{2}q_{2}(y)^{2}$
gives the desired sign-representing polynomial: $f(x)\wedge g(y)\equiv\sign\{q_{1}(x)^{2}q_{2}(y)^{2}+p_{1}(x)q_{1}(x)q_{2}(y)^{2}+p_{2}(y)q_{2}(y)q_{1}(x)^{2}\}.$
\end{proof}
The construction of Theorem~\ref{thm:beigel-degthr-rational} is
somewhat ad hoc, and there is no particular reason to believe that
it gives a sign-representing polynomial of asymptotically optimal
degree. Remarkably, it does. The following converse to the theorem
of Beigel et al.~was established in~\cite[Theorem~3.16]{sherstov09hshs}.
\begin{thm}[Sherstov]
\label{thm:sherstov-degthr-R} Let $f\colon X\to\moo$ and $g\colon Y\to\moo$
be given functions, where $X,Y\subset\Re^{n}$ are arbitrary finite
sets. Assume that $f$ and $g$ are not identically false. Let $d=\degthr(f\wedge g).$
Then 
\begin{align*}
R(f,4d)+R(g,2d)<1.
\end{align*}
\end{thm}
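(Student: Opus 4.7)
The plan is to construct explicit rational approximants for $f$ and $g$ from a degree-$d$ sign-representing polynomial $P(x,y)$ for $f \wedge g$, where $d = \degthr(f \wedge g)$. Since $f$ and $g$ are not identically false, the pre-images $f^{-1}(-1)$ and $g^{-1}(-1)$ are both nonempty (recalling the paper's convention $\true \leftrightarrow -1$). Fix any $x_0 \in f^{-1}(-1)$ and $y_0 \in g^{-1}(-1)$; the univariate restrictions $q(y) = P(x_0,y)$ and $p(x) = P(x,y_0)$ are then polynomials of degree at most $d$ with $\sign q \equiv g$ on $Y$ and $\sign p \equiv f$ on $X$, since $f(x_0)\wedge g(y) = (-1)\wedge g(y) = g(y)$ and $f(x)\wedge g(y_0) = f(x)\wedge(-1) = f(x)$.

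Next I would convert each univariate sign-representation into a rational approximant at the prescribed degree by composing with a univariate rational approximation of the sign function. By Fact~\ref{fact:newman}, the sign function on $[-N,-1]\cup[1,N]$ admits a degree-$k$ rational approximation with error at most $1 - N^{-1/k}$; composing such an approximation with $q$ (after rescaling so that the minimum and maximum of $|q|$ over $Y$ correspond to $1$ and $N_g$) yields a rational approximant of $g$ of degree at most $kd$. Taking $k=2$ gives the required degree $2d$ with error at most $1 - N_g^{-1/2}$, where $N_g = \max_{y\in Y}|q(y)|/\min_{y\in Y}|q(y)|$. Symmetrically, taking $k=4$ for $p$ gives a degree-$4d$ approximant of $f$ with error at most $1 - N_f^{-1/4}$, where $N_f = \max_{x\in X}|p(x)|/\min_{x\in X}|p(x)|$.

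The heart of the argument is to conclude that the two errors sum strictly below $1$, i.e., that $N_g^{-1/2} + N_f^{-1/4} > 1$. The main obstacle is that $N_g$ and $N_f$ can individually be enormous (even exponential in $d$), so each error is close to $1$ and a purely univariate analysis gives only the trivial bound of $2$ on the sum. The proof must exploit that $p$ and $q$ arise from the \emph{same} bivariate polynomial $P$ of degree $d$: I would aim to choose the pair $(x_0,y_0)$ so as to balance the two dynamic ranges — for instance, picking $(x_0,y_0) \in f^{-1}(-1) \times g^{-1}(-1)$ to minimize $|P(x_0,y_0)|$, thereby maximizing the common "pivot value" that appears in both $\{|p(x)|\}$ and $\{|q(y)|\}$ and jointly constraining $N_f$ and $N_g$. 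The asymmetric degree budget ($4d$ versus $2d$) matches the asymmetric exponents $1/4$ and $1/2$ in Newman's bound and reflects the asymmetric roles of $p$ and $q$ in this balancing step. In executing this plan, I expect the tighter discrete estimates in Theorem~\ref{thm:rational-approx-SGN} to replace the looser Newman bound, since the strict inequality in the conclusion is delicate and benefits from the sharpest available constants.
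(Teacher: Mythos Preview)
First, note that the paper does not actually prove this theorem: it is quoted from~\cite[Theorem~3.16]{sherstov09hshs} as a black-box tool, with no proof supplied in this paper. So there is no ``paper's own proof'' to compare against here.

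On the merits of your proposal, there is a genuine gap at exactly the point you flag. Restricting $P$ along a single fiber $y=y_{0}$ (resp.\ $x=x_{0}$) does give a degree-$d$ sign-representation of $f$ (resp.\ $g$), and composing with Newman gives rational approximants of the stated degrees. But the only coupling between the two resulting univariate polynomials $p(x)=P(x,y_{0})$ and $q(y)=P(x_{0},y)$ is the single shared value $P(x_{0},y_{0})$. This one constraint does not tie the two dynamic ranges $N_{f}=\max|p|/\min|p|$ and $N_{g}=\max|q|/\min|q|$ together in any useful way: $\max|p|$ and $\min|p|$ are both attained at points $(x,y_{0})$ with $x\ne x_{0}$ in general, and likewise for $q$, so $P(x_{0},y_{0})$ need not even be extremal in either family. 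In particular, no choice of $(x_{0},y_{0})$ can be expected to force $N_{f}^{-1/4}+N_{g}^{-1/2}>1$; both ranges can be exponentially large in $d$ simultaneously, and the sum of errors then sits arbitrarily close to $2$. Passing from Newman's bound to the sharper discrete bound in Theorem~\ref{thm:rational-approx-SGN} does not help: the issue is not the constant in the error bound but the absence of any relation between $N_{f}$ and $N_{g}$.

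The actual proof in~\cite{sherstov09hshs} does not work fiber-by-fiber like this. It exploits the full bivariate structure of $P$: roughly, one builds the approximant for $g$ out of \emph{all} the polynomials $P(x,\cdot)$ as $x$ ranges over $X$ (and likewise for $f$), and the strict inequality in the conclusion comes from a duality/averaging argument rather than from bounding two separate dynamic ranges. Your proposal would need to replace the single-fiber restriction with a construction that aggregates information across many fibers to have a chance of closing the gap.
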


\subsection{Symmetrization}

Let $S_{n}$ denote the symmetric group on $n$ elements. For $\sigma\in S_{n}$
and $x\in\zoon$, we denote $\sigma x=(x_{\sigma(1)},\ldots,x_{\sigma(n)})\in\zoon.$
For $x\in\zoon,$ we define $|x|=x_{1}+x_{2}+\cdots+x_{n}.$ A function
$\phi\colon\zoon\to\Re$ is called \emph{symmetric} if $\phi(x)=\phi(\sigma x)$
for every $x\in\zoon$ and every $\sigma\in S_{n}.$ Equivalently,
$\phi$ is symmetric if $\phi(x)$ is uniquely determined by $|x|.$
Symmetric functions on $\zoon$ are intimately related to univariate
polynomials, as borne out by Minsky and Papert's \emph{symmetrization
argument}~\cite{minsky88perceptrons}.
\begin{prop}[Minsky and Papert]
\label{prop:minsky-papert}Let $p\colon\zoon\to\Re$ be a polynomial
of degree $d.$ Then there is a univariate polynomial $p^{*}$ of
degree at most $d$ such that for all $x\in\zoon,$
\begin{align*}
\Exp_{\sigma\in S_{n}}p(\sigma x)=p^{*}(|x|).
\end{align*}
\end{prop}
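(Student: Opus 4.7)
The plan is to exploit linearity of the expectation to reduce to a single monomial, and then show that the symmetrized form of a single monomial, evaluated on $\zoon$, is already a polynomial in $|x|$ of degree at most $d$. Writing $p = \sum_{\alpha} c_{\alpha} x^{\alpha}$ with $\deg(x^\alpha)\le d$, I would first use the identity $x_i^2 = x_i$ valid on $\zoo$ to assume without loss of generality that $p$ is multilinear, i.e. each surviving monomial has the form $\prod_{i \in S} x_i$ for some $S \subseteq [n]$ with $|S| \le d$. Since $\sigma \mapsto \sigma^{-1}$ is a bijection on $S_n$, we may equivalently average over permutations acting on the indices of the monomial rather than on $x$.

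Next I would compute the symmetrization of a single multilinear monomial. For a fixed $S \subseteq [n]$ with $|S| = k$, a direct count gives
\[
\Exp_{\sigma \in S_n} \prod_{i \in S} x_{\sigma(i)} \;=\; \frac{1}{\binom{n}{k}} \sum_{T \subseteq [n],\;|T| = k} \prod_{i \in T} x_i \;=\; \frac{e_k(x)}{\binom{n}{k}},
\]
where $e_k$ denotes the $k$-th elementary symmetric polynomial. Thus after linearly combining, $\Exp_{\sigma} p(\sigma x)$ is a linear combination of $e_0(x), e_1(x), \dots, e_d(x)$, with each $e_k$ appearing only for $k \le d$.

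The final step is to convert each $e_k$ into a univariate polynomial in $|x|$ of degree exactly $k$. On $\zoon$, the value $e_k(x)$ counts the number of $k$-subsets of $[n]$ all of whose coordinates in $x$ equal $1$, which is precisely $\binom{|x|}{k}$. Since $\binom{t}{k} = t(t-1)\cdots(t-k+1)/k!$ is a polynomial in $t$ of degree $k$, substituting this identity and collecting terms yields a univariate polynomial $p^*$ in $|x|$ of degree at most $d$ with $\Exp_{\sigma} p(\sigma x) = p^*(|x|)$ for all $x \in \zoon$, as required.

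I do not anticipate any real obstacle here: the key insight is the combinatorial identity $e_k(x) = \binom{|x|}{k}$ on Boolean inputs, and the only care needed is in the degree bookkeeping (multilinear reduction does not increase degree, averaging does not increase degree, and the substitution $e_k \mapsto \binom{|x|}{k}$ preserves degree). The proof is essentially a three-line calculation once the multilinear reduction and the symmetric-polynomial identification are in place.
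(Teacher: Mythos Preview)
Your argument is correct and is in fact the standard proof of Minsky and Papert's symmetrization lemma. The paper itself does not supply a proof of this proposition; it simply states the result with attribution to~\cite{minsky88perceptrons} and then records the block-symmetric generalization (Proposition~\ref{prop:symmetrization}) as following by induction. So there is nothing to compare against, and your write-up would serve perfectly well as the omitted proof.
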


\noindent Minsky and Papert's result generalizes to block-symmetric
functions, as pointed out in~\cite[Proposition~2.3]{RS07dc-dnf}:
\begin{prop}[Razborov and Sherstov]
\label{prop:symmetrization} Let $n_{1},\dots,n_{k}$ be positive
integers. Let $p\colon\zoo^{n_{1}}\times\cdots\times\zoo^{n_{k}}\to\Re$
be a polynomial of degree $d.$ Then there is a polynomial $p^{*}\colon\Re^{k}\to\Re$
of degree at most $d$ such that for all $x_{1}\in\zoo^{n_{1}},\ldots,x_{k}\in\zoo^{n_{k}},$
\begin{align*}
\Exp_{\sigma_{1}\in S_{n_{1}},\dots,\sigma_{k}\in S_{n_{k}}}p(\sigma_{1}x_{1},\dots,\sigma_{k}x_{k}) & =p^{*}(|x_{1}|,\ldots,|x_{k}|).
\end{align*}
\end{prop}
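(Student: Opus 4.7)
The plan is to reduce to a single monomial via linearity of expectation and then apply the classical one-block symmetrization in each coordinate separately, exploiting the independence of the permutations $\sigma_1,\ldots,\sigma_k$.

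Since the arguments of $p$ are Boolean, we may assume $p$ is multilinear of degree at most $d$ and expand
\[
p(x_1,\ldots,x_k) \;=\; \sum_{\alpha_1,\ldots,\alpha_k} c_{\alpha_1,\ldots,\alpha_k}\,\prod_{i=1}^{k}\prod_{j\in\alpha_i} x_{i,j},
\]
where the sum is over tuples with $\alpha_i\subseteq\{1,\ldots,n_i\}$ and $\sum_i|\alpha_i|\le d$. By linearity of expectation, it is enough to exhibit, for each monomial $m_{\alpha_1,\ldots,\alpha_k}(x)=\prod_i\prod_{j\in\alpha_i}x_{i,j}$, a polynomial in $(|x_1|,\ldots,|x_k|)$ of degree exactly $\sum_i|\alpha_i|$ that equals its symmetrization.

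Fix such a tuple. Since $\sigma_1,\ldots,\sigma_k$ are independent, the symmetrization factors across blocks:
\[
\Exp_{\sigma_1,\ldots,\sigma_k}\;\prod_{i=1}^{k}\prod_{j\in\alpha_i}x_{i,\sigma_i(j)} \;=\; \prod_{i=1}^{k}\,\Exp_{\sigma_i\in S_{n_i}}\;\prod_{j\in\alpha_i}x_{i,\sigma_i(j)}.
\]
The $i$-th factor is exactly the probability that a uniformly random injection of $\alpha_i$ into $\{1,\ldots,n_i\}$ lands in the support of $x_i$, namely $\binom{|x_i|}{|\alpha_i|}\big/\binom{n_i}{|\alpha_i|}$, which is a polynomial in $|x_i|$ of degree $|\alpha_i|$. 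Multiplying over $i$ yields a polynomial in $(|x_1|,\ldots,|x_k|)$ of total degree $\sum_i|\alpha_i|\le d$. Summing over monomials produces the promised $p^*$.

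There is no serious obstacle: the proof is an essentially automatic generalization of Proposition~\ref{prop:minsky-papert}. The only delicate points to check are that the joint symmetrization of a product monomial factors across blocks (which is immediate from the independence of the $\sigma_i$) and that each per-block factor is a univariate polynomial in the corresponding Hamming weight of the correct degree, both of which are classical.
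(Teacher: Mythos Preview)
Your proof is correct. The paper itself does not give a detailed argument; it simply remarks that the proposition follows from Proposition~\ref{prop:minsky-papert} by induction on the number of blocks~$k$. Your approach is a minor reorganization of the same idea: instead of inducting on~$k$, you expand into multilinear monomials, factor each symmetrized monomial across blocks by independence of the $\sigma_i$, and invoke the one-block computation on each factor. Both routes reduce immediately to the single-block case, and yours has the advantage of making the resulting polynomial $p^*$ explicit via the formula $\prod_i\binom{|x_i|}{|\alpha_i|}\big/\binom{n_i}{|\alpha_i|}$.
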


\noindent Proposition~\ref{prop:symmetrization} follows in a straightforward
manner from Minsky and Papert's Proposition~\ref{prop:minsky-papert}
by induction on the number of blocks $k.$

\subsection{\label{subsec:Communication-complexity}Communication complexity}

An excellent reference on communication complexity is the monograph
by Kushilevitz and Nisan~\cite{ccbook}. In this overview, we will
limit ourselves to key definitions and notation. We adopt the \emph{randomized
number-on-the-forehead model}, due to Chandra et al.~\cite{cfl83multiparty}.
The model features $k$ communicating players, tasked with computing
a (possibly partial) Boolean function $F$ on the Cartesian product
$X_{1}\times X_{2}\times\cdots\times X_{k}$ of some finite sets $X_{1},X_{2},\dots,X_{k}$.
A given input $(x_{1},x_{2},\dots,x_{k})\in X_{1}\times X_{2}\times\cdots\times X_{k}$
is distributed among the players by placing $x_{i}$, figuratively
speaking, on the forehead of the $i$-th player (for $i=1,2,\dots,k$).
In other words, the $i$-th player knows the arguments $x_{1},\dots,x_{i-1},x_{i+1},\dots,x_{k}$
but not $x_{i}$. The players communicate by sending broadcast messages,
taking turns according to a protocol agreed upon in advance. Each
of them privately holds an unlimited supply of uniformly random bits,
which he can use along with his available arguments when deciding
what message to send at any given point in the protocol. The protocol's
purpose is to allow accurate computation of $F$ everywhere on the
domain of $F$. An \emph{$\epsilon$-error protocol} for $F$ is one
which, on every input $(x_{1},x_{2},\dots,x_{k})\in\dom F,$ produces
the correct answer $F(x_{1},x_{2},\dots,x_{k})$ with probability
at least $1-\epsilon$. The \emph{cost} of a protocol is the total
bit length of the messages broadcast by all the players in the worst
case.\footnote{~The contribution of a $b$-bit broadcast to the protocol cost is
$b$ rather than $k\cdot b$.} The \emph{$\epsilon$-error randomized communication complexity}
of $F,$ denoted $R_{\epsilon}(F),$ is the least cost of an $\epsilon$-error
randomized protocol for $F$. As a special case of this model for
$k=2,$ one recovers the original two-party model of Yao~\cite{yao79cc}
reviewed in the introduction.

We focus on randomized protocols with probability of error close to
that of random guessing, $1/2$. There are two natural ways to define
the communication complexity of a multiparty problem $F$ in this
setting. The \emph{communication complexity of $F$ with unbounded
error}, introduced by Paturi and Simon~\cite{paturi86cc}, is the
quantity 
\[
\upp(F)=\inf_{0\leq\epsilon<1/2}R_{\epsilon}(F).
\]
The error probability in this formalism is ``unbounded'' in the
sense that it can be arbitrarily close to $1/2$. Babai et al.~\cite{BFS86cc}
proposed an alternate quantity, which includes an additive penalty
term that depends on the error probability: 
\[
\pp(F)=\inf_{0\leq\epsilon<1/2}\left\{ R_{\epsilon}(F)+\log\frac{1}{\frac{1}{2}-\epsilon}\right\} .
\]
We refer to $\pp(F)$ as the \emph{communication complexity of $F$
with weakly unbounded error.} These two complexity measures naturally
give rise to corresponding complexity classes $\UPP_{k}$ and $\PP_{k}$
in multiparty communication complexity~\cite{BFS86cc}, both inspired
by Gill's probabilistic polynomial time for Turing machines~\cite{gill77pp}.
Formally, let $\{F_{n,k}\}_{n=1}^{\infty}$ be a family of $k$-party
communication problems $F_{n,k}\colon(\zoon)^{k}\to\moo$, where $k=k(n)$
is either a constant or a function. Then $\{F_{n,k}\}_{n=1}^{\infty}\in\UPP_{k}$
if and only if $\upp(F_{n,k})\leq\log^{c}n$ for some constant $c$
and all $n\geq c$. Analogously, $\{F_{n,k}\}_{n=1}^{\infty}\in\PP_{k}$
if and only if $\pp(F_{n,k})\leq\log^{c}n$ for some constant $c$
and all $n\geq c$. By definition, 
\[
\PP_{k}\subseteq\UPP_{k}.
\]
It is standard practice to abbreviate $\PP=\PP_{2}$ and $\UPP=\UPP_{2}$.
The following well-known fact, whose proof in the stated generality
is available in~\cite[Fact~2.4]{sherstov16multiparty-pp-upp}, gives
a large class of communication problems that are efficiently computable
with unbounded error.
\begin{fact}
\label{fact:upp-upper-bound} Let $F\colon(\zoon)^{k}\to\moo$ be
a $k$-party communication problem such that $F(x)=\sign p(x)$ for
some polynomial $p$ with $\ell$ monomials. Then 
\[
\upp(F)\leq\lceil\log\ell\rceil+2.
\]
\end{fact}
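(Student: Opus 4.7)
My plan is to exhibit an unbounded-error NOF protocol of cost $\lceil \log \ell \rceil + 2$ that samples a single monomial and has just two of the $k$ players broadcast one bit each, independent of $k$. Write $p(x) = \sum_{i=1}^{\ell} c_i m_i(x)$ with each $m_i$ a monomial over the input bits (WLOG each $c_i \ne 0$), and set $W := \sum_i |c_i|$, which is positive since $F = \sign p$ forces $p \not\equiv 0$. Each monomial factors by player as $m_i(x) = m_i^{(1)}(x_1) \cdots m_i^{(k)}(x_k)$, where $m_i^{(j)}$ depends only on $x_j$.

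The key NOF observation, which keeps the cost independent of $k$, is that player~$1$ sees every $x_j$ with $j \ne 1$ and can therefore locally compute $A := \prod_{j=2}^{k} m_I^{(j)}(x_j) \in \{0,1\}$, while player~$2$ can compute the single remaining factor $B := m_I^{(1)}(x_1) \in \{0,1\}$, so the value of the entire monomial $m_I(x) = A \cdot B$ is transmitted in two bits. The protocol is as follows: (i) player~$1$ privately samples $I \in \{1, \ldots, \ell\}$ with $\Pr[I = i] = |c_i|/W$ and broadcasts $I$ using $\lceil \log \ell \rceil$ bits; (ii) player~$1$ broadcasts $A$; (iii) player~$2$ broadcasts $B$; (iv) the output $Z \in \moo$ is declared to be $\sign(c_I)$ if $A B = 1$, and an independent fair $\pm 1$ bit otherwise. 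The total communication is $\lceil \log \ell \rceil + 2$.

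Correctness reduces to a one-line expectation computation:
\[
\Exp[Z] \;=\; \sum_{i=1}^{\ell} \frac{|c_i|}{W}\, \sign(c_i)\, m_i(x) \;=\; \frac{p(x)}{W}.
\]
Because the range $\moo$ excludes $0$, we have $p(x) \ne 0$ for every $x$, so $\Exp[Z]$ is nonzero and matches $F(x) = \sign p(x)$ in sign, giving $\Pr[Z = F(x)] > 1/2$ pointwise and hence an unbounded-error protocol of the claimed cost. I do not expect any substantive obstacle here; the one thing to get right is the NOF-specific ``one versus rest'' split of each monomial, without which a naive protocol would need each of the $k$ players to contribute a bit, inflating the cost by $\Theta(k)$.
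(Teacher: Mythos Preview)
Your proof is correct and is precisely the standard argument; the paper itself does not give a proof of this fact but defers to \cite[Fact~2.4]{sherstov16multiparty-pp-upp}, and your protocol---sample a monomial $I$ with probability $|c_I|/W$, broadcast $I$, and use the NOF ``one versus the rest'' split so that only two players need to speak---is exactly that approach.
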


In the setting of $k=2$ parties, Paturi and Simon~\cite{paturi86cc}
showed that unbounded-error communication complexity has a natural
matrix-analytic characterization. For a matrix $M$ without zero entries,
the \emph{sign-rank} of $M$ is denoted $\srank(M)$ and defined as
the minimum rank of a real matrix $R$ such that $\sign R_{i,j}=\sign M_{i,j}$
for all $i,j.$ In words, the sign-rank of $M$ is the minimum rank
of a real matrix that has the same sign pattern as $M.$ We extend
the notion of sign-rank to communication problems $F\colon X\times Y\to\moo$
by defining $\srank(F)=\srank(M_{F}),$ where $M_{F}=[F(x,y)]_{x\in X,y\in Y}$
is the characteristic matrix of $F.$ The following classic result
due to Paturi and Simon~\cite[Theorem~3]{paturi86cc} relates two-party
unbounded-error communication complexity to sign-rank. 
\begin{thm}[Paturi and Simon]
\label{thm:srank-upp} Let $F\colon X\times Y\to\moo$ be a two-party
communication problem. Then
\[
\log\srank(F)\leq\upp(F)\leq\log\srank(F)+2.
\]
\end{thm}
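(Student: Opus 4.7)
The plan is to prove the two inequalities separately, by essentially independent arguments.

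For the upper bound $\upp(F)\leq\log\srank(F)+2$, I would start with a real matrix $R$ of minimum rank $r=\srank(F)$ satisfying $\sign(R_{x,y})=F(x,y)$ for all $(x,y)$. Factorize $R=UV^{T}$ so that $U$ has rows $u_{x}\in\Re^{r}$ and $V$ has rows $v_{y}\in\Re^{r}$, and then rescale each row of $U$ by dividing by its largest entry in absolute value, and similarly each row of $V$; the resulting entries of $U$ and $V$ lie in $[-1,1]$, while each $R_{x,y}$ is scaled by a positive factor and hence retains its sign. The protocol is then: Alice samples $i\in\{1,\ldots,r\}$ uniformly at random and sends $i$ together with a private bit $a\in\moo$ having $\Exp[a\mid i]=U_{x,i}$; Bob, after receiving $i$, samples an independent private bit $b\in\moo$ with $\Exp[b\mid i]=V_{y,i}$ and outputs $a\cdot b$. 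Since $a$ and $b$ are conditionally independent given $i$,
\[
\Exp[\text{output}] \;=\; \frac{1}{r}\sum_{i=1}^{r}U_{x,i}V_{y,i} \;=\; \frac{R_{x,y}}{r},
\]
which has the same sign as $F(x,y)$. Hence the protocol is correct with probability strictly greater than $1/2$, and its cost is $\lceil\log r\rceil+1\leq\log\srank(F)+2$.

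For the lower bound $\log\srank(F)\leq\upp(F)$, fix any $\epsilon$-error randomized protocol $\Pi$ of cost $c$ with $\epsilon<1/2$ and outputs in $\moo$. The standard rectangle decomposition, which exploits the independence of Alice's and Bob's private coins and the alternating structure of the protocol tree, says that for each transcript $t$ the probability $\Pi$ produces $t$ on input $(x,y)$ factors as $\alpha_{t}(x)\beta_{t}(y)$ for nonnegative functions $\alpha_{t}$ and $\beta_{t}$. Letting $o(t)\in\moo$ denote the output declared at transcript $t$, define
\[
Q_{x,y} \;=\; \Exp[\Pi(x,y)] \;=\; \sum_{t} o(t)\,\alpha_{t}(x)\,\beta_{t}(y).
\]
Since $\Pi$ is correct with probability exceeding $1/2$ on every input, $\sign(Q_{x,y})=F(x,y)$. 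Moreover, $Q$ is a sum over at most $2^{c}$ transcripts of rank-one matrices $o(t)\,\alpha_{t}\beta_{t}^{T}$, so $\rank(Q)\leq 2^{c}$ and hence $\srank(F)\leq 2^{c}$, i.e., $\log\srank(F)\leq c$.

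The bulk of the work lies in two standard but not-quite-trivial ingredients: the normalization step that forces all entries of $U$ and $V$ into $[-1,1]$, so that Alice and Bob can produce $\pm 1$ bits with the prescribed expectations; and the rectangle decomposition of a private-coin randomized protocol together with the bound of $2^{c}$ on the number of transcripts of a $c$-bit protocol. Both are folklore facts about protocol trees and rank-one factorizations, and once they are in place the argument is elementary.
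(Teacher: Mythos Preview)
The paper does not give its own proof of this theorem; it is quoted as a classical result of Paturi and Simon~\cite{paturi86cc} and used as a black box. Your argument is correct and is essentially the standard proof: a rank-$r$ sign-representation yields, after row normalization, a randomized protocol in which Alice sends a uniformly random index $i\in\{1,\ldots,r\}$ together with a $\pm1$ bit of expectation $U_{x,i}$, and Bob multiplies by his own $\pm1$ bit of expectation $V_{y,i}$; conversely, the rectangle decomposition of a cost-$c$ private-coin protocol writes the expected output as a sum of at most $2^{c}$ rank-one matrices whose sign pattern agrees with $F$. One small point worth making explicit: the normalization step is well-defined because no row of $U$ or $V$ can vanish (else some $R_{x,y}$ would be zero, contradicting $\sign R_{x,y}=F(x,y)\in\moo$).
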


\subsection{\label{subsec:Discrepancy}Discrepancy}

A $k$-dimensional \emph{cylinder intersection} is a function $\chi\colon X_{1}\times X_{2}\times\cdots\times X_{k}\to\zoo$
of the form 
\begin{align*}
\chi(x_{1},x_{2},\dots,x_{k})=\prod_{i=1}^{k}\chi_{i}(x_{1},\dots,x_{i-1},x_{i+1},\dots,x_{k}),
\end{align*}
where $\chi_{i}\colon X_{1}\times\cdots\times X_{i-1}\times X_{i+1}\times\cdots\times X_{k}\to\zoo$.
In other words, a $k$-dimensional cylinder intersection is the product
of $k$ functions with range $\zoo,$ where the $i$-th function does
not depend on the $i$-th coordinate but may depend arbitrarily on
the other $k-1$ coordinates. Introduced by Babai et al.~\cite{bns92},
cylinder intersections are the fundamental building blocks of communication
protocols and for that reason play a central role in the theory. For
a (possibly partial) Boolean function $F$ on $X_{1}\times X_{2}\times\cdots\times X_{k}$
and a probability distribution $P$ on $X_{1}\times X_{2}\times\cdots\times X_{k},$
the \emph{discrepancy} \emph{of $F$ with respect to $P$} is given
by 
\begin{align*}
\disc_{P}(F)=\sum_{x\notin\dom F}P(x)+\max_{\chi}\left|\sum_{x\in\dom F}F(x)P(x)\chi(x)\right|,
\end{align*}
where the maximum is over cylinder intersections $\chi$. The minimum
discrepancy over all distributions is denoted 
\[
\disc(F)=\min_{P}\disc_{P}(F).
\]
Upper bounds on a function's discrepancy give lower bounds on its
randomized communication complexity, a classic technique known as
the \emph{discrepancy method}~\cite{chor-goldreich88ip,bns92,ccbook}.
\begin{thm}
\label{thm:dm} Let $F$ be a $($possibly partial$)$ Boolean function
on $X_{1}\times X_{2}\times\cdots\times X_{k}$. Then for $0\leq\epsilon\leq1/2,$
\begin{align*}
2^{R_{\epsilon}(F)}\geq\frac{1-2\epsilon}{\disc(F)}.
\end{align*}
\end{thm}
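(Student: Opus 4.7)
The plan is to carry out the classical discrepancy argument, with a minor adjustment to handle the partial-function term in the definition of $\disc_P(F)$ from Section~\ref{subsec:Discrepancy}.

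First, I fix a distribution $P$ on $X_1 \times \cdots \times X_k$ achieving $\disc(F) = \disc_P(F)$. By averaging the given randomized $\epsilon$-error protocol over its random coins (equivalently, by Yao's minimax principle), I obtain a \emph{deterministic} protocol $\Pi'$ of cost at most $c := R_\epsilon(F)$ satisfying
\[
\sum_{x \in \dom F,\;\Pi'(x) \ne F(x)} P(x) \;\leq\; \epsilon,
\]
using that the original protocol errs with probability at most $\epsilon$ on every input in $\dom F$.

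Next, I would invoke the standard structural fact that a deterministic $k$-party protocol of cost $c$ partitions its input space into at most $2^c$ cylinder intersections, on each of which the protocol's output is a fixed value in $\moo$. Writing $\chi_1, \ldots, \chi_L$ with $L \leq 2^c$ for the indicator functions of these cylinder intersections and $b_1, \ldots, b_L \in \moo$ for the corresponding outputs, and setting $q = \sum_{x \notin \dom F} P(x)$, I study the weighted correlation
\[
S \;=\; \sum_{x \in \dom F} F(x)\,\Pi'(x)\,P(x).
\]
The correctness bound above gives the lower estimate $S \geq (1-q) - 2\epsilon$ by separating inputs on which $\Pi'$ agrees with $F$ from those on which it disagrees. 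On the other hand, expanding $\Pi'(x) = \sum_{i} b_i \chi_i(x)$ and applying the triangle inequality yields $|S| \leq 2^c \cdot (\disc_P(F) - q)$, since the quantity being bounded cylinder by cylinder is precisely the inner maximum in the definition of $\disc_P(F)$.

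Combining the two estimates and rearranging gives
\[
1 - 2\epsilon \;\leq\; 2^c \disc_P(F) - (2^c-1)q \;\leq\; 2^c \disc(F),
\]
which is the claim. The only point demanding any care is that the additive term $\sum_{x \notin \dom F} P(x)$ in the definition of $\disc_P(F)$ must be tracked on both sides of the inequality, where it cancels favorably because $2^c \geq 1$ and $q \geq 0$. Beyond this bookkeeping there is no real obstacle; the heart of the argument is the cylinder-intersection partition of deterministic protocols, which is a well-known structural fact.
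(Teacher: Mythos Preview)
Your proof is correct and is precisely the standard discrepancy argument. The paper does not supply its own proof of Theorem~\ref{thm:dm} but instead cites \cite[Theorem~2.9]{sherstov12mdisj}; the argument there is the same as yours, and your tracking of the off-domain mass $q=\sum_{x\notin\dom F}P(x)$ is exactly the bookkeeping required to handle the partial-function case.
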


\noindent A proof of Theorem~\ref{thm:dm} in the stated generality
is available in~\cite[Theorem 2.9]{sherstov12mdisj}. Combining this
theorem with the definition of $\pp(F)$ gives the following corollary.
\begin{cor}
\label{cor:dm}Let $F$ be a $($possibly partial$)$ Boolean function
on $X_{1}\times X_{2}\times\cdots\times X_{k}$. Then 
\[
\pp(F)\geq\log\frac{2}{\disc(F)}.
\]
\end{cor}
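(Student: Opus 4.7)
The plan is to apply Theorem~\ref{thm:dm} directly and then invoke the definition of $\pp(F)$. Since the statement involves an infimum over the error parameter $\epsilon$, I will first derive a pointwise bound that holds for every $\epsilon \in [0,1/2)$, and only then take the infimum.

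Starting from Theorem~\ref{thm:dm}, for any $\epsilon$ with $0 \leq \epsilon < 1/2$ I would take logarithms of both sides of $2^{R_\epsilon(F)} \geq (1-2\epsilon)/\disc(F)$ to obtain
\[
R_\epsilon(F) \;\geq\; \log \frac{1-2\epsilon}{\disc(F)} \;=\; 1 + \log \frac{\tfrac{1}{2}-\epsilon}{\disc(F)}.
\]
Adding the penalty term $\log(1/(\tfrac{1}{2}-\epsilon))$ to both sides makes the dependence on $\epsilon$ cancel:
\[
R_\epsilon(F) + \log \frac{1}{\tfrac{1}{2}-\epsilon} \;\geq\; 1 + \log \frac{1}{\disc(F)} \;=\; \log \frac{2}{\disc(F)}.
\]
This inequality holds uniformly over $\epsilon$, so taking the infimum on the left-hand side and using the definition of $\pp(F)$ yields the desired conclusion.

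There is no real obstacle here; this is essentially the standard observation that the discrepancy bound is tight against the $\pp$-measure because the cost penalty $\log(1/(\tfrac{1}{2}-\epsilon))$ in the definition of $\pp$ is exactly calibrated to absorb the $(1-2\epsilon)$ factor from the discrepancy method. The only minor point to verify is that the manipulations are legitimate for all $\epsilon \in [0,1/2)$, which they are, since $\tfrac{1}{2}-\epsilon > 0$ on this range and $\disc(F) > 0$ whenever $F$ is not the trivial everywhere-undefined function (in which case the statement is vacuous).
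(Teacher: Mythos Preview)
Your proof is correct and follows exactly the approach the paper intends: it simply combines Theorem~\ref{thm:dm} with the definition of $\pp(F)$, observing that the penalty term $\log(1/(\tfrac{1}{2}-\epsilon))$ cancels the $(1-2\epsilon)$ factor. The paper does not write out the details and merely states that the corollary follows from this combination, so your argument is precisely what was meant.
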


\subsection{\label{subsec:Pattern-matrix-method}Pattern matrix method}

Theorem~\ref{thm:dm} and Corollary~\ref{cor:dm} highlight the
role of discrepancy in proving lower bounds on randomized communication
complexity. Apart from a few canonical examples~\cite{ccbook}, discrepancy
is a challenging quantity to analyze. The \emph{pattern matrix method}
is a technique that gives tight bounds on the discrepancy and communication
complexity for a large class of communication problems. The technique
was developed in~\cite{sherstov07ac-majmaj,sherstov07quantum} for
two-party communication complexity and has since been generalized
by several authors to the multiparty setting.  We now review the
strongest form~\cite{sherstov12mdisj,sherstov13directional} of the
pattern matrix method, focusing our discussion on discrepancy bounds.

\emph{Set disjointness} is the $k$-party communication problem of
determining whether $k$ given subsets of the universe $\{1,2,\dots,n\}$
have empty intersection, where, as usual, the $i$-th party knows
all the sets except for the $i$-th. Identifying the sets with their
characteristic vectors, set disjointness corresponds to the Boolean
function $\DISJ_{n,k}\colon(\zoon)^{k}\to\moo$ given by 
\begin{align}
\DISJ_{n,k}(x_{1},x_{2},\dots,x_{k})=\neg\bigvee_{i=1}^{n}x_{1,i}\wedge x_{2,i}\wedge\cdots\wedge x_{k,i}\,.\label{eq:disj-def}
\end{align}
The partial function $\UDISJ_{n,k}$ on $(\zoon)^{k}$, called \emph{unique
set disjointness}, is defined as $\DISJ_{n,k}$ with domain restricted
to inputs $x\in(\zoon)^{k}$ such that \emph{$x_{1,i}\wedge x_{2,i}\wedge\cdots\wedge x_{k,i}=1$}
for at most one coordinate $i$. In set-theoretic terms, this restriction
corresponds to requiring that the $k$ sets either have empty intersection
or intersect in a unique element.

The pattern matrix method pertains to the communication complexity
of \emph{composed} communication problems. Specifically, let $G$
be a (possibly partial) Boolean function on $X_{1}\times X_{2}\times\cdots\times X_{k},$
representing a $k$-party communication problem, and let $f\colon\zoon\to\moo$
be given. The coordinatewise composition $f\circ G$ is then a $k$-party
communication problem on $X_{1}^{n}\times X_{2}^{n}\times\cdots\times X_{k}^{n}$.
We are now in a position to state the pattern matrix method for discrepancy
bounds~\cite[Theorem~5.7]{sherstov12mdisj}.
\begin{thm}[Sherstov]
\label{thm:pm-large-adeg}For every Boolean function $f\colon\zoon\to\{-1,+1\},$
all positive integers $m$ and $k,$ and all reals $0<\gamma<1,$
\[
\disc(f\circ\UDISJ_{m,k})\leq\left(\frac{\e\cdot2^{k}n}{\deg_{1-\gamma}(f)\sqrt{m}}\right)^{\deg_{1-\gamma}(f)}+\gamma\,.
\]
\end{thm}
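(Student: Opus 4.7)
The plan is to invoke the \emph{generalized discrepancy method}: to upper bound $\disc(f\circ\UDISJ_{m,k})$, it suffices to exhibit a signed measure $\Psi$ on the inputs such that $|\langle \Psi, f\circ\UDISJ_{m,k}\rangle|$ is large relative to $\|\Psi\|_{1}$, $\Psi$ puts little mass outside $\dom(f\circ\UDISJ_{m,k})$, and $|\langle\Psi,\chi\rangle|$ is small for every $k$-dimensional cylinder intersection $\chi$. Write $d=\deg_{1-\gamma}(f)$. LP duality applied to the definition of approximate degree produces a function $\psi\colon\zoon\to\Re$ with $\|\psi\|_{1}=1$, $\langle\psi,f\rangle\geq 1-\gamma$, and $\hat\psi(S)=0$ for all $|S|<d$. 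This $\psi$ is the seed dual witness for the outer function $f$.

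I would lift $\psi$ to a dual witness $\Psi$ for the composed problem, following the pattern matrix template. Identify an input of $f\circ\UDISJ_{m,k}$ with $n$ blocks, each being a $k$-tuple of $m$-bit strings on which $\UDISJ_{m,k}$ is evaluated. Fix probability distributions $\mu_{0},\mu_{1}$ on a single block, supported on $\UDISJ_{m,k}^{-1}(0)$ and $\UDISJ_{m,k}^{-1}(1)$ respectively, and as symmetric as possible under coordinate and player permutations. Then set
\[
\Psi(x)\;=\;\psi\bigl(\UDISJ_{m,k}(x^{(1)}),\dots,\UDISJ_{m,k}(x^{(n)})\bigr)\prod_{i=1}^{n}\mu_{\UDISJ_{m,k}(x^{(i)})}\!\bigl(x^{(i)}\bigr),
\]
where $x^{(i)}$ denotes the $i$-th block. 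By construction $\Psi$ is supported on $\dom(f\circ\UDISJ_{m,k})$, $\|\Psi\|_{1}=1$, and $\langle\Psi,f\circ\UDISJ_{m,k}\rangle=\langle\psi,f\rangle\geq 1-\gamma$. The generalized discrepancy method then reduces the theorem to the single estimate $\max_{\chi}|\langle\Psi,\chi\rangle|\leq(\e\cdot 2^{k}n/(d\sqrt m))^{d}$; the additive $\gamma$ slack in the theorem accounts for the $1-\gamma$ factor in $\langle\Psi,f\circ\UDISJ_{m,k}\rangle$.

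For that estimate I would Fourier-expand $\psi$ in the $n$-variable Walsh basis and push the expansion through the lifting: since $\hat\psi(S)=0$ for $|S|<d$, only the terms that activate the $\UDISJ_{m,k}$ gadget on at least $d$ blocks survive, while on the inactive blocks the $\mu_{b}$ factor averages out to the block marginal. On each of the $|S|\geq d$ active blocks, the key per-block fact is that the correlation of the resulting block-character with any block-restriction of a cylinder intersection $\chi$ is at most $2^{k}/\sqrt m$; this is a one-block application of the BNS cube-measure inequality of Babai et al., combined with the uniform-intersection structure of $\UDISJ_{m,k}^{-1}(1)$. Multiplying this per-block bound over the $|S|$ active blocks and summing over index sets via $\binom{n}{j}\leq(\e n/j)^{j}$ gives a geometric series dominated by its $j=d$ term, yielding the claimed bound. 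The main obstacle is precisely this per-block BNS estimate: it is where the $2^{k}$ overhead enters (from $k$-fold cube-measure averaging), and where the specific combinatorial symmetry of $\UDISJ_{m,k}$ — as opposed to a generic AND-like gadget — becomes essential in order for the block-Fourier calculation to factor cleanly and produce the stated constants.
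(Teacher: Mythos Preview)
The paper does not prove this theorem; it is quoted as a black box from~\cite[Theorem~5.7]{sherstov12mdisj} and invoked once, in the proof of Theorem~\ref{thm:pp-upp}. So there is no in-paper argument to compare against.

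That said, your outline is an accurate summary of the proof in the cited reference. The three ingredients you identify---the dual polynomial $\psi$ for $f$ obtained by LP duality from $\deg_{1-\gamma}(f)\ge d$; the block-product lift $\Psi$ via inner distributions $\mu_{0},\mu_{1}$ supported on the two output classes of $\UDISJ_{m,k}$; and the per-block BNS cube-measure estimate of order $2^{k}/\sqrt{m}$---are exactly the components of that proof, and you correctly flag the last of these as the substantive step. Your treatment of the additive $\gamma$ is also right: with $P=|\Psi|$ as the hard distribution, comparing $f\circ\UDISJ_{m,k}$ to $\sign\Psi$ costs at most $\gamma$, since $\langle\Psi,f\circ\UDISJ_{m,k}\rangle\ge 1-\gamma$ forces the $P$-mass of their disagreement set below $\gamma/2$. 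The only loose phrase is that the sum $\sum_{j\ge d}\binom{n}{j}(2^{k}/\sqrt{m})^{j}$ is ``dominated by its $j=d$ term'': this needs the standing assumption $\e\cdot 2^{k}n/(d\sqrt{m})<1$ (else the bound is vacuous), after which a one-line geometric-series estimate disposes of the tail. None of this is a gap; your proposal would go through if executed.
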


\noindent This theorem makes it possible to prove communication lower
bounds by leveraging the existing literature on polynomial approximation.
In follow-up work, the author improved Theorem~\ref{thm:pm-large-adeg}
to an essentially tight upper bound~\cite[Theorem~5.7]{sherstov13directional}.
However, we will not need this sharper version.

\section{\label{sec:Discrepancy}Discrepancy of integer sets}

Let $m\geq2$ be an integer modulus. Key to our work is the notion
of \emph{$m$-discrepancy}, which quantifies the pseudorandomness
or aperiodicity of any given multiset of integers modulo $m.$ The
$m$-discrepancy of a nonempty multiset $Z=\{z_{1},z_{2},\ldots,z_{n}\}$
of arbitrary integers is defined as
\[
\disc(Z,m)=\max_{k=1,2,\ldots,m-1}\left|\frac{1}{n}\sum_{j=1}^{n}\omega^{kz_{j}}\right|,
\]
where $\omega$ is a primitive $m$-th root of unity; the right-hand
side is obviously the same for any such $\omega$. By way of terminology,
we emphasize that the notion of $m$-discrepancy just defined is unrelated
to the notion of \emph{discrepancy} from Section~\ref{subsec:Discrepancy}.
As a matter of convenience, we define
\begin{equation}
\disc(\varnothing,m)=0.\label{eq:disc-empty}
\end{equation}
The notion of $m$-discrepancy has a long history in combinatorics
and theoretical computer science, e.g.,~\cite{gks86k-page-graphs-and-nondeterministic-TMs,ruzsa87essential-components,AIKPS90aperiodic-set,katz89character-sums,rsw93sets-uniform-in-arithmetic-progressions,alon-roichman94rando-cayley-graphs-and-expanders}.
The $m$-discrepancy of an integer multiset $Z$ has a natural interpretation
in terms of the discrete Fourier transform on $\ZZ_{m}.$ Specifically,
consider the \emph{frequency vector $(f_{0},f_{1},\ldots,f_{m-1})$}
of $Z$, where $f_{j}$ is the total number of element occurrences
in $Z$ that are congruent to $j$ modulo $m.$ Applying the discrete
Fourier transform to $(f_{j})_{j=0}^{m-1}$ produces the sequence
$(\sum_{j=0}^{m-1}f_{j}\exp(-2\pi\iu kj/m))_{k=0}^{m-1}=(\sum_{j=1}^{n}\exp(-2\pi\iu kz_{j}/m))_{k=0}^{m-1},$
which is a permutation of $(n,\sum_{j=1}^{n}\omega^{z_{j}},\ldots,\sum_{j=1}^{n}\omega^{(m-1)z_{j}}).$
Summarizing, the $m$-discrepancy of $Z$ coincides up to a normalizing
factor with the largest absolute value of a nonconstant Fourier coefficient
of the frequency vector of $Z.$

\subsection{Basic properties}

We collect a few elementary properties of $m$-discrepancy. To start
with, we quantify the ``continuity'' of $\disc(Z,m)$ in the first
argument. By way of notation, we remind the reader that the cardinality
$|Z|$ of a multiset $Z$ is found by summing, for each distinct element
$z\in Z,$ the number of times $z$ occurs in~$Z.$
\begin{prop}
\label{prop:norm-Z-subset-Z}Fix a natural number $m\geq2.$ Then
for any nonempty finite multisets $Z,Z'$ of integers with $Z'\subseteq Z,$
\begin{equation}
1+\disc(Z',m)\leq(1+\disc(Z,m))\cdot\frac{|Z|}{|Z'|}.\label{eq:Z'-norm}
\end{equation}
\end{prop}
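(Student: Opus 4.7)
The plan is to reduce the inequality to a direct triangle-inequality estimate on the exponential sums that define $m$-discrepancy. Write $n=|Z|$ and $n'=|Z'|$, and let $Z\setminus Z'$ denote the multiset difference, so that $|Z\setminus Z'|=n-n'\geq 0$. Fix a primitive $m$-th root of unity $\omega$ and any $k\in\{1,2,\ldots,m-1\}$. Since multiset containment lets us split the sum over $Z$ as $\sum_{z\in Z}\omega^{kz}=\sum_{z\in Z'}\omega^{kz}+\sum_{z\in Z\setminus Z'}\omega^{kz}$, the triangle inequality gives
\[
\Bigl|\sum_{z\in Z'}\omega^{kz}\Bigr|\;\leq\;\Bigl|\sum_{z\in Z}\omega^{kz}\Bigr|+\Bigl|\sum_{z\in Z\setminus Z'}\omega^{kz}\Bigr|.
\]

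The first term on the right is at most $n\cdot\disc(Z,m)$ by definition of $m$-discrepancy, while the second is bounded trivially by $|Z\setminus Z'|=n-n'$, using that each summand has modulus $1$. Taking the maximum over $k$ on the left therefore yields
\[
n'\cdot\disc(Z',m)\;\leq\;n\cdot\disc(Z,m)+(n-n').
\]
Adding $n'$ to both sides, grouping, and dividing by $n'$ gives precisely the claimed bound $1+\disc(Z',m)\leq (1+\disc(Z,m))\cdot n/n'$.

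There is no real obstacle here: the only point worth checking is the edge case $Z'=Z$, where the complementary sum is empty and the inequality collapses to the tautology $1+\disc(Z,m)\leq 1+\disc(Z,m)$ (consistent with the convention $\disc(\varnothing,m)=0$ from~(\ref{eq:disc-empty}) if one prefers to write the empty sum explicitly). The argument uses nothing beyond the triangle inequality and the trivial bound $|\omega^{kz}|=1$, so the proof is a couple of lines.
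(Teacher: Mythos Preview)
Your proof is correct and essentially identical to the paper's: both split the sum over $Z$ into the sum over $Z'$ and the remainder, apply the triangle inequality together with the trivial bound $|\omega^{kz}|=1$ on the remainder, and then rearrange to obtain $n'\disc(Z',m)\leq n\disc(Z,m)+(n-n')$. The only cosmetic difference is that the paper writes the triangle inequality as a lower bound on $\bigl|\sum_{z\in Z}\omega^{kz}\bigr|$ rather than an upper bound on $\bigl|\sum_{z\in Z'}\omega^{kz}\bigr|$.
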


\begin{proof}
Abbreviate $n=|Z|$ and $n'=|Z'|,$ and fix an enumeration $z_{1},z_{2},\ldots,z_{n}$
of the elements of $Z$ such that $Z'=\{z_{1},z_{2},\ldots,z_{n'}\}.$
Then for a primitive $m$-th root of unity $\omega,$
\begin{align*}
n\disc(Z,m) & =\max_{k=1,2,\ldots,m-1}\left|\sum_{j=1}^{n}\omega^{kz_{j}}\right|\\
 & \geq\max_{k=1,2,\ldots,m-1}\left\{ \left|\sum_{j=1}^{n'}\omega^{kz_{j}}\right|-\sum_{j=n'+1}^{n}\left|\omega^{kz_{j}}\right|\right\} \\
 & =\max_{k=1,2,\ldots,m-1}\left|\sum_{j=1}^{n'}\omega^{kz_{j}}\right|-(n-n')\\
 & =n'\disc(Z',m)-(n-n'),
\end{align*}
which directly implies~(\ref{eq:Z'-norm}).
\end{proof}
The $m$-discrepancy of $Z$ is invariant under a variety of operations
on $Z$, such as  shifting the elements of $Z$ by any given integer
or multiplying the elements of $Z$ by an integer relatively prime
to $m.$ For our purposes, the following observation will be sufficient.
\begin{prop}
\label{prop:normZ-norm-minusZ}Fix a natural number $m\geq2$ and
a nonempty finite multiset $Z$ of integers. Then 
\[
\disc(-Z,m)=\disc(Z,m).
\]
\end{prop}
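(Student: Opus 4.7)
The plan is to observe that negating the elements of $Z$ simply conjugates each complex exponential in the defining sum, which does not change absolute values. So the proof reduces to a single observation about complex conjugation of roots of unity.

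Concretely, fix a primitive $m$-th root of unity $\omega$ and let $n = |Z|$ with enumeration $Z = \{z_1, z_2, \ldots, z_n\}$, so $-Z = \{-z_1, -z_2, \ldots, -z_n\}$. For each $k \in \{1, 2, \ldots, m-1\}$, I use $|\omega| = 1$, which gives $\omega^{-kz_j} = \overline{\omega^{kz_j}}$, hence
\[
\sum_{j=1}^{n} \omega^{k(-z_j)} \;=\; \sum_{j=1}^{n} \overline{\omega^{kz_j}} \;=\; \overline{\sum_{j=1}^{n} \omega^{kz_j}}.
\]
Since $|\overline{w}| = |w|$ for every $w \in \mathbb{C}$, the absolute values of the two inner sums agree for each $k$, and therefore their maxima over $k \in \{1, \ldots, m-1\}$ coincide. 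Dividing by $n$ yields $\disc(-Z, m) = \disc(Z, m)$.

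There is no real obstacle here; the statement is essentially immediate from the definition and the fact that $\omega^{-1} = \overline{\omega}$. An equally short alternative would invoke the remark immediately after the definition of $\disc(Z, m)$, namely that the value is independent of the choice of primitive $m$-th root of unity: applying the definition of $\disc(-Z, m)$ with the primitive root $\omega$ is the same as applying the definition of $\disc(Z, m)$ with the primitive root $\omega^{-1}$.
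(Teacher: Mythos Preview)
Your proof is correct. The paper's own proof is precisely the one-line alternative you mention at the end: since $\omega^{-1}$ is a primitive $m$-th root of unity whenever $\omega$ is, computing $\disc(-Z,m)$ with $\omega$ is the same as computing $\disc(Z,m)$ with $\omega^{-1}$. Your primary argument via complex conjugation is an equally valid and equally short variant.
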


\begin{proof}
The claim is immediate from the definition of $m$-discrepancy because
$\omega$ is a primitive $m$-th root of unity if and only if $\omega^{-1}$
is.
\end{proof}

\subsection{Existential bounds}

Since the $m$-discrepancy of a multiset remains unchanged when one
reduces its elements modulo $m,$ we can focus without loss of generality
on multisets with elements in $\{0,1,2,\ldots,m-1\}.$ The identity
$1+\omega+\omega^{2}+\cdots+\omega^{m-1}=0$ for any $m$-th root
of unity $\omega\ne1$ implies that $Z=\{0,1,2,\ldots,m-1\}$ achieves
the smallest possible $m$-discrepancy: $\disc(Z,m)=0.$ The problem
of constructing \emph{sparse} nonempty multisets with small discrepancy
has seen considerable work. Their existence is straightforward to
verify, as follows.
\begin{fact}
\label{fact:small-fourier-set-existence}Fix $0<\epsilon<1$ and an
integer $m\geq2$. Let $Z$ be a random multiset of size $n$ whose
elements are chosen independently and uniformly at random from $\{0,1,2,\ldots,m-1\}.$
Then
\[
\Prob\left[\disc(Z,m)\geq\epsilon\right]\leq4m\exp\left(-\frac{n\epsilon^{2}}{8}\right).
\]
\end{fact}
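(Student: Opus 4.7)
The plan is to fix a frequency $k\in\{1,2,\ldots,m-1\}$, control the random sum $S_k=\sum_{j=1}^{n}\omega^{kZ_j}$ by Hoeffding's inequality applied to its real and imaginary parts, and then union-bound over $k$.

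First I would observe that the summands $\omega^{kZ_j}$ are i.i.d., and that for any $k\in\{1,2,\ldots,m-1\}$,
\[
\Exp[\omega^{kZ_j}]=\frac{1}{m}\sum_{r=0}^{m-1}\omega^{kr}=0,
\]
since $\omega^k$ is a nontrivial $m$-th root of unity and the geometric series telescopes. This is the only place where we use that $k$ ranges over nonzero residues modulo $m$.

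Next, decompose $S_k=A_k+\iu B_k$, where $A_k=\sum_j\realpart(\omega^{kZ_j})$ and $B_k=\sum_j\imagpart(\omega^{kZ_j})$. Both are sums of independent real random variables with values in $[-1,1]$ and mean zero. I would apply Hoeffding's inequality (Fact~\ref{fact:hoeffding}) separately to $A_k$ and $B_k$: for any threshold $\delta$,
\[
\Prob\!\left[|A_k|\geq\delta\right]\leq 2\exp\!\left(-\frac{2\delta^2}{4n}\right),
\]
and the same bound holds for $B_k$. Now $|S_k|\geq n\epsilon$ forces $A_k^2+B_k^2\geq n^2\epsilon^2$, and hence $|A_k|\geq n\epsilon/\sqrt{2}$ or $|B_k|\geq n\epsilon/\sqrt{2}$ (since otherwise both are strictly smaller and their squares sum to less than $n^2\epsilon^2$). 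Choosing $\delta=n\epsilon/\sqrt{2}$ in the Hoeffding bound and combining via a union bound gives
\[
\Prob\!\left[|S_k|\geq n\epsilon\right]\leq 4\exp\!\left(-\frac{n\epsilon^2}{4}\right),
\]
which is in fact a bit sharper than the bound advertised in the statement; a cruder split $|A_k|\geq n\epsilon/2$ or $|B_k|\geq n\epsilon/2$ reproduces the stated $\exp(-n\epsilon^2/8)$ exactly.

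Finally, I would take a union bound over the $m-1$ choices of $k\in\{1,2,\ldots,m-1\}$. By definition of $m$-discrepancy, $\disc(Z,m)\geq\epsilon$ iff $|S_k|\geq n\epsilon$ for some such $k$, so
\[
\Prob[\disc(Z,m)\geq\epsilon]\leq (m-1)\cdot 4\exp\!\left(-\frac{n\epsilon^2}{8}\right)\leq 4m\exp\!\left(-\frac{n\epsilon^2}{8}\right),
\]
as claimed. No step here is really an obstacle; the only subtlety is recognizing that one must handle the complex-valued sum by splitting into two real Hoeffding applications rather than trying to invoke a vector-valued concentration inequality directly.
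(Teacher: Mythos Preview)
Your proof is correct and follows essentially the same approach as the paper: split the complex exponential sum into real and imaginary parts, apply Hoeffding's inequality to each, and union-bound over the $m-1$ nonzero frequencies. The paper uses the cruder split $|A_k|\geq n\epsilon/2$ or $|B_k|\geq n\epsilon/2$ that you mention, yielding exactly the stated $\exp(-n\epsilon^2/8)$.
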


\noindent Fact~\ref{fact:small-fourier-set-existence} has been proved
in one form or another by many authors, e.g.,~\cite{gks86k-page-graphs-and-nondeterministic-TMs,ruzsa87essential-components,alon-roichman94rando-cayley-graphs-and-expanders}.
For the reader's convenience, we include a short proof below.
\begin{proof}[Proof of Fact~\emph{\ref{fact:small-fourier-set-existence}}.]
 Let $Z_{1},Z_{2},\ldots,Z_{n}$ be independent random variables,
each distributed uniformly in $\{0,1,2,\ldots,m-1\}.$ For any $m$-th
root of unity $\omega\ne1,$ we have $|\omega^{Z_{j}}|=1$ and $\Exp\omega^{Z_{j}}=0$
for $j=1,2,\ldots,n.$ Hence, $\operatorname{Re}(\omega^{Z_{1}}),\operatorname{Re}(\omega^{Z_{2}}),\ldots,\operatorname{Re}(\omega^{Z_{n}})$
are independent random variables with range in $[-1,1]$ and expectation~$0,$
and likewise for $\imagpart(\omega^{Z_{1}}),\imagpart(\omega^{Z_{2}}),\ldots,\imagpart(\omega^{Z_{n}})$.
As a result, 
\begin{align*}
\Prob\left[\left|\frac{1}{n}\sum_{j=1}^{n}\omega^{Z_{j}}\right|\geq\epsilon\right] & \leq\Prob\left[\left|\realpart\left(\frac{1}{n}\sum_{j=1}^{n}\omega^{Z_{j}}\right)\right|\geq\frac{\epsilon}{2}\right]\\
 & \qquad\qquad+\Prob\left[\left|\imagpart\left(\frac{1}{n}\sum_{j=1}^{n}\omega^{Z_{j}}\right)\right|\geq\frac{\epsilon}{2}\right]\\
 & \leq4\exp\left(-\frac{n\epsilon^{2}}{8}\right),
\end{align*}
where the second step uses Hoeffding's inequality (Fact~\ref{fact:hoeffding}).
Applying the union bound across all $m$-th roots of unity $\omega\ne1,$
we conclude that the probability that $\disc(\{Z_{1},Z_{2},\ldots,Z_{n}\},m)\geq\epsilon$
is at most $4(m-1)\exp(-n\epsilon^{2}/8)$.
\end{proof}
\noindent In some applications, one is restricted to working with
\emph{subsets} of $\{0,1,2,\ldots,m-1\}$ as opposed to arbitrary
\emph{multisets} with possibly repeated elements. We record a version
of Fact~\ref{fact:small-fourier-set-existence} for this setting.
\begin{cor}
\label{cor:small-fourier-set-existence}Fix $0<\epsilon<1$ and an
integer $m\geq2$. Let $Z$ be a random multiset of size $n\leq m$
whose elements are chosen independently and uniformly at random from
$\{0,1,2,\ldots,m-1\}.$ Then with probability at least
\begin{equation}
\left(1-\frac{n}{m}\right)^{n}-4m\exp\left(-\frac{n\epsilon^{2}}{8}\right),\label{eq:small-fourier-set-distinct-elements}
\end{equation}
the elements of $Z$ are nonzero and pairwise distinct, and obey $\disc(Z,m)\leq\epsilon.$
\end{cor}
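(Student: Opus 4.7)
The plan is to combine Fact~\ref{fact:small-fourier-set-existence} with a direct lower bound on the probability that the sample $Z_1,\ldots,Z_n$ consists of distinct nonzero values, and then conclude by a union bound. Let $E_1$ denote the event that $Z_1,\ldots,Z_n$ are nonzero and pairwise distinct, and let $E_2$ denote the event that $\disc(\{Z_1,\ldots,Z_n\},m)\le\epsilon$. We want $\Prob[E_1\cap E_2]\ge (1-n/m)^n - 4m\exp(-n\epsilon^2/8)$.

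The bound on $\Prob[\overline{E_2}]$ is handed to us by Fact~\ref{fact:small-fourier-set-existence}: regardless of whether the sample is distinct, the probability that the $m$-discrepancy exceeds $\epsilon$ is at most $4m\exp(-n\epsilon^2/8)$. So the substantive step is to lower bound $\Prob[E_1]$ by $(1-n/m)^n$. For this I would expose $Z_1,Z_2,\ldots,Z_n$ one at a time. Given that $Z_1,\ldots,Z_{i-1}$ are already distinct and nonzero, there are exactly $m-i$ values in $\{0,1,\ldots,m-1\}$ that $Z_i$ may take to keep the sample distinct and nonzero (namely, $\{1,\ldots,m-1\}$ minus the $i-1$ already-chosen values). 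Hence
\[
\Prob[E_1]=\prod_{i=1}^{n}\frac{m-i}{m}=\prod_{i=1}^{n}\left(1-\frac{i}{m}\right)\ge\left(1-\frac{n}{m}\right)^{n},
\]
where the last inequality uses $i\le n$ for every $i$ in the product (this is where the hypothesis $n\le m$ enters to keep all factors nonnegative).

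Finally, I would combine the two estimates via
\[
\Prob[E_1\cap E_2]\ge\Prob[E_1]-\Prob[\overline{E_2}]\ge\left(1-\frac{n}{m}\right)^{n}-4m\exp\!\left(-\frac{n\epsilon^{2}}{8}\right),
\]
which is exactly the claimed lower bound~(\ref{eq:small-fourier-set-distinct-elements}). There is no real obstacle here: the corollary is essentially a bookkeeping consequence of Fact~\ref{fact:small-fourier-set-existence} plus a standard ``distinct sampling without zero'' count, and the only subtlety worth flagging in writing is that Fact~\ref{fact:small-fourier-set-existence} gives a bound on $\disc(Z,m)$ that is valid whether or not $Z$ is a set, so it applies verbatim to the same random $Z$ as in $E_1$.
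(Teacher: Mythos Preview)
Your proof is correct and follows essentially the same approach as the paper: compute $\Prob[E_1]=\prod_{i=1}^{n}\frac{m-i}{m}\ge(1-n/m)^n$ and combine with the discrepancy bound from Fact~\ref{fact:small-fourier-set-existence} via a union-bound style inequality. The paper's own proof is just a more terse statement of exactly these two steps.
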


\begin{proof}
The probability that $Z$ does not contain $0$ or repeated elements
is easily seen to be $\prod_{i=1}^{n}\frac{m-i}{m}\geq(1-\frac{n}{m})^{n}.$
As a result, the claim follows from Fact~\ref{fact:small-fourier-set-existence}.
\end{proof}
In all of our applications, the error parameter $\epsilon>0$ will
be a small constant. In this regime, Corollary~\ref{cor:small-fourier-set-existence}
guarantees the existence of a set $Z\subseteq\{1,2,\ldots,m-1\}$
with $m$-discrepancy at most $\epsilon$ and cardinality $O(\log m),$
an exponential improvement in sparsity compared to the trivial set
$\{0,1,2,\ldots,m-1\}.$ No further improvement is possible: it is
well known that any nonempty multiset with $m$-discrepancy bounded
away from $1$ has cardinality $\Omega(\log m)$. This classical lower
bound has a remarkable variety of proofs, e.g., using random walks~\cite{alon-roichman94rando-cayley-graphs-and-expanders},
sphere packing arguments~\cite{fmt06spectral-estimates-for-cayley-graphs},
and diophantine approximation~\cite{lns11nonexistence-circular-expander}.
We include here a particularly simple and self-contained proof, adapted
from Leung et al.~\cite{lns11nonexistence-circular-expander}. Unlike
all other technical statements in this paper, Fact~\ref{fact:disc-lower}
is not used in the proof of our main result and is provided solely
for completeness.
\begin{fact}[Leung et al.]
\label{fact:disc-lower}Fix a natural number $m\geq2.$ Let $Z=\{z_{1},z_{2},\ldots,z_{n}\}$
be a multiset of integers. Then
\[
\disc(Z,m)\geq1-\frac{2\pi}{\lfloor(m-1)^{1/n}\rfloor}.
\]
\end{fact}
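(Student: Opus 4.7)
The plan is a pigeonhole / simultaneous diophantine approximation argument on the $n$-torus. Abbreviate $q = \lfloor (m-1)^{1/n}\rfloor$, so $q^{n} \leq m-1 < m$. If $q \leq 6$ the stated bound $1 - 2\pi/q$ is nonpositive and there is nothing to prove, so I assume $q \geq 7$. Let $\omega = \exp(2\pi\iu/m)$, which is a primitive $m$-th root of unity, so I can use it in evaluating $\disc(Z,m)$.

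For every $k \in \{0, 1, \dots, m-1\}$, consider the point
\[
v_{k} = \left( \left\{ \frac{k z_{1}}{m} \right\}, \left\{ \frac{k z_{2}}{m} \right\}, \dots, \left\{ \frac{k z_{n}}{m} \right\} \right) \in [0,1)^{n},
\]
where $\{\cdot\}$ denotes the fractional part. Partition $[0,1)^{n}$ into $q^{n}$ axis-aligned cubes of side length $1/q$. Since there are $m > q^{n}$ points and only $q^{n}$ cubes, two distinct indices $0 \leq k_{1} < k_{2} \leq m-1$ yield points $v_{k_{1}}, v_{k_{2}}$ lying in the same cube. Setting $k = k_{2} - k_{1} \in \{1, 2, \dots, m-1\}$, I get that for each $j$, the quantity $k z_{j}/m$ is within $1/q$ of an integer (modulo $1$).

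It follows that for each $j$, the complex number $\omega^{k z_{j}}$ lies within angular distance $2\pi/q$ of $1$ on the unit circle, so $|1 - \omega^{k z_{j}}| \leq 2\sin(\pi/q) \leq 2\pi/q$. By the triangle inequality,
\[
\left| 1 - \frac{1}{n}\sum_{j=1}^{n} \omega^{k z_{j}} \right| \;\leq\; \frac{1}{n}\sum_{j=1}^{n} |1 - \omega^{k z_{j}}| \;\leq\; \frac{2\pi}{q},
\]
which gives $\bigl| \tfrac{1}{n}\sum_{j=1}^{n} \omega^{k z_{j}} \bigr| \geq 1 - 2\pi/q$. Since $k \in \{1,\dots,m-1\}$, this lower-bounds $\disc(Z,m)$ as claimed.

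The argument is essentially just pigeonhole, so there is no real obstacle; the only point that requires minor care is the wraparound when passing from ``$v_{k_{1}}$ and $v_{k_{2}}$ lie in the same subcube of $[0,1)^{n}$'' to ``$k z_{j}/m$ is within $1/q$ of an integer''. This is handled by observing that the circle distance from $\{(k_{2} - k_{1}) z_{j}/m\}$ to $\{0,1\}$ is at most $|\{k_{2} z_{j}/m\} - \{k_{1} z_{j}/m\}| \leq 1/q$, and then using the standard estimate $|1 - e^{\iu\theta}| \leq |\theta|$ on the resulting small angle $2\pi \cdot (1/q)$.
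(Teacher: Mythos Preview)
Your proof is correct and follows essentially the same simultaneous diophantine approximation argument as the paper: pigeonhole the points $(\{kz_{j}/m\})_{j}$ into $q^{n}$ boxes in $[0,1)^{n}$, take the difference of two colliding indices, and bound each $|1-\omega^{kz_{j}}|$ by $2\pi/q$. The only cosmetic differences are that the paper pigeonholes the $q^{n}+1$ indices $\{0,1,\dots,q^{n}\}$ rather than all of $\{0,1,\dots,m-1\}$, and does not separately dispose of the trivial case $q\leq 6$.
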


\begin{proof}[Proof \emph{(adapted from~\cite{lns11nonexistence-circular-expander}).}]
 The proof is based on a classic technique from simultaneous diophantine
approximation. For a nonnegative real number $x,$ let $\fr(x)$ denote
the fractional part of $x.$ Abbreviate $q=\lfloor(m-1)^{1/n}\rfloor$
and consider the $q$ intervals
\begin{equation}
\left[0,\frac{1}{q}\right),\left[\frac{1}{q},\frac{2}{q}\right),\left[\frac{2}{q},\frac{3}{q}\right),\ldots,\left[\frac{q-1}{q},1\right).\label{eq:boxes}
\end{equation}
By the pigeonhole principle, there must be a pair of distinct integers
$k',k''\in\{0,1,2,\ldots,q^{n}\}$ such that
\[
\fr\left(\frac{z_{1}k'}{m}\right),\fr\left(\frac{z_{2}k'}{m}\right),\ldots,\fr\left(\frac{z_{n}k'}{m}\right)
\]
are in the same intervals of~(\ref{eq:boxes}) as
\[
\fr\left(\frac{z_{1}k''}{m}\right),\fr\left(\frac{z_{2}k''}{m}\right),\ldots,\fr\left(\frac{z_{n}k''}{m}\right),
\]
respectively. Without loss of generality, $k'>k''.$ Then the integer
$k=k'-k''$ obeys
\begin{align}
 & k\in\{1,2,\ldots,m-1\},\label{eq:k-legal}\\
 & \left|\frac{z_{j}k}{m}-u_{j}\right|\leq\frac{1}{q},\qquad\qquad\qquad j=1,2,\ldots,n\label{eq:all-near-int}
\end{align}
for some $u_{1},u_{2},\ldots,u_{n}\in\ZZ.$ Now
\begin{align*}
\disc(Z,m) & \geq\frac{1}{n}\left|\sum_{j=1}^{n}\exp\left(2\pi\iu\cdot\frac{kz_{j}}{m}\right)\right|\\
 & \geq1-\frac{1}{n}\sum_{j=1}^{n}\left|1-\exp\left(2\pi\iu\cdot\frac{kz_{j}}{m}\right)\right|\\
 & =1-\frac{1}{n}\sum_{j=1}^{n}\left|1-\exp\left(2\pi\iu\cdot\left(\frac{kz_{j}}{m}-u_{j}\right)\right)\right|\\
 & \geq1-\frac{1}{n}\sum_{j=1}^{n}2\pi\left|\frac{kz_{j}}{m}-u_{j}\right|\\
 & \geq1-\frac{2\pi}{q},
\end{align*}
where the first step uses the definition of $m$-discrepancy; the
second step applies the triangle inequality; the third step is valid
by periodicity; the fourth step uses the bound $|1-\exp(2\pi x\iu)|=\sqrt{2-2\cos(2\pi x)}\leq2\pi|x|$
for all real $x$; and the final step is immediate from~(\ref{eq:all-near-int}).
\end{proof}

\subsection{\label{subsec:An-explicit-construction}An explicit construction}

We now turn to the problem of efficiently constructing sparse sets
with small $m$-discrepancy. Two such constructions are known to date,
due to Ajtai et al.~\cite{AIKPS90aperiodic-set} and Katz~\cite{katz89character-sums}.
The approach of Ajtai et al.~is elementary except for an appeal to
the prime number theorem. Katz's construction, on the other hand,
relies on deep results in number theory. Neither work appears to directly
imply the kind of optimal de-randomization that we require, namely,
an algorithm that runs in time polynomial in $\log m$ and produces
a multiset of cardinality $O(\log m)$ with $m$-discrepancy bounded
away from~1. We obtain such an algorithm by adapting the approach
of Ajtai et al.~\cite{AIKPS90aperiodic-set}. The following technical
result plays a central role.
\begin{thm}[cf.~Ajtai et al.]
\label{thm:ajtai-iteration}Fix an integer $R\geq1$ and a real number
$P\geq2$. Let $m$ be an integer with $m\geq P^{2}(R+1).$ Fix a
set $S_{p}\subseteq\{1,2,\ldots,p-1\}$ for each prime $p\in(P/2,P]$
with $p\nmid m,$ such that all $S_{p}$ have the same cardinality.
Consider the multiset
\begin{multline*}
S=\{(r+s\cdot(p^{-1})_{m})\bmod m:\\
\qquad r=1,\ldots,R;\quad p\in(P/2,P]\text{ prime with }p\nmid m;\quad s\in S_{p}\}.
\end{multline*}
Then the elements of $S$ are pairwise distinct and nonzero. Moreover,
\[
\disc(S,m)\leq\frac{c}{\sqrt{R}}+\frac{c\log m}{\log\log m}\cdot\frac{\log P}{P}+\max_{p}\{\disc(S_{p},p)\}
\]
for some $($explicitly given$)$ constant $c\geq1$ independent of
$P,R,m.$
\end{thm}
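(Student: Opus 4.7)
The plan has two parts---the structural claim about elements of $S$ being pairwise distinct and nonzero, and the discrepancy bound---and both lean on $m \geq P^2(R+1)$ and on Fact~\ref{fact:rel-prime}, which relates the modular inverse $(p^{-1})_m$ to $(m^{-1})_p/p$ up to an error of $1/(pm)$.

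For the structural claim, I would clear denominators by multiplying the hypothetical congruence $r + s(p^{-1})_m \equiv r' + s'((p')^{-1})_m \pmod{m}$ by $pp'$, reducing it to $(r - r')pp' + sp' - s'p \equiv 0 \pmod{m}$. The size hypothesis $m \geq (R+1)P^2$ forces the left-hand integer to have absolute value strictly less than $m$, so it must vanish as an integer. A case split closes the argument: in the $p = p'$ subcase, divisibility of $s - s'$ by $p$ together with $|s - s'| < p$ forces $s = s'$ and then $r = r'$; in the $p \neq p'$ subcase, reducing modulo $p$ yields $p \mid sp'$, impossible since $1 \leq s < p$ and $\gcd(p, p') = 1$. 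Nonzero-ness is analogous: multiplying $r + s(p^{-1})_m \equiv 0 \pmod m$ by $p$ gives $rp + s \equiv 0 \pmod m$, yet $0 < rp + s < (R+1)P \leq m/P < m$.

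For the discrepancy, I would work with $T(k) = |S|^{-1} \sum_{z \in S} \omega^{kz}$ and show that $\max_{1 \leq k \leq m - 1} |T(k)|$ obeys the stated bound. Expanding and factoring off the $r$-dependence gives $\sum_{z} \omega^{kz} = A(k) B(k)$ with $A(k) = \sum_{r=1}^R \omega^{kr}$ and $B(k) = \sum_p \sum_{s \in S_p} \omega^{ks(p^{-1})_m}$. The geometric sum gives $|A(k)| \leq \min(R, 1/|\sin(\pi k/m)|)$; conjugate symmetry under $k \mapsto m - k$ lets me assume $1 \leq k \leq m/2$. For $B(k)$, Fact~\ref{fact:rel-prime} rewrites $\omega^{ks(p^{-1})_m} = \zeta_p^{-ks(m^{-1})_p} \cdot e^{2\pi\iu ks/(pm)}$ with $\zeta_p = e^{2\pi\iu/p}$. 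I would peel off the second factor and write $B = B_1 + B_2$. In $B_1$, setting $k'_p = k(m^{-1})_p \bmod p$, each prime $p \nmid k$ contributes an inner sum of magnitude $\leq N \cdot \disc(S_p, p)$ directly by definition of $p$-discrepancy, while the at most $\nu(k)$ primes with $p \mid k$ contribute at most $\nu(k) N$ trivially. In $B_2$ the bound $|e^{\iu\theta} - 1| \leq |\theta|$ (valid because $k \leq m/2$ makes the argument lie in $[0, \pi)$) gives an error of at most $2\pi k \Pi N/m$, where $\Pi$ is the number of primes in $(P/2, P]$ not dividing $m$.

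Combining, $|T(k)| \leq (|A(k)|/R)\bigl(\nu(k)/\Pi + \max_p \disc(S_p, p) + 2\pi k/m\bigr)$. The first two summands are handled by $|A(k)|/R \leq 1$; the third exploits the trade-off $|A(k)|/R \leq \min(1, m/(2Rk))$ to give $(|A(k)|/R) \cdot 2\pi k/m \leq \pi/R$ uniformly in $k$, which is absorbed into $c/\sqrt R$. Fact~\ref{fact:PNT} then yields $\Pi = \Theta(P/\log P)$ after discarding the at most $\nu(m)$ primes of $m$, and Fact~\ref{fact:num-prime-factors} bounds $\nu(k) \leq (1 + o(1))\log m/\log\log m$, turning $\nu(k)/\Pi$ into $O(\log m \cdot \log P/(P \log \log m))$ as required. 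The main obstacle will be the perturbation-vs-$A(k)$ trade-off, since neither factor alone controls the $2\pi k/m$ piece uniformly but their product does; a secondary technicality is guarding against $\Pi$ being dwarfed by the primes of $m$, which becomes a non-issue whenever the claimed bound is non-vacuous.
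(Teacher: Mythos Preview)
Your proposal is correct and follows essentially the same approach as the paper's proof in Appendix~\ref{sec:ajtai}: both arguments use Fact~\ref{fact:rel-prime} to rewrite $e(ks(p^{-1})_m/m)$ as $\zeta_p^{-ks(m^{-1})_p}$ times a small perturbation, bound the $r$-sum geometrically, and control the bad primes via Facts~\ref{fact:PNT} and~\ref{fact:num-prime-factors}. Your organization is slightly streamlined---you keep the factorization $A(k)B(k)$ intact rather than proving separate Claims~\ref{claim:k-small} and~\ref{claim:k-large}, and your reduction to $k\le m/2$ by conjugate symmetry lets you track only $\nu(k)$ rather than the paper's $\nu(k)+\nu(m-k)$---but the substance is identical.
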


\noindent Ajtai et al.~\cite{AIKPS90aperiodic-set} proved a special
case of Theorem~\ref{thm:ajtai-iteration} for $m$ prime, but their
argument readily generalizes to arbitrary moduli $m$ as just stated.
For the reader's convenience, we provide a complete proof of Theorem~\ref{thm:ajtai-iteration}
in Appendix~\ref{sec:ajtai}. The theorem's purpose is to reduce
the construction of a sparse set with small $m$-discrepancy to the
construction of sparse sets with small $p$-discrepancy, for primes
$p\ll m.$ By applying Theorem~\ref{thm:ajtai-iteration} in a recursive
manner, one reaches smaller and smaller primes. The authors of~\cite{AIKPS90aperiodic-set}~continue
this recursive process until they reach primes $p$ so small that
the trivial construction $\{1,2,3,\ldots,p-1\}$ can be considered
sparse. We proceed differently and terminate the recursion after just
two stages, at which point the input size is small enough for brute
force search based on Corollary~\ref{cor:small-fourier-set-existence}.
The final set that we construct has size logarithmic in $m$ and $m$-discrepancy
a small constant, as opposed to the superlogarithmic size and $o(1)$
discrepancy in the work of Ajtai et al.~\cite{AIKPS90aperiodic-set}.
A detailed exposition of our algorithm follows.
\begin{thm}
\label{thm:explicit-set-small-Fourier-coeffs}Let $0<\epsilon\leq1$
be given. Then there is an algorithm that takes as input an integer
$m\geq2,$ runs in time polynomial in $\log m,$ and outputs a nonempty
set $Z\subseteq\{0,1,2,\ldots,m-1\}$ with
\begin{align*}
 & \disc(Z,m)\leq\epsilon,\\
 & |Z|\leq C_{\epsilon}\log m,
\end{align*}
where $C_{\epsilon}\geq1$ is a constant. Moreover, the constant $C_{\epsilon}$
and the algorithm are given explicitly.
\end{thm}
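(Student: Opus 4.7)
\medskip

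The plan is to instantiate the iteration lemma (Theorem~\ref{thm:ajtai-iteration}) twice in succession, and to bottom out the recursion with a brute-force search guaranteed to succeed by Corollary~\ref{cor:small-fourier-set-existence}. For $m$ at most a constant depending on $\epsilon$, we simply return $\{0,1,\ldots,m-1\}$, whose $m$-discrepancy is $0$. So assume $m$ is large.

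First I would fix parameters. Let $R_1, R_2$ be integer constants depending on $\epsilon$, chosen large enough that $c/\sqrt{R_i}\le\epsilon/4$ for the constant $c$ in Theorem~\ref{thm:ajtai-iteration}. Let $P_1 = \Theta(\log m)$, chosen so that $c\log m\cdot \log P_1 / (P_1\log\log m)\le\epsilon/4$; then $P_1^2(R_1+1)\le m$ for large $m$, so the iteration lemma is applicable at modulus $m$. For each prime $p\in(P_1/2,P_1]$ with $p\nmid m$, let $P_2 = \Theta(\log\log m)$, chosen analogously so that $c\log p\cdot\log P_2/(P_2\log\log p)\le\epsilon/4$ and $P_2^2(R_2+1)\le p$. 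At the bottom, for each prime $q\in(P_2/2,P_2]$ with $q\nmid p$, I would exhaustively search all subsets of $\{1,\ldots,q-1\}$ of size $n_q = \lceil K_\epsilon \log q\rceil$ (for suitable $K_\epsilon$) and return one whose $q$-discrepancy is at most $\epsilon/4$; such a subset exists by Corollary~\ref{cor:small-fourier-set-existence}. If different $q$ in a given outer prime $p$ yield different $n_q$, I can fix the cardinality a priori (e.g., $\lceil K_\epsilon\log P_2\rceil$) and insist on that size for all such $q$, which still lies in the range where the probabilistic bound is nontrivial.

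Having obtained the base sets $S_q$ for each small prime $q$, the inner application of Theorem~\ref{thm:ajtai-iteration} produces $S_p$ with $\disc(S_p,p)\le c/\sqrt{R_2}+c\log p\cdot\log P_2/(P_2\log\log p)+\max_q\disc(S_q,q)\le 3\epsilon/4$, and cardinality $R_2\cdot|\{q\in(P_2/2,P_2],\, q\nmid p\}|\cdot n_q = O(\log\log m)$ by the prime number theorem (Fact~\ref{fact:PNT}). Then the outer application yields $S$ with $\disc(S,m)\le c/\sqrt{R_1}+c\log m\cdot\log P_1/(P_1\log\log m)+\max_p\disc(S_p,p)\le\epsilon$, and cardinality $R_1\cdot|\{p\in(P_1/2,P_1],\, p\nmid m\}|\cdot|S_p| = O(\log m/\log\log m)\cdot O(\log\log m) = O(\log m)$. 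Here I use Fact~\ref{fact:num-prime-factors} to argue that only a negligible fraction of primes in the outer and inner ranges can divide $m$ or $p$, so asymptotically the counts agree with the prime-counting estimates.

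For the running time, the brute-force step enumerates $\binom{q-1}{n_q}\le 2^{O(\log^2 q)} = 2^{O((\log\log\log m)^2)}$ candidates for each small prime $q$, evaluating the discrepancy of each in time polynomial in $q$ and $n_q$; summed over all $q$ and all outer $p$, the total cost is polynomial in $\log m$. The combining steps of the two iteration lemma applications consist of arithmetic modulo $m$ on sets of total size $O(\log m)$, again polynomial in $\log m$. The main obstacle I anticipate is bookkeeping: verifying the hypothesis $m\ge P^2(R+1)$ at each application, ensuring that all the $S_p$ (respectively $S_q$) fed to a given application of the iteration lemma have exactly the same cardinality as required by the statement, and that enough primes in each window avoid the relevant modulus. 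None of these appear deep, but they require the constants $C_\epsilon$ and the threshold below which we fall back to $\{0,\ldots,m-1\}$ to be chosen explicitly in terms of $\epsilon$ and the constant $c$ of Theorem~\ref{thm:ajtai-iteration}.
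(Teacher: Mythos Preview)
Your proposal is correct and follows essentially the same three-stage approach as the paper: two nested applications of the iteration lemma with $P\approx\log m$ and $P\approx\log\log m$, a brute-force search at the bottom justified by Corollary~\ref{cor:small-fourier-set-existence}, and the trivial fallback for bounded $m$. One minor arithmetic slip to fix: allocating $\epsilon/4$ to each contribution yields $\epsilon/4+\epsilon/4+3\epsilon/4=5\epsilon/4$ at the outer step rather than $\epsilon$; the paper handles this by taking the unit budget to be $\delta=\epsilon/(11c)$.
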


\begin{proof}
Set $\delta=\epsilon/(11c),$ where $c\geq1$ is the explicit constant
from Theorem~\ref{thm:ajtai-iteration}. Define
\begin{align*}
P' & =\frac{1}{\delta}\ln\left(\frac{1}{\delta}\ln m\right),\\
P'' & =\frac{1}{\delta}\ln m.
\end{align*}
We may assume that
\begin{align}
 & P'\geq\frac{1}{\delta^{2}},\label{eq:P'-large}\\
 & P'>4\left\lceil \frac{8\ln8P'}{\delta^{2}}\right\rceil ^{2},\label{eq:stage1-prereq}\\
 & P''\geq2P'^{2}\left\lceil \frac{1}{\delta^{2}}+1\right\rceil ,\label{eq:stage2-prereq}\\
 & m\geq P''^{2}\left\lceil \frac{1}{\delta^{2}}+1\right\rceil ,\label{eq:stage3-prereq}\\
 & \pi(P')>\pi\left(\frac{P'}{2}\right),\label{eq:stage2-enough-primes}\\
 & \pi(P'')-\pi\left(\frac{P''}{2}\right)>\nu(m),\label{eq:stage3-enough-primes}
\end{align}
where $\pi$ is the prime counting function and $\nu$ is the number
of distinct prime divisors function. Indeed, if any of (\ref{eq:P'-large})\textendash (\ref{eq:stage3-prereq})
is violated, then by elementary calculus $m$ is bounded in terms
of $1/\delta=O(1)$ and therefore the trivial set $Z=\{0,1,2,\ldots,m-1\}$
satisfies $\disc(Z,m)=0$ and $|Z|=O(1).$ Analogously, the explicit
bounds for $\pi$ and $\nu$ in Facts~\ref{fact:PNT} and~\ref{fact:num-prime-factors}
ensure that (\ref{eq:stage2-enough-primes}) and~(\ref{eq:stage3-enough-primes})
can fail only if $m$ is bounded in terms of $1/\delta=O(1),$ so
that we may again output $Z=\{0,1,2,\ldots,m-1\}$.

Assuming~(\ref{eq:P'-large})\textendash (\ref{eq:stage3-enough-primes}),
our construction of $Z$ has three stages. In the first and second
stages, we construct sparse sets $S_{p}\subseteq\{1,2,\ldots,p-1\}$
with small $p$-discrepancy for all primes $p\in(P'/2,P']$ and $p\in(P''/2,P''],$
respectively. In the final stage, we construct the set $Z$ in the
theorem statement. We ensure that each stage runs in time polynomial
in $\ln m.$\bigskip{}

\emph{Stage 1.} For every prime $p'\in(P'/2,P'],$ Corollary~\ref{cor:small-fourier-set-existence}
along with (\ref{eq:stage1-prereq}) guarantees the existence of a
set $S_{p'}\subseteq\{1,2,\ldots,p'-1\}$ with
\begin{align}
 & |S_{p'}|=\left\lceil \frac{8\ln8P'}{\delta^{2}}\right\rceil , &  & \text{prime }p'\in(P'/2,P'],\label{eq:stage1-size}\\
 & \disc(S_{p'},p')\leq\delta, &  & \text{prime }p'\in(P'/2,P'].\label{eq:stage1-norm}
\end{align}
The primes in $(P'/2,P']$ can be identified by the trivial algorithm
in time polynomial in $P'=O(\ln\ln m).$ For each such prime $p',$
we can find a set $S_{p'}$ with the above properties in time $P'^{O(|S_{p'}|)}=o(\ln m)$
by trying out all candidate sets.\bigskip{}

\emph{Stage 2.} Apply the construction of Theorem~\ref{thm:ajtai-iteration}
with parameters $P=P'$ and $R=\lceil1/\delta^{2}\rceil$ to the sets
constructed in Stage~$1$ to obtain a set $S_{p''}\subseteq\{1,2,\ldots,p''-1\}$
for each prime $p''\in(P''/2,P''].$ This choice of parameters is
legitimate by~(\ref{eq:stage2-prereq}). By~(\ref{eq:stage1-size}),
the new sets have the same cardinality, namely,
\begin{align*}
|S_{p''}| & =R\left\lceil \frac{8\ln8P'}{\delta^{2}}\right\rceil \left(\pi(P')-\pi\left(\frac{P'}{2}\right)\right), &  & \qquad\text{prime }p''\in(P''/2,P''].
\end{align*}
The prime number theorem (Fact~\ref{fact:PNT}) implies that $|S_{p''}|=O(P')=O(\ln\ln m)$.
In view of~(\ref{eq:P'-large}), (\ref{eq:stage1-norm}), and $P''=\exp(\delta P'),$
the new sets have
\begin{align}
\disc(S_{p''},p'') & \leq6c\delta, &  & \text{prime }p''\in(P''/2,P''].\label{eq:stage2-norm}
\end{align}

We now show that Stage~$2$ runs in time polynomial in $\ln m.$
To start with, the primes in $(P''/2,P'']$ can be identified by the
trivial algorithm in time polynomial in $P''=O(\ln m).$ For any such
prime $p'',$ the construction of the corresponding set $S_{p''}$
in Theorem~\ref{thm:ajtai-iteration} amounts to $O(|S_{p''}|)=O(\ln\ln m)$
arithmetic operations in the field $\FF_{p''}$ of size $|\FF_{p''}|=O(\ln m),$
and therefore can be carried out in time polynomial in $\ln\ln m.$

\bigskip{}

\emph{Stage 3.} Apply the construction of Theorem~\ref{thm:ajtai-iteration}
with parameters $P=P''$ and $R=\lceil1/\delta^{2}\rceil$ to the
sets constructed in Stage~$2$ to obtain a set $S_{m}\subseteq\{1,2,\ldots,m-1\}$.
This choice of parameters is legitimate by~(\ref{eq:stage3-prereq}).
This new set has cardinality
\begin{multline*}
|S_{m}|=R^{2}\left\lceil \frac{8\ln8P'}{\delta^{2}}\right\rceil \left(\pi(P')-\pi\left(\frac{P'}{2}\right)\right)\\
\times\left|\left\{ p''\text{ prime}:p''\in\left(\frac{P''}{2},P''\right]\text{ and }p''\nmid m\right\} \right|,
\end{multline*}
which in view of (\ref{eq:stage2-enough-primes}) and~(\ref{eq:stage3-enough-primes})
guarantees that $S_{m}$ is nonempty. Simplifying,
\begin{align*}
|S_{m}| & \leq\left\lceil \frac{1}{\delta^{2}}\right\rceil ^{2}\left\lceil \frac{8\ln8P'}{\delta^{2}}\right\rceil \cdot\pi(P')\cdot\pi(P'')\\
 & =O\left(\ln P'\cdot\frac{P'}{\ln P'}\cdot\frac{P''}{\ln P''}\right)\\
 & =O(\ln m),
\end{align*}
where the second step applies the prime number theorem (Fact~\ref{fact:PNT}).
The multiplicative constant in this asymptotic bound on $|S_{m}|$
can be easily recovered from the explicit bounds in Fact~\ref{fact:PNT}.
Using~(\ref{eq:stage2-prereq}), (\ref{eq:stage2-norm}), and $m=\exp(\delta P''),$
we further obtain
\[
\disc(S_{m},m)\leq11c\delta.
\]
Since $\delta=\epsilon/(11c)$, the set $Z=S_{m}$ satisfies the requirements
of the theorem. Finally, the construction of $S_{m}$ in Stage~3
amounts to $O(|S_{m}|)=O(\ln m)$ arithmetic operations in the ring
$\ZZ_{m}$ and therefore can be carried out in time polynomial in
$\ln m.$
\end{proof}

\section{\label{sec:Univariatization}Univariatization}

Consider a halfspace $h_{n}(x)=\sign(\sum z_{i}x_{i}-\theta)$ in
Boolean variables $x_{1},x_{2},\ldots,x_{n}\in\zoo,$ where the coefficients
can be assumed without loss of generality to be integers. Then the
linear form $\sum z_{i}x_{i}-\theta$ ranges in the discrete set $\{\pm1,\pm2,\ldots,\pm N\}$,
for some integer $N$ proportionate to the magnitude of the coefficients.
As a result, one can approximate $h_{n}$ to any given error $\epsilon$
by approximating the sign function to $\epsilon$ on $\{\pm1,\pm2,\ldots,\pm N\}.$
This approach works for both rational approximation and polynomial
approximation. Needless to say, there is no reason to expect that
the degree of the approximant in this na\"{i}ve construction is anywhere
close to optimal. Perhaps the most dramatic example is the \emph{odd-max-bit
function}, defined by $\OMB_{n}(x)=\sign(1+\sum_{i=1}^{n}(-2)^{i}x_{i})$.
A moment's thought reveals that $\OMB_{n}$ can be approximated to
any given error $\epsilon>0$ by a rational function of degree~$1,$
whereas the na\"{i}ve construction produces an approximant of degree
$\Omega(n).$

Surprisingly, we are able to construct a halfspace $h_{n}(x)=\sign(\sum z_{i}x_{i}-\theta)$
with exponentially large coefficients for which the na\"{i}ve construction
is essentially optimal. Specifically, we show that a rational approximant
for $h_{n}$ with given error and given numerator and denominator
degrees implies an analogous \emph{univariate} rational approximant
for the sign function on $\{\pm1,\pm2,\pm3,\ldots,\pm2^{\Theta(n)}\}.$
As a result, tight lower bounds for the rational and polynomial approximation
of $h_{n}$ follow immediately from the univariate lower bounds for
the sign function. The construction of $h_{n}$, carried out in this
section, is the centerpiece of our paper. The role of $h_{n}$ is
to reduce the multivariate problem taken up in this work to a well-understood
univariate question, whence the title of this section. We have broken
down the proof into four steps, corresponding to subsections~\ref{subsec:Eigenvalue-bounds-for}\textendash \ref{subsec:The-master-theorem}
below.

\subsection{\label{subsec:Eigenvalue-bounds-for}Distribution of a linear form
modulo \emph{m}}

We start by studying the probability distribution of the weighted
sum $z_{1}X_{1}+z_{2}X_{2}+\cdots+z_{n}X_{n}$ modulo $m$, where
$z_{1},z_{2},\ldots,z_{n}$ are given integers and $X_{1},X_{2},\ldots,X_{n}\in\zoo$
are chosen uniformly at random. We will show that the distribution
is close to uniform whenever the multiset $\{z_{1},z_{2},\ldots,z_{n}\}$
has small $m$-discrepancy.  This result uses the following classical
fact on linear forms modulo $m$. 
\begin{fact}[cf.~Gould~\cite{gould72combinatorial-identities}; Thathachar~\cite{Thathachar98BP-hierarchy}]
\label{fact:number-solutions-linear-form}Fix a natural number $m\geq2$
and a multiset $Z=\{z_{1},z_{2},\ldots,z_{n}\}$ of integers. Let
$\omega$ be a primitive $m$-th root of unity. Then
\begin{multline}
\left|\Prob_{X\in\zoon}\left[\sum_{j=1}^{n}z_{j}X_{j}\equiv s\pmod m\right]-\frac{1}{m}\right|\\
\leq\frac{1}{m}\sum_{k=1}^{m-1}\left|\prod_{j=1}^{n}\frac{1+\omega^{kz_{j}}}{2}\right|,\qquad s\in\ZZ.\label{eq:weighted-sums-uniform-1}
\end{multline}
 
\end{fact}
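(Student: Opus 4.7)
The plan is to use the standard Fourier-analytic indicator identity over the cyclic group $\ZZ_m$. Specifically, for any integer $a$ and any primitive $m$-th root of unity $\omega$, one has
\[
\I[a\equiv s\pmod m] \;=\; \frac{1}{m}\sum_{k=0}^{m-1}\omega^{k(a-s)},
\]
since the right-hand side is $1$ when $m\mid a-s$ and $0$ otherwise (the sum of all $m$-th roots of unity equals $0$).

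I would apply this to $a=\sum_{j=1}^n z_j X_j$, take the expectation over uniform $X\in\zoon$, and exploit the independence of the $X_j$ to factor the inner expectation. Concretely,
\begin{align*}
\Prob_{X\in\zoon}\!\left[\sum_{j=1}^{n}z_{j}X_{j}\equiv s\pmod m\right]
&= \frac{1}{m}\sum_{k=0}^{m-1}\omega^{-ks}\,\Exp_{X\in\zoon}\prod_{j=1}^n\omega^{k z_j X_j}\\
&= \frac{1}{m}\sum_{k=0}^{m-1}\omega^{-ks}\prod_{j=1}^n\Exp_{X_j\in\zoo}\omega^{k z_j X_j}\\
&= \frac{1}{m}\sum_{k=0}^{m-1}\omega^{-ks}\prod_{j=1}^n\frac{1+\omega^{kz_j}}{2}.
\end{align*}
The $k=0$ term contributes exactly $1/m$ (since $\omega^0=1$ and the product equals $1$), so subtracting it isolates the $k\geq 1$ contribution:
\[
\Prob_{X}\!\left[\sum_{j=1}^{n}z_{j}X_{j}\equiv s\pmod m\right]-\frac{1}{m}
= \frac{1}{m}\sum_{k=1}^{m-1}\omega^{-ks}\prod_{j=1}^n\frac{1+\omega^{kz_j}}{2}.
\]

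The proof concludes by taking absolute values on both sides and applying the triangle inequality, using $|\omega^{-ks}|=1$ for each term. This yields exactly the bound~\eqref{eq:weighted-sums-uniform-1}. There is really no main obstacle here: the argument is a textbook deployment of character sums / Fourier inversion on $\ZZ_m$, and independence of the $X_j$ is what converts the expectation of a product of characters into the product $\prod_j(1+\omega^{kz_j})/2$. The only small care needed is to verify that the identity for $\I[a\equiv s\pmod m]$ holds for a primitive $m$-th root of unity (so the geometric sum vanishes off the diagonal), which is immediate from $\omega^m=1$ and $\omega\neq 1$.
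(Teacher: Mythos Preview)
Your proof is correct and follows essentially the same approach as the paper's: both use the Fourier indicator identity on $\ZZ_m$, factor the expectation via independence to obtain the exact formula $\frac{1}{m}\sum_{k=0}^{m-1}\omega^{-ks}\prod_j\frac{1+\omega^{kz_j}}{2}$, separate the $k=0$ term, and apply the triangle inequality.
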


\begin{proof}[{Proof \emph{(adapted from~\cite[Lemma~13]{Thathachar98BP-hierarchy})}}]
The fraction of vectors $X\in\zoon$ that satisfy the equation $\sum_{j=1}^{n}z_{j}X_{j}\equiv s\pmod m$
can be computed directly, as follows: 

\begin{align*}
\Prob_{X\in\zoon}\left[\sum_{j=1}^{n}z_{j}X_{j}\equiv s\pmod m\right] & =\Exp_{X\in\zoon}\;\I\left[\sum_{j=1}^{n}z_{j}X_{j}\equiv s\pmod m\right]\\
 & =\Exp_{X\in\zoon}\;\frac{1}{m}\sum_{k=0}^{m-1}\omega^{k(\sum_{j=1}^{n}z_{j}X_{j}-s)}\\
 & =\Exp_{X\in\zoon}\;\frac{1}{m}\sum_{k=0}^{m-1}\omega^{-ks}\prod_{j=1}^{n}\omega^{kz_{j}X_{j}}\\
 & =\frac{1}{m}\sum_{k=0}^{m-1}\omega^{-ks}\Exp_{X\in\zoon}\prod_{j=1}^{n}\omega^{kz_{j}X_{j}}\\
 & =\frac{1}{m}\sum_{k=0}^{m-1}\omega^{-ks}\prod_{j=1}^{n}\frac{1+\omega^{kz_{j}}}{2}\\
 & =\frac{1}{m}+\frac{1}{m}\sum_{k=1}^{m-1}\omega^{-ks}\prod_{j=1}^{n}\frac{1+\omega^{kz_{j}}}{2}.
\end{align*}
This implies~(\ref{eq:weighted-sums-uniform-1}) because $|\omega^{-ks}|=1$
for all $k,s\in\ZZ$. 
\end{proof}
\noindent In the original version of this manuscript, we proved~(\ref{eq:weighted-sums-uniform-1})
using a different, matrix-analytic argument, which we include as Appendix~\ref{app:number-solutions-linear-form}.
The short and elegant proof above was pointed out to us by T.~S.~Jayram,
who kindly allowed us to include it. 

We now simplify the right-hand side of~(\ref{eq:weighted-sums-uniform-1})
and relate it to $m$-discrepancy.
\begin{lem}
\label{lem:sum-equidistributed-mod-m}Fix a natural number $m\geq2$
and a multiset $Z=\{z_{1},z_{2},\ldots,z_{n}\}$ of integers. Then
for all $s\in\ZZ,$
\[
\left|\Prob_{X\in\zoon}\left[\sum_{j=1}^{n}z_{j}X_{j}\equiv s\pmod m\right]-\frac{1}{m}\right|\leq\left(\frac{1+\disc(Z,m)}{2}\right)^{n/2}.
\]
\end{lem}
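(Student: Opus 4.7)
The plan is to combine Fact~\ref{fact:number-solutions-linear-form} with an application of the AM--GM inequality on the squared modulus of the factors $(1+\omega^{kz_j})/2$. By Fact~\ref{fact:number-solutions-linear-form}, the quantity to bound is at most
\[
\frac{1}{m}\sum_{k=1}^{m-1}\prod_{j=1}^{n}\left|\frac{1+\omega^{kz_{j}}}{2}\right|,
\]
so it suffices to produce a uniform bound, valid for every fixed $k \in \{1,2,\ldots,m-1\}$, of the form $\prod_{j=1}^n |(1+\omega^{kz_j})/2| \leq ((1+\disc(Z,m))/2)^{n/2}$; the prefactor $(m-1)/m \leq 1$ will then finish the argument.

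For the key step, I would square the product and rewrite it as
\[
\prod_{j=1}^{n}\left|\frac{1+\omega^{kz_{j}}}{2}\right|^{2} = \prod_{j=1}^{n}\frac{1+\realpart(\omega^{kz_{j}})}{2},
\]
using the identity $|1+\omega^{kz_j}|^2 = (1+\omega^{kz_j})(1+\omega^{-kz_j}) = 2 + 2\realpart(\omega^{kz_j})$. Since each factor lies in $[0,1]$, the AM--GM inequality gives
\[
\prod_{j=1}^{n}\frac{1+\realpart(\omega^{kz_{j}})}{2} \;\leq\; \left(\frac{1}{n}\sum_{j=1}^{n}\frac{1+\realpart(\omega^{kz_{j}})}{2}\right)^{\!n} = \left(\frac{1}{2}+\frac{1}{2}\realpart\!\left(\frac{1}{n}\sum_{j=1}^{n}\omega^{kz_{j}}\right)\right)^{\!n}.
\]
The inner real part is bounded in absolute value by $|\frac{1}{n}\sum_{j}\omega^{kz_j}| \leq \disc(Z,m)$ by definition of $m$-discrepancy (since $k \in \{1,\ldots,m-1\}$), so the whole expression is at most $((1+\disc(Z,m))/2)^n$. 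Taking square roots yields the desired per-$k$ estimate.

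Assembling everything, Fact~\ref{fact:number-solutions-linear-form} combined with the per-$k$ bound gives
\[
\left|\Prob_{X\in\zoon}\left[\sum_{j=1}^{n}z_{j}X_{j}\equiv s\pmod m\right]-\frac{1}{m}\right| \leq \frac{m-1}{m}\left(\frac{1+\disc(Z,m)}{2}\right)^{n/2},
\]
and bounding $(m-1)/m \leq 1$ delivers the lemma. There is no real obstacle here: the only substantive move is recognizing that $|(1+\omega^{kz_j})/2|^2$ is an affine function of $\realpart(\omega^{kz_j})$, which lets AM--GM convert the product of moduli into the average $\frac{1}{n}\sum_j \omega^{kz_j}$ that defines $m$-discrepancy. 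Everything else is bookkeeping.
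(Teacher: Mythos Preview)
Your proof is correct and follows essentially the same route as the paper: both start from Fact~\ref{fact:number-solutions-linear-form}, rewrite $|(1+\omega^{kz_j})/2|^2$ as $(1+\realpart(\omega^{kz_j}))/2$, apply AM--GM to the nonnegative factors, and bound the resulting real part by $\disc(Z,m)$. The only cosmetic difference is that the paper carries the square root throughout rather than squaring first and unsquaring at the end.
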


\begin{proof}
Let $\omega$ be a primitive $m$-th root of unity. For $k=1,2,\ldots,m-1,$
we have
\begin{align*}
\left|\prod_{j=1}^{n}\frac{1+\omega^{kz_{j}}}{2}\right| & =\left(\prod_{j=1}^{n}\frac{(1+\omega^{kz_{j}})(\overline{1+\omega^{kz_{j}}})}{4}\right)^{1/2}\\
 & =\left(\prod_{j=1}^{n}\frac{1+\operatorname{Re}(\omega^{kz_{j}})}{2}\right)^{1/2}\\
 & \leq\left(\frac{1}{n}\sum_{j=1}^{n}\frac{1+\operatorname{Re}(\omega^{kz_{j}})}{2}\right)^{n/2}\\
 & =\left(\frac{1}{2}+\frac{1}{2}\operatorname{Re}\left(\frac{1}{n}\sum_{j=1}^{n}\omega^{kz_{j}}\right)\right)^{n/2}\\
 & \leq\left(\frac{1}{2}+\frac{1}{2}\left|\frac{1}{n}\sum_{j=1}^{n}\omega^{kz_{j}}\right|\right)^{n/2},
\end{align*}
where the second step uses $|\omega|=1$, and the third step follows
by convexity since $1+\operatorname{Re}(\omega^{kz_{j}})\geq0.$ Maximizing
over $k,$ we arrive at
\begin{align*}
\max_{k=1,2,\ldots,m-1}\left|\prod_{j=1}^{n}\frac{1+\omega^{kz_{j}}}{2}\right| & \leq\left(\frac{1}{2}+\frac{1}{2}\max_{k=1,2,\ldots,m-1}\left|\frac{1}{n}\sum_{j=1}^{n}\omega^{kz_{j}}\right|\right)^{n/2}\\
 & =\left(\frac{1+\disc(Z,m)}{2}\right)^{n/2}.
\end{align*}
In view of Fact~\ref{fact:number-solutions-linear-form}, the proof
is complete.
\end{proof}

\subsection{Fooling distributions}

Let $Z=\{z_{1},z_{2},\ldots,z_{n}\}$ be a multiset with $m$-discrepancy
bounded away from $1.$ Consider the linear map $L\colon\zoon\to\ZZ_{m}$
given by $L(x)=\sum z_{i}x_{i}.$ We have shown that for uniformly
random $X\in\zoon$, the probability distribution of $L(X)$ is exponentially
close to uniform. This implies, for some constant $c>0$, that the
sets $L^{-1}(0),L^{-1}(1),\ldots,L^{-1}(m-1)$ cannot be reliably
distinguished by a real polynomial of degree up to $cn$. More precisely,
the characteristic functions of $L^{-1}(0),L^{-1}(1),\ldots,L^{-1}(m-1)$
have approximately the same Fourier spectrum up to degree $cn$. We
will now substantially strengthen this conclusion by proving that
there are probability distributions $\mu_{0},\mu_{1},\ldots,\mu_{m-1}$,
supported on $L^{-1}(0),L^{-1}(1),\ldots,L^{-1}(m-1)$, respectively,
such that the Fourier spectra of $\mu_{0},\mu_{1},\ldots,\mu_{m-1}$
are \emph{exactly} the same up to degree $cn.$ To use a technical
term, these distributions \emph{fool} any polynomial $p$ of degree
up to $cn$, in that $\Exp_{\mu_{0}}p=\Exp_{\mu_{1}}p=\cdots=\Exp_{\mu_{m-1}}p.$
Our proof relies on the following technical result~\cite[Theorem~4.1]{sherstov09opthshs}.
\begin{thm}[Sherstov]
\label{thm:distribution-by-inversion} Let $f,\chi_{1},\dots,\chi_{k}\colon\Xcal\to\moo$
be given functions on a finite set $\Xcal.$ Suppose that 
\begin{align}
 & \sum_{\substack{i=1}
}^{k}|\langle f,\chi_{i}\rangle_{\Xcal}|<\frac{1}{2},\label{eqn:f-chi-bounded}\\
 & \sum_{\substack{j=1\\
j\ne i
}
}^{k}|\langle\chi_{i},\chi_{j}\rangle_{\Xcal}|\leq\frac{1}{2}, &  & i=1,2,\dots,k.\label{eqn:diag-dominance}\\
\intertext{\text{{\it Then there exists a probability distribution \ensuremath{\mu} on \ensuremath{\Xcal} such that}}} & \Exp_{\mu}\,[f(x)\chi_{i}(x)]=0, &  & i=1,2,\dots,k.\nonumber 
\end{align}
\end{thm}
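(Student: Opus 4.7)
The plan is to construct $\mu$ as a judicious perturbation of the uniform distribution on $\mathcal{X}$. Specifically, I would search for real coefficients $\alpha_{1},\ldots,\alpha_{k}$ and set
\[
u(x) \;=\; 1+\sum_{i=1}^{k}\alpha_{i}\,f(x)\chi_{i}(x), \qquad \mu(x) \;=\; \frac{u(x)}{\sum_{y\in\mathcal{X}}u(y)}.
\]
Because $f(x)^{2}=1$ and $\chi_{i}(x)^{2}=1,$ the required identities $\Exp_{\mu}[f\chi_{i}]=0$ are equivalent, after clearing the positive normalizing constant $\sum_{y}u(y),$ to the linear system
\[
\alpha_{i}+\sum_{j\ne i}\alpha_{j}\,\langle\chi_{i},\chi_{j}\rangle_{\mathcal{X}}\;=\;-\langle f,\chi_{i}\rangle_{\mathcal{X}},\qquad i=1,2,\ldots,k,
\]
that is, $A\alpha=-b$ with $A_{ii}=1,$ $A_{ij}=\langle\chi_{i},\chi_{j}\rangle_{\mathcal{X}}$ for $i\ne j,$ and $b_{i}=\langle f,\chi_{i}\rangle_{\mathcal{X}}.$

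The central step is to solve this system with good control on $\|\alpha\|_{1}$. By the second hypothesis, each row of $A-I$ has absolute row sum at most $1/2,$ so by the Gershgorin circle theorem every eigenvalue of $A$ lies in $[1/2,3/2]$ and $A$ is invertible. For the $\ell_{1}$ bound I would rewrite the $i$-th equation as $\alpha_{i}=-b_{i}-\sum_{j\ne i}\alpha_{j}\langle\chi_{i},\chi_{j}\rangle_{\mathcal{X}},$ take absolute values, sum over $i,$ and swap the order of summation using the symmetry $\langle\chi_{i},\chi_{j}\rangle_{\mathcal{X}}=\langle\chi_{j},\chi_{i}\rangle_{\mathcal{X}}$:
\[
\|\alpha\|_{1}\;\leq\;\|b\|_{1}+\sum_{j}|\alpha_{j}|\sum_{i\ne j}|\langle\chi_{i},\chi_{j}\rangle_{\mathcal{X}}|\;\leq\;\|b\|_{1}+\tfrac{1}{2}\|\alpha\|_{1}.
\]
Rearranging yields $\|\alpha\|_{1}\leq 2\|b\|_{1}<1,$ where the strict inequality uses the first hypothesis $\sum_{i}|\langle f,\chi_{i}\rangle_{\mathcal{X}}|<1/2.$

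The conclusion is then immediate: since $|f(x)\chi_{i}(x)|=1,$ we have $u(x)\geq 1-\|\alpha\|_{1}>0$ for every $x\in\mathcal{X},$ so $\mu$ is a genuine probability distribution, and the equalities $\Exp_{\mu}[f\chi_{i}]=0$ hold by construction. The main obstacle, and the reason the two hypotheses come with the precise constant $1/2,$ is exactly this nonnegativity requirement: mere solvability of the linear system follows from diagonal dominance alone, but nonnegativity of $u$ forces the tighter $\ell_{1}$ control, and any weakening of either hypothesis would cost the strict inequality $\|\alpha\|_{1}<1$ on which positivity of $u$ depends.
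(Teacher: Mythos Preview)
Your proof is correct. The paper does not prove this theorem itself; it cites it from earlier work (\cite[Theorem~4.1]{sherstov09opthshs}) and remarks only that the original proof uses the Gershgorin circle theorem. Your argument is precisely of that type: the perturbation ansatz $u(x)=1+\sum_i\alpha_i f(x)\chi_i(x)$, the linear system $A\alpha=-b$ with $A=I+N$, Gershgorin-style diagonal dominance to guarantee invertibility, and the $\ell_1$ estimate $\|\alpha\|_1\le 2\|b\|_1<1$ to force $u>0$. This is essentially the intended approach, and your use of the symmetry $\langle\chi_i,\chi_j\rangle_{\mathcal{X}}=\langle\chi_j,\chi_i\rangle_{\mathcal{X}}$ to convert the row-sum hypothesis into the column-sum bound needed for the $\ell_1$ inequality is the right move.
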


\noindent By way of notation, we remind the reader that $\langle f,g\rangle_{\Xcal}=\frac{1}{|\Xcal|}\sum_{x\in\Xcal}f(x)g(x)$
for any real-valued functions $f$ and $g$ and a nonempty subset
$\Xcal$ of their domain. In words, Theorem~\ref{thm:distribution-by-inversion}
states that if $\chi_{1},\chi_{2},\dots,\chi_{k}$ each have small
correlation with $f$ and, in addition, have small pairwise correlations,
then a distribution exists with respect to which $f$ is completely
uncorrelated with $\chi_{1},\chi_{2},\dots,\chi_{k}.$ We are now
in a position to prove the existence of the promised fooling distributions.
In the statement that follows, recall that $H(p)=-p\log p-(1-p)\log(1-p)$
is the binary entropy function.
\begin{lem}
\label{lem:fooling-distributions}Fix $\delta\in[0,1/2)$ and a nonempty
multiset $Z=\{z_{1},z_{2},\ldots,z_{n}\}$ of integers. Let $m$ be
an integer with
\begin{equation}
2\leq m\leq\left(\frac{2(1-2\delta)}{1+\disc(Z,m)}\right)^{\left(\frac{1}{2}-\delta\right)n}2^{-H(\delta)n-2}.\label{eq:Z-delta-m}
\end{equation}
Define
\begin{align}
\Xcal_{s} & =\left\{ x\in\zoon:\sum_{j=1}^{n}z_{j}x_{j}\equiv s\pmod m\right\} , &  & s\in\ZZ.\label{eq:Xcal-s-defined}
\end{align}
Then each $\Xcal_{s}$ is nonempty. Moreover, there is a probability
distribution $\mu_{s}$ on $\Xcal_{s}$ $($for each $s)$ such that
\begin{equation}
\Exp_{X\sim\mu_{s}}p(X)=\Exp_{X\sim\mu_{s'}}p(X)\label{eq:fooling-p}
\end{equation}
for all $s,s'\in\ZZ$ and all real polynomials $p\colon\zoon\to\Re$
of degree at most $\delta n.$
\end{lem}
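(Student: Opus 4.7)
The plan is to apply Theorem~\ref{thm:distribution-by-inversion} separately for each $s \in \{0, 1, \ldots, m-1\}$, with $\Xcal = \Xcal_s$, $f \equiv 1$, and the functions $\chi_i$ ranging over the Fourier characters $\chi_S(x) = \prod_{j \in S}(-1)^{x_j}$ for all nonempty subsets $S \subseteq \{1, \ldots, n\}$ with $|S| \leq \delta n$. A distribution $\mu_s$ meeting the conclusion of the theorem will automatically annihilate every $\chi_S$ with $1 \leq |S| \leq \delta n$. Since polynomials of degree at most $\delta n$ on $\zoon$ are spanned by $\{\chi_S : |S| \leq \delta n\}$, the expectation $\Exp_{\mu_s}[p]$ for such a $p$ collapses to the coefficient of $\chi_\varnothing$ in the Fourier expansion of $p$, which is independent of $s$, yielding~(\ref{eq:fooling-p}).

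The central estimate is a bound on $|\langle 1, \chi_T\rangle_{\Xcal_s}|$ for nonempty $T$ with $|T| \leq 2\delta n$; this regime covers both $T = S_i$ (as required by hypothesis~(\ref{eqn:f-chi-bounded})) and $T = S_i \oplus S_j$ (as required by hypothesis~(\ref{eqn:diag-dominance}), since $\chi_{S_i}\chi_{S_j} = \chi_{S_i \oplus S_j}$ and $|S_i \oplus S_j| \leq 2\delta n$). Expanding the indicator of $\Xcal_s$ via $\I[\sum z_j x_j \equiv s \pmod m] = m^{-1}\sum_{k=0}^{m-1}\omega^{k(\sum z_j x_j - s)}$ and factoring each product over $j$ according to $j \in T$ or $j \notin T$, the $k=0$ contribution vanishes because $T \ne \varnothing$, leaving
\[
\sum_{x \in \Xcal_s}\chi_T(x) = \frac{1}{m}\sum_{k=1}^{m-1}\omega^{-ks}\prod_{j \in T}(1-\omega^{kz_j})\prod_{j \notin T}(1+\omega^{kz_j}).
\]
Bound $|1-\omega^{kz_j}| \leq 2$ for $j \in T$, and apply AM-GM to the remaining factor in squared form using $\sum_{j \notin T}|1+\omega^{kz_j}|^2 = 2(n-|T|) + 2\realpart\sum_{j \notin T}\omega^{kz_j} \leq 2n(1+\disc(Z,m))$, where the last step uses the definition of $m$-discrepancy together with $|T| \leq 2\delta n$.

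For the denominator, Lemma~\ref{lem:sum-equidistributed-mod-m} gives $|\Xcal_s| \geq 2^n/m - 2^n((1+\disc(Z,m))/2)^{n/2}$, and a short comparison with~(\ref{eq:Z-delta-m}) shows that the subtracted term is at most $2^{n-1}/m$, so $\Xcal_s \ne \varnothing$ and in fact $|\Xcal_s| \geq 2^{n-1}/m$. Combining this with the numerator bound and using $n-|T| \geq (1-2\delta)n$ (together with the observation that under~(\ref{eq:Z-delta-m}) the base $(1+\disc(Z,m))/(2(1-2\delta))$ is at most $1$, since otherwise the right-hand side of~(\ref{eq:Z-delta-m}) would already be less than $2$) yields
\[
|\langle 1, \chi_T\rangle_{\Xcal_s}| \leq 2m\left(\frac{1+\disc(Z,m)}{2(1-2\delta)}\right)^{(1/2-\delta)n}.
\]

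The number of eligible $S_i$, and for any fixed $S_i$ the number of $S_j \ne S_i$, is at most $\sum_{k=0}^{\delta n}\binom{n}{k} \leq 2^{H(\delta)n}$ by~(\ref{eqn:entropy-bound}). Both hypotheses of Theorem~\ref{thm:distribution-by-inversion} thus reduce to the single inequality
\[
2^{H(\delta)n} \cdot 2m\left(\frac{1+\disc(Z,m)}{2(1-2\delta)}\right)^{(1/2-\delta)n} \leq \frac{1}{2},
\]
which is exactly a rearrangement of~(\ref{eq:Z-delta-m}); the theorem then supplies the required distributions $\mu_s$. The main technical obstacle will be to make the AM-GM step produce the exponent $(1/2-\delta)n$ without slack, because after multiplication by the entropy counting factor $2^{H(\delta)n}$ and the factor $2m$ from the denominator, the bound on $m$ permitted by~(\ref{eq:Z-delta-m}) is exactly what the argument requires to close.
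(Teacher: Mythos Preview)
Your proposal is correct and follows essentially the same architecture as the paper: apply Theorem~\ref{thm:distribution-by-inversion} on each $\Xcal_{s}$ with $f\equiv1$ and the low-degree characters $\chi_{S}$, reduce both hypotheses to a single bound on $|\langle 1,\chi_{T}\rangle_{\Xcal_{s}}|$ for $1\leq|T|\leq 2\delta n$, count characters via~(\ref{eqn:entropy-bound}), and observe that the resulting inequality is precisely~(\ref{eq:Z-delta-m}). The one technical difference is how you obtain the correlation bound: you expand the indicator of $\Xcal_{s}$ directly with roots of unity and apply AM--GM to $\prod_{j\notin T}|1+\omega^{kz_{j}}|^{2}$, whereas the paper (Claim~\ref{claim:chi_A-orthog-to-Xs}) conditions on the coordinates in $T$ and invokes Lemma~\ref{lem:sum-equidistributed-mod-m} together with Proposition~\ref{prop:norm-Z-subset-Z}; both routes yield the identical estimate $2m\bigl((1+\disc(Z,m))/2\cdot n/(n-|T|)\bigr)^{(n-|T|)/2}$, so this is a variation in bookkeeping rather than in strategy. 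One small point to tighten: hypothesis~(\ref{eqn:f-chi-bounded}) requires a \emph{strict} inequality, which you get because the number of nonempty $S$ with $|S|\leq\delta n$ is at most $2^{H(\delta)n}-1$, not $2^{H(\delta)n}$.
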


\begin{proof}
For a subset $A\subseteq\{1,2,\ldots,n\}$, define $\chi_{A}\colon\zoon\to\moo$
by $\chi_{A}(x)=(-1)^{\sum_{i\in A}x_{i}}.$ The centerpiece of the
proof is the following claim.
\begin{claim}
\label{claim:chi_A-orthog-to-Xs}For every $s\in\ZZ$ and every nonempty
proper subset $A\subset\{1,2,\ldots,n\},$
\begin{align}
 & \Xcal_{s}\ne\varnothing,\label{eq:Xs-nonempty}\\
 & |\langle\chi_{A},1\rangle_{\Xcal_{s}}|\leq2m\left(\frac{1+\disc(Z,m)}{2}\cdot\frac{n}{n-|A|}\right)^{\frac{n-|A|}{2}}.\label{eq:character-correl}
\end{align}
\end{claim}

We will proceed with the main proof and settle the claim after we
are finished. Fix $s\in\ZZ$ arbitrarily. Let $\Acal$ denote the
family of nonempty subsets of $\{1,2,\ldots,n\}$ of cardinality at
most $\delta n.$ Recall from~(\ref{eqn:entropy-bound}) that
\begin{equation}
|\Acal|\leq2^{H(\delta)n}-1.\label{eq:Acal-bound}
\end{equation}
As a result,
\begin{align}
\sum_{A\in\Acal}|\langle\chi_{A},1\rangle_{\Xcal_{s}}| & \leq|\Acal|\cdot\max_{1\leq|A|\leq\delta n}|\langle\chi_{A},1\rangle_{\Xcal_{s}}|\nonumber \\
 & \leq(2^{H(\delta)n}-1)\cdot2m\max_{1\leq k\leq\delta n}\left(\frac{1+\disc(Z,m)}{2}\cdot\frac{n}{n-k}\right)^{\frac{n-k}{2}}\nonumber \\
 & =(2^{H(\delta)n}-1)\cdot2m\left(\frac{1+\disc(Z,m)}{2(1-\delta)}\right)^{\frac{(1-\delta)n}{2}}\nonumber \\
 & <\frac{1}{2},\label{eq:correl-with-1}
\end{align}
where the second step uses~(\ref{eq:Acal-bound}) and Claim~\ref{claim:chi_A-orthog-to-Xs};
the third step is valid because $1+\disc(Z,m)<2(1-\delta)$ by (\ref{eq:Z-delta-m});
and the final step is immediate from~(\ref{eq:Z-delta-m}). An analogous
calculation shows that for every $A\in\Acal,$
\begin{align}
\sum_{A'\in\Acal\setminus\{A\}}|\langle\chi_{A},\chi_{A'}\rangle_{\Xcal_{s}}| & =\sum_{\substack{\substack{A'\in\Acal}
\\
A'\ne A
}
}|\langle\chi_{A\oplus A'},1\rangle_{\Xcal_{s}}|\nonumber \\
 & \leq(2^{H(\delta)n}-1)\cdot2m\left(\frac{1+\disc(Z,m)}{2(1-2\delta)}\right)^{\frac{(1-2\delta)n}{2}}\nonumber \\
 & <\frac{1}{2},\label{eq:cross-correls}
\end{align}
where the second step follows from~(\ref{eq:Acal-bound}) and Claim~\ref{claim:chi_A-orthog-to-Xs},
and the last step uses~(\ref{eq:Z-delta-m}). 

Recall from Claim~\ref{claim:chi_A-orthog-to-Xs} that each $\Xcal_{s}$
is nonempty. Applying Theorem~\ref{thm:distribution-by-inversion}
with (\ref{eq:correl-with-1}) and (\ref{eq:cross-correls}) to the
functions $\chi_{A}$ $(A\in\Acal)$ and $f=1$, we infer the existence
of a probability distribution $\mu_{s}$ on $\Xcal_{s}$ such that
\begin{align}
\Exp_{X\sim\mu_{s}}\chi_{A}(X) & =0, &  & A\in\Acal.\label{eq:distribution-orthogonalizes}
\end{align}
Now that the probability distributions $\mu_{s}$ have been constructed
for each $s\in\ZZ,$ consider an arbitrary polynomial $p\colon\zoon\to\Re$
of degree at most $\delta n.$ Then $p=\sum_{|A|\leq\delta n}p_{A}\chi_{A}$
for some reals $p_{A}$. As a result,~(\ref{eq:distribution-orthogonalizes})
implies that $\Exp_{\mu_{s}}p=p_{\varnothing}$ for all $s\in\ZZ,$
thereby settling~(\ref{eq:fooling-p}).
\end{proof}
\begin{proof}[Proof of Claim~\emph{\ref{claim:chi_A-orthog-to-Xs}}.]
 By symmetry, we may assume that $A=\{1,2,\ldots,k\}$ for some $0<k<n.$
Let $X=(X_{1},X_{2},\ldots,X_{n})$ be a random variable with uniform
distribution on $\zoon$. Then
\begin{align}
\frac{|\Xcal_{s}|}{2^{n}} & \geq\frac{1}{m}-\left|\frac{|\Xcal_{s}|}{2^{n}}-\frac{1}{m}\right|\nonumber \\
 & =\frac{1}{m}-\left|\Prob_{X}[X\in\Xcal_{s}]-\frac{1}{m}\right|\nonumber \\
 & \geq\frac{1}{m}-\left(\frac{1+\disc(Z,m)}{2}\right)^{n/2}\nonumber \\
 & \geq\frac{1}{2m},\label{eq:unconditioned-set-size}
\end{align}
where the last two steps follow from Lemma~\ref{lem:sum-equidistributed-mod-m}
and~(\ref{eq:Z-delta-m}), respectively. This settles~(\ref{eq:Xs-nonempty}).
Moreover,
\begin{align}
\frac{|\Xcal_{s}|}{2^{n}} & |\langle\chi_{A},1\rangle_{\Xcal_{s}}|\nonumber \\
 & =\left|\Exp_{X}\;\chi_{\{1,2,\ldots,k\}}(X)\cdot\I[X\in\Xcal_{s}]\right|\nonumber \\
 & =\left|\sum_{x\in\zook}\frac{(-1)^{x_{1}+\cdots+x_{k}}}{2^{k}}\Prob[x_{1}\ldots x_{k}X_{k+1}\ldots X_{n}\in\Xcal_{s}]\right|\nonumber \\
 & =\left|\sum_{x\in\zook}\frac{(-1)^{x_{1}+\cdots+x_{k}}}{2^{k}}\left(\Prob[x_{1}\ldots x_{k}X_{k+1}\ldots X_{n}\in\Xcal_{s}]-\frac{1}{m}\right)\right|\nonumber \\
 & \leq\frac{1}{2^{k}}\sum_{x\in\zook}\left|\Prob[x_{1}\ldots x_{k}X_{k+1}\ldots X_{n}\in\Xcal_{s}]-\frac{1}{m}\right|\nonumber \\
 & =\frac{1}{2^{k}}\sum_{x\in\zook}\left|\Prob\left[\sum_{j=k+1}^{n}z_{j}X_{j}\equiv s-\sum_{j=1}^{k}z_{j}x_{j}\pmod m\right]-\frac{1}{m}\right|\nonumber \\
 & \leq\left(\frac{1+\disc(\{z_{k+1},z_{k+2},\ldots,z_{n}\},m)}{2}\right)^{(n-k)/2}\nonumber \\
 & \leq\left(\frac{1+\disc(Z,m)}{2}\cdot\frac{n}{n-k}\right)^{(n-k)/2},\label{eq:correl-intermediate}
\end{align}
where the third step uses $k\geq1$; the next-to-last step is legitimate
by Lemma~\ref{lem:sum-equidistributed-mod-m}; and the last step
applies Proposition~\ref{prop:norm-Z-subset-Z}. Now~(\ref{eq:character-correl})
is immediate from~(\ref{eq:unconditioned-set-size}) and~(\ref{eq:correl-intermediate}).
\end{proof}

\subsection{The univariate reduction}

At last, we present a generic construction of a halfspace whose approximation
by rational functions and polynomials gives corresponding approximants
for the sign function on the discrete set $\{\pm1,\pm2,\ldots,\pm m\}$.
In more detail, let $z_{1},z_{2},\ldots,z_{n}$ be given integers.
For any such $n$-tuple, we define an associated halfspace and prove
a lower bound on $m$ in terms of the discrepancy of the multiset
$\{z_{1},z_{2},\ldots,z_{n}\}.$ The following first-principles calculation
will be helpful.
\begin{prop}
\label{prop:averaging-num-denom}Let $a_{1},a_{2},\ldots,a_{k}\in\Re$
and $b_{1},b_{2},\ldots,b_{k}>0$. Then
\begin{equation}
\min\frac{a_{i}}{b_{i}}\leq\frac{\Exp a_{i}}{\Exp b_{i}}\leq\max\frac{a_{i}}{b_{i}}.\label{eq:averaging-num-denom}
\end{equation}
\end{prop}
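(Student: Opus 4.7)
The plan is to use the standard mediant / weighted-average argument. Let $m = \min_i (a_i/b_i)$ and $M = \max_i (a_i/b_i)$; these are well-defined since each $b_i > 0$. The key observation is that the inequality $m \le a_i/b_i \le M$ can be cleared of denominators without reversing, precisely because every $b_i$ is strictly positive, yielding $m\, b_i \le a_i \le M\, b_i$ for every index $i$.

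From here I would simply take the expectation (which, in context, is the uniform average over $i \in \{1,2,\ldots,k\}$, i.e.\ $\Exp a_i = \frac{1}{k}\sum_i a_i$) on all three sides. Linearity of expectation gives $m\, \Exp b_i \le \Exp a_i \le M\, \Exp b_i$. Since each $b_i$ is strictly positive, $\Exp b_i > 0$, so I can divide through to obtain
\[
\min_i \frac{a_i}{b_i} \;=\; m \;\le\; \frac{\Exp a_i}{\Exp b_i} \;\le\; M \;=\; \max_i \frac{a_i}{b_i},
\]
which is exactly~\eqref{eq:averaging-num-denom}.

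There is no real obstacle here; the only thing to be careful about is the sign condition on $b_i$, which is used twice — once to clear denominators without flipping the inequality, and once to know that $\Exp b_i > 0$ so that the final division is legitimate. Both uses are handled by the hypothesis $b_i > 0$ for all $i$.
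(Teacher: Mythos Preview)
Your proof is correct and essentially identical to the paper's: define $m=\min a_i/b_i$ and $M=\max a_i/b_i$, clear denominators using $b_i>0$ to get $mb_i\le a_i\le Mb_i$, take the average, and divide through by $\Exp b_i>0$. The paper phrases the averaging step as ``taking a weighted sum'' rather than an expectation, but the argument is the same.
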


\begin{proof}
Abbreviate $m=\min a_{i}/b_{i}$ and $M=\max a_{i}/b_{i}.$ Since
each $b_{i}$ is positive, we obtain $mb_{i}\leq a_{i}\leq Mb_{i}$.
Taking a weighted sum of these inequalities, we arrive at $m\Exp b_{i}\leq\Exp a_{i}\leq M\Exp b_{i},$
which is equivalent to~(\ref{eq:averaging-num-denom}).
\end{proof}
We have:
\begin{thm}
\label{thm:master}Fix $\delta\in[0,1/2)$ and a nonempty multiset
$Z=\{z_{1},z_{2},\ldots,z_{n}\}$ of integers. Let $m$ be an integer
with
\begin{equation}
2\leq m\leq\left(\frac{2(1-2\delta)}{1+\disc(Z,m)}\right)^{\left(\frac{1}{2}-\delta\right)n}2^{-H(\delta)n-2}.\label{eq:m-range}
\end{equation}
Define $f\colon\zoo^{n}\times\zoon\to\moo$ by 
\[
f(x,y)=\sign\left(\frac{1}{2}+\sum_{j=1}^{n}(z_{j}\bmod m)x_{j}-m\sum_{j=1}^{n}y_{j}\right).
\]
Then
\begin{align*}
R(f,d_{0},d_{1}) & \geq R(\sign|_{\{\pm1,\pm2,\ldots,\pm m\}},2d_{0},2d_{1})
\end{align*}
for all $d_{0},d_{1}=0,1,2,\ldots,\lfloor\delta n/2\rfloor.$
\end{thm}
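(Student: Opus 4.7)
The plan is as follows: take any rational function $p(x,y)/q(x,y)$ with $\deg p\leq d_0$, $\deg q\leq d_1$, and $\|p/q-f\|_\infty\leq\epsilon$, and from it I will build a univariate rational function $P(t)/Q(t)$ of degrees at most $(2d_0,2d_1)$ approximating $\sign$ on $\{\pm 1,\ldots,\pm m\}$ within the same $\epsilon$; taking the infimum over such approximants then yields the theorem. The first ingredient is Lemma~\ref{lem:fooling-distributions}, whose hypothesis is exactly \eqref{eq:m-range}, producing probability distributions $\mu_0,\ldots,\mu_{m-1}$ with $\mu_s$ supported on $\Xcal_s$ and $\Exp_{\mu_s}[\pi(x)]$ independent of $s$ for every polynomial $\pi$ of degree at most $\delta n$.

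For each $t\in\{\pm 1,\ldots,\pm m\}$, I would define a distribution $\nu_t$ on $\zoon\times\zoon$ by first sampling $x\sim\mu_{t\bmod m}$ (with $t\bmod m\in\{0,\ldots,m-1\}$) and then, conditionally on $x$, sampling $y$ uniformly from $\{y\in\zoon:|y|=(s(x)-t)/m\}$, where $s(x):=\sum_{j=1}^{n}(z_j\bmod m)\,x_j$. The congruence $s(x)\equiv t\pmod m$ forces $(s(x)-t)/m$ to be an integer, and for $t\in\{\pm 1,\ldots,\pm m\}$ it lies in $\{0,1,\ldots,n\}$ (at the endpoint $t=+m$ this requires $\mu_0$ to be supported on $\{s(x)\geq m\}$, a minor adjustment to the construction of Lemma~\ref{lem:fooling-distributions}). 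By design, $s(x)-m|y|=t$ identically on $\operatorname{supp}\nu_t$, so $f(x,y)=\sign(t+\tfrac12)=\sign(t)$ on that support.

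The heart of the argument is showing that $P(t):=\Exp_{\nu_t}[p(x,y)]$ is a univariate polynomial in $t$ of degree at most $d_0$, and $Q(t):=\Exp_{\nu_t}[q(x,y)]$ of degree at most $d_1$. Applying Proposition~\ref{prop:minsky-papert} to $y$ yields $\tilde p(x,k):=\Exp_{y:|y|=k}p(x,y)$, a polynomial of degree at most $d_0$ in $(x,k)$. Next I would substitute $k=s(x)/m+v_t$ with $v_t:=-(t\bmod m)/m-\lfloor t/m\rfloor$; the identity $t=(t\bmod m)+m\lfloor t/m\rfloor$ produces the crucial cancellation $v_t=-t/m$ uniformly in $t$. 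Expanding $\tilde p(x,s(x)/m+v_t)=\sum_{i\geq 0}v_t^{\,i}\psi_i(x)$, each $\psi_i$ is a polynomial in $x$ of degree at most $d_0-i\leq\delta n$ (using that $\deg_x(s(x)/m)^{\,j-i}=j-i$ and $|S|+j\leq d_0$). Hence Lemma~\ref{lem:fooling-distributions} collapses $\Exp_{\mu_{t\bmod m}}[\psi_i(x)]$ to a constant $C_i$ independent of $t$, giving $P(t)=\sum_i C_i(-t/m)^{\,i}$, a polynomial in $t$ of degree at most $d_0\leq 2d_0$, and analogously $\deg Q\leq d_1\leq 2d_1$.

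Finally, Proposition~\ref{prop:averaging-num-denom} gives $\min_{\operatorname{supp}\nu_t}(p/q)\leq P(t)/Q(t)\leq\max_{\operatorname{supp}\nu_t}(p/q)$; combined with $|p/q-f|\leq\epsilon$ pointwise and $f\equiv\sign(t)$ on $\operatorname{supp}\nu_t$, this yields $|P(t)/Q(t)-\sign(t)|\leq\epsilon$ for every $t\in\{\pm 1,\ldots,\pm m\}$, as required. The main obstacle will be orchestrating $\nu_t$ so that the apparent two-variable dependence of $\Exp_{\nu_t}[\cdot]$ on $t$, through both the residue $t\bmod m$ and the imposed Hamming weight $(s(x)-t)/m$, collapses into a single polynomial dependence on $-t/m$; once the cancellation $v_t=-t/m$ is in hand, Minsky-Papert symmetrization and the fooling property from Lemma~\ref{lem:fooling-distributions} do the rest, with the factor-of-two slack in the claimed degrees comfortably absorbing the boundary adjustment at $t=+m$.
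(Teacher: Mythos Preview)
Your approach follows the paper's high-level strategy---invoke Lemma~\ref{lem:fooling-distributions}, symmetrize in $y$ via Minsky--Papert, then average over $x$ using the fooling distributions to collapse everything to a univariate rational function---but there is a genuine gap at the final step.

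The problem is your appeal to Proposition~\ref{prop:averaging-num-denom}. That proposition requires all denominators to be strictly positive, yet there is no reason for $q$ to have constant sign on the hypercube: the definition of $R(f,d_0,d_1)$ only asks that $q$ be nonvanishing, and on a discrete domain a nonvanishing polynomial may take both signs (e.g.\ $q(x,y)=2x_1-1$). If $q$ changes sign on $\operatorname{supp}\nu_t$, then $\Exp_{\nu_t}[p]\big/\Exp_{\nu_t}[q]$ need not lie between $\min(p/q)$ and $\max(p/q)$, and your $P(t)/Q(t)$ need not approximate $\sign t$ at all. The paper repairs this by first replacing $p/q$ with the two fractions $(pq\cdot f)/q^2$ and $p^2/(pq\cdot f)$, whose numerators and denominators are all positive on the domain (since $|p/q-f|<\epsilon<1$ forces $pq\cdot f>0$, and $p,q$ are nonvanishing); only then is averaging legitimate. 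This squaring is precisely the source of the degree doubling: after symmetrization the two fractions have degrees $(d_0+d_1,\,2d_1)$ and $(2d_0,\,d_0+d_1)$ respectively, and whichever of $d_0,d_1$ is larger determines which one fits inside $(2d_0,2d_1)$. So the factor of two in the statement is not slack to absorb a boundary correction---it is the essential cost of sign control, and without it your averaging step is unjustified.

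A secondary issue: your proposed ``minor adjustment'' of $\mu_0$ at $t=m$ (restricting its support to $\{s(x)\geq m\}$) is not obviously harmless, since Lemma~\ref{lem:fooling-distributions} establishes the fooling property for distributions supported on all of $\Xcal_s$, not on a subset. The paper avoids this entirely by indexing over $s\in[-m-1,m-1]$---a range on which $\ell(x,s)=\frac{1}{m}\bigl(\sum_j(z_j\bmod m)x_j-s\bigr)$ automatically lands in $\{0,1,\ldots,n\}$ for every $x\in\operatorname{supp}\mu_s$---and then shifting the resulting univariate approximant by one.
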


\begin{proof}
Fix $0<\epsilon<1$ arbitrarily for the remainder of the proof, and
suppose that $R(f,d_{0},d_{1})<\epsilon$ for some $d_{0},d_{1}\leq\delta n/2.$
Our goal is to show that
\begin{equation}
R(\sign|_{\{\pm1,\pm2,\ldots,\pm m\}},2d_{0},2d_{1})<\epsilon.\label{eq:master-goal}
\end{equation}
The proof is algorithmic and involves three steps. Given any approximant
for $f$, we will first manipulate it to control the sign behavior
in the numerator and denominator, then symmetrize it with respect
to $y,$ and finally\textemdash the arduous part of the proof\textemdash symmetrize
it with respect to $x.$ The result of these manipulations will be
a univariate approximant for the sign function.

\medskip{}

\emph{Step 1: Original approximant.} Since $R(f,d_{0},d_{1})<\epsilon,$
there are polynomials $p$ and $q$ of degree at most $d_{0}$ and
$d_{1},$ respectively, with
\begin{align*}
\left|f(x,y)-\frac{p(x,y)}{q(x,y)}\right| & <\epsilon
\end{align*}
for all $x,y\in\zoon.$ This inequality is equivalent to
\begin{align}
1-\epsilon & <\frac{p(x,y)}{q(x,y)}f(x,y)<1+\epsilon.\label{eq:orig-ineq}
\end{align}
Observe that for all $x,y\in\zoon,$ we have $p(x,y)\ne0$ and $q(x,y)\ne0,$
where the former is a consequence of $\epsilon<1$ and the latter
follows from the definition of a rational approximant. As a result,
(\ref{eq:orig-ineq}) gives
\begin{align}
1-\epsilon & <\frac{p(x,y)q(x,y)f(x,y)}{q(x,y)^{2}}<1+\epsilon,\label{eq:squared-ineq1}\\
1-\epsilon & <\frac{p(x,y)^{2}}{p(x,y)q(x,y)f(x,y)}<1+\epsilon.\label{eq:squared-ineq2}
\end{align}
\medskip{}

\emph{Step 2: Symmetrization on $y.$} The fractions in~(\ref{eq:squared-ineq1})
and~(\ref{eq:squared-ineq2}) have positive numerators and denominators.
Therefore, Proposition~\ref{prop:averaging-num-denom} implies that
\begin{align}
1-\epsilon<\frac{\Exp_{\sigma\in S_{n}}[p(x,\sigma y)q(x,\sigma y)f(x,\sigma y)]}{\Exp_{\sigma\in S_{n}}[q(x,\sigma y)^{2}]} & <1+\epsilon,\label{eq:symmetrizing1}\\
\rule{0mm}{6mm}1-\epsilon<\frac{\Exp_{\sigma\in S_{n}}[p(x,\sigma y)^{2}]}{\Exp_{\sigma\in S_{n}}[p(x,\sigma y)q(x,\sigma y)f(x,\sigma y)]} & <1+\epsilon.\label{eq:symmetrizing2}
\end{align}
Minsky and Papert's symmetrization technique (Proposition~\ref{prop:minsky-papert})
ensures the existence of polynomials $p^{*},q^{*},r^{*}$ of degree
at most $2d_{0},$ $2d_{1},$ and $d_{0}+d_{1}$, respectively, such
that for all $x,y\in\zoon,$
\begin{align*}
\Exp_{\sigma\in S_{n}}[p(x,\sigma y)^{2}] & \equiv p^{*}(x,|y|),\\
\Exp_{\sigma\in S_{n}}[q(x,\sigma y)^{2}] & \equiv q^{*}(x,|y|),\\
\Exp_{\sigma\in S_{n}}[p(x,\sigma y)q(x,\sigma y)] & \equiv r^{*}(x,|y|).
\end{align*}
Moreover, 
\[
f(x,\sigma y)\equiv f^{*}(x,|y|)
\]
for all $\sigma\in S_{n},$ where $f^{*}\colon\zoon\times\{0,1,2,\dots,n\}\to\moo$
is given by
\[
f^{*}(x,t)=\sign\left(\frac{1}{2}+\sum_{j=1}^{n}(z_{j}\bmod m)x_{j}-mt\right).
\]
Now~(\ref{eq:symmetrizing1}) and~(\ref{eq:symmetrizing2}) simplify
to
\begin{align}
1-\epsilon & <\frac{r^{*}(x,t)f^{*}(x,t)}{q^{*}(x,t)}<1+\epsilon,\label{eq:f-star1}\\
1-\epsilon & <\frac{p^{*}(x,t)}{r^{*}(x,t)f^{*}(x,t)}<1+\epsilon\label{eq:f-star2}
\end{align}
for all $x\in\zoon$ and $t=0,1,2,\ldots n.$ The numerators and denominators
of these fractions are again positive, being averages of positive
numbers.\medskip{}

\emph{Step 3: Symmetrization on $x.$} We have reached the most demanding
part of the proof, where we symmetrize the approximants obtained so
far with respect to $x.$ For $s\in\ZZ,$ let $\Xcal_{s}\subseteq\zoon$
be given by~(\ref{eq:Xcal-s-defined}). Then Lemma~\ref{lem:fooling-distributions}
guarantees that each $\Xcal_{s}$ is nonempty, and additionally provides
a probability distribution $\mu_{s}$ on $\Xcal_{s}$ (for each $s\in\ZZ$)
such that for every polynomial $P\colon\zoon\to\Re,$
\begin{equation}
\deg P\leq\delta n\quad\implies\qquad\qquad\Exp_{\mu_{s}}P(x)=\Exp_{\mu_{s'}}P(x)\qquad\forall s,s'\in\ZZ.\label{eq:master-orthogonalization}
\end{equation}
Now fix an integer $s\in[-m-1,m-1].$ On the support of $\mu_{s},$
we have 
\begin{align*}
\sum_{j=1}^{n}(z_{j}\bmod m)x_{j}-s & \in[0\cdot n-m+1,(m-1)\cdot n+m+1]\cap m\ZZ\\
 & \subseteq(-m,(n+1)m)\cap m\ZZ\\
 & =\{0,m,2m,\ldots,nm\},
\end{align*}
where the second step is valid because $n\geq2$ by~(\ref{eq:m-range}).
It follows that on the support of $\mu_{s},$ the linear form
\[
\ell(x,s)=\frac{1}{m}\left(\sum_{j=1}^{n}(z_{j}\bmod m)x_{j}-s\right)
\]
ranges in $\{0,1,2,\ldots,n\},$ forcing $f^{*}(x,\ell(x,s))=\sign(s+\frac{1}{2})$.
Now~(\ref{eq:f-star1}) and~(\ref{eq:f-star2}) imply that
\begin{align*}
1-\epsilon & <\frac{r^{*}(x,\ell(x,s))\sign(s+\frac{1}{2})}{q^{*}(x,\ell(x,s))}<1+\epsilon,\\
1-\epsilon & <\frac{p^{*}(x,\ell(x,s))}{r^{*}(x,\ell(x,s))\sign(s+\frac{1}{2})}<1+\epsilon
\end{align*}
for all integers $s\in[-m-1,m-1]$ and all $x$ in the support of
$\mu_{s}.$ Since the numerators and denominators of these fractions
are positive, Proposition~\ref{prop:averaging-num-denom} allows
us to pass to expectations with respect to $x\sim\mu_{s}$ to obtain
\begin{align*}
1-\epsilon & <\frac{\Exp_{x\sim\mu_{s}}[r^{*}(x,\ell(x,s))]\sign(s+\frac{1}{2})}{\Exp_{x\sim\mu_{s}}[q^{*}(x,\ell(x,s))]}<1+\epsilon,\\
1-\epsilon & <\frac{\Exp_{x\sim\mu_{s}}[p^{*}(x,\ell(x,s))]}{\Exp_{x\sim\mu_{s}}[r^{*}(x,\ell(x,s))]\sign(s+\frac{1}{2})}<1+\epsilon,
\end{align*}
or equivalently 
\begin{align}
\left|\frac{\Exp_{x\sim\mu_{s}}[r^{*}(x,\ell(x,s))]}{\Exp_{x\sim\mu_{s}}[q^{*}(x,\ell(x,s))]}-\sign\left(s+\frac{1}{2}\right)\right| & <\epsilon,\label{eq:exp-final1}\\
\rule{0mm}{6mm}\left|\frac{\Exp_{x\sim\mu_{s}}[p^{*}(x,\ell(x,s))]}{\Exp_{x\sim\mu_{s}}[r^{*}(x,\ell(x,s))]}-\sign\left(s+\frac{1}{2}\right)\right| & <\epsilon,\label{eq:exp-final2}
\end{align}
for all integers $s\in[-m-1,m-1].$

Consider the univariate polynomials
\begin{align*}
p^{**}(s) & =\Exp_{x\sim\mu_{s}}[p^{*}(x,\ell(x,s))],\\
q^{**}(s) & =\Exp_{x\sim\mu_{s}}[q^{*}(x,\ell(x,s))],\\
r^{**}(s) & =\Exp_{x\sim\mu_{s}}[r^{*}(x,\ell(x,s))].
\end{align*}
Equations~(\ref{eq:exp-final1}) and~(\ref{eq:exp-final2}) show
that $r^{**}(s-1)/q^{**}(s-1)$ and $p^{**}(s-1)/r^{**}(s-1)$ approximate
$\sign s$ pointwise on $\{\pm1,\pm2,\ldots,\pm m\}$ to error less
than $\epsilon.$ Moreover, (\ref{eq:master-orthogonalization}) ensures
that the degrees of $p^{**},q^{**},r^{**}$ are at most the degrees
of $p^{*},q^{*},r^{*},$ respectively. We conclude that 
\begin{align*}
R(\sign|_{\{\pm1,\pm2,\ldots,\pm m\}},d_{0}+d_{1},2d_{1}) & <\epsilon,\\
R(\sign|_{\{\pm1,\pm2,\ldots,\pm m\}},2d_{0},d_{0}+d_{1}) & <\epsilon.
\end{align*}
These complementary bounds force~(\ref{eq:master-goal}) and thereby
complete the proof.
\end{proof}

\subsection{\label{subsec:The-master-theorem}The master theorem}

We now combine Theorem~\ref{thm:master} with the efficient construction,
in Theorem~\ref{thm:explicit-set-small-Fourier-coeffs}, of an integer
set with small $m$-discrepancy for $m=2^{\Theta(n)}$. The result
is an explicit halfspace $h_{n}\colon\zoon\to\moo$ whose approximation
by polynomials and rational functions is asymptotically equivalent
to the univariate approximation of the sign function on $\{\pm1,\pm2,\pm3,\ldots,\pm2^{\Theta(n)}\}$.
We refer to this result as our \emph{master theorem} since all our
main theorems are derived from it.
\begin{thm}
\label{thm:rational-approx-h-REDUCTION-TO-SGN}For some constant $c'>0,$
there is an algorithm that takes as input an integer $n\geq1,$ runs
in time polynomial in $n,$ and outputs a halfspace $h_{n}\colon\zoon\to\moo$
with
\begin{equation}
R(h_{n},d_{0},d_{1})\geq R\left(\sign|_{\{\pm1,\pm2,\pm3,\ldots,\pm2^{\lfloor c'n\rfloor}\}},2d_{0},2d_{1}\right)\label{eq:rational-approx-h-REDUCTION-TO-SGN}
\end{equation}
for all $d_{0},d_{1}=0,1,2,\ldots,\lfloor c'n\rfloor.$ Moreover,
the constant $c'$ and the algorithm are given explicitly.
\end{thm}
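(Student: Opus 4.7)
The plan is to combine the explicit low-discrepancy set construction from Theorem~\ref{thm:explicit-set-small-Fourier-coeffs} with the univariate reduction from Theorem~\ref{thm:master}. First I would fix absolute constants $\delta,\epsilon\in(0,1/2)$ sufficiently small that the quantity $\alpha:=(\tfrac{1}{2}-\delta)\log\tfrac{2(1-2\delta)}{1+\epsilon}-H(\delta)$ is strictly positive; driving $\delta,\epsilon\to 0$ pushes $\alpha$ toward $1/2$. Let $C_\epsilon$ be the explicit constant from Theorem~\ref{thm:explicit-set-small-Fourier-coeffs}, and set $c':=\min\{\alpha/8,\,\delta/8,\,1/(8C_\epsilon)\}$, an absolute constant.

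On input $n$, the algorithm sets $M=2^{\lfloor c'n\rfloor}$ and invokes Theorem~\ref{thm:explicit-set-small-Fourier-coeffs} with error parameter $\epsilon$ and modulus $M$. This produces in time polynomial in $n$ a set $Z_{0}\subseteq\{0,1,\ldots,M-1\}$ with $\disc(Z_{0},M)\le\epsilon$ and $|Z_{0}|\le C_\epsilon\log M\le n/8$. To synchronize cardinalities I would replicate $Z_{0}$: let $r=\lfloor(n/2)/|Z_{0}|\rfloor\ge 1$ and let $Z$ be the multiset consisting of $r$ copies of $Z_{0}$. Directly from the definition of $m$-discrepancy, replication leaves the quantity unchanged, so $\disc(Z,M)\le\epsilon$; meanwhile $k:=|Z|=r|Z_{0}|$ satisfies $n/2-|Z_{0}|<k\le n/2$, in particular $k\ge 3n/8\ge n/4$.

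Next I would verify that the modulus constraint~(\ref{eq:m-range}) of Theorem~\ref{thm:master} is satisfied for $Z$, $M$, and $\delta$. Since $\disc(Z,M)\le\epsilon$, the right-hand side of~(\ref{eq:m-range}) is at least $2^{\alpha k-2}\ge 2^{\alpha n/4-2}$, which exceeds $M=2^{\lfloor c'n\rfloor}$ by the choice $c'\le\alpha/8$ (and $n$ large; otherwise output $Z=\{0,1,\ldots,M-1\}$ trivially). Applying Theorem~\ref{thm:master} then yields an explicit halfspace $f\colon\zoo^{k}\times\zoo^{k}\to\moo$ with
\[
R(f,d_{0},d_{1})\;\ge\;R\bigl(\sign|_{\{\pm 1,\ldots,\pm M\}},\,2d_{0},\,2d_{1}\bigr),\qquad d_{0},d_{1}=0,1,\ldots,\lfloor\delta k/2\rfloor.
\]
Since $2k\le n$, I would extend $f$ to a halfspace $h_{n}$ on $\zoon$ by assigning the remaining $n-2k$ input variables zero coefficient, which preserves every rational-approximation quantity on the left-hand side. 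Because $\delta k/2\ge\delta n/8\ge c'n$ and $M=2^{\lfloor c'n\rfloor}$, the promised bound~(\ref{eq:rational-approx-h-REDUCTION-TO-SGN}) follows throughout the advertised range $d_{0},d_{1}\le\lfloor c'n\rfloor$. Every step runs in time polynomial in $n$.

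The main obstacle is the bookkeeping: choosing $\delta,\epsilon,c'$ so that three constraints are simultaneously met, namely the upper bound on $M$ demanded by~(\ref{eq:m-range}), the upper bound $|Z_{0}|\le C_\epsilon c'n$ guaranteed by Theorem~\ref{thm:explicit-set-small-Fourier-coeffs}, and the degree window $d_{0},d_{1}\le c'n$; and correspondingly bridging the potential mismatch between $2|Z_{0}|$ and $n$ via replication together with dummy padding. No deeper new ideas beyond the two main ingredients from Sections~\ref{sec:Discrepancy} and~\ref{sec:Univariatization} appear necessary.
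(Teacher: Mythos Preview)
Your proposal is correct and follows essentially the same approach as the paper: invoke Theorem~\ref{thm:explicit-set-small-Fourier-coeffs} with modulus $m=2^{\lfloor c'n\rfloor}$, replicate the resulting set so that its cardinality lies in $[n/4,n/2]$, verify the hypothesis~(\ref{eq:m-range}) of Theorem~\ref{thm:master}, and pad the resulting halfspace on $2|Z|$ variables with zero-weight coordinates to obtain $h_n\colon\zoon\to\moo$. The paper simply fixes concrete values $\epsilon=1/10$ and $\delta=1/25$ rather than carrying abstract parameters, and it handles the small-$n$ base case more explicitly (outputting $h_n(x)=(-1)^{x_1}$ when $\lfloor c'n\rfloor=0$, for which both sides of~(\ref{eq:rational-approx-h-REDUCTION-TO-SGN}) equal~$1$); your parenthetical ``otherwise output $Z=\{0,1,\ldots,M-1\}$ trivially'' is a bit garbled but the intended fix is routine.
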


\begin{proof}
Let
\begin{equation}
c'=\min\left\{ \frac{1}{200},\frac{1}{2C_{1/10}}\right\} ,\label{eq:def-c}
\end{equation}
where $C_{1/10}\geq1$ is the constant defined in Theorem~\ref{thm:explicit-set-small-Fourier-coeffs}.
On input $n$, the construction of $h_{n}$ is as follows. For $n<1/c',$
the sought property~(\ref{eq:rational-approx-h-REDUCTION-TO-SGN})
amounts to $R(h_{n},0,0)\geq R(\sign|_{\{-1,1\}},0,0)$, which is
in turn equivalent to $R(h_{n},0,0)\geq1$ and holds trivially for
the halfspace $h_{n}(x)=(-1)^{x_{1}}.$

We now turn to the nontrivial case, $n\geq1/c'.$ Abbreviate $m=2^{\lfloor c'n\rfloor}.$
Then the algorithm of Theorem~\ref{thm:explicit-set-small-Fourier-coeffs}
constructs, in time polynomial in $n,$ a nonempty multiset $Z$ with
$m$-discrepancy
\begin{align}
\disc(Z,m) & \leq\frac{1}{10}\label{eq:Z-norm-small}
\end{align}
and cardinality $|Z|\leq n/2.$ Observe that for any integer $k\geq1,$
the union of $k$ copies of $Z$ is a multiset with $m$-discrepancy
$\disc(Z,m)$ and cardinality $k|Z|$. Therefore, we may assume without
loss of generality that
\begin{equation}
\frac{n}{4}\leq|Z|\leq\frac{n}{2}.\label{eq:Z-not-too-small}
\end{equation}
We let
\[
h_{n}(x)=\sign\left(\frac{1}{2}+\sum_{j=1}^{|Z|}(z_{j}\bmod m)x_{j}-m\sum_{j=|Z|+1}^{2|Z|}x_{j}\right),
\]
where $z_{1},z_{2},z_{3},\ldots,z_{|Z|}$ denote the elements of the
multiset $Z.$ Taking $\delta=1/25,$ we have from~(\ref{eq:def-c})
and~(\ref{eq:Z-not-too-small}) that
\begin{equation}
c'n\leq\frac{\delta|Z|}{2}.\label{eq:cn-delta-Z}
\end{equation}
Moreover,
\begin{align*}
m & \in[2,2^{c'n}]\\
 & \subseteq[2,2^{n/200}]\\
 & \subseteq\left[2,\left(\frac{2(1-2\delta)}{1+\disc(Z,m)}\right)^{\left(\frac{1}{2}-\delta\right)\cdot|Z|}2^{-H(\delta)\cdot|Z|-2}\right],
\end{align*}
where the second step applies~(\ref{eq:def-c}), and the third step
uses~(\ref{eq:Z-norm-small}), (\ref{eq:Z-not-too-small}), and $n\geq1/c'\geq200.$
As a result, Theorem~\ref{thm:master} implies~(\ref{eq:rational-approx-h-REDUCTION-TO-SGN})
for all $d_{0},d_{1}\leq\delta|Z|/2$. In view of~(\ref{eq:cn-delta-Z}),
the proof is complete.
\end{proof}

\section{\label{sec:Main-results}Main results}

Using the halfspace $h_{n}$ constructed in our master theorem, we
will now establish the main results of this paper.

\subsection{\label{subsec:Polynomial-approximation}Polynomial approximation}

Prior to our work, the strongest lower bound for the approximation
of an explicit halfspace $f_{n}\colon\zoon\to\moo$ by polynomials
was $E(f_{n},c\sqrt{n})\geq1-2^{-c\sqrt{n}}$ for an absolute constant
$c>0$, proved in~\cite{sherstov09hshs,sherstov09opthshs}. The result
that we are about to prove is a quadratic improvement on previous
work, with respect to both degree and error. As we will discuss shortly,
this new result is essentially the best possible.
\begin{thm}[Polynomial approximation]
\label{thm:polynomial-approx-hs-LOWER} Let $h_{n}\colon\zoon\to\moo$
be the halfspace constructed in Theorem~\emph{\ref{thm:rational-approx-h-REDUCTION-TO-SGN}}.
Then for some constant $c>0$ and all $n,$
\begin{equation}
E(h_{n},cn)>1-2^{-cn}.\label{eq:approx-poly-lower}
\end{equation}
\end{thm}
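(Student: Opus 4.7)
The plan is to combine the master theorem (Theorem~\ref{thm:rational-approx-h-REDUCTION-TO-SGN}) with the univariate lower bound for the sign function on a discrete set of points (Proposition~\ref{prop:polynomial-approx-SGN-lower}). Polynomial approximation is just the ``$d_1=0$'' case of rational approximation, so the multivariate-to-univariate reduction we already have in hand should do essentially all the work.

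First I would choose $c$ to be a small positive constant, say $c = c'/4$, where $c' > 0$ is the constant supplied by Theorem~\ref{thm:rational-approx-h-REDUCTION-TO-SGN}. Set $N = 2^{\lfloor c' n\rfloor}$ and $d_0 = \lfloor cn\rfloor$. Since $E(f,d) = R(f,d,0)$ by definition, applying Theorem~\ref{thm:rational-approx-h-REDUCTION-TO-SGN} with $d_1 = 0$ yields
\[
E(h_n, d_0) \;=\; R(h_n, d_0, 0) \;\geq\; R\!\left(\sign|_{\{\pm 1,\pm 2,\ldots,\pm N\}},\, 2d_0,\, 0\right) \;=\; E\!\left(\sign|_{\{\pm 1,\pm 2,\ldots,\pm N\}},\, 2d_0\right).
\]
This reduces the multivariate problem to the univariate one, exactly as promised by the ``univariatization'' philosophy emphasized in Section~\ref{subsec:Proof-overview}.

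Next I would invoke Proposition~\ref{prop:polynomial-approx-SGN-lower}, which gives
\[
E\!\left(\sign|_{\{\pm 1,\pm 2,\ldots,\pm N\}},\, 2d_0\right) \;\geq\; 1 - O\!\left(\frac{2d_0}{N}\right)^{\!1/2}.
\]
With $2d_0 \leq 2cn \leq \tfrac{1}{2} c' n$ and $N = 2^{\lfloor c'n\rfloor}$, the ratio $2d_0/N$ is at most $O(n) \cdot 2^{-c'n}$, so $(2d_0/N)^{1/2} \leq 2^{-c'n/3}$ for all sufficiently large $n$. Combining, we obtain $E(h_n, \lfloor cn\rfloor) \geq 1 - 2^{-c'n/3}$, which gives~\eqref{eq:approx-poly-lower} after possibly shrinking $c$ to handle small $n$ (where the bound is trivial since $E(h_n, d) \leq 1$ always, and one can verify the inequality directly for $n$ below any fixed threshold).

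There is essentially no obstacle here: Theorem~\ref{thm:rational-approx-h-REDUCTION-TO-SGN} already encapsulates the entire technical core of the paper (the discrepancy construction, the fooling distributions, and the double symmetrization), and Proposition~\ref{prop:polynomial-approx-SGN-lower} supplies the matching univariate lower bound via Paturi's theorem for majority. The only thing to check carefully is the quantitative choice of $c$ relative to $c'$, so that $2d_0$ stays comfortably below $c'n$ and the ratio $d/N$ remains exponentially small; this is a routine bookkeeping step rather than a genuine difficulty.
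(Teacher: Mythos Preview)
Your proposal is correct and follows essentially the same approach as the paper: apply Theorem~\ref{thm:rational-approx-h-REDUCTION-TO-SGN} with $d_1=0$ to reduce to the univariate sign function, then invoke Proposition~\ref{prop:polynomial-approx-SGN-lower}. The only cosmetic difference is that the paper plugs in $d_0=\lfloor c'n\rfloor$ directly (the maximum degree allowed by the master theorem) rather than a smaller $d_0=\lfloor cn\rfloor$, but since $E(h_n,\cdot)$ is nonincreasing in the degree this is immaterial.
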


\begin{proof}
Let $c'>0$ be the constant in Theorem~\ref{thm:rational-approx-h-REDUCTION-TO-SGN}.
Then
\begin{align*}
E(h_{n},c'n) & \geq E(\sign|_{\{\pm1,\pm2,\pm3,\ldots,\pm2^{\lfloor c'n\rfloor}\}},2\lfloor c'n\rfloor)\\
 & \geq1-O\left(\frac{n}{2^{c'n}}\right)^{1/2},
\end{align*}
where the first step corresponds to taking $d_{0}=\lfloor c'n\rfloor$
and $d_{1}=0$ in Theorem~\ref{thm:rational-approx-h-REDUCTION-TO-SGN},
and the second step is immediate from Proposition~\ref{prop:polynomial-approx-SGN-lower}.
This implies~(\ref{eq:approx-poly-lower}) for $c>0$ small enough.
\end{proof}
\noindent Theorem~\ref{thm:polynomial-approx-hs-LOWER} is essentially
as strong as one could hope for. First of all, any function in $n$
Boolean variables can be approximated to zero error by a polynomial
of degree at most $n,$ i.e., at most a constant factor larger than
what is assumed in~(\ref{eq:approx-poly-lower}). Moreover, a classic
result due to Muroga~\cite{muroga71threshold} implies that for every
halfspace, the error bound in~(\ref{eq:approx-poly-lower}) is almost
achieved by polynomials of degree $1$:
\begin{fact}
\label{fact:muroga}There is an absolute constant $c>0$ such that
for every $n$ and every halfspace $h\colon\zoon\to\moo,$
\begin{align*}
E(h,1) & \leq1-n^{-cn}.
\end{align*}
\end{fact}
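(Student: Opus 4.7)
The plan is to exhibit an explicit degree-$1$ approximant built from Muroga's classical bound on the integer weights of threshold functions.

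\emph{Step 1 (Integer representation).} I first argue that every halfspace admits a representation $h(x) = \sign(\sum_{i=1}^n w_i x_i - \theta)$ in which $w_1, \ldots, w_n, \theta$ are integers of absolute value at most $(n+1)^{(n+1)/2}$. Consider the linear program that minimizes $\|(w_1, \ldots, w_n, \theta)\|_\infty$ subject to the margin constraints
\[
h(x) \left( \sum_{i=1}^n w_i x_i - \theta \right) \geq 1, \qquad x \in \zoon.
\]
This program is feasible, because rescaling any real representation of $h$ produces a feasible point. An optimal vertex of the feasible polyhedron is determined by some $n+1$ tight margin constraints, yielding a nonsingular $(n+1)\times(n+1)$ linear system whose coefficient matrix has entries in $\{-1, 0, 1\}$. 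By Cramer's rule combined with Hadamard's inequality, each coordinate of the vertex is a rational with common denominator at most $(n+1)^{(n+1)/2}$. Clearing denominators yields the claimed integer representation.

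\emph{Step 2 (Degree-$1$ approximant).} Set $W = \sum_{i=1}^n |w_i| + |\theta|$, so that $W \leq (n+1)^{(n+3)/2} \leq n^{cn}$ for an absolute constant $c$. For every $x \in \zoon$, the integer $L(x) = \sum_i w_i x_i - \theta$ satisfies $h(x) L(x) \geq 1$ and $|L(x)| \leq W$. Hence the degree-$1$ polynomial $p(x) = L(x)/W$ obeys
\[
|h(x) - p(x)| \;=\; 1 - \frac{h(x) L(x)}{W} \;\leq\; 1 - \frac{1}{W} \;\leq\; 1 - n^{-cn},
\]
where the first equality is a direct case check on $h(x) = \pm 1$. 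This gives $E(h, 1) \leq 1 - n^{-cn}$, as required.

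\emph{Main obstacle.} The only nontrivial ingredient is the weight bound in Step 1, which is a classical result of Muroga and can simply be cited; the LP/Cramer derivation sketched above is the standard proof. Everything else is a direct calculation with the normalized linear form.
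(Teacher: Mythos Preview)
Your proof is correct and follows exactly the paper's approach: cite Muroga's bound that every halfspace has an integer representation with weights summing to $n^{O(n)}$, then take the normalized linear form $p(x)=L(x)/W$ as the degree-$1$ approximant and use the integer margin $h(x)L(x)\geq 1$ to get error at most $1-1/W\leq 1-n^{-cn}$. The only cosmetic difference is that the paper merely cites Muroga for Step~1 while you sketch the LP/Cramer/Hadamard derivation; your Step~2 is line-for-line the paper's argument. (Minor quibble: the inequality $(n+1)^{(n+3)/2}\leq n^{cn}$ fails at $n=1$, but there every function on $\{0,1\}$ is linear so $E(h,1)=0$ trivially.)
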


\begin{proof}
Muroga~\cite{muroga71threshold} showed that every halfspace $h\colon\zoon\to\moo$
can be represented as $h(x)=\sign(\sum_{j=1}^{n}z_{j}x_{j}-\theta)$
for some integers $z_{1},z_{2},\ldots,z_{n},\theta$ whose absolute
values sum to $n^{O(n)}.$ It follows that
\begin{align*}
E(h,1) & \leq\max_{x\in\zoon}\left|h(x)-\frac{1}{|\theta|+\sum_{j=1}^{n}|z_{j}|}\left(\sum_{j=1}^{n}z_{j}x_{j}-\theta\right)\right|\\
 & \leq1-\frac{1}{|\theta|+\sum_{j=1}^{n}|z_{j}|}\\
 & \leq1-n^{-O(n)}.\qedhere
\end{align*}
\end{proof}

\subsection{\label{subsec:Rational-approximation}Rational approximation}

We now show that the halfspace $h_{n}$ constructed in our master
theorem cannot be approximated pointwise to any small constant except
by rational functions of degree $\Omega(n)$. This degree lower bound
matches the trivial upper bound and is a quadratic improvement on
the previous best construction~\cite{sherstov09hshs,sherstov09opthshs}.
More generally, we derive a lower bound on the approximation of $h_{n}$
by rational functions of any given degree $d$, and this lower bound
too is essentially the best possible for any halfspace. Details follow.
\begin{thm}[Rational approximation]
\label{thm:rational-approx-to-hs-LOWER} Let $h_{n}\colon\zoon\to\moo$
be the halfspace constructed in Theorem~\emph{\ref{thm:rational-approx-h-REDUCTION-TO-SGN}}.
Then for some constant $c>0$ and all $n,$
\begin{align}
R(h_{n},d) & \geq1-\exp\left(-\frac{cn}{d}\right), &  & d=1,2,\ldots,\lfloor cn\rfloor.\label{eq:rational-approx-to-hs-LOWER}
\end{align}
\end{thm}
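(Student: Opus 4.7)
The plan is to mimic the proof of Theorem~\ref{thm:polynomial-approx-hs-LOWER}, but specialize the master theorem (Theorem~\ref{thm:rational-approx-h-REDUCTION-TO-SGN}) to the \emph{balanced} regime $d_0 = d_1 = d$ of rational approximation, rather than to the purely polynomial regime $d_1 = 0$. Concretely, let $c' > 0$ be the constant furnished by Theorem~\ref{thm:rational-approx-h-REDUCTION-TO-SGN}, and set $N = 2^{\lfloor c' n\rfloor}$. For any $d \leq \lfloor c' n\rfloor$, that theorem immediately yields
\[
R(h_n, d) \;\geq\; R\bigl(\sign|_{\{\pm 1,\pm 2,\ldots,\pm N\}},\, 2d,\, 2d\bigr) \;=\; R\bigl(\sign|_{\{\pm 1,\pm 2,\ldots,\pm N\}},\, 2d\bigr).
\]
All that remains is to plug in the univariate lower bound for rational approximation of the sign function on $\{\pm 1,\pm 2,\ldots,\pm N\}$, which is exactly Theorem~\ref{thm:rational-approx-SGN}.

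The next step is to verify that we land in the first regime of Theorem~\ref{thm:rational-approx-SGN}, namely $1 \leq 2d \leq \log N = \lfloor c' n\rfloor$. This constrains $d$ to the range $1 \leq d \leq \lfloor c' n\rfloor / 2$, so I would simply choose the constant $c$ in the conclusion to satisfy $c \leq c'/2$. In this regime, Theorem~\ref{thm:rational-approx-SGN} delivers
\[
R\bigl(\sign|_{\{\pm 1,\pm 2,\ldots,\pm N\}},\, 2d\bigr) \;=\; 1 - N^{-\Theta(1/(2d))} \;=\; 1 - \exp\!\left(-\Theta\!\left(\frac{\lfloor c' n\rfloor \ln 2}{2d}\right)\right) \;=\; 1 - \exp\!\left(-\Omega(n/d)\right),
\]
and absorbing the hidden constant into $c$ produces the advertised bound~(\ref{eq:rational-approx-to-hs-LOWER}).

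Since both of the ingredients are already in hand, there is essentially no obstacle: the master theorem is a black box that reduces multivariate rational approximation of $h_n$ to univariate rational approximation of the sign function on an exponentially large discrete set, and Theorem~\ref{thm:rational-approx-SGN} gives the matching tight lower bound for the latter. The only care required is bookkeeping of the constants so that the condition $2d \leq \log N$ is maintained throughout the stated range $d \leq \lfloor c n\rfloor$, which is handled by taking $c = \min\{c'/2,\, c_0\}$ for a suitable small constant $c_0$ coming from the $\Theta$ in Theorem~\ref{thm:rational-approx-SGN}. The resulting two-line argument exactly parallels the proof of Theorem~\ref{thm:polynomial-approx-hs-LOWER}, with Proposition~\ref{prop:polynomial-approx-SGN-lower} replaced by Theorem~\ref{thm:rational-approx-SGN}.
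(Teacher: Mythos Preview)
Your proposal is correct and follows essentially the same approach as the paper's own proof: apply Theorem~\ref{thm:rational-approx-h-REDUCTION-TO-SGN} with $d_0=d_1=d$ to reduce to the univariate sign function, then invoke Theorem~\ref{thm:rational-approx-SGN}. You are in fact slightly more explicit than the paper in checking the regime condition $2d\le\log N$ and adjusting the constant accordingly, but the substance is identical.
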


\begin{proof}
Let $c'>0$ be the constant in Theorem~\ref{thm:rational-approx-h-REDUCTION-TO-SGN}.
Then for $d=1,2,\ldots,\lfloor c'n\rfloor,$ we have
\begin{align*}
R(h_{n},d) & \geq R(\sign|_{\{\pm1,\pm2,\pm3,\ldots,\pm2^{\lfloor c'n\rfloor}\}},2d)\\
 & \geq1-\exp\left(-\Theta\left(\frac{n}{d}\right)\right),
\end{align*}
where the first step corresponds to taking $d_{0}=d_{1}=d$ in Theorem~\ref{thm:rational-approx-h-REDUCTION-TO-SGN},
and the second step is immediate from Theorem~\ref{thm:rational-approx-SGN}.
This implies~(\ref{eq:rational-approx-to-hs-LOWER}) for $c>0$ small
enough. 
\end{proof}
\noindent We now show that the lower bounds on the approximation error
in Theorem~\ref{thm:rational-approx-to-hs-LOWER} are essentially
the best possible for any halfspace.
\begin{fact}
\label{fact:rational-approx-to-hs-UPPER}There exists an absolute
constant $c>0$ such that for every $n$ and every halfspace $h\colon\zoon\to\moo,$
\begin{align*}
R(h,d) & \leq1-\exp\left(-\frac{cn\log n}{d}\right), &  & d=1,2,\ldots,n.
\end{align*}
\end{fact}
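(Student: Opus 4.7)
The plan is to use the black-box approach described in the overview: express the halfspace in terms of an integer linear form with polynomially-bounded precision, then apply Newman's univariate rational approximation of the sign function (Fact~\ref{fact:newman}).

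First, I would invoke Muroga's integer-weight representation, exactly as in the proof of Fact~\ref{fact:muroga}: every halfspace $h\colon\zoon\to\moo$ can be written as $h(x)=\sign(\sum_{j=1}^{n}z_{j}x_{j}-\theta)$ for integers $z_{1},\ldots,z_{n},\theta$ with $\sum_j |z_j|+|\theta|\leq n^{cn}$ for some absolute constant $c$. Setting $N=\sum_j|z_j|+|\theta|$, the linear form $L(x)=\sum z_j x_j-\theta$ takes integer values in $\{-N,-N+1,\ldots,N\}\setminus\{0\}$ on $\zoon$ (we may assume without loss of generality that $L(x)\ne 0$ for $x\in\zoon$; if $L$ ever vanishes, shift $\theta$ by $1/2$ and multiply all coefficients by $2$, which preserves the halfspace and only doubles $N$).

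Next, I would apply Newman's theorem (Fact~\ref{fact:newman}) to obtain, for any $d\geq 1$, a univariate rational function $r(t)=p(t)/q(t)$ of degree at most $d$ such that
\[
\sup_{t\in[-N,-1]\cup[1,N]}\left|\sign(t)-\frac{p(t)}{q(t)}\right|\leq 1-N^{-1/d}.
\]
Substituting $t=L(x)$, the composition $r(L(x))=p(L(x))/q(L(x))$ is a rational function in $x$ whose numerator and denominator are polynomials in $x$ of degree at most $d$ (since $L$ is linear in $x$, substituting it into a univariate polynomial of degree~$d$ yields a multivariate polynomial of degree~$d$), and $q(L(x))\ne 0$ on $\zoon$ since $q$ does not vanish on $[-N,-1]\cup[1,N]\supseteq L(\zoon)$. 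Therefore
\[
R(h,d)\leq \max_{x\in\zoon}\left|h(x)-\frac{p(L(x))}{q(L(x))}\right|\leq 1-N^{-1/d}.
\]

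Finally, plugging in $N\leq n^{cn}$ gives $N^{-1/d}=\exp(-\ln N/d)\geq \exp(-cn\ln n/d)$, so $R(h,d)\leq 1-\exp(-c'n\log n/d)$ for an absolute constant $c'$, as desired. There is no real obstacle: the proof is essentially a one-line reduction, and the only mild subtlety is ensuring $L(x)\ne 0$ on $\zoon$ so that Newman's construction applies, which is handled by the trivial $\theta\mapsto\theta-1/2$ tiebreak.
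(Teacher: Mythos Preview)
Your proof is correct and follows essentially the same route as the paper's own argument: invoke Muroga's integer-weight representation to get a linear form ranging in $[-N,-1]\cup[1,N]$ with $N=n^{O(n)}$, then compose with Newman's degree-$d$ rational approximant for $\sign$ on that set. The paper's proof is slightly terser (it simply asserts the linear form ranges in $[-N,-1]\cup[1,N]$ without spelling out the tiebreak you mention), but the content is the same.
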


\begin{proof}
As already mentioned, Muroga~\cite{muroga71threshold} showed that
$h(x)\equiv\sign p(x)$ for some linear polynomial $p(x)$ that ranges
in $[-N,-1]\cup[1,N],$ where $N=\exp(cn\log n)$ for some absolute
constant $c>0$. This makes it possible to obtain a rational approximant
for $h(x)$ by taking any rational approximant for the sign function
on $[-N,-1]\cup[1,N]$ and composing it with $p(x)$. We conclude
that for any integer $d$,
\begin{align*}
R(h,d) & \leq R(\sign|_{[-N,-1]\cup[1,N]},d)\\
 & \leq1-\frac{1}{N^{1/d}}\\
 & =1-\exp\left(-\frac{cn\log n}{d}\right),
\end{align*}
where the second step uses Newman's rational approximation (Fact~\ref{fact:newman}).
\end{proof}

\subsection{Threshold degree}

Here, we use the halfspace $h_{n}$ constructed in our master theorem
to study the degree required to sign-represent intersections of halfspaces.
Our result is a lower bound of $\Omega(n)$ for the intersection $h_{n}\wedge h_{n}$
of two independent copies of $h_{n}.$ This result improves quadratically
on the previous best construction~\cite{sherstov09hshs,sherstov09opthshs}
and matches the trivial upper bound of $O(n)$ for sign-representing
any Boolean function in $n$ variables.
\begin{thm}
\label{thm:degthr-h-h} Let $h_{n}\colon\zoon\to\moo$ be the halfspace
constructed in Theorem~\emph{\ref{thm:rational-approx-h-REDUCTION-TO-SGN}}.
Then 
\[
\degthr(h_{n}\wedge h_{n})=\Omega(n).
\]
\end{thm}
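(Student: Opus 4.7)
The plan is to combine our rational approximation lower bound for $h_n$ (Theorem~\ref{thm:rational-approx-to-hs-LOWER}) with the converse of the Beigel--Reingold--Spielman construction established in Theorem~\ref{thm:sherstov-degthr-R}. This converse is exactly the bridge between threshold degree of $f\wedge g$ and rational approximation of $f$ and $g$ individually, so the reduction is essentially immediate.

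Concretely, let $d=\degthr(h_n\wedge h_n)$, and assume $d\geq 1$ (else there is nothing to prove). Since $h_n$ is neither identically $+1$ nor identically $-1$, Theorem~\ref{thm:sherstov-degthr-R} applies and yields
\[
R(h_n,4d)+R(h_n,2d)<1,
\]
which in particular forces $R(h_n,4d)<1$ and $R(h_n,2d)<1$. Combined with Theorem~\ref{thm:rational-approx-to-hs-LOWER}, which supplies a constant $c>0$ with $R(h_n,k)\geq 1-\exp(-cn/k)$ for all $k=1,2,\ldots,\lfloor cn\rfloor$, the above inequality gives information on how large $d$ must be.

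I would proceed by contradiction: suppose $d\leq c n/4$, so that both $2d$ and $4d$ lie in the admissible range of Theorem~\ref{thm:rational-approx-to-hs-LOWER}. Then
\[
1>R(h_n,4d)+R(h_n,2d)\geq 2-\exp\!\left(-\frac{cn}{4d}\right)-\exp\!\left(-\frac{cn}{2d}\right)\geq 2-2\exp\!\left(-\frac{cn}{4d}\right),
\]
so $\exp(-cn/(4d))>1/2$, which rearranges to $d>cn/(4\ln 2)$. Together with the case $d>cn/4$ excluded at the outset, this yields $d=\Omega(n)$.

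There is no real obstacle here, since the heavy lifting was already done in Theorem~\ref{thm:rational-approx-to-hs-LOWER}; the only item to be careful about is that Theorem~\ref{thm:sherstov-degthr-R} requires $f$ and $g$ to be not identically false, which is trivially satisfied by the nontrivial halfspace $h_n$, and that we verify the degree bounds $2d,4d\leq \lfloor cn\rfloor$ fall in the regime where Theorem~\ref{thm:rational-approx-to-hs-LOWER} is valid (which is exactly the contradictory hypothesis $d\leq cn/4$, so this is automatic). Thus $\degthr(h_n\wedge h_n)=\Omega(n)$, matching the trivial $O(n)$ upper bound.
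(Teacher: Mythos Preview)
Your proof is correct and follows essentially the same approach as the paper: apply Theorem~\ref{thm:sherstov-degthr-R} with $f=g=h_n$ to obtain $R(h_n,4d)+R(h_n,2d)<1$, then invoke the rational approximation lower bound of Theorem~\ref{thm:rational-approx-to-hs-LOWER} to force $d=\Omega(n)$. The paper's write-up is slightly terser (it observes $R(h_n,4D_n)<1/2$ directly, using monotonicity of $R$ in the degree), but the substance is identical.
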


\begin{proof}
Abbreviate $D_{n}=\degthr(h_{n}\wedge h_{n}).$ Taking $f=g=h_{n}$
in Theorem~\ref{thm:sherstov-degthr-R} shows that $R(h_{n},4D_{n})<1/2,$
which by Theorem~\ref{thm:rational-approx-to-hs-LOWER} forces $D_{n}=\Omega(n).$
\end{proof}
\noindent Theorem~\ref{thm:degthr-h-h} should be contrasted with
the result of Beigel et al.~\cite{beigel91rational} that the conjunction
of any constant number of majority functions on $\zoon$ has threshold
degree $O(\log n).$ We now derive a lower bound of $\Omega(\sqrt{n\log n})$
on the threshold degree of the intersection of an explicitly given
halfspace and a majority function, improving quadratically on the
previous best construction~\cite{sherstov09hshs,sherstov09opthshs}.
As we discuss shortly, the new construction is optimal up to a logarithmic
factor.
\begin{thm}
\label{thm:degthr-h-maj} Let $h_{n}\colon\zoon\to\moo$ be the halfspace
constructed in Theorem~\emph{\ref{thm:rational-approx-h-REDUCTION-TO-SGN}}.
Then
\begin{align}
\degthr(h_{n}\wedge\MAJ_{n})=\Omega(\sqrt{n\log n}).\label{eqn:degthr-h-maj}
\end{align}
\end{thm}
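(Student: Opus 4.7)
My plan is to follow the pattern of the proof of Theorem~\ref{thm:degthr-h-h} verbatim, but replace the second copy of $h_n$ with a majority function and play Theorem~\ref{thm:rational-approx-to-hs-LOWER} against the lower bound for the rational approximation of majority supplied by Theorem~\ref{thm:rational-approx-MAJ}. Abbreviate $D=\degthr(h_n\wedge\MAJ_n)$. Since neither $h_n$ nor $\MAJ_n$ is identically false, Theorem~\ref{thm:sherstov-degthr-R} applied with $f=h_n$ and $g=\MAJ_n$ gives $R(h_n,4D)+R(\MAJ_n,2D)<1$. Assuming the nontrivial regime in which $4D$ lies in the permitted range of Theorem~\ref{thm:rational-approx-to-hs-LOWER}, that theorem furnishes $R(h_n,4D)\geq 1-\exp(-\Omega(n/D))$, so
\begin{equation}
R(\MAJ_n,2D)<\exp\!\left(-\Omega(n/D)\right).\label{eq:maj-small-prop}
\end{equation}

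I then plan to derive a contradiction from \eqref{eq:maj-small-prop} whenever $D=o(\sqrt{n\log n})$. In the main regime $\log n\leq 2D<\lfloor n/4\rfloor$, Theorem~\ref{thm:rational-approx-MAJ} gives $R(\MAJ_n,2D)\geq 2^{-O(D/\log(n/D))}$. Substituting into~\eqref{eq:maj-small-prop} and taking logarithms yields
\[
\frac{D}{\log(n/D)}\geq\Omega\!\left(\frac{n}{D}\right),\qquad\text{equivalently}\qquad D^2\geq\Omega\!\left(n\log(n/D)\right).
\]
If $D\leq c_0\sqrt{n\log n}$ for a sufficiently small constant $c_0>0$, then $\log(n/D)\geq(\tfrac12-o(1))\log n$, so the displayed inequality collapses to $c_0^2\,n\log n\geq\Omega(n\log n)$, which fails once $c_0$ is small enough. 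Therefore $D=\Omega(\sqrt{n\log n})$, as asserted in~\eqref{eqn:degthr-h-maj}.

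The two boundary regimes are straightforward to dispose of. If $4D$ exceeds the range of $d$ for which Theorem~\ref{thm:rational-approx-to-hs-LOWER} applies, then $D=\Omega(n)$, a stronger conclusion than required. If instead $2D<\log n$, the first branch of Theorem~\ref{thm:rational-approx-MAJ} gives $R(\MAJ_n,2D)=1-n^{-\Theta(1/D)}$, a quantity bounded away from $0$ (since $\log n/D>2$ forces $n^{-\Theta(1/D)}$ bounded away from $1$); meanwhile the right-hand side of~\eqref{eq:maj-small-prop} is $o(1)$ whenever $D=o(n/\log n)$, which certainly holds here. The two are then incompatible. I do not expect any real obstacle: no new ideas are needed beyond the tools already developed in this section, and the only minor chore is bookkeeping the constants so that the three-way composition of Theorems~\ref{thm:sherstov-degthr-R}, \ref{thm:rational-approx-to-hs-LOWER}, and~\ref{thm:rational-approx-MAJ} matches the target bound $\Omega(\sqrt{n\log n})$ up to an absolute constant.
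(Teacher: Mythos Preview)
Your proposal is correct and follows the same approach as the paper: invoke Theorem~\ref{thm:sherstov-degthr-R} to obtain $R(h_{n},4D)+R(\MAJ_{n},2D)<1$, then combine the rational-approximation lower bounds of Theorems~\ref{thm:rational-approx-to-hs-LOWER} and~\ref{thm:rational-approx-MAJ} to force $D=\Omega(\sqrt{n\log n})$. The paper states this in two sentences without working out the arithmetic, whereas you have spelled out the case analysis and the balancing of exponents explicitly; your computation is sound.
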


\begin{proof}
Abbreviate $D_{n}=\degthr(h_{n}\wedge\MAJ_{n}).$ Then $R(h_{n},4D_{n})+R(\MAJ_{n},2D_{n})<1$
by Theorem~\ref{thm:sherstov-degthr-R}. The lower bounds for the
rational approximation of $h_{n}$ and $\MAJ_{n}$ in Theorems~\ref{thm:rational-approx-MAJ}
and~\ref{thm:rational-approx-to-hs-LOWER} now imply that $D_{n}=\Omega(\sqrt{n\log n}).$
\end{proof}
\begin{rem}
\label{rem:degthr-h-MAJ-upper}The construction of Theorem~\ref{thm:degthr-h-maj}
is essentially the best possible, in that 
\begin{equation}
\degthr(h\wedge\MAJ_{n})=O(\sqrt{n}\log n)\label{eq:h-MAJ-upper}
\end{equation}
for every halfspace $h\colon\zoon\to\moo.$ Indeed, taking $d=C\sqrt{n}\log n$
in Theorem~\ref{thm:rational-approx-MAJ} and Fact~\ref{fact:rational-approx-to-hs-UPPER}
for a large enough constant $C\geq1$ yields $R(h,C\sqrt{n}\log n)+R(\MAJ_{n},C\sqrt{n}\log n)<1,$
which in turn implies~(\ref{eq:h-MAJ-upper}) in view of Theorem~\ref{thm:beigel-degthr-rational}.
\end{rem}

\subsection{Threshold density}

In addition to threshold degree, several other complexity measures
are of interest when sign-representing Boolean functions by real polynomials.
One such complexity measure is \emph{threshold density}, defined as
the least $k$ for which a given function can be sign-represented
by a linear combination of $k$ parity functions. Formally, for a
given function $f\colon\zoon\to\moo,$ its threshold density $\dns(f)$
is the minimum size $|\Scal|$ of a family $\Scal\subseteq\Pcal(\oneton)$
such that 
\begin{align*}
f(x)\equiv\sign\left(\sum_{S\in\Scal}w_{S}(-1)^{\sum_{j\in S}x_{j}}\right)
\end{align*}
for some reals $w_{S}.$ It is clear from the definition that $\dns(f)\leq2^{n}$
for all functions $f\colon\zoon\to\moo,$ and we will now construct
a pair of halfspaces whose intersection has threshold density $2^{\Theta(n)}.$
Prior to our work, the best construction~\cite{sherstov09hshs} had
threshold density $2^{\Theta(\sqrt{n})}.$

\global\long\def\op{\text{{\rm KP}}}
 To proceed, we recall a technique due to Krause and Pudlák~\cite{krause94depth2mod}
that transforms Boolean functions with high threshold degree into
Boolean functions with high threshold density. Their transformation
works in a black-box manner and sends a function $f\colon\zoon\to\moo$
to the function $f^{\op}\colon(\zoon)^{3}\to\moo$ defined by 
\begin{align*}
f^{\op}(x,y,z) & =f(\dots,(\overline{z_{i}}\wedge x_{i})\vee(z_{i}\wedge y_{i}),\dots).
\end{align*}
The threshold degree of $f$ and the threshold density of $f^{\op}$
are related as follows~\cite[Proposition~2.1]{krause94depth2mod}.
\begin{thm}[Krause and Pudlák]
\label{thm:degree-length} For every function $f\colon\zoon\to\moo,$
\begin{align*}
\dns(f^{\op})\geq2^{\degthr(f)}.
\end{align*}
\end{thm}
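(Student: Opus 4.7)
The plan is to prove the contrapositive: a sign representation of $f^{\op}$ by $k$ parities forces $\degthr(f) \leq \log_2 k$. The argument is a fiber-averaging construction coupled with LP duality on the threshold-degree side.

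Fix a sign representation $f^{\op}(x,y,z) = \sign \sum_{S \in \mathcal{S}} w_S \chi_S(x,y,z)$ with $|\mathcal{S}| = k$, writing each $S = (S_x,S_y,S_z)$ and $\chi_S(x,y,z) = \chi_{S_x}(x)\chi_{S_y}(y)\chi_{S_z}(z)$. For each $u \in \{0,1\}^n$, the fiber
\[
T_u = \{(x,y,z) : (\overline{z_i}\wedge x_i) \vee (z_i \wedge y_i) = u_i \text{ for all } i\}
\]
has cardinality $4^n$, since every coordinate $i$ admits exactly four triples $(x_i,y_i,z_i)$ producing $u_i$. On $T_u$, $f^{\op}$ is identically $f(u)$, so the sum $P := \sum_S w_S \chi_S$ has sign $f(u)$ throughout $T_u$.

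The core computation is to evaluate $4^{-n}\sum_{(x,y,z) \in T_u}\chi_S(x,y,z)$. Because the uniform measure on $T_u$ factors across coordinates, a short case analysis on the four allowed triples per index shows that this expectation vanishes unless $S_x \cap S_y = \emptyset$ and $S_z \subseteq S_x \cup S_y$, and in that case equals $\pm 2^{-|S_x \cup S_y|}\chi_{S_x \cup S_y}(u)$. Averaging the sign-preserving quantity $P$ over $T_u$ therefore produces
\[
p(u) \;:=\; 4^{-n}\sum_{(x,y,z) \in T_u} P(x,y,z) \;=\; \sum_{T \subseteq \{1,\ldots,n\}} c_T \chi_T(u), \qquad |c_T| \leq 2^{-|T|}\!\!\sum_{S : S_x \cup S_y = T}\!\!|w_S|,
\]
and $p$ inherits the sign $f(u)$ everywhere.

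The main obstacle is upgrading the naive bound $|\{T : c_T \neq 0\}| \leq k$ (which only gives $\dns(f) \leq k$) to the desired degree bound $\log_2 k$. To do this I would invoke LP duality on the threshold-degree side: if $\degthr(f) \geq d$, there is a dual witness $\psi\colon\{0,1\}^n \to \mathbb{R}$ with $\psi\cdot f \geq 0$ pointwise, $\|\psi\|_1 = 1$, and $\hat\psi(T) = 0$ for $|T| < d$. Lifting $\psi$ via the fiber construction to the signed measure $\Psi(x,y,z) := 4^{-n}\psi(u)$ on $\{0,1\}^{3n}$, the same case analysis yields $\langle\Psi,\chi_S\rangle = \pm 2^{-|S_x \cup S_y|}\hat\psi(S_x \cup S_y)$, which vanishes whenever $|S_x \cup S_y| < d$. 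Pairing $\Psi$ against $P$ after normalizing so that $|P| \geq 1$ pointwise gives, on the one hand, $\langle \Psi, P\rangle \geq \|\psi\|_1 = 1$, and on the other hand $|\langle \Psi, P\rangle| \leq 2^{-d}\sum_S |w_S|\cdot|\hat\psi(S_x \cup S_y)| \leq 2^{-d}\sum_S |w_S|$, forcing $\sum_S |w_S| \geq 2^d$. The trickiest point is converting this $\ell_1$ lower bound into the cardinality bound $k \geq 2^d$; this is handled by the standard observation that any parity sign representation can be renormalized so that the weights are integer-valued (hence of magnitude at least $1$), after which $\sum_S |w_S| \geq 2^d$ immediately implies $|\mathcal{S}| \geq 2^d$ and completes the contrapositive.
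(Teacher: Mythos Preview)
The paper does not prove this theorem; it is quoted from Krause and Pudl\'ak~\cite[Proposition~2.1]{krause94depth2mod}. So there is no in-paper argument to compare against, but your proof still needs to stand on its own, and it does not.

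Your fiber-averaging computation and the dual pairing are both correct: lifting a threshold-degree dual witness $\psi$ to $\Psi$ and pairing against a sign representation $P$ with $|P|\ge 1$ pointwise yields
\[
1 \;\le\; \langle\Psi,P\rangle \;\le\; 2^{-d}\sum_{S\in\mathcal{S}}|w_S|,
\]
hence $\sum_{S}|w_S|\ge 2^{d}$ where $d=\degthr(f)$. This is a valid lower bound on the \emph{threshold weight} of $f^{\op}$.

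The gap is the last sentence, which you yourself flag as ``the trickiest point.'' The implication you assert is simply false. Making the weights integers gives $|w_S|\ge 1$ for each nonzero term, which yields $\sum_S|w_S|\ge |\mathcal{S}|$ --- an inequality in the \emph{wrong direction} for your purposes. From the two lower bounds $\sum_S|w_S|\ge 2^{d}$ and $\sum_S|w_S|\ge |\mathcal{S}|$ one can conclude nothing about $|\mathcal{S}|$ versus $2^{d}$: a single term with weight $2^{d}$ already satisfies both. What you would need is an \emph{upper} bound $|w_S|\le 1$, and no renormalization delivers that while preserving $|P|\ge 1$. In short, you have proved $W(f^{\op})\ge 2^{\degthr(f)}$ for the threshold weight $W$, which is strictly weaker than the claimed density bound $\dns(f^{\op})\ge 2^{\degthr(f)}$; threshold density can be exponentially smaller than threshold weight.

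The actual Krause--Pudl\'ak argument is not a dual/correlation argument but a random-restriction argument: one fixes $z$ and the ``unused'' halves of $x,y$ at random and analyzes how each parity $\chi_S$ restricts to a parity in the $u$-variables, then uses a probabilistic existence argument to find a restriction under which the resulting representation of $f$ has low degree. Your averaging framework loses exactly the per-restriction combinatorial control that this step requires.
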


\noindent We are now in a position to obtain the claimed density results.
\begin{thm}
\label{thm:dns} There is an $($explicit$)$ algorithm that takes
as input an integer $n\geq1,$ runs in time polynomial in $n,$ and
outputs a halfspace $H_{n}\colon\zoon\to\moo$ such that
\begin{align}
\dns(H_{n}\wedge H_{n}) & =2^{\Omega(n)},\label{eqn:hh}\\
\dns(H_{n}\wedge\MAJ_{n}) & =2^{\Omega(\sqrt{n\log n})}.\label{eqn:hmajn}
\end{align}
\end{thm}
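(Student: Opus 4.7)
The plan is to invoke the Krause--Pudl\'ak density-amplification theorem (Theorem~\ref{thm:degree-length}), which states that $\dns(f^{\op})\geq 2^{\degthr(f)}$ for every $f\colon\zoon\to\moo,$ and to combine it with the threshold-degree lower bounds of Theorems~\ref{thm:degthr-h-h} and~\ref{thm:degthr-h-maj}. Setting $n_{0}=\lfloor n/3\rfloor$ and letting $h_{n_{0}}$ be the halfspace produced by Theorem~\ref{thm:rational-approx-h-REDUCTION-TO-SGN}, those two theorems give $\degthr(h_{n_{0}}\wedge h_{n_{0}})=\Omega(n)$ and $\degthr(h_{n_{0}}\wedge\MAJ_{n_{0}})=\Omega(\sqrt{n\log n}),$ whence Krause--Pudl\'ak yields
\[
\dns\bigl((h_{n_{0}}\wedge h_{n_{0}})^{\op}\bigr)\geq 2^{\Omega(n)},\qquad \dns\bigl((h_{n_{0}}\wedge\MAJ_{n_{0}})^{\op}\bigr)\geq 2^{\Omega(\sqrt{n\log n})}.
\]

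I would then construct a halfspace $H_{n}\colon\zoon\to\moo$ so that these lower bounds transfer to $H_{n}\wedge H_{n}$ and $H_{n}\wedge\MAJ_{n}$ on $2n$ bits. The naive choice $H_{n}=h_{n_{0}}^{\op}$ will not do, since writing $h_{n_{0}}(u)=\sign(\sum_{i}w_{i}u_{i}-\theta)$ gives
\[
h_{n_{0}}^{\op}(x,y,z)=\sign\Bigl(\sum_{i}w_{i}x_{i}+\sum_{i}w_{i}z_{i}(y_{i}-x_{i})-\theta\Bigr),
\]
which carries bilinear terms $z_{i}x_{i},z_{i}y_{i}$ and is therefore not a halfspace. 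Following the encoding of Sherstov~\cite{sherstov09hshs}, I would define $H_{n}$ on an enlarged variable set augmented with auxiliary bits, and arrange the coefficients so that on a suitable projection these auxiliary bits are slaved to the multiplexer outputs of the base variables. The conjunction $H_{n}\wedge H_{n}$ (resp.\ $H_{n}\wedge\MAJ_{n}$), restricted to the corresponding structured subset of inputs, then reproduces $(h_{n_{0}}\wedge h_{n_{0}})^{\op}$ (resp.\ $(h_{n_{0}}\wedge\MAJ_{n_{0}})^{\op}$). Because threshold density is monotone under variable restriction\textemdash every parity in a sign-representation of $f$ restricts to a signed parity in the remaining variables, giving a sign-representation of the restriction with no more terms\textemdash the desired density lower bounds carry over from the KP-transformed functions.

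The hard part will be the explicit construction of $H_{n}$: one must simultaneously ensure that $H_{n}$ is a bona fide halfspace, that the projection through which $H_{n}$ realizes $h_{n_{0}}^{\op}$ interacts correctly with a second copy of $H_{n}$ and with $\MAJ_{n}$, and that this projection is a variable restriction (as opposed to a more general reduction), so that threshold density is preserved. With the explicit encoding of~\cite{sherstov09hshs} plugged in on top of the new halfspace $h_{n_{0}}$ of Theorem~\ref{thm:rational-approx-h-REDUCTION-TO-SGN}, all three requirements are met, and Theorems~\ref{thm:degthr-h-h} and~\ref{thm:degthr-h-maj} together with Theorem~\ref{thm:degree-length} deliver the two advertised density lower bounds\textemdash a quadratic improvement over the previously best bound of $2^{\Omega(\sqrt{n})}$ for $\dns(H\wedge H)$.
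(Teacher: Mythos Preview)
Your high-level plan is exactly right: apply Krause--Pudl\'ak (Theorem~\ref{thm:degree-length}) to the threshold-degree lower bounds of Theorems~\ref{thm:degthr-h-h} and~\ref{thm:degthr-h-maj}, then massage the result into the form $H_{n}\wedge H_{n}$ and $H_{n}\wedge\MAJ_{n}$ for an explicit halfspace $H_{n}$. The gap is in the ``massage'' step.

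Your proposed mechanism---construct $H_{n}$ with auxiliary bits so that a \emph{variable restriction} of $H_{n}\wedge H_{n}$ equals $(h_{n_{0}}\wedge h_{n_{0}})^{\op}=h_{n_{0}}^{\op}\wedge h_{n_{0}}^{\op}$---cannot work as stated. Any variable restriction of a halfspace is again a halfspace, so any restriction of $H_{n}\wedge H_{n}$ is an intersection of two halfspaces. But you yourself observe that $h_{n_{0}}^{\op}$ is \emph{not} a halfspace (it contains genuine bilinear terms $z_{i}x_{i},z_{i}y_{i}$). Hence no restriction of $H_{n}$ can equal $h_{n_{0}}^{\op}$, and the reduction breaks. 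The vague reference to an ``encoding of~\cite{sherstov09hshs}'' does not resolve this; no such auxiliary-bit trick is used there for this purpose.

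The paper sidesteps the obstruction with a one-line arithmetic identity you overlooked. For $x,y,z\in\zoo$,
\[
(\overline{z}\wedge x)\vee(z\wedge y)=\tfrac{1}{2}\bigl(x+y+(x\oplus z)-(y\oplus z)\bigr),
\]
so the multiplexer is \emph{linear} in the four quantities $x,\,y,\,x\oplus z,\,y\oplus z$. Consequently, if $h_{n}(u)=\sign(\sum_{i}w_{i}u_{i}-\theta)$, then
\[
h_{n}^{\op}(x,y,z)=H_{4n}(x,\,y,\,x\oplus z,\,y\oplus z),
\]
where $H_{4n}(a,b,c,d)=\sign\bigl(\sum_{i}\tfrac{w_{i}}{2}(a_{i}+b_{i}+c_{i}-d_{i})-\theta\bigr)$ is a genuine halfspace on $4n$ bits. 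Likewise $\MAJ_{n}^{\op}(x,y,z)=\MAJ_{4n}(x,\,y,\,x\oplus z,\,\overline{y\oplus z})$. The reduction is therefore not a variable restriction but a \emph{substitution of parity functions} (and negated parities) for the input bits of $H_{4n}$ and $\MAJ_{4n}$. This substitution is density-nonincreasing for the reason you need: in any sign-representation $\sum_{S}\alpha_{S}\chi_{S}$ of $H_{4n}\wedge H_{4n}$, replacing each input bit by a parity sends each basis function $\chi_{S}$ to a single signed parity (a parity of parities is a parity), so the number of terms does not grow. Hence $\dns(H_{4n}\wedge H_{4n})\geq\dns(h_{n}^{\op}\wedge h_{n}^{\op})\geq 2^{\degthr(h_{n}\wedge h_{n})}=2^{\Omega(n)}$, and similarly for $H_{4n}\wedge\MAJ_{4n}$.

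In short: replace ``variable restriction'' by ``substitution of parities,'' and the halfspace $H_{n}$ falls out of the linear re-parameterization above with no auxiliary-bit machinery needed.
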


\begin{proof}
For any function $f\colon\zoon\to\zoo,$ standard arithmetization
gives
\begin{equation}
f^{\op}(x,y,z)=f\left(\ldots,\frac{1}{2}(x_{i}+y_{i}+x_{i}\oplus z_{i}-y_{i}\oplus z_{i}),\ldots\right),\label{eq:KP-arithmetic}
\end{equation}
where $a\oplus b\in\zoo$ denotes as usual the XOR of $a$ and $b$.
Similarly, one has
\begin{equation}
\MAJ_{n}^{\op}(x,y,z)=\MAJ_{4n}(x,y,x\oplus z,\overline{y\oplus z}),\label{eq:KP-MAJ}
\end{equation}
where the XOR and complement operations are applied bitwise. 

Let $h_{n}\colon\zoon\to\moo$ be the halfspace from Theorem~\ref{thm:degthr-h-h},
so that $h_{n}\wedge h_{n}$ has threshold degree $\Omega(n).$ By
Theorem~\ref{thm:degree-length}, the function $(h_{n}\wedge h_{n})^{\op}=h_{n}^{\op}\wedge h_{n}^{\op}$
has threshold density $2^{\Omega(n)}.$ Observe from~(\ref{eq:KP-arithmetic})
that $h_{n}^{\op}\wedge h_{n}^{\op}$ is the result of starting with
the intersection $H_{4n}\wedge H_{4n}$ of two explicitly given halfspaces
in $4n$ variables each, and replacing their input variables with
appropriately chosen parity functions. This replacement cannot increase
the threshold density because the parity of several parity functions
is another parity function. We conclude that $\dns(H_{4n}\wedge H_{4n})=2^{\Omega(n)}.$
This completes the proof of (\ref{eqn:hh}).

The proof of (\ref{eqn:hmajn}) is closely analogous. Specifically,
recall from Theorem~\ref{thm:degthr-h-maj} that $h_{n}\wedge\MAJ_{n}$
has threshold degree $\Omega(\sqrt{n\log n}).$ By Theorem~\ref{thm:degree-length},
the function $(h_{n}\wedge\MAJ_{n})^{\op}=h_{n}^{\op}\wedge\MAJ_{n}^{\op}$
has threshold density $\exp(\Omega(\sqrt{n\log n})).$ It follows
from~(\ref{eq:KP-arithmetic}) and~(\ref{eq:KP-MAJ}) that $h_{n}^{\op}\wedge\MAJ_{n}^{\op}$
is the result of starting with the intersection $H_{4n}\wedge\MAJ_{4n}$
for an explicit halfspace $H_{4n}$ in $4n$ variables, and replacing
the input variables with appropriately chosen parity functions or
their negations. This replacement cannot increase the threshold density
because the parity of several parity functions is another parity function.
We conclude that $\dns(H_{4n}\wedge\MAJ_{4n})=\exp(\Omega(\sqrt{n\log n})).$
This completes the proof of (\ref{eqn:hmajn}).
\end{proof}
\noindent Both lower bounds in Theorem~\ref{thm:dns} are essentially
the best possible for any halfspace $H_{n}\colon\zoon\to\moo$. Indeed,
the first lower bound is tight by definition, while the second lower
bound nearly matches the upper bound of $\exp(O(\sqrt{n}\log^{2}n))$
that follows from Remark~\ref{rem:degthr-h-MAJ-upper}. 

\subsection{Communication complexity}

Using the pattern matrix method, we will now ``lift'' the approximation
lower bound of Theorem~\ref{thm:polynomial-approx-hs-LOWER} to communication
complexity. As a result, we will obtain an explicit separation of
$k$-party communication complexity with unbounded and weakly unbounded
error (which for $k=2$ is equivalent to a separation of sign-rank
and discrepancy). Our application of the pattern matrix method is
based on the fact that the unique set disjointness function $\UDISJ_{m,k}$
has an exact representation on its domain as a polynomial with a small
number of monomials; cf.~\cite[Section~10]{sherstov07quantum}, \cite[Section~4.2.3]{thaler14omb},
and~\cite[Section~3.1]{sherstov16multiparty-pp-upp}. Specifically,
define $\UDISJ_{m,k}^{*}\colon(\zoo^{m})^{k}\to\Re$ by 
\[
\UDISJ_{m,k}^{*}(x)=-1+2\sum_{i=1}^{m}x_{1,i}x_{2,i}\cdots x_{k,i}\,.
\]
Then 
\begin{align}
\UDISJ_{m,k}(x)=\UDISJ_{m,k}^{*}(x), &  & x\in\dom\UDISJ_{m,k}.\label{eq:udisj-linear-poly}
\end{align}
\begin{thm}
\label{thm:pp-upp}For some constant $C>1$ and all positive integers
$n$ and $k,$ there is an $($explicitly given$)$ $k$-party communication
problem $F_{n,k}\colon(\zoon)^{k}\to\moo$ such that 
\begin{align}
\upp(F_{n,k}) & \leq\log n+4,\label{eq:Fnk-upp}\\
\pp(F_{n,k}) & \geq\left\lfloor \frac{n}{C\cdot4^{k}}\right\rfloor ,\label{eq:Fnk-pp}\\
\disc(F_{n,k}) & \leq\exp\left(-\left\lfloor \frac{n}{C\cdot4^{k}}\right\rfloor \right).\label{eq:Fnk-disc}
\end{align}
Moreover, 
\begin{align}
F_{n,k}(x_{1},x_{2},\ldots,x_{k})=\sign\left(w_{0}+\sum_{i=1}^{n}w_{i}x_{1,i}x_{2,i}\cdots x_{k,i}\right)\label{eq:Fnk-form}
\end{align}
for some explicitly given reals $w_{0},w_{1},\dots,w_{n}$. 
\end{thm}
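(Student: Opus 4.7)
The plan is to construct $F_{n,k}$ as a total extension of the partial composition $h_{n'} \circ \UDISJ_{m,k}$, where $h_{n'}$ is the explicit halfspace produced by Theorem \ref{thm:rational-approx-h-REDUCTION-TO-SGN}, $m = \lceil C_0 \cdot 4^{k} \rceil$ for a sufficiently large absolute constant $C_0$, and $n' = \lfloor n/m \rfloor$. The polynomial identity (\ref{eq:udisj-linear-poly}) lets us rewrite $h_{n'} \circ \UDISJ_{m,k}$ on its domain as the sign of an affine combination of the $n'm$ monomials $x_{j,1,i} x_{j,2,i} \cdots x_{j,k,i}$; extending this sign-of-polynomial expression to all of $(\{0,1\}^n)^{k}$ (padding with weight-zero dummy bits if $n'm < n$) produces a total function $F_{n,k}$ of the required form (\ref{eq:Fnk-form}). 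The fractional offset $1/2$ built into the construction of $h_{n'}$ in Theorem \ref{thm:master} ensures that the polynomial never vanishes on integer inputs, so $F_{n,k}$ is genuinely $\{-1,+1\}$-valued; since Theorem \ref{thm:rational-approx-h-REDUCTION-TO-SGN} supplies $h_{n'}$ in time polynomial in $n$, the weights $w_0,\ldots,w_n$ are explicitly and efficiently computable.

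The UPP bound (\ref{eq:Fnk-upp}) is then immediate from Fact \ref{fact:upp-upper-bound}, since $F_{n,k}$ is the sign of a polynomial with at most $n+1$ monomials. For the discrepancy, I would apply the pattern matrix method (Theorem \ref{thm:pm-large-adeg}) to the partial function $G = h_{n'} \circ \UDISJ_{m,k}$. Taking $\gamma = 2^{-c n'}$ with $c$ the constant from Theorem \ref{thm:polynomial-approx-hs-LOWER}, the bound $E(h_{n'}, cn') > 1 - 2^{-cn'}$ forces $\deg_{1-\gamma}(h_{n'}) > cn'$, so
\[
\disc(G) \;\leq\; \left(\frac{e\cdot 2^{k} n'}{\deg_{1-\gamma}(h_{n'})\sqrt{m}}\right)^{\deg_{1-\gamma}(h_{n'})} + 2^{-cn'} \;\leq\; \left(\frac{e\cdot 2^{k}}{c\sqrt{m}}\right)^{cn'} + 2^{-cn'}.
\]
Choosing $C_0$ large enough that $e \cdot 2^{k} / (c\sqrt{m}) \leq 1/2$, the right-hand side is at most $2^{1-cn'} = \exp(-\Omega(n/4^{k}))$.

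Finally, I would transfer this bound from the partial function $G$ to the total extension $F_{n,k}$: for any distribution $P$, the partial-function definition of discrepancy already charges $\sum_{x\notin\dom G} P(x)$ as a penalty, and this penalty upper-bounds the off-domain contribution of any total extension to $\disc_P(F_{n,k})$; hence $\disc(F_{n,k}) \leq \disc(G)$. This delivers (\ref{eq:Fnk-disc}), and Corollary \ref{cor:dm} converts the discrepancy bound into the PP lower bound (\ref{eq:Fnk-pp}). The main technical point requiring care is calibrating $m = \Theta(4^{k})$ so that a single block size simultaneously drives the first term of the pattern matrix bound below $\gamma$ and keeps the total number of monomials at most $n+1$; once this is done, everything else follows by direct substitution from the three cited ingredients.
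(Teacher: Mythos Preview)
Your proposal is correct and follows essentially the same route as the paper: both construct $F_{n,k}$ by composing the halfspace $h_{n'}$ with $\UDISJ_{m,k}$ for $m=\Theta(4^k)$, extend the partial composition to a total function via the polynomial identity~(\ref{eq:udisj-linear-poly}), bound the discrepancy of the partial function by the pattern matrix method (Theorem~\ref{thm:pm-large-adeg}) using the approximate-degree lower bound from Theorem~\ref{thm:polynomial-approx-hs-LOWER}, transfer this to the total extension, and read off the $\pp$ and $\upp$ bounds from Corollary~\ref{cor:dm} and Fact~\ref{fact:upp-upper-bound}. The only cosmetic difference is that the paper uses $\Sgn$ for the total extension rather than arguing (as you do, correctly) that the half-integer offset in $h_{n'}$ keeps the argument nonzero.
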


\begin{proof}
Let $h_{n}\colon\zoon\to\moo$ be the halfspace constructed in Theorem~\ref{thm:rational-approx-h-REDUCTION-TO-SGN}.
Then by definition, $h_{n}(x)=\sign p_{n}(x)$ for a linear polynomial
$p_{n}\colon\Re^{n}\to\Re$. Moreover, Theorem~\ref{thm:polynomial-approx-hs-LOWER}
ensures that 
\begin{align}
\deg_{1-2^{-cn}}(h_{n})\geq cn\label{eq:poly-approx-error-h}
\end{align}
for some constant $c>0$ independent of $n$. Abbreviate $m=\lceil2^{k+1}\e/c\rceil^{2}$
and consider the $k$-party communication problem $F'_{n,k}\colon(\zoo^{nm})^{k}\to\moo$
given by 
\begin{equation}
F'_{n,k}=\Sgn\,p_{n}\!\left(\frac{1-\UDISJ_{m,k}^{*}}{2},\frac{1-\UDISJ_{m,k}^{*}}{2},\ldots,\frac{1-\UDISJ_{m,k}^{*}}{2}\right),\label{eq:Fnk-prime-def}
\end{equation}
where the right-hand side features the coordinatewise composition
of the polynomial $p_{n}$ with $n$ independent copies of the polynomial
$(1-\UDISJ_{m,k}^{*})/2$. The identity (\ref{eq:udisj-linear-poly})
implies that $F'_{n,k}$ coincides with $h_{n}\circ\UDISJ_{m,k}$
on the domain of the latter. Therefore, 
\begin{align}
\disc(F'_{n,k}) & \leq\disc(h_{n}\circ\UDISJ_{m,k})\nonumber \\
 & \leq2^{-cn}+2^{-cn}\nonumber \\
 & =2\cdot2^{-cn},\label{eq:Fnk-prime-disc}
\end{align}
where the second step uses~(\ref{eq:poly-approx-error-h}) and the
pattern matrix method (Theorem~\ref{thm:pm-large-adeg}). Applying
the discrepancy method (Corollary~\ref{cor:dm}), we obtain
\begin{align}
\pp(F'_{n,k}) & \geq\log\frac{2}{\disc(F_{n,k}')}\nonumber \\
 & \geq cn.\label{eq:Fnk-prime-pp}
\end{align}

To complete the proof, define the functions $F_{n,k}$ for any positive
integers $n$ and $k$ by 
\[
F_{n,k}=\begin{cases}
F'_{\lfloor n/\lceil2^{k+1}\e/c\rceil^{2}\rfloor,k} & \text{if }n\geq\lceil2^{k+1}\e/c\rceil^{2},\\
0 & \text{otherwise.}
\end{cases}
\]
Then (\ref{eq:Fnk-pp})\textendash (\ref{eq:Fnk-form}) are immediate
from~(\ref{eq:Fnk-prime-def})\textendash (\ref{eq:Fnk-prime-pp}),
whereas~(\ref{eq:Fnk-upp}) is a consequence of~(\ref{eq:Fnk-form})
and Fact~\ref{fact:upp-upper-bound}.
\end{proof}
\noindent Theorem~\ref{thm:pp-upp} gives an explicit separation
$\PP_{k}\subsetneq\UPP_{k}$ for up to $k\leq(0.5-\epsilon)\log n$
parties, where $\epsilon>0$ is an arbitrary constant. The special
case $k=2$ can be equivalently stated as an explicit separation of
sign-rank and discrepancy:
\begin{cor}
\label{cor:pp-upp-2-party}There is an $($explicitly given$)$ family
$\{F_{n}\}_{n=1}^{\infty}$ of communication problems $F_{n}\colon\zoon\times\zoon\to\moo$
with
\begin{align}
\srank(F_{n}) & \leq n+1,\label{eq:Fn-srank}\\
\disc(F_{n}) & =2^{-\Omega(n)},\label{eq:Fn-disc}\\
\upp(F_{n}) & \leq\log n+4,\label{eq:Fn-upp}\\
\pp(F_{n}) & =\Omega(n).\label{eq:Fn-pp}
\end{align}
Moreover, 
\begin{align}
F_{n}(x,y)=\sign\left(w_{0}+\sum_{i=1}^{n}w_{i}x_{i}y_{i}\right)\label{eq:Fn-form}
\end{align}
for some explicitly given reals $w_{0},w_{1},\dots,w_{n}$.
\end{cor}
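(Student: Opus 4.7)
The plan is to obtain the corollary almost for free from Theorem~\ref{thm:pp-upp}, since specializing that theorem to $k=2$ delivers three of the four asserted bounds together with the explicit polynomial form~(\ref{eq:Fn-form}). The only genuinely new piece of work is the sign-rank upper bound~(\ref{eq:Fn-srank}); the remaining conclusions are a direct transcription.

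First I would instantiate Theorem~\ref{thm:pp-upp} at $k=2$, which produces an explicit family $F_n=F_{n,2}\colon\zoon\times\zoon\to\moo$ of the form~(\ref{eq:Fn-form}) with $\upp(F_n)\leq\log n+4$, $\pp(F_n)\geq\lfloor n/(16C)\rfloor=\Omega(n)$, and $\disc(F_n)\leq\exp(-\lfloor n/(16C)\rfloor)=2^{-\Omega(n)}$. Converting the natural exponential to base two absorbs into the $\Omega(\cdot)$, so this already settles~(\ref{eq:Fn-disc}), (\ref{eq:Fn-upp}), (\ref{eq:Fn-pp}), and~(\ref{eq:Fn-form}). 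Next I would exploit the bilinear shape of the defining polynomial to bound the sign-rank. Setting $R_{x,y}=w_0+\sum_{i=1}^n w_i x_i y_i$, I would write down the decomposition
\[
R \,=\, w_0\, J \,+\, \sum_{i=1}^n w_i\, u_i v_i^{\mathsf T},
\]
where $u_i,v_i\in\{0,1\}^{2^n}$ are the coordinate vectors given by $(u_i)_x=x_i$ and $(v_i)_y=y_i$. This exhibits $R$ as a sum of at most $n+1$ rank-one matrices, so $\operatorname{rank} R\leq n+1$, and on every entry where $R_{x,y}\neq 0$ the sign of $R_{x,y}$ already agrees with $F_n(x,y)$ by construction of $F_n$.

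The only subtlety I anticipate\textemdash and the reason (\ref{eq:Fn-srank}) is not literally immediate from~(\ref{eq:Fn-form})\textemdash is that the inner construction of Theorem~\ref{thm:pp-upp} uses the $\Sgn$ convention, so $R$ may carry zero entries (corresponding to inputs on which $F_n$ outputs $+1$). To convert $R$ into a bona fide sign-rank witness I would perturb $w_0$ to $w_0+\epsilon$ with $0<\epsilon<\min\{|R_{x,y}|:R_{x,y}\neq 0\}$; the perturbation merely rescales the rank-one term $J$ in the decomposition, so the rank remains at most $n+1$, no entry is zero, and the sign pattern now matches $F_n$ everywhere. This yields $\srank(F_n)\leq n+1$ and completes the argument. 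Beyond this cosmetic perturbation there is no real obstacle: the entire proof is bookkeeping on top of Theorem~\ref{thm:pp-upp}.
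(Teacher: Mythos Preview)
Your proposal is correct and follows the same approach as the paper: specialize Theorem~\ref{thm:pp-upp} to $k=2$ for (\ref{eq:Fn-disc})--(\ref{eq:Fn-form}), then read off the sign-rank bound from the bilinear form~(\ref{eq:Fn-form}). The paper's own proof is a two-line version of exactly this, declaring~(\ref{eq:Fn-srank}) ``immediate from~(\ref{eq:Fn-form}).''

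One small remark: your perturbation of $w_0$ is unnecessary. Equation~(\ref{eq:Fn-form}) is stated with $\sign$ (not $\Sgn$), and since $F_n$ takes values in $\{-1,+1\}$ while $\sign(0)=0$, the identity~(\ref{eq:Fn-form}) already forces $w_0+\sum_i w_i x_i y_i\neq 0$ for all $x,y$. (Concretely, the linear polynomial $p_n$ underlying $h_n$ has a constant term of $1/2$ and integer coefficients, so after the substitution in Theorem~\ref{thm:pp-upp} the argument of $\sign$ is always a half-integer.) Thus the matrix $R=[w_0+\sum_i w_i x_i y_i]_{x,y}$ is itself a valid sign-rank witness of rank at most $n+1$, with no perturbation needed.
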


\begin{proof}
Equations~(\ref{eq:Fn-disc})\textendash (\ref{eq:Fn-form}) result
from setting $k=2$ in Theorem~\ref{thm:pp-upp}. The new item,~(\ref{eq:Fn-srank}),
is immediate from~(\ref{eq:Fn-form}).
\end{proof}
\noindent Theorem~\ref{thm:pp-upp} and Corollary~\ref{cor:pp-upp-2-party}
settle Theorems~\ref{thm:MAIN-pp-upp-multiparty} and~\ref{thm:MAIN-pp-upp},
respectively, from the introduction.

\subsection{\label{subsec:A-circulant-expander}A circulant expander}

Consider a $d$-regular undirected graph $G$ on $n$ vertices, with
adjacency matrix $A.$ Since $A$ is symmetric, it has $n$ real eigenvalues
(counting multiplicities). We denote these eigenvalues by $\lambda_{1}(G)\geq\lambda_{2}(G)\geq\cdots\geq\lambda_{n}(G)$
and define $\lambda(G)=\max\{|\lambda_{2}(G)|,|\lambda_{3}(G)|,\ldots,|\lambda_{n}(G)|\}$.
It is well known and straightforward to verify that $\lambda_{1}(G)=d$
and $|\lambda_{i}(G)|\leq d$ for $i=2,3,\ldots,n.$ We say that $G$
is an \emph{$\epsilon$-expander} if $\lambda(G)\leq\epsilon d.$
This spectral notion is intimately related to key graph-theoretic
and stochastic properties of $G$, such as vertex expansion and the
convergence rate of a random walk on $G$ to the uniform distribution.
One is typically interested in $\epsilon$-expanders that are $d$-regular
for $d$ as small as possible, where $0<\epsilon<1$ is a constant.
The existence of expanders with strong parameters can be verified
using the probabilistic method~\cite{alon-spencer08probab-method},
and explicit constructions are known as well.

In this section, we study the problem of constructing \emph{circulant
}expanders. Formally, a graph is \emph{circulant} if its adjacency
matrix is circulant. It is clear that a circulant graph is $d$-regular
for some $d$, meaning that every vertex has out-degree $d$ and in-degree
$d.$ We focus on circulant graphs that are undirected and have no
self-loops, which corresponds to adjacency matrices that are symmetric
and have zeroes on the diagonal. It is well known~\cite{alon-roichman94rando-cayley-graphs-and-expanders}
that for any $0<\epsilon<1$ and all large enough $n$, there exists
a circulant $\epsilon$-expander on $n$ vertices of degree $O(\log n)$.
This degree bound is asymptotically optimal~\cite{alon-roichman94rando-cayley-graphs-and-expanders,fmt06spectral-estimates-for-cayley-graphs,lns11nonexistence-circular-expander},
and the problem of constructing such circulant expanders explicitly
has been studied by several authors~\cite{alon86,AIKPS90aperiodic-set,alon-roichman94rando-cayley-graphs-and-expanders}.
The best construction prior to our work, due to Ajtai et al.~\cite{AIKPS90aperiodic-set},
achieves degree $(\log^{*}n)^{O(\log^{*}n)}\log n$. In this section,
we construct a circulant $\epsilon$-expander of optimal degree, $O(\log n)$,
for any constant $0<\epsilon<1$. By way of terminology, recall that
the adjacency matrix of a circulant graph on $n$ vertices is $\circulant(\1_{S})$
for some subset $S\subseteq\{0,1,2,\ldots,n-1\}.$ With this in mind,
we say that an algorithm \emph{constructs a circulant graph on $n$
vertices in time $T(n)$} if the algorithm outputs in time $T(n)$
the elements of the associated subset $S$. The formal statement of
our result follows.
\begin{thm}
\label{thm:circulant-expander}Let $0<\epsilon<1$ be given. Then
there is an $($explicitly given$)$ algorithm that takes as input
an integer $n\geq2$ and constructs in time polynomial in $\log n$
an undirected simple $d$-regular circulant graph $G_{n}$ on $n$
vertices, where
\begin{align}
 & 1\leq d\leq O(\log n),\label{eq:thm-expander-degree}\\
 & \lambda(G_{n})\leq\max\left\{ \epsilon,\frac{1}{n-1}\right\} d.\label{eq:thm-expander-gap}
\end{align}
\end{thm}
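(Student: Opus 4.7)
The key observation is that by Corollary~\ref{cor:circulant-diagonalization}, the eigenvalues of the adjacency matrix $\circulant(\1_S)$ of a circulant graph with connection set $S\subseteq\{0,1,\ldots,n-1\}$ are precisely $\sum_{j\in S}\omega^{kj}$ for $k=0,1,\ldots,n-1$, where $\omega$ is a primitive $n$-th root of unity. The degree is $d=|S|$, the top eigenvalue is $d$ itself (attained at $k=0$), and
\[
\lambda(G)=\max_{k\ne 0}\left|\sum_{j\in S}\omega^{kj}\right|=d\cdot\disc(S,n).
\]
For the graph to be undirected and simple, I additionally need $0\notin S$ and $S=-S\bmod n$. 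The theorem thus reduces to the explicit construction of a subset of $\{1,\ldots,n-1\}$ that is closed under negation modulo $n$, has cardinality $O(\log n)$, and has $n$-discrepancy at most $\epsilon$.

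For $n$ bounded by some constant depending on $\epsilon$, I would output the complete graph $K_n$, which is circulant with connection set $\{1,2,\ldots,n-1\}$ and whose nontrivial eigenvalues all equal $-1$; this gives $\lambda(K_n)/d=1/(n-1)$, matching~\eqref{eq:thm-expander-gap}. For the remaining (large) $n$, I would invoke Theorem~\ref{thm:explicit-set-small-Fourier-coeffs} with modulus $n$ and discrepancy parameter $\epsilon/8$, obtaining in time polynomial in $\log n$ a set $Z\subseteq\{0,1,\ldots,n-1\}$ of cardinality at most $C\log n$ with $\disc(Z,n)\le\epsilon/8$. I would then set $S=(Z\cup(-Z\bmod n))\setminus\{0\}$, which is symmetric under negation modulo $n$ and avoids $0$ by construction. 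Using Proposition~\ref{prop:normZ-norm-minusZ} and inclusion-exclusion, one checks that for $k\ne 0$,
\[
\left|\sum_{j\in S}\omega^{kj}\right|\;\le\;2|Z|\cdot\disc(Z,n)\;+\;|Z\cap(-Z\bmod n)|\;+\;2,
\]
so that, provided $|Z\cap(-Z\bmod n)|$ is small relative to $|Z|$, one obtains $\disc(S,n)\le\epsilon$ and hence $\lambda(G_n)\le\epsilon\cdot d$ with $d=|S|=O(\log n)$.

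The hard part is controlling $|Z\cap(-Z\bmod n)|$ deterministically: in the worst case this intersection could be all of $Z$, in which case $S$ collapses to $Z$ and the bound above degrades substantially. For a uniformly random subset of $\{0,1,\ldots,n-1\}$ of size $O(\log n)$, however, the expected size of the intersection with its negation is $O(\log^2 n/n)=o(1)$, so almost every candidate is good. To convert this into an explicit guarantee, I would augment the brute-force search in Stage~1 of the proof of Theorem~\ref{thm:explicit-set-small-Fourier-coeffs} to additionally reject candidates whose intersection with their own negation exceeds $\epsilon|Z|/4$; this extra condition disqualifies only a negligible fraction of sets and therefore preserves the discrepancy and cardinality guarantees as well as the polynomial running time. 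An alternative I would also explore, perhaps cleaner, is to constrain the constructed set to lie within $\{1,2,\ldots,\lfloor(n-1)/2\rfloor\}$ so that $Z$ and $-Z\bmod n$ are automatically disjoint; such a restriction can be propagated through the iterative construction underlying Theorem~\ref{thm:ajtai-iteration} at the cost of a mild constant-factor loss in the discrepancy parameter.
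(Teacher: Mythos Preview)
Your overall structure is right, and you correctly identify the one nontrivial issue: making $Z$ and $-Z\bmod n$ disjoint. However, neither of your proposed fixes works as stated. Your first suggestion, to ``augment the brute-force search in Stage~1,'' cannot help: Stage~1 of Theorem~\ref{thm:explicit-set-small-Fourier-coeffs} constructs sets $S_{p'}$ for small primes $p'\le P'=O(\log\log n)$, and the final set $Z$ modulo $n$ is produced from these only after two applications of the iteration lemma (Theorem~\ref{thm:ajtai-iteration}). A condition on $S_{p'}\cap(-S_{p'})\bmod p'$ says nothing about $Z\cap(-Z)\bmod n$, and there is no brute-force search anywhere in the construction over sets of size $\Theta(\log n)$ modulo $n$. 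Your second suggestion, to propagate the restriction $Z\subseteq\{1,\ldots,\lfloor(n-1)/2\rfloor\}$ through the iteration lemma, is also not workable: the elements $(r+s\cdot(p^{-1})_m)\bmod m$ produced by Theorem~\ref{thm:ajtai-iteration} fall wherever the modular inverse $(p^{-1})_m$ places them, and there is no evident way to confine them to half of $\ZZ_m$ while preserving the discrepancy analysis.

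The paper's fix is both simpler and completely decoupled from the internals of Theorem~\ref{thm:explicit-set-small-Fourier-coeffs}: it \emph{shifts} the constructed set. For any $\Delta$, the multiset $Z+\Delta$ has the same $n$-discrepancy as $Z$ since $|\sum_{z\in Z}\omega^{k(z+\Delta)}|=|\omega^{k\Delta}|\cdot|\sum_{z\in Z}\omega^{kz}|$. For fixed $z,z'\in Z$, the congruence $z+\Delta\equiv-(z'+\Delta)\pmod n$ has at most two solutions $\Delta\in\ZZ_n$, so at most $2|Z|^2$ values of $\Delta$ are bad. Provided $2|Z|^2<n$ (the ``large $n$'' regime), some $\Delta\in\{0,1,\ldots,2|Z|^2\}$ makes $(Z+\Delta)$ and $-(Z+\Delta)$ disjoint modulo $n$ and avoid $0$; such a $\Delta$ is found by brute force in time polynomial in $|Z|=O(\log n)$. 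Then $S=((Z+\Delta)\cup(-Z-\Delta))\bmod n$ has exactly $2|Z|$ elements, is symmetric, excludes $0$, and satisfies $\lambda(G_n)\le 2|Z|\disc(Z,n)\le\epsilon d$ with no inclusion--exclusion losses at all.
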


\begin{proof}
Let $C_{\epsilon}$ be the constant from Theorem~\ref{thm:explicit-set-small-Fourier-coeffs}.
We first consider the trivial case when $2(C_{\epsilon}\log n)^{2}\geq n,$
which means that $n$ is bounded by an explicit constant. In this
case, we take $G_{n}$ to be the complete graph on $n$ vertices.
It is clear that $G_{n}$ is a $d$-regular circulant graph for $d=n-1.$
The adjacency matrix of $G_{n}$ is $\circulant(0,1,1,\ldots,1)$,
whose eigenvalues by Corollary~\ref{cor:circulant-diagonalization}
are $n-1,-1,-1,\ldots,-1$. In particular, $\lambda(G_{n})=1=d/(n-1).$
This settles~(\ref{eq:thm-expander-gap}), whereas~(\ref{eq:thm-expander-degree})
holds trivially because $d$ and $n$ are bounded by a constant.

We now turn to the nontrivial case when $2(C_{\epsilon}\log n)^{2}<n.$
The algorithm of Theorem~\ref{thm:explicit-set-small-Fourier-coeffs}
constructs, in time polynomial in $\log n,$ a set $Z\subseteq\{0,1,2,\ldots,n-1\}$
with
\begin{align}
 & \disc(Z,n)\leq\epsilon,\label{eq:disc-Z-expander}\\
 & 1\leq|Z|\leq C_{\epsilon}\log n.\label{eq:Z-cardinality-expander}
\end{align}
For any $z,z'\in Z,$ the linear congruence $z+\Delta\equiv-(z'+\Delta)\pmod n$
has at most two solutions $\Delta\in\{0,1,2,\ldots,n-1\}.$ Recalling
that $2|Z|^{2}<n$ in the case under consideration, we conclude that
there exists $\Delta\in\{0,1,2,\ldots,2|Z|^{2}\}$ with 
\begin{equation}
z+\Delta\not\equiv-(z'+\Delta)\pmod n,\qquad\qquad\qquad\qquad z,z'\in Z.\label{eq:magic-translate}
\end{equation}
Moreover, such $\Delta$ can clearly be found by brute force search
in time polynomial in $|Z|=O(\log n).$ Equation~(\ref{eq:magic-translate})
now implies that no two elements of the multiset $(Z\cup\Delta)\cup(-Z-\Delta)$
are congruent modulo $n,$ and in particular no element of $Z\cup\Delta$
is congruent to $0$ modulo $n.$ 

We define $G_{n}$ to be the undirected graph with vertex set $\{0,1,2,\ldots,n-1\}$
in which $(i,j)$ is an edge if and only if $i-j$ is congruent modulo
$n$ to an element of $(Z+\Delta)\cup(-Z-\Delta).$ The roles of $i$
and $j$ in this definition are symmetric, making $G_{n}$ an undirected
graph. It is obvious that the adjacency matrix of $G_{n}$ is circulant.
Furthermore, $G_{n}$ has no self-loops because by construction no
element of $Z\cup\Delta$ is congruent to $0$ modulo $n$. Since
the elements of $(Z+\Delta)\cup(-Z-\Delta)$ are pairwise distinct
modulo $n,$ the degree of $G_{n}$ is $|(Z+\Delta)\cup(-Z-\Delta)|=2|Z|.$
Now~(\ref{eq:thm-expander-degree}) follows from~(\ref{eq:Z-cardinality-expander}).
To settle the remaining property~(\ref{eq:thm-expander-gap}), observe
that the first row of the adjacency matrix of $G_{n}$ is the characteristic
vector of the set $((Z+\Delta)\cup(-Z-\Delta))\bmod n$. As a result,
Corollary~\ref{cor:circulant-diagonalization} implies that the eigenvalues
of the adjacency matrix of $G_{n}$ are
\begin{align*}
\sum_{z\in Z+\Delta}\omega^{kz}+\sum_{z\in-Z-\Delta}\omega^{kz}, &  & k=0,1,2,\ldots,n-1,
\end{align*}
where $\omega$ is a primitive $n$-th root of unity. Setting $k=0$
yields the largest eigenvalue, $2|Z|$. The other eigenvalues are
bounded by
\begin{align*}
\lambda(G_{n}) & =\max_{k=1,2,\ldots,n-1}\left|\sum_{z\in Z+\Delta}\omega^{kz}+\sum_{z\in-Z-\Delta}\omega^{kz}\right|\\
 & \leq\max_{k=1,2,\ldots,n-1}\left|\sum_{z\in Z+\Delta}\omega^{kz}\right|+\max_{k=1,2,\ldots,n-1}\left|\sum_{z\in-Z-\Delta}\omega^{kz}\right|\\
 & =2|Z|\disc(Z,n).
\end{align*}
Along with~(\ref{eq:disc-Z-expander}), this proves~(\ref{eq:thm-expander-gap}).
\end{proof}

\section*{Acknowledgments}

I am thankful to Mark Bun, T.~S.~Jayram, Ryan O'Donnell, Rocco Servedio,
and Justin Thaler for valuable comments on this work. Special thanks
to T.~S.~Jayram for allowing me to include his short and elegant
proof of Fact~\ref{fact:number-solutions-linear-form}. 

\bibliographystyle{siamplain}
\bibliography{refs}

\appendix

\section{\label{sec:ajtai}The iteration lemma of Ajtai et al.}

The purpose of this appendix is to provide a detailed and self-contained
proof of Theorem~\ref{thm:ajtai-iteration}, which we restate below
for the reader's convenience.
\begin{thm*}
Fix an integer $R\geq1$ and a real number $P\geq2$. Let $m$ be
an integer with
\[
m\geq P^{2}(R+1).
\]
Fix a set $S_{p}\subseteq\{1,2,\ldots,p-1\}$ for each prime $p\in(P/2,P]$
with $p\nmid m,$ such that all $S_{p}$ have the same cardinality.
Consider the multiset
\begin{multline*}
S=\{(r+s\cdot(p^{-1})_{m})\bmod m:\\
\qquad r=1,\ldots,R;\quad p\in(P/2,P]\text{ prime with }p\nmid m;\quad s\in S_{p}\}.
\end{multline*}
Then the elements of $S$ are pairwise distinct and nonzero. Moreover,
\begin{equation}
\disc(S,m)\leq\frac{c}{\sqrt{R}}+\frac{c\log m}{\log\log m}\cdot\frac{\log P}{P}+\max_{p}\{\disc(S_{p},p)\}\label{eq:disc-S}
\end{equation}
for some $($explicitly given$)$ constant $c\geq1$ independent of
$P,R,m.$
\end{thm*}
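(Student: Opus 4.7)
The plan is to prove the theorem in two stages. First, I would establish that the elements of $S$ are pairwise distinct and nonzero by a Chinese-Remainder-style range argument: if $(r_1 + s_1(p_1^{-1})_m) \equiv (r_2 + s_2(p_2^{-1})_m) \pmod m$, multiplying through by $p_1 p_2$ and using $p \cdot (p^{-1})_m \equiv 1 \pmod m$ converts the congruence into $p_1 p_2 (r_1 - r_2) + s_1 p_2 - s_2 p_1 \equiv 0 \pmod m$. The hypothesis $m \geq P^2(R+1)$ keeps the left-hand side strictly inside $(-m, m)$, so the congruence lifts to an equality in $\mathbb{Z}$. That equality, combined with $s_i \in \{1, \ldots, p_i - 1\}$ and the primality of $p_1, p_2$, forces $p_1 = p_2$ by divisibility, and then $r_1 = r_2$, $s_1 = s_2$. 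A parallel range check ensures no element equals $0$.

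For the discrepancy bound, the second stage is a Fourier-analytic case split. With $\omega = e^{2\pi\iu/m}$, the exponential sum factors as
\begin{align*}
\sum_{z \in S} \omega^{kz} \;=\; A_k \cdot \sum_p B_k(p), \qquad A_k = \sum_{r=1}^R \omega^{kr}, \qquad B_k(p) = \sum_{s \in S_p} \omega^{ks(p^{-1})_m}.
\end{align*}
The geometric-series estimate $|A_k| \leq \min\{R,\, 1/|2\sin(\pi k/m)|\}$ controls the regime $|\sin(\pi k/m)| \geq 1/(2\sqrt{R})$: using the trivial bound $|\sum_p B_k(p)| \leq N_P |S_p|$, where $N_P$ is the number of admissible primes in $(P/2, P]$, and dividing by $|S| = R N_P |S_p|$ already yields a contribution of $O(1/\sqrt{R})$, which is the first term of~(\ref{eq:disc-S}).

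In the complementary regime where $k$ is close to $0$ or $m$, I would invoke Fact~\ref{fact:rel-prime} to decompose $\omega^{s(p^{-1})_m} = \exp(2\pi\iu s/(mp)) \cdot \zeta_p^{-s(m^{-1})_p}$, with $\zeta_p$ a primitive $p$-th root of unity. Since $\min(k, m-k) = O(m/\sqrt{R})$ in this regime and $s < p \leq P$, the phase $\exp(2\pi\iu ks/(mp))$ differs from $1$ by $O(1/\sqrt{R})$. Replacing it by $1$ reduces $B_k(p)$ to $\sum_{s \in S_p} \zeta_p^{-ks(m^{-1})_p}$, which for $p \nmid k$ is bounded by $|S_p|\max_p \disc(S_p, p)$ (since $k(m^{-1})_p \bmod p$ is then a nonzero frequency for $\zeta_p$), and for $p \mid k$ is bounded trivially by $|S_p|$. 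The number of "bad" primes dividing $k$ is at most $\nu(k) = O(\log m/\log\log m)$ by Fact~\ref{fact:num-prime-factors}, whereas Fact~\ref{fact:PNT} gives $N_P = \Theta(P/\log P)$ after accounting for at most $\nu(m)$ primes excluded by $p \mid m$. The bad-prime contribution after dividing by $|S|$ becomes $O(\nu(k)/N_P) = O(\log m \cdot \log P /(P \log\log m))$, which is the second term of~(\ref{eq:disc-S}); the good-prime contribution yields the final $\max_p \disc(S_p, p)$ term, and the phase-approximation error folds back into an $O(1/\sqrt{R})$ piece.

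The main obstacle will be calibrating the two regimes cleanly. The threshold on $|\sin(\pi k/m)|$ is chosen precisely so that the $O(1/\sqrt{R})$ phase-approximation error in the second regime matches the $O(1/\sqrt{R})$ decay obtained from the geometric sum in the first regime; tracking the dependence of this error on $P$, $R$, and $m$ through the decomposition of Fact~\ref{fact:rel-prime}, and simultaneously controlling the count of admissible primes in $(P/2, P]$ via the explicit bounds of Facts~\ref{fact:PNT} and~\ref{fact:num-prime-factors}, is the most delicate part of the argument and is where the explicit constant $c$ in~(\ref{eq:disc-S}) is ultimately pinned down.
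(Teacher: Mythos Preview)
Your proposal is correct and follows essentially the same approach as the paper's proof. The distinctness argument is identical. For the discrepancy bound, the paper proves two separate estimates valid for \emph{all} $k$ (a ``$k$ small'' bound using Fact~\ref{fact:rel-prime} and a ``$k$ large'' bound from the geometric sum over $r$) and then takes their minimum, which directly yields $\min(2\pi\min(k,m-k)/m,\, m/(2R\min(k,m-k)))\leq\sqrt{\pi/R}$; you instead fix a threshold on $|\sin(\pi k/m)|$ and apply one bound in each regime, which is equivalent. One minor point: when $k$ is close to $m$ rather than $0$, the phase $\exp(2\pi\iu ks/(mp))$ is not close to $1$; you must replace $k$ by $K\in\{k,k-m\}$ (as the paper does) and correspondingly count bad primes as those dividing $k$ \emph{or} $m-k$, giving $\nu(k)+\nu(m-k)$ rather than just $\nu(k)$---but this does not affect the final bound.
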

\noindent This result is a slight generalization of the \emph{iteration
lemma }of Ajtai et al.~\cite{AIKPS90aperiodic-set}, which corresponds
to the special case for $m$ prime. We closely follow their proof
but provide ample detail to make it more accessible. We have structured
the presentation around five key milestones, corresponding to Sections~\ref{subsec:Shorthand-notation}\textendash \ref{subsec:Finishing-the-proof}
below. Before proceeding, the reader may wish to review the number-theoretic
preliminaries in Section~\ref{subsec:Number-theoretic-preliminaries}.

\subsection{Shorthand notation\label{subsec:Shorthand-notation}}

In the remainder of this manuscript, we adopt the shorthand
\[
e(x)=\exp(2\pi x\iu),
\]
where $\iu$ is the imaginary unit. We will need the following bounds,
illustrated in Figure~\ref{fig:graph-e}:
\begin{align}
|1-e(x)| & \leq2\pi x, &  & 0\leq x\leq1,\label{eq:e-close-to-1}\\
|1-e(x)| & \geq4\min(x,1-x), &  & 0\leq x\leq1.\label{eq:e-far-from-1}
\end{align}
To verify these bounds, write $|1-e(x)|=|1-\exp(2\pi x\iu)|=\sqrt{2-2\cos(2\pi x)}$
and apply elementary calculus.

We let $\Pcal$ denote the set of prime numbers $p\in(P/2,P]$ with
$p\nmid m.$ In this notation, the multiset $S$ is given by
\[
S=\{(r+s\cdot(p^{-1})_{m})\bmod m:p\in\Pcal,\;s\in S_{p},\;r=1,2,\ldots,R\}.
\]
There are precisely $\pi(P)-\pi(P/2)$ primes in $(P/2,P],$ of which
at most $\nu(m)$ are prime divisors of $m.$ Therefore,
\begin{equation}
|\Pcal|\geq\pi(P)-\pi\left(\frac{P}{2}\right)-\nu(m).\label{eq:Pcal-lower}
\end{equation}

\begin{figure}
\includegraphics[width=7cm]{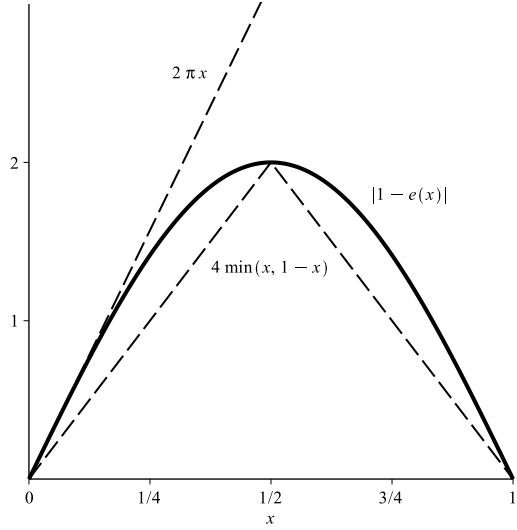}

\caption{\label{fig:graph-e}A graph of $|1-e(x)|$ and its approximations
by piecewise linear functions.}
\end{figure}

\subsection{\label{subsec:Elements-of-Sm-are-nonzero}Elements of \emph{S} are
nonzero and distinct}

As our first step, we verify that the elements of $S$ are nonzero
modulo $m$. Consider any $r\in\{1,2,\ldots,R\}$, any prime $p\in(P/2,P]$
with $p\nmid m,$ and any $s\in S_{p}.$ Then $pr+s\in[1,PR+P-1]\subseteq[1,m).$
This means that $pr+s\not\equiv0\pmod m,$ which in turn implies that
$r+s\cdot(p^{-1})_{m}\not\equiv0\pmod m$.

We now show that the multiset $S$ contains no repeated elements.
For this, consider any $r,r'\in\{1,2,\ldots,R\},$ any primes $p,p'\in\Pcal,$
and any $s\in S_{p}$ and $s'\in S_{p'}$ such that 
\begin{equation}
r+s\cdot(p^{-1})_{m}\equiv r'+s'\cdot(p'^{-1})_{m}\pmod m.\label{eq:modular-equality}
\end{equation}
Our goal is to show that $p=p',r=r',s=s'.$ To this end, multiply
(\ref{eq:modular-equality}) through by $pp'$ to obtain 
\begin{equation}
r\cdot pp'+s\cdot p'\equiv r'\cdot pp'+s'\cdot p\pmod m.\label{eq:modular-equality-pp}
\end{equation}
The left-hand side and right-hand side of (\ref{eq:modular-equality-pp})
are integers in $[1,RP^{2}+(P-1)P]\subseteq[1,m),$ whence
\begin{equation}
r\cdot pp'+s\cdot p'=r'\cdot pp'+s'\cdot p.\label{eq:straight-equality}
\end{equation}
This implies that $p\mid s\cdot p',$ which in view of $s<p$ and
the primality of $p$ and $p'$ forces $p=p'.$ Now (\ref{eq:straight-equality})
simplifies to 
\begin{equation}
r\cdot p+s=r'\cdot p+s',\label{eq:straight-equality-simplified}
\end{equation}
which in turn yields $s\equiv s'\pmod p$. Recalling that $s,s'\in\{1,2,\ldots,p-1\},$
we arrive at $s=s'.$ Finally, substituting $s=s'$ in~(\ref{eq:straight-equality-simplified})
gives $r=r'.$

\subsection{\label{subsec:Correlation-bound-for-k-small}Correlation for \emph{k}
small}

So far, we have shown that the elements of $S$ are distinct and nonzero.
Recall that our objective is to bound the $m$-discrepancy of this
set. Put another way, we must bound the exponential sum 
\begin{equation}
\left|\sum_{s\in S}e\left(\frac{k}{m}\cdot s\right)\right|\label{eq:exp-sum}
\end{equation}
for all $k=1,2,\ldots,m-1.$ This subsection and the next provide
two complementary bounds on~(\ref{eq:exp-sum}). The first bound,
presented below, is preferable when $k$ is close to zero modulo $m.$
\begin{claim}
\label{claim:k-small}Let $k\in\{1,2,\ldots,m-1\}$ be given. Then
\begin{multline*}
\left|\sum_{s\in S}e\left(\frac{k}{m}\cdot s\right)\right|\\
\leq\left(\frac{2\pi\min(k,m-k)}{m}+\max_{p\in\Pcal}\{\disc(S_{p},p)\}+\frac{\nu(k)+\nu(m-k)}{|\Pcal|}\right)|S|.
\end{multline*}
\begin{proof}
Let $\Pcal'$ be the set of those primes in $\Pcal$ that do not divide
$k$ or $m-k.$ Then clearly 
\begin{equation}
|\Pcal\setminus\Pcal'|\leq\nu(k)+\nu(m-k).\label{eq:P-minus-P'}
\end{equation}
We have
\begin{align}
 & \left|\sum_{s\in S}e\left(\frac{k}{m}\cdot s\right)\right|\nonumber \\
 & \qquad=\left|\sum_{r=1}^{R}\sum_{p\in\Pcal}\sum_{s\in S_{p}}e\left(\frac{k}{m}\cdot(r+s\cdot(p^{-1})_{m}\right)\right|\nonumber \\
 & \qquad\leq\sum_{r=1}^{R}\sum_{p\in\Pcal}\left|\sum_{s\in S_{p}}e\left(\frac{k}{m}\cdot(r+s\cdot(p^{-1})_{m}\right)\right|\nonumber \\
 & \qquad=R\sum_{p\in\Pcal}\left|\sum_{s\in S_{p}}e\left(\frac{ks\cdot(p^{-1})_{m}}{m}\right)\right|\nonumber \\
 & \qquad\le R\sum_{p\in\Pcal'}\left|\sum_{s\in S_{p}}e\left(\frac{ks\cdot(p^{-1})_{m}}{m}\right)\right|+R\sum_{p\in\Pcal\setminus\Pcal'}\left|\sum_{s\in S_{p}}e\left(\frac{ks\cdot(p^{-1})_{m}}{m}\right)\right|\nonumber \\
 & \qquad\leq R\sum_{p\in\Pcal'}\left|\sum_{s\in S_{p}}e\left(\frac{ks\cdot(p^{-1})_{m}}{m}\right)\right|+R\sum_{p\in\Pcal\setminus\Pcal'}|S_{p}|.\label{eq:k-near-zero-decomposition}
\end{align}

We proceed to bound the two summations in~(\ref{eq:k-near-zero-decomposition}).
Bounding the second summation is straightforward:

\begin{align}
R\sum_{p\in\Pcal\setminus\Pcal'}|S_{p}| & =R\cdot\frac{|\Pcal\setminus\Pcal'|}{|\Pcal|}\sum_{p\in\Pcal}|S_{p}|\nonumber \\
 & =\frac{|\Pcal\setminus\Pcal'|}{|\Pcal|}\cdot|S|\nonumber \\
 & \leq\frac{\nu(k)+\nu(m-k)}{|\Pcal|}\cdot|S|,\label{eq:k-near-zero-easy-sum}
\end{align}
where the first step is valid because all sets $S_{p}$ have the same
cardinality, and the last step uses~(\ref{eq:P-minus-P'}). 

The other summation in~(\ref{eq:k-near-zero-decomposition}) requires
more work. For $p\in\Pcal'$ and $K\in\{k,k-m\},$ we have
\begin{align*}
 & \hspace{-7mm}\left|\sum_{s\in S_{p}}e\left(\frac{ks\cdot(p^{-1})_{m}}{m}\right)\right|\\
 & =\left|\sum_{s\in S_{p}}e\left(\frac{Ks\cdot(p^{-1})_{m}}{m}\right)\right|\\
 & =\left|\sum_{s\in S_{p}}e\left(-\frac{Ks\cdot(m^{-1})_{p}}{p}\right)e\left(\frac{Ks}{pm}\right)\right|\\
 & \leq\left|\sum_{s\in S_{p}}e\left(-\frac{Ks\cdot(m^{-1})_{p}}{p}\right)\left(e\left(\frac{Ks}{pm}\right)-1\right)\right|+\left|\sum_{s\in S_{p}}e\left(-\frac{Ks\cdot(m^{-1})_{p}}{p}\right)\right|\\
 & \leq\left|\sum_{s\in S_{p}}e\left(-\frac{Ks\cdot(m^{-1})_{p}}{p}\right)\left(e\left(\frac{Ks}{pm}\right)-1\right)\right|+\disc(S_{p},p)\cdot|S_{p}|\\
 & \leq\sum_{s\in S_{p}}\left|e\left(\frac{Ks}{pm}\right)-1\right|+\disc(S_{p},p)\cdot|S_{p}|\\
 & =\sum_{s\in S_{p}}\left|e\left(\frac{|K|s}{pm}\right)-1\right|+\disc(S_{p},p)\cdot|S_{p}|\\
 & \leq|S_{p}|\cdot\frac{2\pi|K|}{m}+\disc(S_{p},p)\cdot|S_{p}|,
\end{align*}
where the second step uses Fact~\ref{fact:rel-prime} and the relative
primality of $p$ and $m$; the third step applies the triangle inequality;
the fourth step follows from $p\nmid|K|$, and the last step is valid
by~(\ref{eq:e-close-to-1}) and $s<p$. We have shown that
\begin{align*}
\left|\sum_{s\in S_{p}}e\left(\frac{ks\cdot(p^{-1})_{m}}{m}\right)\right| & \leq\frac{2\pi\min(k,m-k)}{m}\cdot|S_{p}|+\disc(S_{p},p)\cdot|S_{p}|
\end{align*}
for $p\in\Pcal'.$ Summing over $\Pcal',$
\begin{align}
R\sum_{p\in\Pcal'} & \left|\sum_{s\in S_{p}}e\left(\frac{ks\cdot(p^{-1})_{m}}{m}\right)\right|\nonumber \\
 & \qquad\leq R\sum_{p\in\Pcal'}\left(\frac{2\pi\min(k,m-k)}{m}\cdot|S_{p}|+\disc(S_{p},p)\cdot|S_{p}|\right)\nonumber \\
 & \qquad\leq R\sum_{p\in\Pcal}\left(\frac{2\pi\min(k,m-k)}{m}\cdot|S_{p}|+\disc(S_{p},p)\cdot|S_{p}|\right)\nonumber \\
 & \qquad\leq\left(\frac{2\pi\min(k,m-k)}{m}+\max_{p\in\Pcal}\{\disc(S_{p},p)\}\right)R\sum_{p\in\Pcal}|S_{p}|\nonumber \\
 & \qquad=\left(\frac{2\pi\min(k,m-k)}{m}+\max_{p\in\Pcal}\{\disc(S_{p},p)\}\right)|S|.\label{eq:k-near-zero-hard-sum}
\end{align}
By~(\ref{eq:k-near-zero-decomposition})\textendash (\ref{eq:k-near-zero-hard-sum}),
the proof of the claim is complete.
\end{proof}
\end{claim}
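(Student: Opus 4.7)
The plan is to expand the sum by the defining parameterization of $S$ and isolate, for each prime $p\in\Pcal$, a character sum modulo $p$ that can be controlled by $\disc(S_p,p)$. First I would write
\[
\sum_{s\in S}e\!\left(\frac{k}{m}\cdot s\right)=\sum_{r=1}^{R}\sum_{p\in\Pcal}e\!\left(\frac{kr}{m}\right)\sum_{s\in S_p}e\!\left(\frac{ks\,(p^{-1})_m}{m}\right),
\]
using that $e(\cdot)$ is $\mathbb{Z}$-periodic to drop the $\bmod\,m$. The triangle inequality on $r$ pulls out a clean factor of $R$, so it remains to bound $\bigl|\sum_{s\in S_p}e(ks\,(p^{-1})_m/m)\bigr|$ for each $p\in\Pcal$.

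The decisive step is to use Fact~\ref{fact:rel-prime}, which gives $(p^{-1})_m/m\equiv-(m^{-1})_p/p+1/(pm)\pmod 1$. Substituting,
\[
e\!\left(\frac{ks\,(p^{-1})_m}{m}\right)=e\!\left(-\frac{ks\,(m^{-1})_p}{p}\right)\cdot e\!\left(\frac{ks}{pm}\right).
\]
The first factor is exactly a character modulo $p$ in the variable $s$, which is the setting in which $\disc(S_p,p)$ provides control; the second factor is a tiny perturbation since for $s<p$ we have $|ks/(pm)|\le k/m$, so $|e(ks/(pm))-1|\le 2\pi k/m$ by \eqref{eq:e-close-to-1}. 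To get the sharper $2\pi\min(k,m-k)/m$ one should replace $k$ by $K\in\{k,\,k-m\}$ with $|K|=\min(k,m-k)$ (which is legitimate because $e(Kr/m)=e(kr/m)$ on integers), and argue as above.

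To pass from the character sum $\sum_{s\in S_p}e(-Ks\,(m^{-1})_p/p)$ to $\disc(S_p,p)\cdot|S_p|$, I need $-K(m^{-1})_p\not\equiv 0\pmod p$, equivalently $p\nmid K$. This fails precisely for primes dividing $k$ or $m-k$. So I partition $\Pcal=\Pcal'\sqcup(\Pcal\setminus\Pcal')$, where $\Pcal'$ consists of primes dividing neither $k$ nor $m-k$, and note $|\Pcal\setminus\Pcal'|\le\nu(k)+\nu(m-k)$. For $p\in\Pcal'$ I apply the bound above; for $p\in\Pcal\setminus\Pcal'$ I use the trivial estimate $|S_p|$. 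Adding (and using that all $|S_p|$ are equal, so $R|\Pcal||S_p|=|S|$), the contribution from $\Pcal\setminus\Pcal'$ is at most $\frac{\nu(k)+\nu(m-k)}{|\Pcal|}|S|$, and the $\Pcal'$-contribution is at most $\bigl(\frac{2\pi\min(k,m-k)}{m}+\max_p\disc(S_p,p)\bigr)|S|$, yielding the claim.

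The only nontrivial move is the decomposition via Fact~\ref{fact:rel-prime}; everything else is triangle inequality and counting. The main obstacle is conceptual rather than technical: one must realize that the natural character modulo $m$ can be factored, up to a controllably small multiplicative error $e(ks/(pm))$, into a character modulo $p$ on the small set $S_p$, thereby reducing $m$-discrepancy of $S$ to the $p$-discrepancies of the $S_p$.
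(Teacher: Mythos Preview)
Your proposal is correct and follows essentially the same route as the paper: expand $S$ by its parameterization, use the triangle inequality over $r$ and $p$ to reduce to the inner sums $\sum_{s\in S_p}e(ks\,(p^{-1})_m/m)$, invoke Fact~\ref{fact:rel-prime} to rewrite each summand as $e(-Ks\,(m^{-1})_p/p)\cdot e(Ks/(pm))$ with $K\in\{k,k-m\}$, split off the perturbation $e(Ks/(pm))-1$ via the triangle inequality and bound it by $2\pi|K|/m$ using $s<p$ and~\eqref{eq:e-close-to-1}, bound the remaining character sum by $\disc(S_p,p)\,|S_p|$ when $p\nmid K$, and handle the exceptional primes dividing $k$ or $m-k$ trivially. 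The only cosmetic difference is the order in which you introduce the $K$-substitution and the $\Pcal'$ split; the paper fixes $\Pcal'$ first and then carries out the estimate, but the content is identical.
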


\subsection{\label{subsec:Correlation-bound-for-k-large}Correlation  for $k$
large}

We now present an alternative bound on the exponential sum~(\ref{eq:exp-sum}),
which is preferable to the bound of Claim~\ref{claim:k-small} when
$k$ is far from zero modulo $m.$
\begin{claim}
\label{claim:k-large}Let $k\in\{1,2,\ldots,m-1\}$ be given. Then
\[
\left|\sum_{s\in S}e\left(\frac{k}{m}\cdot s\right)\right|\leq\frac{m}{2R\min(k,m-k)}\cdot|S|.
\]
\end{claim}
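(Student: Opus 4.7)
The plan is to exploit the additive structure of $S$ in the variable $r$: every element of $S$ has the form $r+s\cdot(p^{-1})_m \bmod m$, where $r$ ranges over $\{1,2,\ldots,R\}$ independently of $p$ and $s$. So the exponential sum factors as
\[
\sum_{s\in S}e\!\left(\tfrac{k}{m}s\right)
=\left(\sum_{r=1}^{R}e\!\left(\tfrac{kr}{m}\right)\right)\sum_{p\in\Pcal}\sum_{s\in S_{p}}e\!\left(\tfrac{ks\cdot(p^{-1})_{m}}{m}\right),
\]
and the first factor is a geometric progression whose value is small precisely when $k/m$ is far from $0$, i.e., when $\min(k,m-k)$ is large. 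This is complementary to Claim~\ref{claim:k-small}, where the geometric sum is close to $R$ and one has to work harder.

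First I would simply write out the definition of $S$ and rearrange the triple sum to isolate the $r$-sum, as above. Then I would evaluate the geometric sum in closed form:
\[
\left|\sum_{r=1}^{R}e\!\left(\tfrac{kr}{m}\right)\right|
=\left|\frac{e(k(R{+}1)/m)-e(k/m)}{e(k/m)-1}\right|
\leq\frac{2}{\,\lvert e(k/m)-1\rvert\,}.
\]
Applying the lower bound~(\ref{eq:e-far-from-1}) on $|1-e(x)|$ with $x=k/m\in(0,1)$ gives
\[
\lvert e(k/m)-1\rvert\;\geq\;\frac{4\min(k,m-k)}{m},
\]
so the geometric factor is bounded by $m/(2\min(k,m-k))$.

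Finally I would bound the remaining double sum trivially using the triangle inequality: each term has modulus $1$, so
\[
\left|\sum_{p\in\Pcal}\sum_{s\in S_{p}}e\!\left(\tfrac{ks\cdot(p^{-1})_{m}}{m}\right)\right|\leq\sum_{p\in\Pcal}|S_{p}|\;=\;\frac{|S|}{R},
\]
where the equality uses that $|S|=R\sum_{p\in\Pcal}|S_{p}|$ (which in turn relies on the fact, established in Section~\ref{subsec:Elements-of-Sm-are-nonzero}, that the elements of $S$ are pairwise distinct so no cancellation occurs in the cardinality count). Multiplying the two bounds yields
\[
\left|\sum_{s\in S}e\!\left(\tfrac{k}{m}s\right)\right|\leq\frac{m}{2\min(k,m-k)}\cdot\frac{|S|}{R},
\]
as required. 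There is no real obstacle here; the proof is a one-line geometric series estimate once the $r$-sum is factored out, and the main conceptual point is simply that this bound and Claim~\ref{claim:k-small} will be combined in the next subsection by choosing the stronger of the two depending on whether $\min(k,m-k)$ is small or large.
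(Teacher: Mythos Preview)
Your proof is correct and essentially identical to the paper's. The only cosmetic difference is that you observe the triple sum factors exactly as a product, whereas the paper first applies the triangle inequality over $(p,s)$ and then notes that the remaining $r$-sum is independent of $(p,s)$ after pulling out a unit-modulus phase; both routes arrive at the same geometric-series estimate and the same trivial bound $\sum_{p\in\Pcal}|S_p|=|S|/R$.
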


\begin{proof}[Proof:]
\begin{align*}
\left|\sum_{s\in S}e\left(\frac{k}{m}\cdot s\right)\right| & =\left|\sum_{p\in\Pcal}\sum_{s\in S_{p}}\sum_{r=1}^{R}e\left(\frac{k}{m}\cdot(r+s\cdot(p^{-1})_{m})\right)\right|\\
 & \leq\sum_{p\in\Pcal}\sum_{s\in S_{p}}\left|\sum_{r=1}^{R}e\left(\frac{k}{m}\cdot(r+s\cdot(p^{-1})_{m})\right)\right|\\
 & =\sum_{p\in\Pcal}\sum_{s\in S_{p}}\left|\sum_{r=1}^{R}e\left(\frac{kr}{m}\right)\right|\\
 & =\sum_{p\in\Pcal}\sum_{s\in S_{p}}\frac{|1-e(kR/m)|}{|1-e(k/m)|}\\
 & \leq\sum_{p\in\Pcal}\sum_{s\in S_{p}}\frac{2}{|1-e(k/m)|}\\
 & \leq\sum_{p\in\Pcal}\sum_{s\in S_{p}}\frac{m}{2\min(k,m-k)}\\
 & =\frac{m}{2R\min(k,m-k)}\cdot|S|,
\end{align*}
where the last two steps use~(\ref{eq:e-far-from-1}) and $|S|=R\sum_{p\in\Pcal}|S_{p}|$,
respectively.
\end{proof}

\subsection{\label{subsec:Finishing-the-proof}Finishing the proof}

Facts~\ref{fact:PNT} and~\ref{fact:num-prime-factors} imply that
\begin{align}
\pi(P)-\pi\left(\frac{P}{2}\right) & \geq\frac{P}{C\log P}\qquad\qquad(P\geq C),\label{eq:many-primes}\\
\max_{k=1,2,\ldots,m}\nu(k) & \leq\frac{C\log m}{\log\log m},\label{eq:few-prime-factors}
\end{align}
where $C\geq1$ is a constant independent of $R,P,m.$ Moreover, $C$
can be easily calculated from the explicit bounds in Facts~\ref{fact:PNT}
and~\ref{fact:num-prime-factors}. We will show that the theorem
conclusion~(\ref{eq:disc-S}) holds with $c=4C^{2}.$ We may assume
that
\begin{align}
 & P\geq C,\label{eq:P-geq-C}\\
 & \frac{C\log m}{\log\log m}\leq\frac{P}{2C\log P},\label{eq:more-primes-than-prime-factors}
\end{align}
since otherwise the right-hand side of~(\ref{eq:disc-S}) exceeds~$1$
and the theorem is trivially true. By~(\ref{eq:Pcal-lower}), (\ref{eq:many-primes}),
(\ref{eq:few-prime-factors}), and~(\ref{eq:more-primes-than-prime-factors}),
we obtain
\[
|\Pcal|\geq\frac{P}{2C\log P},
\]
which along with~(\ref{eq:few-prime-factors}) gives
\begin{align}
\max_{k=1,2,\ldots,m-1}\frac{\nu(k)+\nu(m-k)}{|\Pcal|} & \leq\frac{2C\log m}{\log\log m}\cdot\frac{2C\log P}{P}\nonumber \\
 & =\frac{c\log m}{\log\log m}\cdot\frac{\log P}{P}.\label{eq:ratio-bound}
\end{align}
Claims~\ref{claim:k-small} and~\ref{claim:k-large} ensure that
for every $k=1,2,\ldots,m-1,$
\begin{align*}
\left|\sum_{s\in S}e\left(\frac{k}{m}\cdot s\right)\right| & \leq\left(\min\left(\frac{2\pi\min(k,m-k)}{m},\frac{m}{2R\min(k,m-k)}\right)\right.\\
 & \qquad\qquad\left.+\max_{p\in\Pcal}\{\disc(S_{p},p)\}+\frac{\nu(k)+\nu(m-k)}{|\Pcal|}\right)|S|\\
 & \leq\left(\sqrt{\frac{\pi}{R}}+\max_{p\in\Pcal}\{\disc(S_{p},p)\}+\frac{\nu(k)+\nu(m-k)}{|\Pcal|}\right)|S|\\
 & \leq\left(\frac{c}{\sqrt{R}}+\max_{p\in\Pcal}\{\disc(S_{p},p)\}+\frac{\nu(k)+\nu(m-k)}{|\Pcal|}\right)|S|.
\end{align*}
Substituting the estimate from~(\ref{eq:ratio-bound}), we conclude
that
\begin{multline*}
\max_{k=1,2,\ldots,m-1}\left|\sum_{s\in S}e\left(\frac{k}{m}\cdot s\right)\right|\\
\leq\left(\frac{c}{\sqrt{R}}+\max_{p\in\Pcal}\{\disc(S_{p},p)\}+\frac{c\log m}{\log\log m}\cdot\frac{\log P}{P}\right)|S|.\qquad\qquad
\end{multline*}
This conclusion is equivalent to~(\ref{eq:disc-S}). The proof of
Theorem~\ref{thm:ajtai-iteration} is complete.

\section{\label{app:number-solutions-linear-form}An alternate proof of Fact~\ref{fact:number-solutions-linear-form}}

The purpose of this appendix is to give an alternate, matrix-analytic
proof of Fact~\ref{fact:number-solutions-linear-form}.
\begin{fact*}[restatement of Fact~\ref{fact:number-solutions-linear-form}]
Fix a natural number $m\geq2$ and a multiset $Z=\{z_{1},z_{2},\ldots,z_{n}\}$
of integers. Let $\omega$ be a primitive $m$-th root of unity. Then
\begin{multline}
\left|\Prob_{X\in\zoon}\left[\sum_{j=1}^{n}z_{j}X_{j}\equiv s\pmod m\right]-\frac{1}{m}\right|\\
\leq\frac{1}{m}\sum_{k=1}^{m-1}\left|\prod_{j=1}^{n}\frac{1+\omega^{kz_{j}}}{2}\right|,\qquad s\in\ZZ.\label{eq:weighted-sums-uniform-1-1}
\end{multline}
\end{fact*}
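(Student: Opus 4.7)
The plan is matrix-analytic, based on the observation that the distribution of $\sum_{j=1}^{n} z_j X_j \bmod m$ is obtained by composing $n$ circulant transition matrices, all of which are simultaneously diagonalized by the unitary DFT matrix appearing in Corollary~\ref{cor:circulant-diagonalization}.

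As the first step, I would introduce the cyclic shift matrix $P\in\Re^{m\times m}$ satisfying $P^a e_i = e_{(i+a)\bmod m}$ for any integer $a$, where $e_0,e_1,\ldots,e_{m-1}$ denote the standard basis vectors indexed by $\ZZ_m$. For each $j=1,2,\ldots,n$, define the circulant matrix $M_j = \tfrac{1}{2}(I + P^{z_j})$. If a column vector $v\in\Re^m$ encodes the distribution of a $\ZZ_m$-valued random variable $Y$, then $M_j v$ encodes the distribution of $(Y + z_j X_j)\bmod m$ for an independent uniform bit $X_j\in\{0,1\}$. Iterating, the distribution of $\sum_{j=1}^n z_j X_j \bmod m$ equals $\bigl(\prod_{j=1}^n M_j\bigr) e_0$, so the left-hand side of~(\ref{eq:weighted-sums-uniform-1-1}) is the absolute value of the $(s \bmod m,\,0)$ entry of $\prod_{j} M_j$ minus $1/m$.

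The second step is simultaneous diagonalization. Since every $M_j$ is circulant, Corollary~\ref{cor:circulant-diagonalization} supplies a single unitary matrix $W=[\omega^{jk}/\sqrt{m}]_{j,k=0}^{m-1}$ such that $W^* M_j W$ is diagonal with $k$-th diagonal entry $(1+\omega^{k z_j})/2$. Therefore $\prod_{j=1}^n M_j = W D W^*$, where $D$ is diagonal with $k$-th entry $\prod_{j=1}^n (1+\omega^{k z_j})/2$. Extracting the $(s,0)$ coordinate of $W D W^*$ gives the identity
\[
\Prob_{X\in\zoon}\!\left[\sum_{j=1}^n z_j X_j \equiv s \pmod m\right] \;=\; \frac{1}{m}\sum_{k=0}^{m-1} \omega^{sk} \prod_{j=1}^n \frac{1+\omega^{k z_j}}{2}.
\]
Isolating the $k=0$ term, which contributes exactly $1/m$, and then invoking the triangle inequality along with $|\omega^{sk}|=1$ yields the bound in~(\ref{eq:weighted-sums-uniform-1-1}).

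There is no serious obstacle in this argument; the only step that deserves careful verification is that $M_j$ really acts as the transition matrix for the update $Y \mapsto (Y + z_j X_j)\bmod m$. This reduces to checking, via the law of total probability, the identity $M_j e_i = \tfrac{1}{2} e_i + \tfrac{1}{2}\, e_{(i+z_j)\bmod m}$ on each basis vector, which is immediate from $P^{z_j} e_i = e_{(i+z_j)\bmod m}$. The rest of the proof is bookkeeping, and the algebraic form of the answer is forced by the simultaneous diagonalization of the circulant family.
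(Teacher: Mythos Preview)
Your proposal is correct and follows essentially the same matrix-analytic route as the paper's Appendix~\ref{app:number-solutions-linear-form} proof: interpret the distribution as a product of circulant transition matrices applied to $e_0$, simultaneously diagonalize via the DFT matrix $W$ from Corollary~\ref{cor:circulant-diagonalization}, and read off the bound by separating the $k=0$ term and applying the triangle inequality. One minor slip: with your convention $P e_i=e_{(i+1)\bmod m}$ one has $P=\circulant(0,\ldots,0,1)$, so under the paper's $W$ the $k$-th eigenvalue of $M_j$ is $(1+\omega^{-kz_j})/2$ rather than $(1+\omega^{kz_j})/2$; this is harmless, since the reindexing $k\mapsto m-k$ (or conjugation) leaves the right-hand side of~(\ref{eq:weighted-sums-uniform-1-1}) unchanged.
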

\begin{proof}
For any integer $z,$ consider the circulant matrix
\[
T_{z}=\frac{1}{2}\circulant(\underbrace{1,0,\ldots,0}_{m})+\frac{1}{2}\circulant(\underbrace{\overbrace{0,\ldots,0}^{z\bmod m},1,0,\ldots,0}_{m}).
\]
By Corollary~\ref{cor:circulant-diagonalization}, the matrix $W=[\omega^{jk}/\sqrt{m}]_{j,k=0,1,\ldots,m-1}$
obeys
\begin{align}
 & WW^{*}=I,\label{eq:W-unitary-again}\\
 & W^{*}T_{z}W=\diag\left(1,\frac{1+\omega^{z}}{2},\frac{1+\omega^{2z}}{2},\ldots,\frac{1+\omega^{(m-1)z}}{2}\right), &  & z\in\ZZ.\label{eq:W-diagonalizable}
\end{align}
In particular,
\begin{align*}
W^{*} & T_{-z_{n}}T_{-z_{n-1}}\cdots T_{-z_{1}}W\\
 & \qquad=(W^{*}T_{-z_{n}}W)(W^{*}T_{-z_{n-1}}W)\cdots(W^{*}T_{-z_{1}}W)\\
 & \qquad=\prod_{j=1}^{n}\diag\left(1,\frac{1+\omega^{-z_{j}}}{2},\frac{1+\omega^{-2z_{j}}}{2},\ldots,\frac{1+\omega^{-(m-1)z_{j}}}{2}\right)\\
 & \qquad=\diag\left(1,\prod_{j=1}^{n}\frac{1+\omega^{-z_{j}}}{2},\prod_{j=1}^{n}\frac{1+\omega^{-2z_{j}}}{2},\ldots,\prod_{j=1}^{n}\frac{1+\omega^{-(m-1)z_{j}}}{2}\right),
\end{align*}
where the first two steps use~(\ref{eq:W-unitary-again}) and~(\ref{eq:W-diagonalizable}),
respectively. Applying~(\ref{eq:W-unitary-again}) yet again, we
arrive at
\begin{align*}
 & T_{-z_{n}}T_{-z_{n-1}}\cdots T_{-z_{1}}\\
 & \quad=W\diag\left(1,\prod_{j=1}^{n}\frac{1+\omega^{-z_{j}}}{2},\prod_{j=1}^{n}\frac{1+\omega^{-2z_{j}}}{2},\ldots,\prod_{j=1}^{n}\frac{1+\omega^{-(m-1)z_{j}}}{2}\right)W^{*}\\
 & \quad=\frac{1}{m}J+\sum_{k=1}^{m-1}\prod_{j=1}^{n}\frac{1+\omega^{-kz_{j}}}{2}W_{k}W_{k}^{*},
\end{align*}
where $W_{1},W_{2},\ldots,W_{m-1}$ denote the last $m-1$ columns
of $W.$ Since the components of each $W_{k}$ are bounded in absolute
value by $1/\sqrt{m},$ we conclude that
\begin{equation}
\left\Vert T_{-z_{n}}T_{-z_{n-1}}\cdots T_{-z_{1}}-\frac{1}{m}J\right\Vert _{\infty}\leq\frac{1}{m}\sum_{k=1}^{m-1}\left|\prod_{j=1}^{n}\frac{1+\omega^{-kz_{j}}}{2}\right|.\label{eq:infty-distance-in-terms-of-eigens}
\end{equation}

We are now in a position to prove~(\ref{eq:weighted-sums-uniform-1-1}).
Let $X=(X_{1},X_{2},\ldots,X_{n})$ be a random variable distributed
uniformly in $\zoon.$ Consider the random variables $Y_{0},Y_{1},Y_{2},\ldots,Y_{n}$
given by $Y_{k}=(z_{1}X_{1}+z_{2}X_{2}+\cdots+z_{k}X_{k})\bmod m.$
The sequence $Y_{0},Y_{1},Y_{2},\ldots,Y_{n}$ has a natural interpretation
in terms of an $n$-step random walk in $\ZZ_{m}.$ Specifically,
the random walk starts at $Y_{0}=0$ and evolves according to
\[
Y_{k}=\begin{cases}
Y_{k-1} & \text{with probability \ensuremath{1/2,}}\\
(Y_{k-1}+z_{k})\bmod m & \text{with probability \ensuremath{1/2.}}
\end{cases}
\]
In particular, the $k$-th step of the random walk has transition
probability matrix
\[
\begin{matrix}\\
{\displaystyle \frac{1}{2}}
\end{matrix}\begin{matrix}\phantom{\overbrace{\quad\qquad\qquad}^{z_{k}\bmod m}}\\
\begin{bmatrix}1\\
 & 1\\
 &  & 1\\
 &  &  & 1\\
 &  &  &  & \ddots\\
 &  &  &  &  & 1\\
 &  &  &  &  &  & 1
\end{bmatrix}
\end{matrix}\begin{matrix}\\
{\displaystyle \;+\;\frac{1}{2}}
\end{matrix}\begin{matrix}\overbrace{\quad\qquad\qquad}^{-z_{k}\bmod m}\qquad\qquad\qquad\\
\begin{bmatrix} &  &  & 1\\
 &  &  &  & 1\\
 &  &  &  &  & \ddots\\
 &  &  &  &  &  & 1\\
1\\
 & \ddots\\
 &  & 1
\end{bmatrix}
\end{matrix}\begin{matrix}\\
,{\displaystyle \phantom{\frac{1}{2}}}
\end{matrix}
\]
where the unspecified entries are zero, and the rows and columns correspond
in the usual manner to the values $0,1,\ldots,m-1$. In the notation
of the opening paragraph of the proof, this matrix is precisely $T_{-z_{k}}.$
Letting $p_{0},p_{1},\ldots,p_{n}$ be the $m$-dimensional vectors
that represent the probability distributions of $Y_{0},Y_{1},\ldots,Y_{n},$
respectively, we obtain the recursive relations $p_{k}=T_{-z_{k}}p_{k-1}$.
Therefore,
\[
p_{n}=T_{-z_{n}}T_{-z_{n-1}}\cdots T_{-z_{1}}p_{0}.
\]
 Now
\begin{align*}
\left\Vert p_{n}-\begin{bmatrix}{\displaystyle \frac{1}{m}} & {\displaystyle \frac{1}{m}} & \cdots & {\displaystyle \frac{1}{m}}\end{bmatrix}^{T}\right\Vert _{\infty} & =\left\Vert T_{-z_{n}}T_{-z_{n-1}}\cdots T_{-z_{1}}p_{0}-\frac{1}{m}Jp_{0}\right\Vert _{\infty}\\
 & \leq\left\Vert T_{-z_{n}}T_{-z_{n-1}}\cdots T_{-z_{1}}-\frac{1}{m}J\right\Vert _{\infty}\|p_{0}\|_{1}\\
 & =\left\Vert T_{-z_{n}}T_{-z_{n-1}}\cdots T_{-z_{1}}-\frac{1}{m}J\right\Vert _{\infty}\\
 & \leq\frac{1}{m}\sum_{k=1}^{m-1}\left|\prod_{j=1}^{n}\frac{1+\omega^{-kz_{j}}}{2}\right|\\
 & =\frac{1}{m}\sum_{k=1}^{m-1}\left|\prod_{j=1}^{n}\frac{1+\omega^{kz_{j}}}{2}\right|,
\end{align*}
where the next-to-last step uses~(\ref{eq:infty-distance-in-terms-of-eigens}).
This conclusion is obviously equivalent to~(\ref{eq:weighted-sums-uniform-1-1}).
\end{proof}

\end{document}